\documentclass[12pt]{article}
\usepackage[letterpaper,margin=0.75in]{geometry}
\usepackage{amsmath,amssymb,amsthm,mathtools,bm}
\usepackage{graphicx,float,booktabs,caption,subcaption}
\usepackage{algorithm,algorithmic}
\usepackage[authoryear,round]{natbib}
\usepackage{enumerate}
\usepackage[colorlinks=true,linkcolor=blue,citecolor=blue,urlcolor=blue]{hyperref}
\usepackage{setspace}
\newcommand{\refl}{\varrho}
\newtheorem{theorem}{Theorem}
\newtheorem{proposition}{Proposition}
\newtheorem{lemma}{Lemma}

\newtheorem{definition}{Definition}
\newtheorem{remark}{Remark}

\DeclareMathOperator{\Isom}{Isom}
\DeclareMathOperator{\Stab}{Stab}
\DeclareMathOperator{\Exp}{Exp}
\DeclareMathOperator{\Log}{Log}
\DeclareMathOperator{\rank}{rank}
\DeclareMathOperator{\diag}{diag}
\DeclareMathOperator{\Area}{Area}
\DeclareMathOperator{\logit}{logit}
\DeclareMathOperator{\sgn}{sgn}
\newcommand{\R}{\mathbb{R}}

\newcommand{\Hb}{\mathbb{H}}
\newcommand{\D}{\mathbb{D}}
\newcommand{\Nil}{\mathrm{Nil}}
\newcommand{\Sol}{\mathrm{Sol}}
\newcommand{\SL}{\widetilde{\mathrm{SL}}(2,\R)}
\newcommand{\M}{\mathcal{M}}
\newcommand{\LWN}{\mathrm{LWN}}
\newcommand{\dvol}{\mathrm{dvol}}
\newcommand{\bo}{\mathbf{o}}
\newcommand{\tr}{\mathsf{T}}
\newcommand{\eb}{\mathbf{e}}

\newcommand{\simsig}{$(2.0,2.0)$ for $\Nil$, $(1.8,1.4)$ for $\Sol$, $(1.5,2.0)$ for $\SL$}
\newcommand{\simiters}{12000}
\newcommand{\jacviiiwith}{1.58 (0.84)}
\newcommand{\jacviiiwithout}{0.90 (0.66)}
\newcommand{\jacxlwith}{1.05 (0.50)}
\newcommand{\jacxlwithout}{0.76 (0.51)}

\newcommand{\karateConst}{0.396}

\newcommand{\lesmisConst}{0.305}

\newcommand{\coraConst}{0.0411}

\newcommand{\pancreasConst}{0.0969}

\newcommand{\nilSixMedian}{0.007}
\newcommand{\nilSixMin}{4e-06}
\newcommand{\nilTenMedian}{0.038}

\newcommand{\certSixLower}{0.00286}
\newcommand{\certSixFloat}{0.00302}
\newcommand{\certSixAug}{0.00285}
\newcommand{\certSevenLower}{0.00414}
\newcommand{\certSevenAug}{0.00413}
\newcommand{\certWidth}{3e-14}
\newcommand{\covNilig}{67\%}

\newcommand{\covSolig}{75\%}

\newcommand{\covSLtig}{74\%}

\newcommand{\covEucig}{65\%}

\newcommand{\covHthreeig}{80\%}

\newcommand{\covNilMatched}{82\%}
\newcommand{\alphaCovMatched}{75\%}
\newcommand{\biasMatched}{-0.58}
\newcommand{\rhatAlpha}{1.12}
\newcommand{\rhatSig}{1.17}
\newcommand{\rhatDmax}{1.07}
\newcommand{\essAlpha}{91}
\newcommand{\essDmin}{237}

\newcommand{\annealSolRank}{2}
\newcommand{\annealSolLL}{0.297}
\newcommand{\annealHLL}{0.279}

\newcommand{\covSixtyMatched}{88\%}
\newcommand{\alphaCovSixty}{100\%}
\newcommand{\biasSixty}{-0.19}

\newcommand{\tightLLmin}{0.35}
\newcommand{\tightLLmax}{0.54}
\newcommand{\tightBetaMin}{1.8}
\newcommand{\tightBetaMax}{2.3}
\newcommand{\coraAnnealH}{0.0388}
\newcommand{\coraAnnealSol}{0.0480}
\newcommand{\coraAnnealSolRank}{4}
\newcommand{\coraAnnealBetaMin}{1.1}
\newcommand{\coraAnnealBetaMax}{2.8}

\title{Latent Space Network Modelling with Nil, Sol and $\SL$ Geometries:\\ Identifiability and Bayesian Estimation}
\author{}
\date{}

\newcommand{\blind}{1}
\begin{document}

\def\spacingset#1{\renewcommand{\baselinestretch}{#1}\small\normalsize}
\if1\blind
{\title{\bf Identifiability of Latent Space Network Models on Anisotropic Thurston Geometries}\author{Marios Papamichalis\thanks{\raggedright Human Nature Lab, Yale University, New Haven, CT 06511, USA. \href{mailto:marios.papamichalis@yale.edu}{\nolinkurl{marios.papamichalis@yale.edu}}}
\and Regina Ruane\thanks{\raggedright Department of Statistics and Data Science, The Wharton School, University of Pennsylvania, Philadelphia, PA, USA. \href{mailto:ruanej@wharton.upenn.edu}{\nolinkurl{ruanej@wharton.upenn.edu}}}}\date{}\maketitle}
\fi
\begin{abstract}
A latent space network model places the nodes in a metric space and lets the probability of a tie decrease with distance. In a space of constant curvature, pairwise distances determine the positions up to an isometry. In the products and in the three remaining three-dimensional model geometries they do not. We study the three geometries that are neither of constant curvature nor products: the Heisenberg group, the solvable group and the universal cover of the unit tangent bundle of the hyperbolic plane. We ask what one network identifies about positions in them and when their geometry is detectable. Two anchors remove the isometry ambiguity. Small configurations are not determined by their distances, and generic local identification holds beyond a finite threshold, certified in the Heisenberg group. We derive the posterior on the quotient by the isometry group. For small configurations, the divergence to the nearest product or constant-curvature competitor is the stress component of the curvature difference and vanishes at high order in the scale; undirected ties therefore detect the geometry only in large networks with large enclosed areas. Directed ties expose it at first order: in the two twisted geometries, asymmetric preferences circulate around triangles in proportion to enclosed area, which no additive ranking produces. On dense competitive-game counter networks, the coupled model beats rankings on every geometry, degree-corrected rankings and free antisymmetric terms. A Euclidean model with the same coupled term matches it, so the gain is the coupling of similarity and circulation through shared coordinates. An additive-and-multiplicative-effects model predicts better still.
\end{abstract}
\noindent{\it Keywords:} network data, Bayesian inference, quotient posterior, model selection, curvature
\clearpage
\section{Introduction}

A latent space model explains the ties of a network by unobserved positions: each node has a point in a metric space, and the probability of a tie between two nodes decreases with the distance between their points \citep{hoff2002latent}. In a social network the space might encode background and interests, in a citation network topic and time, in a connectome anatomical position and function. The choice of space is a modelling decision with consequences: Euclidean space cannot produce the many short cycles and the heavy-tailed degrees of some networks, and Hyperbolic space can \citep{krioukov2010hyperbolic,smith2019geometry}. In every space used so far, Euclidean, Spherical, Hyperbolic the positions carry a guarantee: if two configurations of nodes have the same pairwise distances they are related by an isometry, so the distances are all that a network can identify about the positions, and the positions are identified up to that ambiguity. This guarantee is a consequence of isotropy, the absence of a preferred direction at any point.

Consider a network in which the latent space plausibly has a preferred direction: a citation network in which papers have a position in topic space and a date, and in which citing across topics costs more when it also crosses time. Or a network of neurons with a laminar direction along which connection costs differ from those within a layer. Three model geometries describe such spaces without being products: the Heisenberg group, called $\Nil$, the solvable group $\Sol$, and the universal cover $\SL$ of the unit tangent bundle of the Hyperbolic plane. In each of them the isometry group is small, and moving a fixed distance along one direction has a different effect from moving along another. Networks have been embedded in these spaces \citep{celinska2024thurston}, but a practitioner who fits such a model faces two questions that the constant-curvature theory does not answer. Do the fitted positions mean anything, that is, what does one network identify about positions in a space where distances need not determine configurations? And can one tell, from one network, that the latent space is such a space and not a product or a space of constant curvature? The first question decides whether the positions can be reported with uncertainty. The second decides whether the geometry can be selected from data, and at what network size.

The three closest works leave these questions open. \citet{papamichalis2021latent} develop latent space models on the constant-curvature spaces with a gauge of four anchors, wrapped Normal priors and Markov chain Monte Carlo and variational inference, and their identifiability argument uses the congruence property that fails here. \citet{celinska2024thurston} embed twenty-one connectomes in all eight three-dimensional model geometries by simulated annealing of a distance-based likelihood with a learned temperature and find $\Sol$ competitive with Hyperbolic space, but they estimate a point configuration without identifiability results, without uncertainty, and without a criterion for when one geometry can be told from another. \citet{lubold2023identifying} identify the curvature of a constant-curvature latent space from the distances within cliques, a method whose validity again rests on distances determining configurations. Related work on learnable curvature \citep{li2023hyperbolic} and learnable temperature \citep{gong2026hyperbolic} stays within Hyperbolic space.

This paper answers the two questions for binary networks with a logistic distance model, undirected for identification and detection, and directed for the regime in which the geometry matters. It removes the isometry ambiguity with two anchors, shows that a set of pairwise distances determines a configuration of at most five points only up to a positive-dimensional family, proves that identification of the gauge-fixed positions is local and generic from six nodes, seven in $\Sol$ when the intercept is estimated, and derives the posterior on the quotient by the isometry group together with the Jacobian that the gauge introduces. For detection it proves that the smallest divergence between the model and any product or constant-curvature competitor is, for configurations that are small on the curvature scale, the component of the curvature difference between the two spaces along the equilibrium stresses of the configuration regarded as a bar framework, and it bounds the power of any test from one network by that divergence. Two mechanisms carry the argument. Identification is decided by the rank of the differential of the distance map on the gauge slice, which we bound by a dimension count, certify by interval arithmetic in $\Nil$ at the threshold, and extend to all larger networks by an induction on nodes. Detection is decided by an expansion of the distance in Riemannian normal coordinates, in which the first-order effect of the twist is a change of coordinates that any competitor absorbs and the second-order effect is curvature, of which only the part orthogonal to the tangent space of the competitor family survives.

The contributions, each stated as a result, are the following. Theorem~\ref{cor:all} gives one gauge for all eight three-dimensional model geometries, with two anchors for $\Nil$ and $\SL$ and one anchor and a chamber choice for $\Sol$, where the constant-curvature spaces need four anchors \citep{papamichalis2021latent}. Proposition~\ref{prop:nonrigid} and Theorem~\ref{thm:generic} show that identification from distances is a generic local property with a threshold of six nodes, seven in $\Sol$ with the intercept, against the unconditional identification of constant-curvature spaces, and Table~\ref{tab:rigidity} supplies the witnesses, one of them certified. Proposition~\ref{prop:target} gives the quotient posterior with its slice Jacobian, a factor absent from the sampler of \citet{papamichalis2021latent}, and Supplement~\ref{supp:exp} measures its effect on the anchor posterior. Theorem~\ref{thm:smallscale} and Theorem~\ref{thm:lecam} give the leading term of the smallest divergence to a competitor and the resulting bound on the power of any test, which no embedding method supplies, and Section~\ref{sec:exp-twist} verifies the leading term against exact divergences and shows, with replicated networks, the size and spread at which the true geometry is preferred in held-out prediction. Proposition~\ref{prop:directed} identifies the regime in which these geometries are needed: directed ties whose asymmetry depends on the vertical offset circulate around triangles by the enclosed area, which no latent distance model with a ranking term can produce, and the resulting divergence grows with the squared areas instead of vanishing with the sixth power of the scale; Section~\ref{sec:exp-directed} shows the win condition, a circulation coefficient above about one half, at which the model leads every gradient competitor by margins of $0.02$ to $0.12$ in held-out log-loss. On real data the coupled model wins where the theory says it should and loses where a referee would ask: on competitive-game counter networks, whose asymmetries are cycles by design, the $\Nil$ circulation model beats every ranking on every other geometry, degree-corrected rankings and free antisymmetric terms in five of six monthly networks of the densest tier; a Euclidean model carrying the same coupled term matches it, so what wins is the coupling of similarity and circulation through shared coordinates and not the $\Nil$ metric, as Theorem~\ref{thm:smallscale} predicts at these scales; and Hoff's additive-and-multiplicative-effects model, with $4N$ node parameters, predicts better than all of them (Section~\ref{sec:exp-counter}); on four undirected networks, nine connectomes, sports and trade networks, whose asymmetries are hierarchies, it does not.

\section{The three anisotropic geometries}\label{sec:background}

Thurston's eight model geometries are the simply connected homogeneous three-manifolds admitting compact quotients, with maximal isometry groups \citep{thurston1997three,scott1983geometries}: the three constant-curvature spaces, the two products $\mathbb{S}^2\times\R$ and $\Hb^2\times\R$, and the three spaces $\Nil$, $\SL$ and $\Sol$ that are neither. The products are also anisotropic (their stabilisers $O(2)\times\{\pm1\}$ do not act transitively on unit tangent vectors). We are concerned with the last three, whose anisotropy is not that of a product. Throughout, $\M$ is one of $\Nil$, $\Sol$, $\SL$, all diffeomorphic to $\R^3$ with global coordinates $z=(z^1,z^2,z^3)$, $d_\M$, $\dvol$, $\Isom(\M)$ and $\eb$ denote the distance, the volume, the isometry group and a base point (the group identity). Supplement~\ref{app:geometry} collects the formulas for translations, geodesics, exponential charts and Normal-like laws that the algorithms use. Here we state what the identifiability theory needs.

\paragraph{$\Nil$ and $\SL$ as twisted line bundles.} Let $B_K$ be the plane of constant curvature $K\in\{0,-1\}$ ($\R^2$, or the Poincar\'e disc $\D$ with metric $4|dw|^2/(1-|w|^2)^2$), with polar coordinates $(\rho,\theta)$ about the origin $\bo$, $ds^2_{B_K}=d\rho^2+S_K(\rho)^2d\theta^2$, $S_0(\rho)=\rho$, $S_{-1}(\rho)=\sinh\rho$, and let
\begin{equation}\label{eq:A}
A_K=F_K(\rho)\,d\theta,\qquad F_0(\rho)=\tfrac12\rho^2,\quad F_{-1}(\rho)=\cosh\rho-1,\qquad\text{so that } dA_K=\dvol_{B_K}.
\end{equation}
\begin{definition}[Twisted bundles]\label{def:bundle}
$\M_K=B_K\times\R$ with coordinates $(w,\zeta)$ and metric
\begin{equation}\label{eq:bundle-metric}
ds^2_{\M_K}=ds^2_{B_K}+(d\zeta-A_K)^2 .
\end{equation}
Then $\M_0=\Nil$ and $\M_{-1}=\SL$. The projection $\pi(w,\zeta)=w$ is a Riemannian submersion with geodesic fibres and $V=\partial_\zeta$ is a unit Killing field.
\end{definition}
For $K=0$ the metric is $dx^2+dy^2+(dz-\tfrac12(x\,dy-y\,dx))^2$, the left-invariant metric of the Heisenberg group in exponential coordinates, with group law
\begin{equation}\label{eq:nil-law}
(x,y,z)\cdot(x',y',z')=\big(x+x',\;y+y',\;z+z'+\tfrac12(xy'-yx')\big),\qquad (x,y,z)^{-1}=(-x,-y,-z),
\end{equation}
identity $\eb=0$ and $\dvol=dx\,dy\,dz$. For $K=-1$, $A_{-1}$ is the Levi-Civita connection form of the disc, and \eqref{eq:bundle-metric} is the lift of the Sasaki metric of the unit tangent bundle $T^1\Hb^2$ to its universal cover: Thurston's $\SL$ geometry, with $\zeta$ the unwrapped angle of a unit tangent vector. In both cases we write $T_p$ for a distinguished isometry with $T_p(\eb)=p$ (left multiplication in $\Nil$, and the lift of a M\"obius translation in $\SL$, equation \eqref{eq:sl-translate} of Supplement~\ref{app:geometry}), $p\star q=T_p(q)$ and $p^{-1}\star q=T_p^{-1}(q)$, so that $d_\M(p,q)=d_\M(\eb,p^{-1}\star q)$. The isometry group is four-dimensional with two components \citep[\S4]{scott1983geometries}. Every isometry preserves the fibration and maps $V$ to $\pm V$. The distinguished translations $T_p$ are chosen once and for all. They need not form a group, and a globally continuous choice exists here because the spaces are contractible (for the spheres in Theorem~\ref{cor:all} the choice is chartwise).

\begin{lemma}[Stabiliser of $\eb$ in the twisted bundles]\label{lem:stab}
Let $R_\varphi(w,\zeta)=(e^{i\varphi}w,\zeta)$ and $\refl(w,\zeta)=(\bar w,-\zeta)$. Then $\Stab(\eb)=\{R_\varphi,\ \refl\circ R_\varphi\}\cong O(2)$, and every $g\in\Isom(\M_K)$ factors uniquely as $g=T_{g(\eb)}\circ h$ with $h\in\Stab(\eb)$.
\end{lemma}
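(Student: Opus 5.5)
The plan has three parts. First, check that the listed maps are isometries fixing $\eb$. Second, show that $\Stab(\eb)$ has no other elements, by combining the structure facts in Definition~\ref{def:bundle} with a linear-algebra argument at $\eb$. Third, obtain the factorisation as a formal consequence of having a distinguished translation for every point.

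\textbf{Step 1: the listed maps are isometries fixing $\eb=(\bo,0)$.} For $R_\varphi$ this is immediate: $ds^2_{B_K}=d\rho^2+S_K(\rho)^2d\theta^2$ and $A_K=F_K(\rho)\,d\theta$ do not depend on $\theta$, and $\zeta$ is untouched.

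For $\refl$, complex conjugation sends $\theta\mapsto-\theta$. It is an isometry of $B_K$ and pulls back $A_K$ to $-A_K$. Together with $\zeta\mapsto-\zeta$, the form $d\zeta-A_K$ pulls back to $-(d\zeta-A_K)$, so its square is preserved. Both maps fix $(\bo,0)$. The maps $R_\varphi$ and $\refl\circ R_\varphi$ are distinct: the first preserves $V$, the second sends $V\mapsto-V$. The relation $\refl R_\varphi\refl=R_{-\varphi}$ shows that they form a group isomorphic to $O(2)$.

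\textbf{Step 2: there are no other elements of $\Stab(\eb)$.} Let $h\in\Stab(\eb)$. Since $h$ fixes a point, it is determined by its differential $dh_\eb\in O(T_\eb\M)$.

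By the stated fact that every isometry preserves the fibration and maps $V$ to $\pm V$, we have $dh_\eb V=\pm V$. Hence $dh_\eb$ preserves the horizontal plane $H=V^\perp\cong T_\bo B_K$ and acts on it by some element of $O(2)$.

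Next I use the bundle structure, namely that $V$ is a unit Killing field whose horizontal distribution has curvature $dA_K=\dvol_{B_K}$. This forces the sign on $V$ to equal the determinant of the action on $H$. Concretely, for horizontal $X,Y$ the vertical part of $[X,Y]$ is $-dA_K(X,Y)V$. An isometry with $V\mapsto\epsilon V$ therefore satisfies $\epsilon\, dA_K(X,Y)=dA_K(h_*X,h_*Y)=\det(h_*|_H)\,dA_K(X,Y)$. Consequently $dh_\eb$ equals either $\mathrm{rot}_\varphi\oplus 1$ or $\mathrm{refl}\oplus(-1)$.

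These are exactly the differentials of $R_\varphi$ and of $\refl\circ R_\varphi$ for suitable $\varphi$. Uniqueness of an isometry given its value and differential at one point, on a connected manifold, then yields $h\in\{R_\varphi,\refl\circ R_\varphi\}$.

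\textbf{Step 3: the factorisation.} For $g\in\Isom(\M_K)$, set $h=T_{g(\eb)}^{-1}\circ g$. Since $T_{g(\eb)}(\eb)=g(\eb)$, this $h$ fixes $\eb$, and so lies in $\Stab(\eb)$ by Step 2. Uniqueness holds because $g=T_p\circ h$ forces $p=g(\eb)$, since $h(\eb)=\eb$, and then $h=T_p^{-1}g$. The group property of the $T_p$ is not needed.

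\textbf{Main obstacle.} The delicate point is the sign-coupling in Step 2. The reversing isometry must also flip the horizontal orientation; this is what rules out the extra candidates $\mathrm{rot}\oplus(-1)$ and $\mathrm{refl}\oplus 1$. I would handle it through the curvature identity $d(d\zeta-A_K)=-\dvol_{B_K}$: the form $\omega=d\zeta-A_K$ is the metric dual of $V$, so $h^*\omega=\epsilon\,\omega$, and taking exterior derivatives gives $h^*\dvol_{B_K}=\epsilon\,\dvol_{B_K}$ on $H$.

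If one prefers not to rely on the quoted fibration fact, it could be derived from the fact that $V$ spans the eigenline of the Ricci tensor with the simple eigenvalue. That eigenvalue is $\tfrac12$ for $\Nil$, versus the double eigenvalue $-\tfrac12$ on $H$.
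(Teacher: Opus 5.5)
Your proof is correct, and it rests on the same key fact as the paper's proof. The form $\omega=d\zeta-A_K$ is the metric dual of $V$, so an isometry with $V\mapsto\epsilon V$ satisfies $h^*\omega=\epsilon\omega$. Because $d\omega=-\pi^*\dvol_{B_K}\neq0$, the fibre sign $\epsilon$ is tied to the orientation of the base.

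Where you differ is in how this fact is deployed.
\begin{itemize}
\item The paper argues globally. It lets $h$ descend to an isometry of $B_K$ fixing $\bo$, then composes with the matching $h_0\in\{R_\varphi,\refl\circ R_\varphi\}$. This gives an isometry $h'(w,\zeta)=(w,\epsilon\zeta+f(w))$ covering the identity. Preservation of the horizontal distribution yields $df=(1-\epsilon)A_K$. Then $\epsilon=-1$ is impossible because $A_K$ is not closed, and $\epsilon=1$ forces $f\equiv0$.
\item You work with the one-jet at $\eb$ instead. Differentiating $h^*\omega=\epsilon\omega$ gives $\det(dh_\eb|_H)=\epsilon$, so $dh_\eb$ is the differential of one of the listed maps. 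Rigidity of isometries on a connected manifold finishes the argument.
\end{itemize}

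What each route buys:
\begin{itemize}
\item Your route is shorter and uses the curvature only at the single point $\eb$. It exhibits the isomorphism $h\mapsto dh_\eb|_H$ onto $O(2)$ directly. Your check $\refl R_\varphi\refl=R_{-\varphi}$ covers the group structure, which the paper leaves implicit.
\item The paper's route avoids the rigidity lemma. With the normalisation $f(\bo)=0$ dropped, its computation also identifies the kernel of the descent map $\Isom(\M_K)\to\Isom(B_K)$ as the fibre translations.
\end{itemize}

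Your Ricci remark is a real addition, since the paper only cites Scott for $h_*V=\pm V$. The remark derives it from the simple Ricci eigenvalue $\tfrac12$ on $V$, against the double eigenvalue $-\tfrac12$ on $H$ in $\Nil$ and $-\tfrac32$ in $\SL$.

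One harmless sign slip: for horizontal fields, $\omega([X,Y])=-d\omega(X,Y)=dA_K(X,Y)$. So the vertical part of $[X,Y]$ is $+dA_K(X,Y)V$, not $-dA_K(X,Y)V$. Only the proportionality is used, so nothing changes.
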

Rotations act on the base only. The isometries that reverse the orientation of the base reverse the fibre as well (and hence preserve the orientation of the total space). This coupling of the base reflection to the fibre is the geometric origin of the two-anchor gauge of Section~\ref{sec:ident}.

\paragraph{$\Sol$.} $\Sol=\R^2\rtimes\R$ with coordinates $(x,y,z)$, group law $(x,y,z)(x',y',z')=(x+e^{-z}x',\,y+e^{z}y',\,z+z')$, inverse $(x,y,z)^{-1}=(-e^{z}x,-e^{-z}y,-z)$, left-invariant metric and volume
\begin{equation}\label{eq:sol-metric}
ds^2_{\Sol}=e^{2z}dx^2+e^{-2z}dy^2+dz^2,\qquad\dvol_{\Sol}=dx\,dy\,dz .
\end{equation}
By \citet[\S4]{scott1983geometries} (see also \citealp{troyanov1998horizon}), $\Isom(\Sol)=\Sol\rtimes D_4$ is three-dimensional with eight components, the stabiliser of $\eb$ being
\begin{equation}\label{eq:D4}
D_4=\big\langle\ \sigma_x:(x,y,z)\mapsto(-x,y,z),\quad \sigma_y:(x,y,z)\mapsto(x,-y,z),\quad \varsigma:(x,y,z)\mapsto(y,x,-z)\ \big\rangle .
\end{equation}
There are no continuous rotations. The planes $\{y=0\}$ and $\{x=0\}$ are totally geodesic Hyperbolic planes (in horospherical coordinates), which yields exact distances along them:
\begin{equation}\label{eq:sol-planes}
\cosh d_{\Sol}\big(\eb,(x,0,z)\big)=\cosh z+\tfrac12x^2e^{z},\qquad \cosh d_{\Sol}\big(\eb,(0,y,z)\big)=\cosh z+\tfrac12y^2e^{-z}.
\end{equation}

\paragraph{Distances.} None of the three geometries has a closed-form distance. By the symmetries above, $d_\M(\eb,\cdot)$ is a function of $(\rho,|\zeta|)$ in the bundles and of $(|x|,|y|,|z|)$ up to the swap $\varsigma$ in $\Sol$. In $\Nil$ this reduces to a scalar equation.
\begin{proposition}[Distance from the identity in $\Nil$]\label{prop:nildist}
Let $q=(x,y,z)$, $r=\sqrt{x^2+y^2}$, $\zeta=|z|$. Then $d_{\Nil}(\eb,q)=r$ if $\zeta=0$,
\begin{equation}\label{eq:nil-axis}
d_{\Nil}\big(\eb,(0,0,z)\big)=\zeta\ \ (\zeta\le2\pi),\qquad d_{\Nil}\big(\eb,(0,0,z)\big)=2\sqrt{\pi(\zeta-\pi)}\ \ (\zeta\ge2\pi),
\end{equation}
and if $r,\zeta>0$ then, with $s^\ast\in(0,2\pi)$ the unique root of
\[
\Phi_r(s)=s+\frac{r^2(s-\sin s)}{8\sin^2(s/2)}=\zeta,
\]
\begin{equation}\label{eq:nil-dist}
d_{\Nil}(\eb,q)=\sqrt{\,r^2\Big(\frac{s^\ast}{2\sin(s^\ast/2)}\Big)^2+(s^\ast)^2\,}.
\end{equation}
\end{proposition}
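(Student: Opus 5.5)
We derive the geodesics of $\Nil$ explicitly and then minimise over them. Use the Riemannian submersion $\pi:\Nil\to\R^2$ of Definition~\ref{def:bundle} with unit Killing field $V=\partial_z$. Along any geodesic the quantity $c=\langle\dot\gamma,V\rangle=\dot z-\tfrac12(x\dot y-y\dot x)$ is conserved. For a unit-speed geodesic the horizontal projection $\pi\circ\gamma$ has speed $\sqrt{1-c^2}$ and curvature proportional to $c$ (the twist of $A_0$ has $dA_0=dx\wedge dy$). Hence $\pi\circ\gamma$ is a straight line when $c=0$, a circle when $0<|c|<1$, and constant when $|c|=1$ (the vertical fibre).

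For the circle case, let the projected curve leave the origin and sweep an arc of central angle $s$ on a circle of radius $a$. The chord is $r=2a\sin(s/2)$. The horizontal length is $L_h=as$, and the vertical length is $L_v=|c|L$ with $L^2=L_h^2+L_v^2$. Integrating $\dot z=c+\tfrac12(x\dot y-y\dot x)$ gives
\[
z=L_v+\text{(signed area between the arc and its chord)}=L_v+\tfrac12 a^2(s-\sin s).
\]
The circle's radius is tied to $c$ through the curvature relation, which in these coordinates reads $L_v=a\,L_h$, so $L_v=a^2 s$. Substituting $a=r/(2\sin(s/2))$ and normalising $s$ to the paper's convention gives exactly $\Phi_r(s)=\zeta$ and
\[
L^2=r^2\Big(\frac{s}{2\sin(s/2)}\Big)^2+s^2,
\]
matching \eqref{eq:nil-dist}. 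I would check this normalisation first. The factor $8=2\cdot 4$ in $\Phi_r$ suggests that the arc's central angle is $s$ while $z$ picks up $a^2(s-\sin s)/2$; I would fix constants by matching the small-$r$ and $s\to0$ limits to $L\approx\sqrt{r^2+\ldots}$.

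Next come uniqueness of the root and minimality. Show that $\Phi_r$ is strictly increasing on $(0,2\pi)$, using that $(s-\sin s)/\sin^2(s/2)$ is increasing. Its limits are $0$ at $s\to0^+$ and $+\infty$ at $s\to2\pi^-$, so a unique root $s^\ast$ exists. Minimality rests on two facts. First, a geodesic whose projected arc sweeps more than a full turn is not minimising: the conjugate/cut point in $\Nil$ occurs at $s=2\pi$, where a rotation by $R_\varphi$ from Lemma~\ref{lem:stab} produces a one-parameter family of geodesics with the same endpoints. Second, among geodesics with $s\in(0,2\pi)$, the endpoint map $(s,\varphi)\mapsto q$ is a bijection onto $\{r>0,z>0\}$, so the candidate is the unique geodesic and, by Hopf–Rinow, minimising. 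The case $z<0$ follows from the orientation-reversing isometry $\refl$, and the case $\zeta=0$ from the straight line, which is the only geodesic with $z=0$ and has length $r$.

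The axis case $r=0$ needs separate handling. For $\zeta\le2\pi$, the vertical segment has length $\zeta$. The competing geodesics are full circles, $s=2\pi$, with $r=0$ and $a$ free. From $z=L_v+\pi a^2$ and $L^2=(2\pi a)^2+L_v^2$, minimising over $a$ gives $\min\big(\zeta,\,2\sqrt{\pi(\zeta-\pi)}\big)$, with the two branches crossing at $\zeta=2\pi$. This is \eqref{eq:nil-axis}, and it agrees with the limit $s^\ast\to2\pi$ as $r\to0$.

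I expect the main obstacle to be the global minimality step, namely excluding geodesics with $s>2\pi$. I would first invoke the classical cut-locus description of the Heisenberg group. If a self-contained argument is needed, I would compare lengths directly: for $s\ge2\pi$ one can remove a full loop of area $\pi a^2$ and replace it by vertical travel of the same $z$-gain, which is shorter whenever $\pi a^2$ exceeds a comparable vertical length.
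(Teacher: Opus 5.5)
For $r>0$ your plan is the paper's proof. It uses the explicit geodesics and the strict monotonicity of $\Phi_r$. The rotation family $R_\varphi\circ\gamma$ exhibits a conjugate point when the projection closes at $s=2\pi$, so no geodesic minimises beyond one turn. Existence of a minimiser and uniqueness of the root then finish the argument. That step already disposes of the ``main obstacle'' in your last paragraph; you need neither the cut-locus literature nor the loop-removal heuristic, which as stated is not an argument.

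Your derivation of the endpoint formulas, however, contains an error: the curvature relation is inverted. A unit-speed geodesic with vertical momentum $c$ projects to a circle of radius $a=\sqrt{1-c^2}/|c|$ traversed at speed $\sqrt{1-c^2}$, hence with angular speed $|c|$. So $s=|c|t$, $L_h=as$ and $L_v=|c|t=s$, i.e.\ $L_v=L_h/a$, not $aL_h$. With $L_v=s$ and $a=r/(2\sin(s/2))$, your (correct) area formula $z=L_v+\tfrac12a^2(s-\sin s)$ is exactly $\Phi_r(s)$, and $L^2=a^2s^2+s^2$ is the square of \eqref{eq:nil-dist}. With $L_v=a^2s$ every term of $z$ carries $a^2$, so no renormalisation of $s$ can produce $\Phi_r$; the chord already fixes $s$ as the central angle. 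You should also prove, not assert, that $(s-\sin s)/\sin^2(s/2)$ increases. Its derivative has the sign of $2\sin^3(s/2)-(s-\sin s)\cos(s/2)$, which is trivially positive on $[\pi,2\pi)$ and on $(0,\pi)$ is equivalent to $\tan u>u$ with $u=s/2$. Finally, your endpoint map needs three parameters, $(s,a,\varphi)$.

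On the axis you take a genuinely different route, which works once stated properly. Geodesics with $0<|c|<1$ meet the axis only at $s=2k\pi$, and $k\ge2$ is excluded by the same conjugate-point argument. The single-turn geodesic to $(0,0,\zeta)$ has $a$ fixed by $\zeta=2\pi+\pi a^2$, so it exists only for $\zeta>2\pi$, where its length $2\sqrt{\pi(\zeta-\pi)}$ is less than $\zeta$. With the vertical segment and Hopf--Rinow this gives \eqref{eq:nil-axis}. Your ``minimise over $a$'' with $L_v$ left free is really a minimisation over once-around helices. These are admissible curves but geodesic only at $L_v=2\pi$, so on its own it gives only an upper bound.

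The paper proves the matching lower bound for every curve. Minkowski gives $L^2\ge L_h^2+V^2$. The vertical variation satisfies $V\ge|\zeta-\mathcal A|$, with $\mathcal A$ the algebraic area of the projected loop. Banchoff--Pohl gives $L_h^2\ge4\pi|\mathcal A|$. Hence $L^2\ge\min_{\mathcal A}\{4\pi|\mathcal A|+(\zeta-\mathcal A)^2\}$, which is precisely the function you minimise, with $\mathcal A=\pi a^2$. Your version is more elementary for $\Nil$, needing only Jacobi's theorem and Hopf--Rinow once the geodesics are classified. The paper's needs no classification of geodesics, transfers to $\SL$ through the hyperbolic isoperimetric inequality, and is reused in the proof of Proposition~\ref{prop:nonrigid}.
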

The proof (Supplement~\ref{app:proofs}) combines the explicit geodesics, a conjugate-point argument for $r>0$ (all geodesics with the same non-zero vertical momentum and positive horizontal speed return to the fibre of $\eb$ after one turn of their projection, so none minimises beyond its first turn), and, for the axis, an isoperimetric argument that also yields the exact vertical profile of $\SL$, $D_{-1}(0,\zeta)^2=\zeta^2/2+2\pi\zeta-2\pi^2$ for $\zeta\ge2\pi$. The vertical direction is thus compressed at large scale ($d\sim2\sqrt{\pi\zeta}$) while horizontal distances grow linearly. In $\SL$ the compression is only by the factor $\sqrt2$ and in $\Sol$ horizontal distances grow logarithmically in the coordinates. For $\SL$ and $\Sol$ we tabulate $d_\M(\eb,\cdot)$ once by geodesic shooting on the reduced domains (two-dimensional and three-dimensional grids). The tables are approximations, validated against closed forms and against independent boundary-value solvers (typical error $10^{-3}$, maximum $0.012$ for $\Sol$ and $0.004$ for $\SL$ on the validation sets, Supplement~\ref{app:geodesics}), so that all inferences below are for the tabulated surrogate distance. A distance evaluation costs one or two interpolations.

\paragraph{Normal-like laws.} Each geometry has a global chart $\chi_\M:\R^3\to\M$, $\chi_\M(0)=\eb$, with explicit volume Jacobian $J^\chi_\M$: the identity in exponential coordinates for $\Nil$, the group exponential for $\Sol$, and (Riemannian exponential of the base, identity on the fibre) for $\SL$ (Supplement~\ref{app:geometry}).
\begin{definition}[Left-translated wrapped Normal]\label{def:lwn}
$z\sim\LWN_\M(\mu,\Sigma)$ if $z=\mu\star\chi_\M(\delta)$ with $\delta\sim\mathcal{N}_3(0,\Sigma)$. Its density with respect to $\dvol$ is
\begin{equation}\label{eq:lwn-density}
f_{\LWN}(z\mid\mu,\Sigma)=\frac{\varphi_3(\delta;0,\Sigma)}{J^\chi_\M(\delta)},\qquad \delta=\chi_\M^{-1}(\mu^{-1}\star z).
\end{equation}
\end{definition}
With $\Sigma=\diag(\sigma_h^2,\sigma_h^2,\sigma_v^2)$ (separate horizontal and vertical spreads) the family is equivariant under $\Isom(\M)$ (Lemma~\ref{lem:equivariance}). It is reparameterisable and serves as prior and as variational family. We write $J^\chi_\M$ for this chart Jacobian to distinguish it from the slice Jacobian of Section~\ref{sec:target}. The maximum-entropy Riemannian Normal $\propto\exp(-d_\M(\mu,z)^2/2\sigma^2)$ is also available (Supplement~\ref{app:geometry}), but in $\Nil$ and $\Sol$ its tails are heavier than those of any wrapped Normal (in $\Nil$, $d^2\sim4\pi|z|$ along the axis, and in $\Sol$, $d\sim2\log|x|$), so there it cannot be sampled by rejection from the wrapped family. In $\SL$ the vertical profile is Gaussian to leading order and a broad wrapped proposal dominates.

\section{Latent space network models on $\Nil$, $\Sol$ and $\SL$}\label{sec:model}

Consider a network on $N$ nodes indexed by $[N]=\{1,\dots,N\}$, with symmetric binary adjacency matrix $\mathcal{Y}=(y_{ij})$, $y_{ii}=0$. Each node has a latent coordinate $z_i\in\M$ and $Z=(z_1,\dots,z_N)\in\M^N$ denotes the latent configuration. Following \citet{hoff2002latent} and \citet{smith2019geometry}, the generic model is
\begin{equation}\label{eq:model}
\begin{aligned}
Y_{ij}&\sim\mathrm{Bernoulli}(p_{ij}), \qquad \logit(p_{ij})=\alpha-d_\M(z_i,z_j),\qquad 1\le i<j\le N,\\
z_i&\overset{\text{iid}}{\sim} f_\M(\cdot\mid\theta_z),\qquad i\in[N],
\end{aligned}
\end{equation}
where $\alpha\in\R$ is the base rate of ties, $\logit(p)=\log\{p/(1-p)\}$ with inverse the logistic function $\sigma(t)=(1+e^{-t})^{-1}$, and $f_\M(\cdot\mid\theta_z)$ is one of the Normal-like families of Section~\ref{sec:background}. We use $f_\M=\LWN_\M(\mu,\Sigma)$ with $\theta_z=(\mu,\sigma_h,\sigma_v)$. Because $d_\M$ is a metric, the model inherits transitivity. Because $\M$ is anisotropic, the tie probabilities respond differently to horizontal and vertical displacements, and, in $\Nil$ and $\SL$, to the area enclosed by horizontal loops (Section~\ref{sec:holonomy}). Algorithm~\ref{alg:sample} samples a network from \eqref{eq:model}, and the likelihood is
\begin{equation}\label{eq:lik}
p(\mathcal{Y}\mid Z,\alpha)=\prod_{i<j}p_{ij}^{y_{ij}}(1-p_{ij})^{1-y_{ij}} .
\end{equation}
Directed and valued ties can be accommodated exactly as in the Euclidean case. Directed ties are taken up in Section~\ref{sec:directed}.

\begin{algorithm}[H]
\caption{Sampling a network from \eqref{eq:model}}\label{alg:sample}
\begin{algorithmic}
\STATE Sample $\delta_i\sim\mathcal{N}_3(0,\Sigma)$ and set $z_i=\mu\star\chi_\M(\delta_i)$, $i\in[N]$.
\STATE For $i<j$: compute $d_{ij}=d_\M(\eb,\,z_i^{-1}\star z_j)$ (Proposition~\ref{prop:nildist} or the distance table), $p_{ij}=\{1+\exp(d_{ij}-\alpha)\}^{-1}$, and draw $y_{ij}\sim\mathrm{Bernoulli}(p_{ij})$.
\end{algorithmic}
\end{algorithm}

\section{Non-identifiability of the latent coordinates}\label{sec:ident}

\subsection{Orbit non-identifiability and anchors}\label{sec:orbit}

The likelihood \eqref{eq:lik} depends on $Z$ only through the distance matrix $D(Z)=(d_\M(z_i,z_j))_{i<j}$. Hence for every $g\in\Isom(\M)$, acting diagonally by $g\cdot Z=(g z_1,\dots,g z_N)$,
\begin{equation}\label{eq:invariance}
p(\mathcal{Y}\mid g\cdot Z,\alpha)=p(\mathcal{Y}\mid Z,\alpha),
\end{equation}
and $Z$ is identifiable at best up to its orbit $\Isom(\M)\cdot Z$. In the constant-curvature case this ambiguity has dimension $\dim\Isom=d(d+1)/2$ and is removed by $d+1$ anchors \citep{papamichalis2021latent}. Here the orbits are smaller: $\dim\Isom(\M)=4$ for $\Nil$ and $\SL$ and $3$ for $\Sol$, against $6$ for the three-dimensional constant-curvature spaces. As in the earlier work on constant-curvature latent spaces we remove the ambiguity by a \emph{gauge}: we choose anchor indices, apply a data-dependent isometry that brings the anchors to a canonical position, and constrain the anchors to remain canonical during inference. No per-iteration Procrustes alignment is needed, and the inter-anchor distances remain free.

\subsection{Twisted bundles: two anchors}\label{sec:gauge-bundle}

Let $\M=\M_K$ and fix distinct anchor indices $I=(i_1,i_2)$. For $Z\in\M^N$ write $\tilde Z=T_{z_{i_1}}^{-1}\cdot Z$ (so $\tilde z_{i_1}=\eb$) and $\tilde z_{i_2}=(\tilde w,\tilde\zeta)$. Define the \emph{non-degenerate set} and the \emph{slice}
\begin{equation}\label{eq:slice-bundle}
\Omega_I=\{Z:\tilde w\ne0,\ \tilde\zeta\ne0\},\qquad
\mathcal{S}_I=\{Z\in\M^N:\ z_{i_1}=\eb,\ \ z_{i_2}=(\rho,0,\zeta)\ \text{with } \rho>0,\ \zeta>0\},
\end{equation}
where $(\rho,0,\zeta)$ denotes the point whose base projection lies on the positive real axis at geodesic distance $\rho$ from $\bo$ (in $\SL$ its disc coordinate is $\tanh(\rho/2)$) and whose fibre coordinate is $\zeta$. For $Z\in\Omega_I$ let $\varphi=\arg\tilde w$, $\epsilon=\sgn\tilde\zeta$, and $h_Z=\refl^{(1-\epsilon)/2}\circ R_{-\varphi}\in\Stab(\eb)$.

\begin{theorem}[Two-anchor gauge fixing in $\Nil$ and $\SL$]\label{thm:gauge-bundle}
Define $\Gamma_I(Z)=h_Z\cdot\big(T_{z_{i_1}}^{-1}\cdot Z\big)$ for $Z\in\Omega_I$. Then:
\begin{enumerate}[(i)]
\item (\emph{gauge conditions}) $\Gamma_I(Z)\in\mathcal{S}_I$, and $\Gamma_I(Z)=g_Z\cdot Z$ for the isometry $g_Z=h_Z\circ T_{z_{i_1}}^{-1}$;
\item (\emph{orbit section}) if $g\in\Isom(\M)$ and $g\cdot Z\in\mathcal{S}_I$, then $g=g_Z$. In particular each orbit through $\Omega_I$ meets $\mathcal{S}_I$ in exactly one point, and $\Gamma_I(g\cdot Z)=\Gamma_I(Z)$ for all $g$;
\item (\emph{regularity}) $\Omega_I$ is open with complement of measure zero, $\Gamma_I$ is continuous on $\Omega_I$, and $\mathcal{S}_I$ is a $(3N-4)$-dimensional submanifold of $\M^N$.
\end{enumerate}
\end{theorem}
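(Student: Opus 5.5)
The argument rests on Lemma~\ref{lem:stab}: every isometry factors uniquely as $g=T_{g(\eb)}\circ h$ with $h\in\Stab(\eb)\cong O(2)$. We also use the explicit action of the stabiliser on a point $(w,\zeta)$: $R_\varphi$ rotates $w$ and fixes $\zeta$, while $\refl$ sends $(w,\zeta)$ to $(\bar w,-\zeta)$.

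For (i), set $\tilde Z=T_{z_{i_1}}^{-1}\cdot Z$, so that $\tilde z_{i_1}=\eb$ and $\tilde z_{i_2}=(\tilde w,\tilde\zeta)$ with $\tilde w\neq 0$ and $\tilde\zeta\neq0$ on $\Omega_I$. The map $h_Z$ lies in the stabiliser, so it keeps $\eb$ fixed. Applying $R_{-\varphi}$ takes $\tilde w$ to the positive real $|\tilde w|$ and leaves $\tilde\zeta$ unchanged. If $\epsilon=-1$, applying $\refl$ then fixes the real base point and sends $\tilde\zeta$ to $-\tilde\zeta>0$. The image of $\tilde z_{i_2}$ is therefore $(\rho,0,|\tilde\zeta|)$, where $\rho$ is the geodesic distance from $\bo$ to $\tilde w$; in $\SL$ this uses that $R_\varphi$ is a rotation about $\bo$ in the disc. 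Finally, $g_Z=h_Z\circ T_{z_{i_1}}^{-1}$ is a composition of isometries.

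For (ii), suppose $g\cdot Z\in\mathcal S_I$. Then $g':=g\circ g_Z^{-1}$ maps $\Gamma_I(Z)\in\mathcal S_I$ to a point of $\mathcal S_I$. Because both configurations have anchor $i_1$ at $\eb$, $g'$ fixes $\eb$, and Lemma~\ref{lem:stab} gives $g'=R_\psi$ or $g'=\refl\circ R_\psi$. Anchor $i_2$ is sent from $(\rho,0,\zeta)$ with $\zeta>0$ to $(\rho',0,\zeta')$ with $\zeta'>0$.
\begin{itemize}
\item If $g'=\refl\circ R_\psi$, then $\zeta'=-\zeta<0$, a contradiction.
\item If $g'=R_\psi$, then $e^{i\psi}$ times a positive real must be a positive real, which forces $\psi\in 2\pi\mathbb Z$.
\end{itemize}
Hence $g'=\mathrm{id}$ and $g=g_Z$. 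Each orbit through $\Omega_I$ then meets $\mathcal S_I$ exactly once: the point $\Gamma_I(Z)$ lies in the intersection by (i), and any other point $g\cdot Z$ of the intersection equals $g_Z\cdot Z$. This gives $\Gamma_I(g\cdot Z)=\Gamma_I(Z)$. We also need that $\Omega_I$ is $\Isom$-invariant, so that $\Gamma_I(g\cdot Z)$ is defined. The quantity $T_{gz_{i_1}}^{-1}\circ g\circ T_{z_{i_1}}$ fixes $\eb$, hence lies in $O(2)$, and $O(2)$ preserves the conditions $w\ne0$ and $\zeta\ne0$.

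For (iii), the map $Z\mapsto \tilde z_{i_2}=T_{z_{i_1}}^{-1}(z_{i_2})$ is smooth and, for fixed $z_{i_1}$, a diffeomorphism in $z_{i_2}$; this uses the standing assumption that $T_p$ depends continuously, indeed smoothly, on $p$. The complement of $\Omega_I$ is the preimage of $\{w=0\}\cup\{\zeta=0\}$, a closed set of codimension at least one. By Fubini in $z_{i_2}$ it has measure zero, and $\Omega_I$ is open. The map $\Gamma_I$ is continuous on $\Omega_I$ because $\varphi=\arg\tilde w$ only needs to be defined modulo $2\pi$: $R_{-\varphi}$ depends continuously on $\tilde w\ne0$, and $\epsilon$ is locally constant on $\Omega_I$. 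The slice $\mathcal S_I$ is $\{\eb\}\times\{(\rho,0,\zeta):\rho,\zeta>0\}\times\M^{N-2}$, an open 2-dimensional half-plane times a $3(N-2)$-manifold, of dimension $2+3N-6=3N-4$. It is embedded because it is a product of embedded submanifolds.

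The main obstacle I expect is the invariance of $\Omega_I$ in step (ii), because the $T_p$ need not form a group. Beyond that, the proof hinges on the precise form of the stabiliser, which is exactly what Lemma~\ref{lem:stab} supplies.
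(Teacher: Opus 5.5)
Your proof is correct and follows the paper's route: Lemma~\ref{lem:stab} plus a case split over the two components of $\Stab(\eb)\cong O(2)$, with the same continuity and dimension counts in (iii). The differences are cosmetic. In (ii) you test $g\circ g_Z^{-1}$ at the canonical point, whereas the paper compares $g\circ T_{z_{i_1}}$ with $h_Z$ at $\tilde z_{i_2}$. You also make explicit the $\Isom$-invariance of $\Omega_I$, which the paper uses tacitly when it writes $\Gamma_I(g\cdot Z)$.
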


The proof (Supplement~\ref{app:proofs}) reduces to Lemma~\ref{lem:stab}: an isometry fixing $\eb$ is a rotation of the base or a rotation composed with the reflection $\refl$. The rotation is fixed by the direction of $\tilde w$ and the reflection by the sign of $\tilde\zeta$. Thus the second anchor loses only one continuous degree of freedom (the angle of its base coordinate) and one sign. Its remaining coordinates $(\rho,\zeta)$, equivalently the distances from $\eb$ to $z_{i_2}$ and to the fibre through $z_{i_2}$, are learned from the data. For $d=2$ constant-curvature spaces the second anchor also loses one angle but a third anchor is needed to fix the reflection. Here the reflection is coupled to the fibre and is fixed by the second anchor's vertical coordinate.

\begin{remark}[Alternative sign rule]\label{rem:third}
When $|\tilde\zeta|$ is small the sign $\epsilon$ is fragile. One may instead fix the reflection by a third anchor $i_3$, requiring $\mathrm{Im}\,\tilde w_{i_3}>0$ after the rotation. The theorem holds verbatim with $\Omega_I$ replaced by $\{\tilde w\ne0,\ \mathrm{Im}(e^{-i\varphi}\tilde w_{i_3})\ne0\}$ and the slice by $\{z_{i_1}=\eb,\ z_{i_2}=(\rho,0,\zeta),\ \rho>0,\ \mathrm{Im}\,w_{i_3}>0\}$, which has the same dimension $3N-4$.
\end{remark}

\subsection{$\Sol$: one anchor and a discrete choice}\label{sec:gauge-sol}

Fix $I=(i_1,i_2)$, write $\tilde Z=z_{i_1}^{-1}\cdot Z$ and $\tilde z_{i_2}=(\tilde x,\tilde y,\tilde z)$, and let $F=(0,\infty)^3$ be the open positive orthant.

\begin{theorem}[Gauge fixing in $\Sol$]\label{thm:gauge-sol}
Let $\Omega_I=\{Z:\tilde x\tilde y\tilde z\ne0\}$ and $\mathcal{S}_I=\{Z: z_{i_1}=\eb,\ z_{i_2}\in F\}$. For $Z\in\Omega_I$ there is a unique $h_Z\in D_4$ with $h_Z\tilde z_{i_2}\in F$, namely $h_Z=\varsigma^{[\tilde z<0]}\circ\sigma_y^{[\tilde y<0]}\circ\sigma_x^{[\tilde x<0]}$, and $\Gamma_I(Z)=h_Z\cdot\tilde Z$ satisfies (i)--(iii) of Theorem~\ref{thm:gauge-bundle} with $\dim\mathcal{S}_I=3N-3$. In particular the latent coordinates in $\Sol$ are identifiable, from the orbit, up to a single translation and a finite group of order eight.
\end{theorem}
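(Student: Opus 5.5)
The plan mirrors the proof of Theorem~\ref{thm:gauge-bundle}, with Lemma~\ref{lem:stab} replaced by the structure $\Isom(\Sol)=\Sol\rtimes D_4$ of \eqref{eq:D4}, quoted from \citet{scott1983geometries}. The one fact I need from it is that $D_4$ is exactly the stabiliser of $\eb$. Then every $g\in\Isom(\Sol)$ with $g(z_{i_1})=\eb$ has the form $g=h\circ L_{z_{i_1}}^{-1}$ with $h\in D_4$, because $g\circ L_{z_{i_1}}$ fixes $\eb$. Everything else is bookkeeping on the explicit action of $D_4$ on coordinates.

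\textbf{Step 1: $D_4$ acts simply transitively on the eight sign chambers.} The group is generated by $\sigma_x$, $\sigma_y$ and $\varsigma$, and it is cut out by the eight sign patterns of $(x,y,z)$. I would list the orbit of a point $q=(x,y,z)$ with $xyz\ne0$.
\begin{itemize}
\item The subgroup $\langle\sigma_x,\sigma_y\rangle$ gives the four points $(\pm x,\pm y,z)$, which realise every sign pattern of the first two coordinates with the sign of $z$ unchanged.
\item Composing with $\varsigma$ gives the four points $(\pm y,\pm x,-z)$, which realise the same patterns with the sign of $z$ reversed.
\end{itemize}
So the orbit meets each open orthant exactly once, and in particular it meets $F$ once.

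The stabiliser of $q$ is trivial. Elements not involving $\varsigma$ change at least one nonzero coordinate's sign unless they are the identity. Elements involving $\varsigma$ flip the sign of $z\ne0$. Hence $h_Z$ is unique.

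To check the displayed formula for $h_Z$, apply the factors in order. First, $\sigma_x$ and $\sigma_y$ make $\tilde x$ and $\tilde y$ positive. Then, if $\tilde z<0$, $\varsigma$ swaps the two positive entries and makes the last coordinate positive. The result lies in $F$.

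\textbf{Step 2: parts (i) and (ii).} Part (i) is immediate. $\Gamma_I(Z)=g_Z\cdot Z$ with $g_Z=h_Z\circ L_{z_{i_1}}^{-1}$, its $i_1$ entry is $h_Z(\eb)=\eb$, and its $i_2$ entry is $h_Z\tilde z_{i_2}\in F$.

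For part (ii), suppose $g\cdot Z\in\mathcal S_I$. Then $g(z_{i_1})=\eb$, so by the factorisation $g=h\circ L_{z_{i_1}}^{-1}$ with $h\in D_4$ and $h\tilde z_{i_2}\in F$. Step 1 then forces $h=h_Z$, so $g=g_Z$.

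Invariance needs one further check: $\Omega_I$ is $\Isom$-invariant. If $Z'=g\cdot Z$, then $\tilde z'_{i_2}$ lies in the $D_4$-orbit of $\tilde z_{i_2}$, by the same factorisation applied to $L_{z'_{i_1}}^{-1}\circ g\circ L_{z_{i_1}}$. Since $D_4$ only permutes coordinates and changes signs, the condition $xyz\ne0$ is preserved. Uniqueness of the slice point in the orbit then gives $\Gamma_I(g\cdot Z)=\Gamma_I(Z)$.

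\textbf{Step 3: part (iii) and the final claim.} The map $Z\mapsto z_{i_1}^{-1}z_{i_2}$ is smooth.
\begin{itemize}
\item $\Omega_I$ is the preimage of $\{xyz\ne0\}$ under this map, so it is open. Its complement is contained in the union of three smooth hypersurfaces, so it has measure zero.
\item $h_Z$ is determined by the signs of $\tilde x$, $\tilde y$, $\tilde z$, so it is locally constant on $\Omega_I$. Hence $\Gamma_I$ is continuous, indeed smooth, there.
\item $\mathcal S_I=\{\eb\}\times F\times\Sol^{N-2}$, up to reordering factors, is an open subset of a product submanifold of dimension $0+3+3(N-2)=3N-3$.
\end{itemize}

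The identifiability statement is a restatement of (ii). Two configurations in $\Omega_I$ lie in the same orbit exactly when they differ by a left translation and an element of $D_4$, which has order eight.

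\textbf{Main obstacle.} The only non-elementary input is the claim that the stabiliser of $\eb$ is exactly $D_4$ in these coordinates, with no continuous part. I take this from the cited classification rather than reprove it. If a check were wanted, I would verify that each generator in \eqref{eq:D4} preserves \eqref{eq:sol-metric}, and that the identity component of $\Isom(\Sol)$ is $\Sol$ itself, so the stabiliser is finite.
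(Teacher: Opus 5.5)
Your proposal is correct and follows the paper's own proof. Both cite Scott for $\Isom(\Sol)=\Sol\rtimes D_4$ with $D_4=\Stab(\eb)$, factor any $g$ with $g z_{i_1}=\eb$ as $h\circ T_{z_{i_1}}^{-1}$, and observe that the orbit $\{(\pm x,\pm y,z),(\pm y,\pm x,-z)\}$ meets $F$ exactly once; (i)--(iii) and $\dim\mathcal{S}_I=3N-3$ then follow as in Theorem~\ref{thm:gauge-bundle}, and your check that $\Omega_I$ is $\Isom$-invariant supplies a step the paper leaves implicit. One small caveat: the optional sanity check in your last paragraph would only give $D_4\subseteq\Stab(\eb)$ and that $\Stab(\eb)$ is finite, not that the two are equal, so the citation remains necessary.
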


Only the first anchor is constrained by equalities. The second anchor keeps all three coordinates free within the orthant, and the gauge map $h_Z$ is locally constant on $\Omega_I$. This reflects the absence of continuous rotations in $\Sol$.

The two theorems have the same proof: an isometry that fixes the first anchor lies in $\Stab(\eb)$, and the remaining anchors are used to pick out one element of $\Stab(\eb)$. The same argument covers the other Thurston geometries.

\begin{theorem}[One gauge theorem for the eight model geometries]\label{cor:all}
Let $\M$ be a three-dimensional Thurston geometry with a base point $\eb$ and a (chartwise continuous) transitive family of isometries $\{T_p\}$ with $T_p(\eb)=p$. Then statements (i)--(iii) of Theorem~\ref{thm:gauge-bundle} hold with $\Stab(\eb)$, the non-degenerate set and the canonical set replaced as follows.
\begin{enumerate}[(a)]
\item Constant curvature ($\R^3$, $\mathbb{S}^3$, $\Hb^3$): $\Stab(\eb)=O(3)$. Four anchors, $z_{i_2}$ on the positive first axis, $z_{i_3}$ in the open half-plane $\{x_3=0,\ x_2>0\}$, and $z_{i_4}$ with $x_3>0$ (the last condition removes the reflection $x_3\mapsto-x_3$, which fixes the first three anchors), $\dim\mathcal{S}_I=3N-6$. This is the $d+1=4$ anchor gauge of \citet{papamichalis2021latent}. The Euclidean product $\R^2\times\R$ is $\R^3$ and belongs here.
\item Non-flat products $\mathbb{S}^2\times\R$, $\Hb^2\times\R$: $\Stab(\eb)=O(2)\times\{\pm1\}$. Three anchors, $z_{i_2}=(\rho,0,\zeta)$ with $\rho>0$, $\zeta>0$ (fixing the base rotation and the fibre flip) and $\mathrm{Im}\,w_{i_3}>0$ (fixing the base reflection), $\dim\mathcal{S}_I=3N-4$.
\item Twisted bundles ($\Nil$, $\SL$): $\Stab(\eb)=O(2)$. Two anchors (Theorem~\ref{thm:gauge-bundle}), $\dim\mathcal{S}_I=3N-4$.
\item $\Sol$: $\Stab(\eb)=D_4$. One anchor and the chamber of a second node (Theorem~\ref{thm:gauge-sol}), $\dim\mathcal{S}_I=3N-3$.
\end{enumerate}
\end{theorem}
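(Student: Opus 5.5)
The plan is to prove one abstract lemma and then verify its hypotheses in each of the four cases (a)--(d). The abstract lemma reads as follows. Let $H=\Stab(\eb)$ act on $\M^{N}$ diagonally, and let $C\subset\M^N$ be a set of configurations with $z_{i_1}=\eb$ such that every $H$-orbit through a non-degenerate configuration meets $C$ in exactly one point. Then
\[
\Gamma_I(Z)=h_Z\cdot(T_{z_{i_1}}^{-1}\cdot Z)
\]
satisfies (i)--(iii) of Theorem~\ref{thm:gauge-bundle}.

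The lemma itself I would prove by factoring isometries. Any $g\in\Isom(\M)$ with $g\cdot Z\in\mathcal{S}_I$ sends $z_{i_1}$ to $\eb$. Hence $g\circ T_{z_{i_1}}$ fixes $\eb$ and lies in $H$, which is the factorisation $g=T_{g(\eb)}\circ h$ of Lemma~\ref{lem:stab} and its analogue in each geometry. So $g=h\circ T_{z_{i_1}}^{-1}$, and uniqueness of $g$ reduces to uniqueness of $h\in H$ with $h\cdot\tilde Z\in C$.

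For (iii), $\Omega_I$ is cut out by strict inequalities on continuous functions of $\tilde Z$. Here I would use that $T_p$ is chosen (chartwise) continuously. Its complement is a finite union of preimages of proper analytic subsets under the submersion $Z\mapsto \tilde z_{i_k}$, so it has measure zero. The slice dimension is $3N$ minus the number of equality constraints, since the open inequality conditions do not change dimension.

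The case-by-case verification is a stabiliser computation.
\begin{enumerate}[(a)]
\item For the constant-curvature spaces, $O(3)$ acts through the exponential chart at $\eb$ as the orthogonal group on $T_\eb\M\cong\R^3$. The usual Gram--Schmidt argument applies. Rotate $\log\tilde z_{i_2}$ to the positive first axis, using up $2$ constraints. The residual $O(2)$ then rotates $\log\tilde z_{i_3}$ into the half-plane, using up $1$ constraint. The residual $\{\pm1\}$ on $x_3$ is fixed by the sign for $z_{i_4}$. This gives $3+2+1=6$ equalities and dimension $3N-6$. For $\mathbb{S}^3$ the non-degenerate set must also exclude antipodes of $\eb$, where $\log$ is undefined; this is still measure zero.
\item For the non-flat products, $H=O(2)\times\{\pm1\}$ acts by $(w,\zeta)\mapsto(Aw,\pm\zeta)$. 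The rotation part of $A$ is fixed by the argument of $\tilde w_{i_2}$ (one equality), the fibre sign by $\tilde\zeta_{i_2}>0$, and the base reflection by $\mathrm{Im}\,w_{i_3}>0$. The base reflection also fixes the real axis, so the second anchor does not detect it. This gives $3+1$ equalities and dimension $3N-4$.
\item The twisted bundles are exactly Theorem~\ref{thm:gauge-bundle}.
\item $\Sol$ is exactly Theorem~\ref{thm:gauge-sol}. I would check uniqueness there directly. The eight elements of $D_4$ permute the eight open octants of $\{xyz\ne0\}$ simply transitively: $\sigma_x$ and $\sigma_y$ flip the signs of $x$ and $y$, and $\varsigma$ flips the sign of $z$.
\end{enumerate}

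The main obstacle is justifying the stabilisers in (a) and (b) for the full maximal isometry groups. In particular it has to be shown that the isometries of $\mathbb{S}^2\times\R$ and $\Hb^2\times\R$ fixing $\eb$ are exactly $O(2)\times\{\pm1\}$. For this I would argue as follows. An isometry fixing $\eb$ is determined by its differential at $\eb$. That differential must preserve the Ricci tensor, whose eigenspaces are the base plane and the fibre line, because the base curvature is non-zero. So the differential lies in $O(2)\times O(1)$, and each such linear map is realised by a product isometry.

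The one remaining subtlety is the continuity of $\Gamma_I$ on $\mathbb{S}^3$ and $\mathbb{S}^2\times\R$, where no global continuous section $p\mapsto T_p$ exists. I would restrict to chart domains, where $T_p$ is continuous. Because $\Gamma_I$ is invariant under the choice of $T_p$ by (ii), the local definitions agree on overlaps, so $\Gamma_I$ is continuous on all of $\Omega_I$.
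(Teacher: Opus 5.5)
Your proposal is correct and is essentially the paper's proof: factor any $g$ with $g\cdot Z\in\mathcal{S}_I$ as $h\circ T_{z_{i_1}}^{-1}$ with $h\in\Stab(\eb)$, reduce (i)--(iii) to $\Stab(\eb)$ acting simply transitively on the admissible positions of the remaining anchors, check this case by case, and count equalities for the dimension. Your Ricci-eigenspace argument for the product stabilisers and your observation that $\Gamma_I$ does not depend on the choice of $T_p$ (so chartwise continuity suffices on $\mathbb{S}^3$ and $\mathbb{S}^2\times\R$) usefully fill in points the paper only asserts.

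Two small tightenings. First, the hypothesis of your abstract lemma should be simple transitivity rather than ``each orbit meets $C$ once'', since (ii) needs $h$ to be unique; your case checks do establish this. Second, in (b) the non-degenerate set on $\mathbb{S}^2\times\R$ must also exclude the base antipode of $\eb$, where the rotation is not determined.
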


The continuous dimension is the same in (b) and (c). What the twist removes, relative to $\Hb^2\times\R$, is one sign condition on a third node, because the base reflection and the fibre flip are independent symmetries of the product but a single symmetry of the bundle (Lemma~\ref{lem:stab}). Relative to $\R^3$ the bundle $\Nil$ needs two anchors instead of four. The proof is in Supplement~\ref{app:proofs}.

\subsection{Non-rigidity of finite configurations}\label{sec:rigidity}

In $\R^d$, $\mathbb{S}^d$ and $\Hb^d$ the distance matrix determines the configuration up to isometry, so \eqref{eq:invariance} is the \emph{only} source of non-identifiability of $Z$ given $\alpha$: these spaces are two-point homogeneous, and the classical congruence theorem states that two finite subsets with the same pairwise distances are related by a global isometry, the property on which the consistency theory of continuous-space network models rests \citep{shalizi2017consistency}. It fails in $\Nil$, $\Sol$ and $\SL$.

\begin{proposition}[Non-rigidity of small configurations]\label{prop:nonrigid}
In each $\M\in\{\Nil,\Sol,\SL\}$ there are pairs $\{p,q\}$ and $\{p,q'\}$ with $d_\M(p,q)=d_\M(p,q')$ that are not congruent, i.e.\ no $g\in\Isom(\M)$ maps $\{p,q\}$ onto $\{p,q'\}$. Explicitly, with $p=\eb$:
\begin{gather*}
\Nil:\ q=(1,0,0),\ q'=(0,0,1);\qquad \SL:\ q=(\tanh\tfrac12,0,0),\ q'=(0,0,1);\\
\Sol:\ q=(1,0,0),\ q'=(0,0,\cosh^{-1}\tfrac32).
\end{gather*}
Consequently, for $2\le N\le5$, $\dim\M^N-\dim\Isom(\M)>N(N-1)/2$, so the regular fibres of the distance map $D:\M^N\to\R^{N(N-1)/2}$ have positive dimension modulo the $\Isom(\M)$-orbits: generic configurations of at most five points are not determined by their distances even up to isometry (the statement is about regular points of $D$, nothing is claimed for singular configurations).
\end{proposition}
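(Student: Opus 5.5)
The proof has two independent parts: the explicit non-congruent pairs, and the dimension count. For the pairs, I first check that the two distances are equal, then show that no isometry fixing $\eb$ maps $q$ to $q'$.

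\textbf{Equal distances.} Each equality reduces to a closed form already available.

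In $\Nil$, Proposition~\ref{prop:nildist} gives $d(\eb,(1,0,0))=r=1$, since $\zeta=0$. It also gives $d(\eb,(0,0,1))=\zeta=1$, since $1\le2\pi$.

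In $\SL$, the point $q$ lies on the zero section over a base point at hyperbolic distance $1$, because its disc coordinate is $\tanh\tfrac12$. The projection $\pi$ is a Riemannian submersion, so $d\ge d_{\Hb^2}(\bo,\pi q)=1$. For the reverse inequality, the radial base geodesic has $d\theta=0$, so $A_{-1}$ vanishes along it. Its horizontal lift with $\zeta\equiv0$ therefore has length $1$, and $d(\eb,q)=1$.

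For $q'=(0,0,1)$ I use the vertical profile stated after Proposition~\ref{prop:nildist}. For small $\zeta$ the fibre is a geodesic segment of length $\zeta$. The isoperimetric argument shows it minimises until $\zeta=2\pi$, giving $d(\eb,q')=1$.

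In $\Sol$, \eqref{eq:sol-planes} gives $\cosh d(\eb,(1,0,0))=1+\tfrac12=\tfrac32$. It also gives $\cosh d(\eb,(0,0,\cosh^{-1}\tfrac32))=\cosh z=\tfrac32$. Hence both distances equal $\cosh^{-1}\tfrac32$.

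\textbf{Non-congruence.} Suppose $g$ maps $\{p,q\}$ onto $\{p,q'\}$. There are two cases.

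If $g(p)=p$, then $g\in\Stab(\eb)$. By Lemma~\ref{lem:stab} (and \eqref{eq:D4} for $\Sol$) it is a rotation, a reflection $\refl\circ R_\varphi$, or an element of $D_4$.
\begin{itemize}
\item In the bundles, every stabiliser element preserves the fibration and sends the base point $w=0$ to itself. It therefore maps the fibre over $\bo$ to itself, but $q$ lies off that fibre while $q'$ lies on it.
\item In $\Sol$, the generators $\sigma_x,\sigma_y,\varsigma$ permute $\{(\pm1,0,0),(0,\pm1,0)\}$ up to sign and send $z$ to $\pm z$. Their orbit of $(1,0,0)$ is $\{(\pm1,0,0),(0,\pm1,0)\}$, which excludes $(0,0,c)$ with $c\ne0$.
\end{itemize}

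If instead $g(p)=q'$ and $g(q)=p$, then $g':=T_{q'}^{-1}\circ g$ fixes $\eb$ and maps $q$ to $T_{q'}^{-1}(p)$. In $\Nil$ and $\Sol$ one computes $T_{q'}^{-1}(\eb)=q'^{-1}=(0,0,-1)$, respectively $(0,0,-c)$. This point again lies on the vertical axis, so the same orbit argument excludes it. In $\SL$, the lift of a Möbius translation along the fibre over $\bo$ is a pure fibre shift, so the image again lies on the fibre over $\bo$.

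The main obstacle I expect is justifying the two $\SL$ distances rigorously. For the fibre, I will cite the isoperimetric profile directly. For the horizontal point, I will rely only on the submersion inequality.

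\textbf{Dimension count.} We have $\dim\M^N=3N$, and $\dim\Isom(\M)=4$ for the bundles and $3$ for $\Sol$. The condition becomes $3N-4>N(N-1)/2$, that is $N^2-7N+8<0$. This holds for $2\le N\le5$: the values are $-2$ at $N=2$, $-4$ at $N=3,4$, and $-2$ at $N=5$. The $\Sol$ inequality $3N-3>N(N-1)/2$ is weaker and so holds as well.

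At a regular point of $D$ the rank of $dD$ is constant, at most $N(N-1)/2$. The regular fibres are therefore submanifolds of dimension at least $3N-N(N-1)/2$. Because $D$ is $\Isom(\M)$-invariant, each fibre is a union of orbits. Orbits of configurations with $N\ge2$ generic points have dimension at most $\dim\Isom(\M)$. By Theorem~\ref{thm:gauge-bundle} and Theorem~\ref{thm:gauge-sol} they in fact meet the slice in a single point, so their dimension is exactly $\dim\Isom(\M)$. Hence each regular fibre has positive dimension modulo orbits.
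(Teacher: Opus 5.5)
Your proof is correct, but the non-congruence step takes a different route from the paper's.

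\textbf{Your route.} You use the full classification of $\Stab(\eb)$: Lemma~\ref{lem:stab} for the bundles and $D_4$ for $\Sol$. You split into the cases $g(\eb)=\eb$ and $g(\eb)=q'$. You reduce the second case to the first by composing with $T_{q'}^{-1}$, which sends $\eb$ to $(0,0,-1)$, respectively $(0,0,-\cosh^{-1}\tfrac32)$. You then observe that the stabiliser orbit of $q$ never meets the fibre (respectively the $z$-axis) over the base point.

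\textbf{The paper's route.} It uses only the weaker fact that every isometry maps $V$ to $\pm V$ (respectively $\partial_z$ to $\pm\partial_z$). In the bundles, an isometry would carry the horizontal unit segment from $\eb$ to $q$ to a horizontal curve of length one joining two points of one fibre at vertical separation one. Its projection would be a loop of length one with algebraic area $\pm1$, which contradicts the isoperimetric inequality $L^2\ge4\pi|\mathcal A|$. In $\Sol$, the unique minimiser from $\eb$ to $q'$ is vertical, so its preimage would be a vertical minimiser from $\eb$ to $q$, and no such geodesic exists.

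\textbf{Comparison.} Your argument is purely algebraic once the stabiliser is known. It costs a case split and an explicit translation. The axis distance $d(\eb,q')=1$ that you cite still rests on the same isoperimetric inequality. The paper's argument handles both orderings at once and does not need the stabiliser classification. It also explains geometrically why horizontal and vertical unit segments cannot be exchanged.

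\textbf{Dimension count.} This matches the paper's, with one imprecision. Saying orbits ``meet the slice in a single point'' does not by itself give orbit dimension exactly $\dim\Isom(\M)$. Freeness comes from the uniqueness of $g$ in part (ii) of the gauge theorems. In any case your argument only needs the trivial bound that orbits have dimension at most $\dim\Isom(\M)$.
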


The proof (Supplement~\ref{app:proofs}) uses that every isometry maps the vertical field to $\pm V$, hence horizontal geodesic segments to horizontal ones, whereas the second pair is joined by a vertical segment (in $\Sol$, that every isometry preserves the $z$-axis direction up to sign). The dimension count is the following: the orbit space has dimension $3N-\dim\Isom(\M)$, which exceeds the number $N(N-1)/2$ of distances exactly for $N\le5$ in all three geometries. Thus the smallest network for which the gauge-fixed latent positions can possibly be identified from the distance matrix has $N_0=6$ nodes, for each of the three geometries. In constant curvature there is no such threshold, since distances determine configurations up to isometry for every $N$. Beyond $N_0$, identifiability is a \emph{generic} and \emph{local} property, and it depends on whether the intercept $\alpha$ is known.

\begin{theorem}[Generic local identifiability of the gauge-fixed positions]\label{thm:generic}
Let $\M\in\{\Nil,\Sol,\SL\}$ and let $U\subset\Omega_I$ be the open set of non-degenerate configurations in which no $z_j$ lies in the cut locus of another $z_i$ and all $z_i$ are distinct. On $U$ the distance map $D$ is real-analytic and $\rank dD_Z\le3N-\dim\Isom(\M)$.
\begin{enumerate}[(i)]
\item (\emph{fixed $\alpha$}) If $\rank dD_{Z_0}=3N-\dim\Isom(\M)$ at some $Z_0\in U$, then the set of $Z$ of maximal rank is open and dense in the connected component of $U$ containing $Z_0$, and at every such $Z$ the restriction of $D$ to the slice $\mathcal{S}_I$ is an immersion at $\Gamma_I(Z)$: the gauge-fixed positions are \emph{locally} identifiable from the distance matrix, and the Fisher information of $Z^\star$ at fixed $\alpha$ is non-singular whenever $0<p_{ij}<1$. This requires $\binom N2\ge3N-\dim\Isom(\M)$, i.e.\ $N\ge6$.
\item (\emph{estimated $\alpha$}) If the augmented matrix $[\mathbf 1,\ -dD|_{T\mathcal{S}_I}]$ has full column rank, the pair $(Z^\star,\alpha)$ is locally identifiable from the tie probabilities and its Fisher information is non-singular (regular identifiability in the sense of \citealp{rothenberg1971identification}). Full rank requires $\binom N2\ge3N-\dim\Isom(\M)+1$, i.e.\ $N\ge6$ for $\Nil$ and $\SL$ but $N\ge7$ for $\Sol$, and for $\Sol$ at $N=6$ the pair is not locally identifiable: the log-odds map sends the $16$-dimensional parameter $(Z^\star,\alpha)$ to $\R^{15}$ and is not injective on any open set (invariance of domain).
\item (\emph{induction}) If the (augmented) rank is maximal at a configuration with $N$ nodes, it is maximal at an open dense set of positions of an added node: adding a node whose gradient vectors $\nabla_{z_{N+1}}d(z_{N+1},z_j)$ towards three of the existing nodes span $T_{z_{N+1}}\M$ adds three to the rank, and such positions exist near any node whose incident gradients span the tangent space, which at a maximal-rank configuration some node must do.
\end{enumerate}
\end{theorem}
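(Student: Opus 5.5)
The proof starts from the preliminary claims and then treats (i)--(iii) in order.

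\emph{Preliminary claims.} On $U$ every pair $(z_i,z_j)$ is distinct and non-conjugate along its minimising geodesic, so $d_\M(z_i,\cdot)$ is real-analytic near $z_j$. We argue this by inverting the Riemannian exponential, which is a local analytic diffeomorphism off the cut locus. Hence $D$ is real-analytic on $U$.

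For the rank bound, differentiate the invariance $D(g\cdot Z)=D(Z)$ along one-parameter subgroups of $\Isom(\M)$. This shows that the tangent space of the orbit lies in $\ker dD_Z$. The orbit has dimension $\dim\Isom(\M)$ on $\Omega_I$, because Theorems~\ref{thm:gauge-bundle} and \ref{thm:gauge-sol} give uniqueness of $g_Z$, so the stabiliser of $Z$ is trivial. Therefore $\rank dD_Z\le 3N-\dim\Isom(\M)$.

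\emph{Part (i).} The set where the rank is maximal is the complement of the common zero set of the $(3N-\dim\Isom)$-minors of the analytic matrix $dD$. If one minor is nonzero at $Z_0$, its zero set in the connected component is a proper analytic subset. Such a set is closed and nowhere dense, so the maximal-rank set is open and dense.

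At a maximal-rank point, $\ker dD_Z$ equals the orbit tangent space. By Theorem~\ref{thm:gauge-bundle}(ii)--(iii), the slice $\mathcal S_I$ is transversal to the orbits, with complementary dimension, at $\Gamma_I(Z)$. The transversality follows from the orbit-section property together with the implicit function theorem applied to $g\mapsto g\cdot Z$. So $dD$ restricted to $T\mathcal S_I$ is injective, i.e.\ $D|_{\mathcal S_I}$ is an immersion, and an immersion is locally injective.

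For the Fisher information, at fixed $\alpha$ it equals $J^\tr W J$ with $J=dD|_{T\mathcal S_I}$ and $W=\diag(p_{ij}(1-p_{ij}))>0$. It is therefore nonsingular.

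The count requirement is $\binom N2\ge 3N-\dim\Isom$. With $\dim\Isom=4$ or $3$, this first holds at $N=6$.

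\emph{Part (ii).} The log-odds vector is $\alpha\mathbf 1-D$. Its Jacobian in the parameters $(\alpha,Z^\star)$ is $[\mathbf 1,-dD|_{T\mathcal S_I}]$. Under full column rank, the map is an immersion and hence locally injective, and the Fisher information $J_a^\tr WJ_a$ is nonsingular. This is Rothenberg's criterion.

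Full column rank needs $\binom N2\ge 3N-\dim\Isom+1$. For $\Sol$ at $N=6$ this is $15\ge16$, which fails.

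For the non-identifiability claim, suppose the continuous map from an open subset of $\R^{16}$ to $\R^{15}$ were injective on an open set $O$. Composing with the inclusion $\R^{15}\hookrightarrow\R^{16}$ would give a continuous injection $O\to\R^{16}$ whose image lies in a hyperplane. By invariance of domain that image would be open in $\R^{16}$, which a subset of a hyperplane cannot be. This contradiction shows the map is not injective on any open set.

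\emph{Part (iii).} Adding node $N+1$ appends three columns and $N$ rows to $dD$. The new rows have the block form $[\,\ast\;|\;\nabla_{z_{N+1}}d(z_{N+1},z_j)^\tr]$, and the old rows have zeros in the new columns. The matrix is therefore block lower-triangular. Choosing three rows whose new-column blocks span $\R^3$ raises the rank by exactly three, since $3(N+1)-\dim\Isom=(3N-\dim\Isom)+3$. The augmented case is the same, with the column $\mathbf 1$ carried along.

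It remains to find such a position. Take a node $z_k$ whose incident gradients span $T_{z_k}\M$, and place $z_{N+1}$ near $z_k$ but off the cut loci. The gradient of $d(z_{N+1},z_j)$ with respect to $z_{N+1}$ is the unit vector at $z_{N+1}$ pointing away from $z_j$. As $z_{N+1}$ approaches $z_k$, for $j\ne k$ these vectors tend to the corresponding incident directions at $z_k$, up to parallel transport, and so span the tangent space. Spanning is an open condition and is analytic in $z_{N+1}$, so it holds on an open dense set by the same minor argument as in (i).

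We also need that some node has spanning incident gradients at a maximal-rank configuration. The columns of $dD$ belonging to $z_k$ are the incident gradients of $z_k$. If every node's gradients spanned at most a plane, each block of three columns would have rank at most $2$ on every row pattern. We then bound the rank by counting, noting that the orbit contributes $\dim\Isom\le4<N$ kernel directions; this bound should contradict maximality.

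\emph{Main obstacle.} I expect the existence of a spanning node to be the delicate point. If the counting argument above is too coarse, I will instead add the new node near a generic point and use the $N\ge6$ witnesses (the certified $\Nil$ configuration in Table~\ref{tab:rigidity}) directly.
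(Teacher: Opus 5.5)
Your route is the paper's route almost step for step: analyticity from inverting the exponential map off the cut locus, the rank bound from the orbit tangent space lying in $\ker dD_Z$, the minor argument for openness and density, the Fisher information as $J^{\tr}WJ$, invariance of domain for $\Sol$ at $N=6$, and the block lower-triangular Jacobian in (iii). Your explicit invariance-of-domain argument via $\R^{15}\hookrightarrow\R^{16}$ is fine. Two steps, however, are asserted rather than proved.

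\textbf{Transversality of the slice.} The immersion claim in (i) rests on the slice being transverse to the orbit at $\Gamma_I(Z)$, and your justification does not establish this.
\begin{itemize}
\item The orbit-section property of Theorem~\ref{thm:gauge-bundle}(ii) is set-theoretic. A submanifold can meet every orbit exactly once and still be tangent to one: the curve $t\mapsto(t,t^3)$ meets every horizontal line of $\R^2$ once, yet is horizontal at $t=0$.
\item Part (iii) of that theorem gives only continuity of $\Gamma_I$.
\item Applying the implicit function theorem to $g\mapsto g\cdot Z$ shows only that the orbit is immersed of dimension $\dim\Isom(\M)$. It says nothing about $T\mathcal S_I$.
\end{itemize}
The paper closes this infinitesimally. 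An orbit tangent vector $(\xi(z^\star_i))_i$, with $\xi$ a Killing field, can lie in $T\mathcal S_I$ only if $\xi(\eb)=0$, because $z^\star_{i_1}=\eb$ is pinned. Then $\xi$ generates the identity component of $\Stab(\eb)$. That component is trivial in $\Sol$ and is the fibre rotation in the bundles. The rotation moves $z^\star_{i_2}=(\rho,0,\zeta)$ off the real axis because $\rho>0$, so $\xi=0$. Alternatively, $Z\mapsto g_Z$ is smooth on $\Omega_I$ by its explicit formula, so $(g,Z^\star)\mapsto g\cdot Z^\star$ is a diffeomorphism $\Isom(\M)\times\mathcal S_I\to\Omega_I$, which gives the same conclusion.

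\textbf{Existence of a spanning node in (iii).} You stop at ``this bound should contradict maximality''; it does, immediately.
\begin{itemize}
\item The column space of $dD_Z$ is the sum of the column spaces of the $N$ three-column node blocks. The non-zero rows of each block are that node's incident gradients.
\item If every node's incident gradients lay in a plane, then $\rank dD_Z\le 2N$.
\item Maximal rank is $3N-\dim\Isom(\M)>2N$, because $N\ge6>\dim\Isom(\M)$. This is the required contradiction.
\end{itemize}
This is the paper's argument in rank form rather than kernel form. There, each node with planar incident gradients contributes a kernel vector supported on that node alone, and $N$ such vectors would exceed $\dim\ker dD_Z=\dim\Isom(\M)<N$. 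In the augmented case, full column rank already rules out even one such non-anchor node. Your fallback to the numerical witnesses is therefore unnecessary, and it would not give the general induction statement anyway.
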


The theorem is conditional on a full-rank witness, its genericity statement is confined to the connected component of $U$ containing the witness (in $\Nil$ and $\SL$ the cut locus of a point lies in its fibre, a set of codimension two, and coincidences have codimension three, so $U$ has the two components of $\Omega_I$ distinguished by the sign of $\tilde\zeta$, which are exchanged by $\refl$ and have the same rank; in $\Sol$ the cut locus is not known in closed form and the statement is confined to the component of the witness), and its conclusion is local: an immersion excludes nearby non-congruent alternatives, not distant ones, and Supplement~\ref{supp:exp} finds distant alternatives at $N_0$ and $N_0+1$ in $\SL$. Witnesses are supplied numerically (Table~\ref{tab:rigidity}): in $\Nil$, where the differential is computed from the exact distance, both $dD|_{T\mathcal{S}_I}$ and $[\mathbf 1,-dD|_{T\mathcal{S}_I}]$ have full rank at every one of $100$ random configurations for $N=6,7,8,10$, although the smallest singular value at $N=6$ is tiny (median \nilSixMedian, minimum \nilSixMin), identification is weak where it first appears. In $\Sol$ and $\SL$ the differential is computed from the first-variation formula with minimising geodesics from a boundary-value solver, and the ranks are as predicted, including the failure of joint identification in $\Sol$ at $N=6$. For $\Nil$ the witness at $N=6$ is also certified rigorously: computing the slice Jacobian in interval arithmetic (the geodesic parameter of Proposition~\ref{prop:nildist} enclosed by interval bisection of the monotone equation, derivatives by implicit differentiation, all operations rounded outward, enclosure widths \certWidth) and factorising $J^{\tr}J-\lambda I$ by an interval Cholesky decomposition certifies $s_{\min}(J)\ge\certSixLower$ and $s_{\min}([\mathbf 1,-J])\ge\certSixAug$ at a configuration with $N=6$ (floating-point value \certSixFloat), and $\ge\certSevenLower$, $\certSevenAug$ at one with $N=7$ (Supplement~\ref{app:details}). With Theorem~\ref{thm:generic}(iii) this proves regular local identifiability of $(Z^\star,\alpha)$ in $\Nil$ at generic configurations for every $N\ge6$ (on the component of the witness, hence on both components of $\Omega_I$). For $\Sol$ and $\SL$, whose distances are known only numerically, the witnesses remain floating-point computations at chosen configurations. Proposition~\ref{prop:nonrigid} and Theorem~\ref{thm:generic} together say that anisotropy costs exactly one thing relative to constant curvature: the congruence theorem for finite metric subsets fails, so identification from distances becomes a generic, local statement with a threshold in $N$.

\subsection{The gauge-fixed posterior}\label{sec:target}

Gauge fixing changes the parameter from $Z$ to $Z^\star=\Gamma_I(Z)\in\mathcal{S}_I$ and the reference measure from $\dvol^{\otimes N}$ to a measure on the slice. The correct density on the slice carries a Jacobian for the constrained anchor. Let the prior be $z_i\overset{\text{iid}}{\sim}\LWN(\mu,\Sigma)$ with $\Sigma=\diag(\sigma_h^2,\sigma_h^2,\sigma_v^2)$, a flat prior on $\mu$ with respect to $\dvol$, and proper priors $p(\sigma_h,\sigma_v)$, $p(\alpha)$. The group $\Isom(\M)$ acts on $(Z,\mu)$ diagonally and, by Lemma~\ref{lem:equivariance}, the joint prior and the likelihood are invariant. On the slice we use the coordinates $(\rho,\zeta)$ of the second anchor, and the remaining nodes and $\mu$ keep their volume measure.

\begin{proposition}[Quotient posterior on the slice]\label{prop:target}
Let $N\ge2$ and let the priors on $(\sigma_h,\sigma_v)$ and $\alpha$ be proper. The joint prior on $(Z,\mu)$ is invariant under the diagonal action of $\Isom(\M)$ and the group is non-compact, so the joint posterior is an invariant, infinite measure on $\M^N\times\M$. The object of inference is its disintegration over the orbits, the \emph{quotient posterior}. It is a probability measure on the slice $\mathcal{S}_I\times\M\times(0,\infty)^2\times\R$ with density
\begin{equation}\label{eq:target}
\pi^\star(Z^\star,\mu,\sigma_h,\sigma_v,\alpha)\ \propto\ p(\mathcal{Y}\mid Z^\star,\alpha)\prod_{i\in[N]}f_{\LWN}(z^\star_i\mid\mu,\Sigma)\ p(\sigma_h,\sigma_v)\,p(\alpha)\ \Delta_\M(z^\star_{i_2}),
\end{equation}
with respect to $\prod_{i\ne i_1,i_2}\dvol(z_i^\star)\otimes d\rho\,d\zeta\otimes\dvol(\mu)\otimes d\sigma_h\,d\sigma_v\,d\alpha$ (in $\Sol$: $\dvol(z^\star_{i_2})$ restricted to $F$), where $\rho$ is the geodesic base radius of the second anchor (with respect to the disc coordinate $u=\tanh(\rho/2)$ of $\SL$ the factor would be $4u/(1-u^2)^2$), and
\begin{equation}\label{eq:slicejac}
\Delta_{\Nil}(\rho,0,\zeta)=\rho,\qquad \Delta_{\SL}(\rho,0,\zeta)=\sinh\rho,\qquad \Delta_{\Sol}\equiv1 ,
\end{equation}
\end{proposition}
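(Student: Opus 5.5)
Since the joint prior and the likelihood are invariant under the diagonal action, I will write the joint posterior density on $\M^N\times\M$ as $\pi(Z,\mu,\dots)$, a function invariant under $\Isom(\M)$, with respect to $\dvol^{\otimes N}\otimes\dvol\otimes d\sigma_h\,d\sigma_v\,d\alpha$. The invariance comes from Lemma~\ref{lem:equivariance} for the $\LWN$ factors, from \eqref{eq:invariance} for the likelihood, and from the invariance of $\dvol$. The plan has three steps. First, build a measurable trivialisation of the non-degenerate set as a product, group times slice. Second, write the reference measure in those product coordinates. Third, integrate out the group factor, which is what the disintegration over orbits means. By Theorem~\ref{thm:gauge-bundle}(ii) and Theorem~\ref{thm:gauge-sol}, the map
\[
\Psi:\ \Isom(\M)\times\mathcal{S}_I\times\M\to\Omega_I\times\M,\qquad (g,Z^\star,\mu^\star)\mapsto(g\cdot Z^\star,\ g\mu^\star)
\]
is a bijection onto $\Omega_I\times\M$, and its inverse is $(Z,\mu)\mapsto(g_Z^{-1},\Gamma_I(Z),g_Z\mu)$. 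Because $\Omega_I$ has full measure, nothing is lost by restricting to it.

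Next I will compute the pullback of $\dvol^{\otimes N}\otimes\dvol$ under $\Psi$. The factors for nodes $i\ne i_1,i_2$ and for $\mu$ transform by isometries, so they remain volume measures. All the work is in the pair $(z_{i_1},z_{i_2})$. I will use the factorisation of Lemma~\ref{lem:stab}, $g=T_p\circ h$ with $p=z_{i_1}$ and $h\in\Stab(\eb)$, where $h=R_\varphi$ or $\refl\circ R_\varphi$. Then $z_{i_1}=p$ and $z_{i_2}=T_p(h(\rho,0,\zeta))$. Since $T_p$ is an isometry, it carries $\dvol(z_{i_2})$ to $\dvol$ of $h(\rho,0,\zeta)$, so I only need to express $\dvol$ at $\eb$-relative coordinates in terms of $(\varphi,\rho,\zeta)$.

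In $\Nil$ this is a change to cylindrical coordinates, $dx\,dy\,dz=\rho\,d\rho\,d\varphi\,d\zeta$. In $\SL$ the volume of \eqref{eq:bundle-metric} is $\dvol_{B_{-1}}\,d\zeta=\sinh\rho\,d\rho\,d\varphi\,d\zeta$, because the form $(d\zeta-A)$ is unimodular against $d\zeta$. The reflection branch contributes a second copy with the same density, since $\refl$ preserves volume and $\zeta>0$ picks one sheet. So the product measure becomes $\dvol(p)\,d\varphi\,[\text{counting on the two components}]\otimes\Delta_\M(\rho)\,d\rho\,d\zeta\otimes\cdots$. The measure $\dvol(p)\,d\varphi\otimes\text{counting}$ is a Haar measure on $\Isom(\M)$; it is left-invariant because the $T_p$ act simply transitively up to the stabiliser, and $\Isom$ is unimodular. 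In $\Sol$ the stabiliser is finite, so $g=T_p\circ h$ with $h\in D_4$. The measure $z_{i_2}$ restricted to the chamber $F$ is carried by $h$ bijectively to the eight chambers, which gives $\Delta_{\Sol}\equiv1$ with $\dvol$ on $F$.

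Finally, invariance makes the integrand $\pi\circ\Psi$ independent of $g$. The joint measure therefore factors as (Haar on $\Isom(\M)$) $\otimes$ $\pi(Z^\star,\mu^\star,\dots)\,\Delta_\M\,d(\text{slice})$. This exhibits the infinite posterior as a product with the non-compact Haar factor, whose total mass is infinite. The disintegration over orbits is the second factor, namely \eqref{eq:target}. Its normalisability needs a separate integrability check. With $\mu$ free and flat, the $\LWN$ factors for $N\ge2$ nodes give, after translating $\mu$, an integral over $\mu$ that is controlled by a single Gaussian in $\chi^{-1}(\mu^{-1}\star z^\star_{i_1})$. The remaining nodes then have proper conditional laws, and the likelihood is bounded by $1$. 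I expect the main obstacle to be this properness, together with making the Haar/slice factorisation rigorous in $\SL$, where $T_p$ need not form a group. I will handle the latter by checking directly that $(p,h)\mapsto T_p\circ h$ is a measurable bijection onto $\Isom(\M)$ under which left translation preserves $\dvol(p)\,d\varphi$, using that left translation by an isometry $k$ sends $T_p\circ h$ to $T_{kp}\circ h'$ with $h'=T_{kp}^{-1}kT_p\,h$.
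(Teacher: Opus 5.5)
Your change of variables is the paper's own. It uses the bijection $\Psi$, the factorisation $g=T_p\circ h$, and the block-triangular pull-back in which the non-anchor nodes and $\mu$ keep $\dvol$ and only the second anchor produces a factor: $S_K(\rho)$ from polar coordinates on each component of $O(2)$, and $\Delta_{\Sol}\equiv1$ because $D_4$ permutes the eight chambers. It then reads $\dvol(p)\,dh$ as Haar measure on $\Isom(\M)$; your check via $h'=T_{kp}^{-1}kT_p\,h$ is correct and does not need $\{T_p\}$ to be a group.

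The gap is normalisability, which the statement asserts (``a probability measure''). On the slice, the second anchor is integrated against the two-dimensional measure $\Delta_\M\,d\rho\,d\zeta$ on the half-plane $\{(\rho,0,\zeta):\rho,\zeta>0\}$, not against $\dvol$. So $f_{\LWN}(z^\star_{i_2}\mid\mu,\Sigma)$ is \emph{not} a proper conditional law there. Its mass
\[
m(\mu)=\int_0^\infty\!\!\int_0^\infty f_{\LWN}\big((\rho,0,\zeta)\mid\mu,\Sigma\big)\,S_K(\rho)\,d\rho\,d\zeta
\]
is a weighted section of a three-dimensional density, and it is unbounded in $\mu$. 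In $\Nil$ it grows linearly in $\mu_1$ for $\mu=(\mu_1,0,0)$. In $\SL$ it grows like $e^{d}$, because of the factor $\sinh\rho$, when $\pi(\mu)$ lies on the positive real axis at base distance $d$. Your claims that the $\mu$-integral is ``controlled by a single Gaussian'' and that ``the remaining nodes have proper conditional laws'' do not cover $z^\star_{i_2}$. What must be shown is $\int f_{\LWN}(\eb\mid\mu,\Sigma)\,m(\mu)\,\dvol(\mu)<\infty$, i.e.\ that the Gaussian decay of $f_{\LWN}(\eb\mid\mu,\Sigma)$ in $d_{B_K}(\bo,\pi(\mu))$ beats this growth.

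The paper closes the gap with the triangle inequality $\rho\le|\delta_h|+d_{B_K}(\bo,\pi(\mu))$. This bounds $S_K(\rho)/J^\chi_\M(\delta)$ by a polynomial in $|\delta|$ times $e^{d_{B_K}(\bo,\pi(\mu))}$.

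There is also a repair that uses only your factorisation, run in the unsliced coordinates, where your instinct is correct. Let $\kappa$ be the unnormalised kernel and $\nu$ the slice measure of \eqref{eq:target}, and take $A\subset\M$ with $0<\mathrm{vol}(A)<\infty$. Up to the null complement of $\Omega_I$, the set $\{z_{i_1}\in A\}$ equals $\Psi\big(\{T_p\circ h:\ p\in A,\ h\in\Stab(\eb)\}\times\mathcal{S}_I\times\M\big)$. The product form of the measure then gives $\int_{\{z_{i_1}\in A\}}\kappa=\mathrm{vol}(A)\,c_K\int\kappa\,d\nu$, where $c_K$ is the total mass of $dh$ on the compact stabiliser ($4\pi$ or $8$). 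Now bound the left side:
\begin{enumerate}[(i)]
\item every $z_i$ with $i\ne i_1$, including $z_{i_2}$, ranges over $\M$ with $\dvol$ and integrates to one given $\mu$;
\item $\int f_{\LWN}(z_{i_1}\mid\mu,\Sigma)\,\dvol(\mu)=1$, since by Lemma~\ref{lem:equivariance} this reduces to $z_{i_1}=\eb$, and $\mu\mapsto T_\mu^{-1}\eb$ preserves volume;
\item the likelihood is at most one, and the priors on $(\sigma_h,\sigma_v,\alpha)$ are proper.
\end{enumerate}
Hence the left side is at most $\mathrm{vol}(A)$, and $\int\kappa\,d\nu\le1/c_K$, with no estimates on $S_K$ or $J^\chi_\M$ needed.
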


The factor is the change-of-variables determinant of the polar decomposition $(\varphi,\rho,\zeta)\mapsto R_\varphi(\rho,0,\zeta)$ of the second anchor's coordinates. It is not the length of the stabiliser orbit (that orbit, $\{R_\varphi(\rho,0,\zeta)\}$, has tangent norm $\sqrt{S_K(\rho)^2+F_K(\rho)^2}$ because the orbit is not orthogonal to the slice). In $\Sol$ the stabiliser is finite and no factor arises. We write $\Delta_\M$ for this slice Jacobian, distinct from the chart Jacobian $J^\chi_\M$ of Section~\ref{sec:background}. The proof is in Supplement~\ref{app:proofs}. Both estimation schemes of Section~\ref{sec:estimation} target \eqref{eq:target}. The same argument shows that in the constant-curvature setting the second and third anchors of \citet{papamichalis2021latent} contribute factors $S_K(\rho_{i_2})^{d-1}$ and $S_K(\cdot)^{d-2}$ (Euclidean: $\rho_{i_2}^{\,d-1}$ times the distance of the third anchor from the axis). Supplement~\ref{supp:exp} shows that the factor moves the posterior of the anchors' base distance appreciably at $N=8$ and still at $N=40$ (that distance is itself isometry-invariant). The factor is the Jacobian of Bookstein-type anchor coordinates \citep{bookstein1986size,dryden2016statistical}, here for a curved anisotropic ambient space.

\subsection{Representation deficit and detectability}\label{sec:deficit}

The twist of $\Nil$ and $\Sol$ is a temperature (Proposition~\ref{prop:twistscale}, Supplement~\ref{supp:twist}): the parameters $(\beta,\tau,Z)$ and $(\beta/\tau,1,\phi_\tau Z)$ give the same law, so whether a latent space is twisted is a comparison between fitted models at their own scales, and the vertical offsets $v_{ij}$ satisfy the holonomy identity $v_{ij}+v_{jk}+v_{ki}=-\mathrm{Area}(w_i,w_j,w_k)$ around every triangle (Lemma~\ref{lem:holonomy}, Supplement~\ref{supp:twist}). Detection is therefore a model-comparison question, and it is decided by the divergence between the two models.

If the network is generated in $\M$ and fitted in a competitor geometry $T$ (a product, or a constant-curvature space), how much does the competitor lose, and can the loss be detected from one network? The first quantity is a functional of the true configuration and of the base rate.

\begin{proposition}[Fisher deficit]\label{prop:deficit}
Let $Z\in\M^N$, $\alpha\in\R$, $p_{ij}=\sigma(\alpha-d_\M(z_i,z_j))$ and $w_{ij}=p_{ij}(1-p_{ij})$. For a competitor geometry $T$ define
\begin{equation}\label{eq:deficit}
\Delta_T^2(Z,\alpha)=\inf_{Z'\in T^N,\ c\in\R}\ \tfrac12\sum_{i<j}w_{ij}\big(d_\M(z_i,z_j)-d_T(z'_i,z'_j)-c\big)^2 .
\end{equation}
Then for every $(Z',\alpha')$ the dyad Kullback--Leibler divergence $\sum_{i<j}\mathrm{KL}\big(\mathrm{Bern}(p_{ij})\,\|\,\mathrm{Bern}(\sigma(\alpha'-d_T(z'_i,z'_j)))\big)$ equals $\tfrac12\sum_{i<j}w_{ij}(\eta_{ij}-\eta'_{ij})^2+O(\sum|\eta_{ij}-\eta'_{ij}|^3)$ with $\eta_{ij}-\eta'_{ij}=-(d_\M(z_i,z_j)-d_T(z'_i,z'_j)-c)$, $c=\alpha-\alpha'$, and remainder bounded by $\sum|\eta_{ij}-\eta'_{ij}|^3/(36\sqrt3)$. Hence, for competitors whose log-odds stay close to the truth, $\Delta_T^2(Z,\alpha)$ approximates the smallest divergence between the true dyad law and the competitor's family, and the expected log-likelihood ratio of the true model against the best competitor. If the infimum is attained with $\Delta_T=0$, the distance matrix of $Z$ is realised in $T$ up to an additive constant. For $T=\R^3$ the set of such matrices has dimension at most $3N-5$, which is smaller than $\binom N2$ from $N=6$ on.
\end{proposition}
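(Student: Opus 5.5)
The proof is a per-dyad Taylor expansion of the Bernoulli divergence in the natural parameter, summed over dyads and followed by a dimension count. For a single dyad write $\eta=\alpha-d_\M(z_i,z_j)$ and $\eta'=\alpha'-d_T(z'_i,z'_j)$. In the natural parameter the Bernoulli divergence is the Bregman divergence of the log-partition function $A(\eta)=\log(1+e^\eta)$:
\[
\mathrm{KL}\big(\mathrm{Bern}(\sigma(\eta))\,\|\,\mathrm{Bern}(\sigma(\eta'))\big)=A(\eta')-A(\eta)-A'(\eta)(\eta'-\eta).
\]
Taylor's theorem with Lagrange remainder gives $\tfrac12A''(\eta)(\eta'-\eta)^2+\tfrac16A'''(\xi)(\eta'-\eta)^3$ for some $\xi$ between $\eta$ and $\eta'$. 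Here $A''(\eta)=\sigma(1-\sigma)=w_{ij}$ and $A'''=\sigma(1-\sigma)(1-2\sigma)$.

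The constant in the remainder comes from maximising $|s(1-s)(1-2s)|$ over $s\in[0,1]$. Setting $u=s-\tfrac12$, the expression becomes $2|u|(\tfrac14-u^2)$, which is maximised at $u^2=1/12$ with value $1/(6\sqrt3)$. Hence $|A'''|\le 1/(6\sqrt3)$, and the remainder is at most $|\eta-\eta'|^3/(36\sqrt3)$. Summing over $i<j$ and writing $\eta_{ij}-\eta'_{ij}=-(d_\M-d_T-c)$ with $c=\alpha-\alpha'$ gives the stated expansion.

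Next I minimise the quadratic term over $(Z',\alpha')$; this is exactly $\Delta_T^2(Z,\alpha)$, since $c$ ranges over $\R$ as $\alpha'$ does. Consequently, over any competitor set on which $\max|\eta-\eta'|$ is small, the infimum of the divergence equals $\Delta_T^2$ up to the cubic remainder. The expected log-likelihood ratio under the truth is this same divergence sum, so the same approximation applies to it.

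If the infimum is attained with $\Delta_T=0$, then, since every $w_{ij}>0$, each term vanishes. So $d_\M(z_i,z_j)=d_T(z'_i,z'_j)+c$ for all pairs, which is the realisability statement.

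For $T=\R^3$, the realisable set is the image of the map $(Z',c)\mapsto(|z'_i-z'_j|+c)_{i<j}$ defined on $(\R^3)^N\times\R$. This map is invariant under the six-dimensional Euclidean group, so it factors through a quotient of dimension $3N-6+1=3N-5$ (for $N\ge3$; smaller $N$ is trivial). It is real-analytic off coincident points and semialgebraic after squaring, so its image has dimension at most $3N-5$. A semialgebraic or Sard-type argument makes this precise. Finally, $3N-5<\binom N2$ is equivalent to $N^2-7N+10>0$, that is $(N-2)(N-5)>0$, which holds for $N\ge6$.

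The main obstacle is only bookkeeping: the approximation claim is asymptotic rather than an exact equality, and I will present it through the explicit bound above.
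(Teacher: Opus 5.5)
Your proposal is correct and follows the paper's proof essentially step for step. Your Bregman form $A(\eta')-A(\eta)-A'(\eta)(\eta'-\eta)$ is exactly the function the paper Taylor-expands. The remaining steps also match the paper: the bound $|A'''|\le 1/(6\sqrt3)$ giving the constant $1/(36\sqrt3)$, positivity of the weights for the realisability claim, and the count $3N-5<\binom N2\iff(N-2)(N-5)>0$.

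The one difference is your justification of the dimension bound via invariance under the six-dimensional Euclidean group, which the paper simply asserts. It is cleanest made precise by restricting to a canonical semialgebraic slice of dimension $3N-6$: $z'_1$ at the origin, $z'_2$ on a ray, $z'_3$ in a half-plane.
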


The proof is the Taylor expansion of the Bernoulli divergence in the log-odds (Supplement~\ref{app:proofs}). The quadratic is a local surrogate: whether it tracks the exact infimum is a numerical question, answered affirmatively in Section~\ref{sec:exp-twist} for our configurations by minimising the exact divergence directly. The dimension count for the last statement is $3N-6+1\ge\binom N2$ iff $N\le5$, the threshold of Proposition~\ref{prop:nonrigid}. That the distance matrices of generic anisotropic configurations lie outside the Euclidean set from $N=6$ follows from Theorem~\ref{thm:generic} at a witness (the image of $D$ has dimension $3N-4>3N-5$ there). The exact divergence gives a rigorous limit on detection.

\begin{theorem}[Undetectability from one network]\label{thm:lecam}
Let $P$ be the law of the network under $(Z,\alpha)$ in $\M$ and $Q$ its law under any $(Z',\alpha')$ in a competitor geometry $T$. Any test of ``the geometry is $T$'' against ``the geometry is $\M$'' that uses one network and has level at most $\nu$ under every law of the competitor family ($\sup_{Q}\mathbb{E}_Q\phi\le\nu$) has power at most $\nu+\mathrm{TV}(P,Q)\le\nu+\sqrt{\tfrac12\sum_{i<j}\mathrm{KL}\big(\mathrm{Bern}(p_{ij})\,\|\,\mathrm{Bern}(q_{ij})\big)}$ at $(Z,\alpha)$, for every $(Z',\alpha')$. In particular at most $\nu+\sqrt{\tfrac12\inf_{Z',\alpha'}\sum_{i<j}\mathrm{KL}_{ij}}$, whose second-order approximation is $\nu+\Delta_T(Z,\alpha)/\sqrt2$.
\end{theorem}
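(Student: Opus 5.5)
The plan is the standard Le Cam two-point argument, followed by Pinsker's inequality and tensorisation over dyads. Fix a test $\phi:\{0,1\}^{\binom N2}\to[0,1]$ with $\sup_Q\mathbb{E}_Q\phi\le\nu$ over the competitor family. For any single competitor law $Q$, induced by some $(Z',\alpha')$ in $T$, the bound
\[
\mathbb{E}_P\phi=\mathbb{E}_Q\phi+(\mathbb{E}_P\phi-\mathbb{E}_Q\phi)\le\nu+\sup_{0\le f\le1}\big|\mathbb{E}_Pf-\mathbb{E}_Qf\big|=\nu+\mathrm{TV}(P,Q)
\]
holds. This needs only that $\phi$ takes values in $[0,1]$, so randomised tests are included. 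Since the level constraint holds at every $Q$ in the family, the bound holds simultaneously for every $(Z',\alpha')$.

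Next I will bound the total variation by Pinsker's inequality, $\mathrm{TV}(P,Q)\le\sqrt{\tfrac12\mathrm{KL}(P\|Q)}$. Under both models the dyads $Y_{ij}$, $i<j$, are independent Bernoulli variables given the latent parameters, with success probabilities $p_{ij}=\sigma(\alpha-d_\M(z_i,z_j))$ and $q_{ij}=\sigma(\alpha'-d_T(z'_i,z'_j))$. Here I treat $(Z,\alpha)$ and $(Z',\alpha')$ as fixed, so the law of the network is the product law \eqref{eq:lik} and not the prior mixture. The Kullback--Leibler divergence of a product equals the sum of the marginal divergences. This gives the displayed bound $\nu+\sqrt{\tfrac12\sum_{i<j}\mathrm{KL}_{ij}}$.

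Because the power bound holds for every $(Z',\alpha')$, I can take the infimum of its right-hand side over $(Z',\alpha')\in T^N\times\R$. Monotonicity of the square root lets the infimum pass inside, which yields $\nu+\sqrt{\tfrac12\inf\sum\mathrm{KL}_{ij}}$. No attainment of the infimum is needed, since we only take an infimum of upper bounds.

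The last claim follows from Proposition~\ref{prop:deficit}. For each $(Z',\alpha')$ the dyad divergence equals $\tfrac12\sum w_{ij}(d_\M-d_T-c)^2$ up to a cubic remainder, with $c=\alpha-\alpha'$. Minimising the quadratic over $(Z',c)$ is exactly \eqref{eq:deficit}, so to second order $\inf\sum\mathrm{KL}_{ij}\approx\Delta_T^2(Z,\alpha)$. The bound then becomes $\nu+\sqrt{\tfrac12\Delta_T^2}=\nu+\Delta_T/\sqrt2$.

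The only delicate point is this final step. The infimum of the exact divergence and the infimum of its quadratic surrogate agree only up to the remainder, and only for near-optimal competitors whose log-odds stay close to the truth. I will state it, as the theorem does, as a second-order approximation rather than an inequality. I will justify it by restricting to competitors with $\sum|\eta_{ij}-\eta'_{ij}|^3$ small, which is where the near-minimisers lie. The explicit remainder constant $1/(36\sqrt3)$ from Proposition~\ref{prop:deficit} controls the error. Everything else is exact.
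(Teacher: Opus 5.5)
Your proposal is correct and follows the paper's proof essentially step for step. Both use Le Cam's two-point bound $\mathbb{E}_P\phi\le\mathbb{E}_Q\phi+\mathrm{TV}(P,Q)$, then Pinsker's inequality with the KL of the product of independent dyad laws, then an infimum over $(Z',\alpha')$, and finally Proposition~\ref{prop:deficit} for the second-order term. Your claim that the near-minimisers lie where the cubic remainder is small is asserted rather than proved in general; the paper establishes it only in the small-scale regime of Theorem~\ref{thm:smallscale}. Since the statement claims only a second-order approximation at that point, this does not affect correctness.
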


This is Le Cam's two-point argument with Pinsker's inequality, applied dyad by dyad (the dyads are independent given the positions). The bound is computable: any competitor configuration gives a valid upper bound on the power, and Section~\ref{sec:exp-twist} evaluates it at the minimiser of the exact divergence. The proposition turns ``does the twist matter'' into a computation: the smallest divergence divided by the number of dyads is, to first order, the expected held-out log-loss advantage of the true model over the best competitor with known parameters. Whether one network can reveal it depends on the sampling variability of the held-out difference (both fits pay estimation cost, which cancels in a paired comparison) and, from below, on Theorem~\ref{thm:lecam}. Section~\ref{sec:exp-twist} evaluates both. The functional is computed by weighted least squares over $T^N\times\R$, and the exact divergence by least squares on Bernoulli deviance residuals.

\subsection{Small-scale deficit}\label{sec:smallscale}

When the configuration is small relative to the curvature scale, the smallest divergence has an explicit leading term: the twist enters through the curvature of $\M$, and the part of the curvature that a competitor cannot absorb is its component along the equilibrium stresses of the configuration viewed as a bar framework in $\R^3$.

Let $o\in\M$, let $u_i=\exp_o^{-1}(z_i)\in T_o\M\cong\R^3$ be the normal coordinates of the configuration, let $X=(u_1,\dots,u_N)$ be the same points read as a configuration in $\R^3$, and let $\mathrm{Rm}_\M(u,v,v,u)$ be the curvature quadratic form of $\M$ at $o$, so that $\mathrm{Rm}_\M(u,v,v,u)=K(\sigma)|u\wedge v|^2$ with $K(\sigma)$ the sectional curvature of the plane $\sigma$ spanned by $u$ and $v$. With $U=u\times v$ and $U_3$ its component along the fibre direction at $o$, these forms are
\begin{equation}\label{eq:curvforms}
\begin{aligned}
&\mathrm{Rm}_{\Nil}=\tfrac14|U|^2-U_3^2,\qquad \mathrm{Rm}_{\SL}=\tfrac14|U|^2-2U_3^2,\qquad \mathrm{Rm}_{\Sol}=2U_3^2-|U|^2,\\
&\mathrm{Rm}_{\Hb^2\times\R}=-U_3^2,\qquad \mathrm{Rm}_{\Hb^3}=-|U|^2,\qquad \mathrm{Rm}_{\R^3}=0,
\end{aligned}
\end{equation}
where $U_3=u_1v_2-u_2v_1$ is twice the signed area of the triangle with vertices $o$ and the base projections of $u$ and $v$. The complete bar framework on $X$ has rigidity matrix $J_E(X)$, the differential of the Euclidean distance map, and its space of equilibrium stresses orthogonal to the constant vector is $\Omega(X)=\{\omega\in\R^{\binom N2}:J_E(X)^{\tr}\omega=0,\ \mathbf 1^{\tr}\omega=0\}$, of dimension $\binom N2-\operatorname{rank}[J_E(X),\mathbf{1}]$, equal to $\binom N2-3N+5$ when $X$ is infinitesimally rigid in $\R^3$ and $N\ge5$.

\begin{theorem}[Small-scale deficit, local]\label{thm:smallscale}
Let $\M\in\{\Nil,\Sol,\SL\}$ and let $T$ be $\R^3$, $\Hb^3$ or $\Hb^2\times\R$. Let $Z=(\exp_o(\varepsilon u_i))_{i\le N}$ for fixed $u_1,\dots,u_N$ whose framework $X$ is infinitesimally rigid in $\R^3$, and let $w_0=\sigma(\alpha)(1-\sigma(\alpha))$ be the common limit of the weights $w_{ij}$ as $\varepsilon\to0$. Define
\begin{equation}\label{eq:rdef}
r_{ij}=-\frac{[\mathrm{Rm}_\M-\mathrm{Rm}_T](u_i,u_j,u_j,u_i)}{6\,|u_i-u_j|},
\end{equation}
where for $T=\Hb^2\times\R$ the form $\mathrm{Rm}_T$ is evaluated with the fibre direction of $T$ rotated to a unit vector $n$, and let $\Pi$ be the orthogonal projection of $\R^{\binom N2}$ onto $\Omega(X)$. Then, as $\varepsilon\to0$,
\begin{equation}\label{eq:smallscale}\small
\inf_{Z'\in T^N,\ \alpha'}\ \sum_{i<j}\mathrm{KL}_{ij}\;=\;\Delta_T^2(Z,\alpha)\,(1+o(1)),\qquad \Delta_T^2(Z,\alpha)\;=\;\frac{w_0\,\varepsilon^6}{2}\,\min_n\big\|\Pi r\big\|^2+O(\varepsilon^7),
\end{equation}
with the minimum over $n$ omitted for $T=\R^3$ and $\Hb^3$. The infimum is over competitor configurations and intercepts whose distance matrices lie in a neighbourhood of $D_E(X)$ up to an additive constant (the proof is local in distance space; a global statement would need to exclude distant realisations, which we do not claim), and the coefficient of $\varepsilon^6$ may vanish, in which case the deficit is $O(\varepsilon^7)$. When $\Omega(X)=\{0\}$, which is the case for $N\le5$, the second statement reads $\Delta_T^2(Z,\alpha)=O(\varepsilon^7)$.
\end{theorem}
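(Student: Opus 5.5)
I will expand both distance matrices in normal coordinates and reduce the infimum to a weighted least-squares projection in distance space.

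\emph{Step 1: expansion of the true distances.} The standard expansion of the squared distance in Riemannian normal coordinates at $o$ gives
\[
d_\M(\exp_o\varepsilon u,\exp_o\varepsilon v)^2=\varepsilon^2|u-v|^2-\tfrac{\varepsilon^4}{3}\mathrm{Rm}_\M(u,v,v,u)+O(\varepsilon^5),
\]
with the $O(\varepsilon^5)$ term a covariant derivative of curvature. Since $u\wedge v=u\wedge(v-u)$, dividing by $|u-v|$ yields
\[
d_\M=\varepsilon|u_i-u_j|-\frac{\varepsilon^3\,\mathrm{Rm}_\M(u_i,u_j,u_j,u_i)}{6|u_i-u_j|}+O(\varepsilon^4).
\]
The twist of $\Nil$ and $\SL$ enters only through $\mathrm{Rm}_\M$, because the first-order term of the metric is absorbed by passing to normal coordinates. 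For each competitor $T$ I choose $o'\in T$ and a linear isometry $T_o\M\to T_{o'}T$, which for $\Hb^2\times\R$ sends the fibre direction of $T$ to $n$. Placing $z'_i=\exp_{o'}(\varepsilon u'_i)$ gives the same expansion with $\mathrm{Rm}_T$.

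\emph{Step 2: reduction to a projection.} Write $u'_i=u_i+\varepsilon^2\delta_i+O(\varepsilon^3)$ and $c=\varepsilon^3\gamma$. The residual vector $d_\M-d_T-c$ then equals
\[
\varepsilon^3\big(r-J_E(X)\delta-\gamma\mathbf 1\big)+O(\varepsilon^4).
\]
Infinitesimal rigidity means the range of $[J_E(X),\mathbf 1]$ is exactly $\Omega(X)^\perp$. The weights satisfy $w_{ij}=w_0+O(\varepsilon)$ because $d_\M=O(\varepsilon)$. Minimising over $(\delta,\gamma)$ therefore gives
\[
\tfrac12 w_0\varepsilon^6\|\Pi r\|^2+O(\varepsilon^7),
\]
and minimising further over $n$ handles $\Hb^2\times\R$.

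This yields an upper bound on $\Delta_T^2$, and by Proposition~\ref{prop:deficit} on the KL infimum, since the residuals are $O(\varepsilon^3)$ and the cubic remainder is $O(\varepsilon^9)$.

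\emph{Step 3: matching lower bound.} Any competitor whose distance matrix is within a neighbourhood of $D_E(X)$ (after scaling by $\varepsilon$, up to a constant) has, by rigidity, a configuration $\varepsilon$-close to $X$ after an isometry of $T$. Normal coordinates around the image of $o$ then put it in the form used in Step 2, and the same projection argument bounds the residual from below.

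For the KL statement, either the residuals are $o(\varepsilon^2)$ or they are not.
\begin{itemize}
\item If they are $o(\varepsilon^2)$, the cubic remainder of Proposition~\ref{prop:deficit} is negligible relative to the quadratic, which gives the $(1+o(1))$ factor.
\item If they are not, the quadratic term already exceeds order $\varepsilon^4$, which dominates $\varepsilon^6$.
\end{itemize}
When $\Omega(X)=\{0\}$, which holds for $N\le5$, we have $\Pi r=0$. Carrying the expansion one order further then shows the $\varepsilon^4$ residual also lies in the range of $[J_E,\mathbf 1]$ up to $O(\varepsilon^7)$ in the squared sum, which gives the $O(\varepsilon^7)$ claim.

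\emph{Main obstacle.} I expect the hardest step to be the uniformity in Step 3. Competitor configurations need not a priori be $O(\varepsilon^3)$-perturbations of $\varepsilon X$; they could differ at order $\varepsilon^2$. For $\Hb^2\times\R$ the choice of $n$ must also be controlled.

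I would first use the implicit function theorem on the rigidity map to parametrise near-realisations as $\varepsilon(u_i+\eta_i)$ with $J_E(X)\eta$ controlled by the residual. I would then argue that a residual of exact order $\varepsilon^3$ forces $\eta=O(\varepsilon^2)$. This is the step where the result is only local in distance space, as the statement stipulates.

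Compactness of the unit sphere of $n$ makes the minimum over $n$ attained and continuous.
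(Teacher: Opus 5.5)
Your proposal is correct and follows essentially the paper's proof. You use the same normal-coordinate expansion (Lemma~\ref{lem:normal}), reduce the Fisher deficit to the projection of $\varepsilon^3 r$ onto $\Omega(X)$ (the orthogonal complement of the range of $[J_E(X),\mathbf 1]$) with weights $w_0+O(\varepsilon)$ and a minimum over $n$, and finish with a two-case argument for the exact divergence; your parametrisation $u_i'=u_i+\varepsilon^2\delta_i$ is the coordinate form of the paper's tangent-space and second-fundamental-form estimate, which rests on the same fact $|X'-X|=O(\varepsilon^2)$ (modulo isometries and the choice of $n$) that you flag as the main obstacle. One small repair: when a residual is not $o(\varepsilon^2)$ the quadratic expansion of Proposition~\ref{prop:deficit} need not be accurate, so bound the divergence below by Pinsker's inequality or by convexity of the dyad divergence in $\eta'$, as the paper does with its constant $\kappa(\delta_0)$.
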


The theorem has three consequences. The smallest divergence to any product or constant-curvature competitor vanishes like the sixth power of the scale, so the anisotropy of a small configuration is invisible: the first-order effect of the twist on distances is a change of coordinates that a competitor absorbs, and what remains is curvature. The leading term is the curvature difference between $\M$ and $T$, read along each pair, and projected on the stresses of the configuration: pairs of points whose curvature term can be reproduced by moving the competitor's points contribute nothing, which is why no configuration with $N\le5$ points, whose stress space is empty, has a deficit at this order, in agreement with Proposition~\ref{prop:nonrigid}. Against $\Hb^2\times\R$ with its fibre aligned to that of $\Nil$, the curvature difference $\tfrac14|U|^2$ is isotropic: the squared areas cancel, so at small scale $\Nil$ is a product with one quarter of a unit of extra constant curvature, and the twist is not what distinguishes it from that product. Against $\R^3$ and $\Hb^3$ the anisotropic term $U_3^2$, the squared enclosed area, survives, and against every competitor the deficit of $\Sol$ is driven by $U_3^2$ with the opposite sign. Table~\ref{tab:smallscale} compares the leading term of \eqref{eq:smallscale}, computed with the exact weights $w_{ij}$ in place of $w_0$ (a difference of relative order $\varepsilon$), with the exact minimum divergence for one configuration at four scales: for $\R^3$ and $\Hb^3$ the ratio tends to one with an error linear in $\varepsilon$, as the theorem predicts.

The proof (Supplement~\ref{app:proofs}) combines the expansion $d_\M(\exp_o u,\exp_o v)^2=|u-v|^2-\tfrac13\mathrm{Rm}(u,v,v,u)+O(|u|^5+|v|^5)$ of the distance in normal coordinates (Lemma~\ref{lem:normal}, proved there), the analogous expansion for $T$, the fact that the Euclidean-plus-constant distance vectors form a submanifold whose normal space at $D_E(X)$ is $\Omega(X)$, and the argument of the proof of Proposition~\ref{prop:deficit} that competitors far from the truth cost a divergence bounded below by a constant independent of $\varepsilon$. In shape analysis the corresponding expansion has no curvature term because the ambient space is flat. Here the curvature difference is the perturbation, and its component in the tangent space of the competitor family is the part a competitor absorbs.

\subsection{Directed ties: the twist at first order}\label{sec:directed}

The small-scale theorem explains why undirected ties hide the twist: its first-order effect on distances is a coordinate change. Directed ties expose it. Let the network be directed, $Y_{ij}\sim\mathrm{Bernoulli}(p_{ij})$ independently for ordered pairs $i\ne j$, with
\begin{equation}\label{eq:directed}
\logit(p_{ij})=\alpha-d_\M(z_i,z_j)+\gamma\,v_{ij},
\end{equation}
where $v_{ij}$ is the vertical offset of Lemma~\ref{lem:holonomy}, antisymmetric in $(i,j)$, and $\gamma\in\R$. The term $\gamma v_{ij}$ makes ties from $i$ to $j$ more likely than from $j$ to $i$ when $z_j$ lies above the horizontal lift through $z_i$. In a product or a constant-curvature space the analogous model has $v_{ij}=\zeta_j-\zeta_i$ for a coordinate $\zeta$: the asymmetry is a potential difference, and the model is a latent distance model with a ranking term, the form taken by the additive latent distance models for directed ties (a ranking plus a symmetric distance); the bilinear and additive-and-multiplicative-effects models of \citet{hoff2005bilinear,hoff2021additive} and the model of \citet{sewell2015latent} are outside this class. Call this the gradient class: a symmetric distance term plus a coboundary asymmetry $u_{ij}=\phi_j-\phi_i$.

\begin{proposition}[Circulation and its deficit]\label{prop:directed}
Let $s_{ij}=\logit(p_{ij})-\logit(p_{ji})$. Under \eqref{eq:directed} in $\M_K$, $s_{ij}=2\gamma v_{ij}$ and for every triple
\[
s_{ij}+s_{jk}+s_{ki}=-2\gamma\,\Area_K(w_i,w_j,w_k),
\]
whereas every model of the gradient class has $s_{ij}+s_{jk}+s_{ki}=0$. Let $\bar A_{ij}=N^{-1}\sum_{k}\Area_K(w_i,w_j,w_k)$ and $\tilde w_{ij}=\min(w_{ij},w_{ji})$. The Fisher deficit of \eqref{eq:directed} against the whole gradient class, defined as in \eqref{eq:deficit} with the sum over ordered pairs and the infimum over the class, satisfies
\begin{equation}\label{eq:directed-deficit}
\Delta^2_{\mathrm{grad}}(Z,\alpha,\gamma)\ \ge\ \gamma^2\min_{i<j}\tilde w_{ij}\sum_{i<j}\bar A_{ij}^2 .
\end{equation}
\end{proposition}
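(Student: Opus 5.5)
The proof has two parts: algebraic identities for the asymmetry, then a least-squares lower bound obtained by projecting onto the cycle space.

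\emph{Identities.} Since $d_\M$ is symmetric and $v_{ji}=-v_{ij}$ (Lemma~\ref{lem:holonomy}), subtracting the two log-odds in \eqref{eq:directed} cancels $\alpha$ and the distance and gives $s_{ij}=\gamma v_{ij}-\gamma v_{ji}=2\gamma v_{ij}$. Summing around the triangle and applying the holonomy identity $v_{ij}+v_{jk}+v_{ki}=-\Area_K(w_i,w_j,w_k)$ gives the stated circulation. In the gradient class, $\logit p_{ij}=\alpha'-d_T(z'_i,z'_j)+\phi_j-\phi_i$, so $s'_{ij}=2(\phi_j-\phi_i)$. This is a coboundary, and its sum around any triangle telescopes to zero.

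\emph{Reduction to the asymmetric part.} Fix a competitor with log-odds $\eta'_{ij}$ and write $e_{ij}=\eta_{ij}-\eta'_{ij}$ over ordered pairs. The deficit functional is $\tfrac12\sum_{i\ne j}w_{ij}e_{ij}^2$, minimised over the class. For each unordered pair I bound
\[
w_{ij}e_{ij}^2+w_{ji}e_{ji}^2\ \ge\ \tilde w_{ij}\,(e_{ij}^2+e_{ji}^2)\ \ge\ \tfrac12\tilde w_{ij}\,(e_{ij}-e_{ji})^2 .
\]
Here $e_{ij}-e_{ji}=s_{ij}-s'_{ij}=2\gamma v_{ij}-2(\phi_j-\phi_i)$, so the symmetric distance term and the intercept drop out entirely. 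Bounding $\tilde w_{ij}$ below by its minimum then gives
\[
\Delta^2_{\mathrm{grad}}\ \ge\ \min_{i<j}\tilde w_{ij}\,\inf_\phi\sum_{i<j}\big(\gamma v_{ij}-(\phi_j-\phi_i)\big)^2 .
\]
The factors are $\tfrac12\cdot\tfrac12\cdot4=1$, provided the deficit is summed over ordered pairs with the $\tfrac12$ as in \eqref{eq:deficit}; I would check this normalisation first, since it fixes the constant.

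\emph{Projection onto the cycle space.} The infimum is the squared distance from $\gamma v$ to the space of coboundaries (gradients) on the complete graph, that is, the squared norm of the projection of $\gamma v$ onto the cycle space. On $K_N$ this has an explicit form: for an antisymmetric $a$, the least-squares potential is $\phi_i=-N^{-1}\sum_k a_{ik}$, and the residual is
\[
a_{ij}-(\phi_j-\phi_i)=N^{-1}\sum_k\big(a_{ij}+a_{jk}+a_{ki}\big),
\]
the average triangle curl. With $a=\gamma v$ each curl equals $-\gamma\Area_K(w_i,w_j,w_k)$, so the residual is $-\gamma\bar A_{ij}$. Hence
\[
\inf_\phi\sum_{i<j}\big(\gamma v_{ij}-(\phi_j-\phi_i)\big)^2=\gamma^2\sum_{i<j}\bar A_{ij}^2 ,
\]
which gives \eqref{eq:directed-deficit}; the step uses only the holonomy identity and the $K_N$ projection formula.

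\emph{Main obstacle.} The difficulty is verifying the complete-graph projection formula. I would check it through the normal equations: the residual $r_{ij}=N^{-1}\sum_k c_{ijk}$ must satisfy $\sum_j r_{ij}=0$ for every $i$. This follows from antisymmetry, because the terms $a_{ij}$ and $a_{ki}$ cancel when summed over $j$ and $k$, and $\sum_{j,k}a_{jk}=0$. Alternatively one can use the Hodge decomposition on $K_N$, where the graph Laplacian is $NI-\mathbf 1\mathbf 1^{\tr}$.
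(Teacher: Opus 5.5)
Your proof is correct and follows the paper's argument. Your inequality $e_{ij}^2+e_{ji}^2\ge\tfrac12(e_{ij}-e_{ji})^2$ is the paper's split into symmetric and antisymmetric parts with the symmetric part dropped, and your normal-equation check of the complete-graph projection formula is equivalent to the paper's computation $\phi=L^{+}\delta^{\tr}v$ with $L=NI-\mathbf 1\mathbf 1^{\tr}$; the normalisation you flagged, $\tfrac12\sum_{i\ne j}w_{ij}\delta_{ij}^2$, is the one the paper uses, so the constant is indeed $1$.
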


The first statement is Lemma~\ref{lem:holonomy}: the asymmetries of the twisted model circulate, by an amount equal to the enclosed area, and those of a gradient model do not. The second turns the circulation into a divergence: the antisymmetric part of the log-odds residual is $\gamma v_{ij}-\gamma' u_{ij}$ with $u$ a coboundary, the coboundaries of the complete graph are the orthogonal complement of the cycle space (the combinatorial Hodge decomposition of \citealp{jiang2011statistical}), and the component of $v$ in the cycle space has entries $-\bar A_{ij}$, so the residual cannot be smaller than the projection of $\gamma v$ on the cycle space (Supplement~\ref{app:proofs}). The bound holds for every symmetric distance, at every scale and every $N$, and grows with the squared enclosed areas, so it is of order $N^2\gamma^2\sigma_h^4$ rather than $\varepsilon^6$: in directed networks the twist is a first-order, detectable quantity. Two consequences follow. The gauge group of \eqref{eq:directed} with $\gamma$ of known sign is the identity component of $\Isom(\M)$, because the reflection $\refl$ reverses the sign of every $v_{ij}$; with $\gamma$ free, $(Z,\gamma)\mapsto(\refl Z,-\gamma)$ remains a symmetry and only $|\gamma|$ is identified. And the regime in which these geometries should be fitted is now explicit: directed networks whose asymmetries circulate, that is, networks in which preferences are not transitive. Section~\ref{sec:exp-directed} shows the corresponding win condition.

\begin{remark}[Identifiability of $\alpha$]
Given $Z$, $\alpha$ is identified by \eqref{eq:model}. Jointly, $(\alpha,Z)$ and $(\alpha+c,Z')$ with $d(z'_i,z'_j)=d(z_i,z_j)+c$ would be indistinguishable. For a finite configuration such a $Z'$ may exist for small $|c|$ (the additive-constant problem of multidimensional scaling), but under a latent distribution with full support the population version cannot hold for $c\ne0$: arbitrarily close latent pairs force $c\ge0$, and for $c>0$ infinitely many points pairwise at distance $\ge c$ cannot lie in a bounded region (compactness). At finite $N$ the intercept is regularly identified with the positions when the augmented rank condition of Theorem~\ref{thm:generic}(ii) holds.
\end{remark}

\section{Estimation}\label{sec:estimation}

Both estimation schemes target the quotient posterior \eqref{eq:target}. The first is a Metropolis-within-Gibbs sampler on the gauge slice: random-walk proposals for each position in the chart coordinates (proposals that leave the slice are rejected), a Gaussian random walk for $\alpha$ and for the centre with the chart Jacobian in the acceptance ratio, and Gibbs steps for the scales under inverse-Gamma priors. The second-anchor update carries the slice factor $\Delta_\M$ of \eqref{eq:slicejac}. The second is black-box variational inference \citep{ranganath2014black} with a mean-field family of left-translated wrapped Normals for the positions, a Gaussian factor on the log-transformed slice coordinates of the second anchor, and Gaussian factors for $\alpha$, the log-scales and the centre, optimised by score-function gradients with Rao--Blackwellisation, control variates and optional dyad subsampling. Positions are initialised by multidimensional scaling of shortest-path distances. For the directed model of Section~\ref{sec:directed} both schemes must evaluate the directed likelihood, in which a node's move changes its row and its column; Section~\ref{sec:exp-counter} reports the sampler on the network where the model wins. The algorithms, the priors, the score formulas and the settings are in Supplement~\ref{supp:estimation}, and Section~\ref{sec:exp-sims} compares the two schemes on replicated simulations.

\section{Experiments}\label{sec:experiments}

The experiments follow the theory. Section~\ref{sec:exp-ident} reports the rank witnesses for the identifiability results (the distance-inversion, non-congruence and slice-Jacobian experiments are in Supplement~\ref{supp:exp}), Section~\ref{sec:exp-cal} is a replicated calibration study, Section~\ref{sec:exp-sims} compares the two estimation schemes on simulated networks, Section~\ref{sec:exp-twist} evaluates the representation deficit and the twist, Section~\ref{sec:exp-real} compares the three geometries with the other model geometries and with calibrated non-geometric baselines on four networks. All computations use the Python implementation accompanying the paper on one CPU core. Every number below is generated from the stored results. Unless stated otherwise the scales are estimated under the $\mathrm{InvGamma}(2,2)$ prior and the centre carries its own posterior (MCMC) or variational factor (BBVI). All fitted distances are tabulated surrogates (Supplement~\ref{app:geodesics}, the $\Nil$ table has maximum error $0.018$ and mean $0.001$ against Proposition~\ref{prop:nildist}, which is used for the certificates and validations), and their error should be kept in mind wherever differences of that size are reported.

\subsection{Identifiability}\label{sec:exp-ident}

\paragraph{Rank witnesses.} Table~\ref{tab:rigidity} gives the numerical rank witnesses for Theorem~\ref{thm:generic}, including the augmented matrix that governs joint identification with $\alpha$: full rank at all $100$ random $\Nil$ configurations for $N=6,7,8,10$, with a smallest singular value that is tiny at $N=6$ (median \nilSixMedian) and grows with $N$ (median \nilTenMedian\ at $N=10$). Full rank in $\SL$ at $N=6,7$. And in $\Sol$ full slice rank at $N=6$ (where the augmented matrix is $15\times16$ and structurally rank-deficient), full augmented rank at $N=7$.

\begin{table}[H]\centering\scriptsize
\caption{Rank certificates for Theorem~\ref{thm:generic} and for joint identification with $\alpha$. $J$ is the differential of the distance map restricted to the tangent space of the gauge slice ($k=\dim\mathcal{S}_I$ columns, $\binom N2$ rows) and $[\mathbf 1,-J]$ its augmentation by the intercept direction. For $\Nil$ the differential is computed by central differences of the exact distance at $100$ random configurations, and the columns give the median [min, max] of the smallest singular value of $J$, the median [min] for $[\mathbf 1,-J]$, and the percentage of configurations at which $J$, resp.\ $[\mathbf 1,-J]$, has full column rank (tolerance $10^{-8}$). For $\Sol$ and $\SL$ the differential comes from the first-variation formula with minimising geodesics from a boundary-value solver at one configuration, and the last column gives the numerical ranks. For $\Sol$ at $N=6$ the augmented matrix has $16$ columns and $15$ rows, so joint identification with $\alpha$ is impossible.}\label{tab:rigidity}
\resizebox{\textwidth}{!}{\begin{tabular}{lcccccccl}\toprule
$\M$ & $N$ & dyads & $k$ & $k{+}1$ & $s_{\min}(J)$ & $s_{\min}([\mathbf 1,-J])$ & full rank ($J$ / aug.) & method\\ \midrule
$\Nil$ & 6 & 15 & 14 & 15 & 0.007 [4e-06, 0.047] & 0.004 [3e-06] & 100\% / 100\% & exact, 100 configurations \\
$\Nil$ & 7 & 21 & 17 & 18 & 0.016 [2e-03, 0.052] & 0.015 [2e-03] & 100\% / 100\% & exact, 100 configurations \\
$\Nil$ & 8 & 28 & 20 & 21 & 0.024 [3e-03, 0.076] & 0.023 [2e-03] & 100\% / 100\% & exact, 100 configurations \\
$\Nil$ & 10 & 45 & 26 & 27 & 0.038 [1e-02, 0.146] & 0.037 [1e-02] & 100\% / 100\% & exact, 100 configurations \\
$\Sol$ & 6 & 15 & 15 & 16 & 0.0089 & 0 (structural, $15\times16$) & 15 / 15 & BVP geodesics, 1 configuration \\
$\Sol$ & 7 & 21 & 18 & 19 & 0.0095 & 0.0034 & 18 / 19 & BVP geodesics, 1 configuration \\
$\SL$ & 6 & 15 & 14 & 15 & 0.0063 & 0.0007 & 14 / 15 & BVP geodesics, 1 configuration \\
$\SL$ & 7 & 21 & 17 & 18 & 0.0128 & 0.0110 & 17 / 18 & BVP geodesics, 1 configuration \\
\bottomrule\end{tabular}}\end{table}

\subsection{Calibration over independent networks}\label{sec:exp-cal}

Supplement Tables~\ref{tab:covrep} and~\ref{tab:covN} report a replicated coverage study: independent latent configurations and networks ($N=30$), the sampler run on each, and the frequency with which the true pairwise distances, tie probabilities and $\alpha$ fall inside their $90\%$ (and $50\%$) posterior intervals, with Monte Carlo standard errors over networks. With the scales fixed at their true values the intervals are nearly calibrated in all five geometries, $87\%$ at nominal $90\%$ in each of $\Nil$, $\Sol$, $\SL$, $\R^3$ and $\Hb^3$, $46$--$49\%$ at nominal $50\%$, $\alpha$ covered $83$--$100\%$ of the time, no bias. With the scales estimated under the $\mathrm{InvGamma}(2,2)$ prior (mean $2$, against true $\sigma^2$ of $2$--$4$) the same intervals cover \covNilig\ ($\Nil$), \covSolig\ ($\Sol$), \covSLtig\ ($\SL$), \covEucig\ ($\R^3$) and \covHthreeig\ ($\Hb^3$), $\alpha$ is covered $30$--$67\%$ of the time, and the posterior mean distances are biased downward by $0.7$--$1.3$. With an $\mathrm{InvGamma}(3,8)$ prior, whose mean equals the true horizontal variance, coverage in $\Nil$ rises to \covNilMatched, $\alpha$ is covered \alphaCovMatched\ of the time and the bias falls to \biasMatched: about two thirds of the shortfall is prior mismatch and one third is the weak identification of the scale at $N=30$, which shrinks the configuration and moves $\alpha$ with it. Holding the scales at multidimensional-scaling plug-in values is worse still (a single-network check gave $44$--$53\%$). The phenomenon is the same in $\R^3$ and $\Hb^3$, so it is a property of latent space models at this size, not of the anisotropic geometries. At $N=60$ the shortfall disappears: with the $\mathrm{InvGamma}(3,8)$ prior the intervals cover \covSixtyMatched\ (Supplement Table~\ref{tab:covN}), $\alpha$ is covered \alphaCovSixty\ of the time and the bias is \biasSixty, indistinguishable from the true-scale runs. The under-coverage is therefore a thirty-node phenomenon of a weakly identified scale under a mismatched prior, not a general property of the models. The counts of networks per cell ($6$--$12$) are small and unequal (compute budget), so differences of ten percentage points between geometries are within Monte Carlo error. The contrasts between known and estimated scales at $N=30$, and between $N=30$ and $N=60$, are not. The chains behind these numbers mix slowly: on the $N=40$ $\Nil$ network of Section~\ref{sec:exp-sims}, four dispersed chains of $6000$ iterations give $\widehat R=\rhatAlpha$ for $\alpha$, $\rhatSig$ for $\sigma_h$ and at most $\rhatDmax$ for pairwise distances, with effective sample sizes of \essAlpha\ ($\alpha$) and at least \essDmin\ (distances) out of $16\,000$ draws. The most dispersed chain settles at a lower log-likelihood, consistent with the multimodality of Section~\ref{sec:exp-ident}. Longer runs and a mode-jumping move are needed before finer coverage statements can be made.

\subsection{Simulated networks: MCMC against variational inference}\label{sec:exp-sims}

For each geometry we drew ten independent sets of $N=40$ positions with $\alpha=2.5$, sampled one network for fitting and a second from the same positions (conditional edge replication), and ran Algorithms~\ref{alg:mcmc} and~\ref{alg:bbvi} on each, Table~\ref{tab:simrep} gives means with standard errors over the ten networks, and Supplement~\ref{supp:sims} and Figure~\ref{fig:sims} show one realisation in more detail. The schemes recover the distance structure comparably (correlation of the estimated with the true distances $0.72$--$0.76$ for MCMC and $0.68$--$0.76$ for BBVI, $0.91$--$0.93$ between the two), their replicate log-losses are within $0.007$ of each other and $0.06$--$0.07$ above the oracle, and both under-estimate $\alpha$, BBVI by about $0.6$ more than MCMC (in the single realisation the variational standard deviation of $\alpha$ is three to five times smaller than the posterior one). BBVI is ten times faster here and its cost per iteration is linear in the number of subsampled dyads.

\begin{table}[H]\centering\scriptsize
\caption{Replicated simulations: independent (positions, network) pairs per geometry ($N=40$, $\alpha=2.5$; scales $(2,2)$, $(1.8,1.4)$, $(1.5,2)$ for $\Nil$, $\Sol$, $\SL$), Algorithm~\ref{alg:mcmc} ($4000$ iterations) and Algorithm~\ref{alg:bbvi} ($600$ iterations, $n_s=8$) on each. Replicate log-loss on a second network from the same positions (oracle: true probabilities); correlation of posterior mean (MCMC) or variational mean (BBVI) distances with the truth and with each other; posterior/variational mean of $\alpha$; mean run time in seconds. Means with standard errors over networks.}\label{tab:simrep}
\resizebox{\textwidth}{!}{\begin{tabular}{lcccccccccc}\toprule
$\M$ & networks & oracle & MCMC log-loss & BBVI log-loss & $\rho$(MCMC, truth) & $\rho$(BBVI, truth) & $\rho$(MCMC, BBVI) & $\hat\alpha$ MCMC & $\hat\alpha$ BBVI & time MCMC / BBVI\\ \midrule
$\Nil$ & 10 & 0.375 (0.009) & 0.438 (0.008) & 0.444 (0.009) & 0.76 (0.02) & 0.76 (0.01) & 0.92 (0.01) & 1.96 (0.16) & 1.40 (0.12) & 13 / 1 \\
$\Sol$ & 10 & 0.494 (0.010) & 0.559 (0.008) & 0.563 (0.007) & 0.72 (0.01) & 0.68 (0.01) & 0.91 (0.01) & 2.15 (0.20) & 1.58 (0.10) & 38 / 4 \\
$\SL$ & 10 & 0.436 (0.011) & 0.504 (0.012) & 0.507 (0.012) & 0.73 (0.01) & 0.72 (0.01) & 0.93 (0.00) & 2.11 (0.10) & 1.44 (0.07) & 21 / 2 \\
\bottomrule\end{tabular}}\end{table}

\subsection{Representation deficit, win maps and the twist}\label{sec:exp-twist}

\paragraph{Deficit and exact divergence.} Table~\ref{tab:deficit} evaluates the Fisher deficit of Proposition~\ref{prop:deficit} for configurations generated in $\Nil$, $\Sol$ and $\SL$ with increasing anisotropy, against $T=\R^3$, $\Hb^3$ and $\Hb^2\times\R$, and Supplement Table~\ref{tab:lecam} minimises the exact divergence directly (least squares on deviance residuals, started at the Fisher minimiser and at the multidimensional-scaling embedding). The two agree to within about ten per cent in most cells and within $26\%$ in all, and where they differ the exact optimiser finds a smaller value (the Fisher minimiser can be a poor point for the exact divergence: evaluated there, the divergence is up to sixteen times the deficit), so the quadratic is a usable proxy for these configurations. In nats the smallest divergence is below $2.3$ for $\Nil$ at $N=40$, below $16$ for $\SL$ and $\Sol$ at $N=60$ and below $13$ for $\Nil$ at $N=100$ up to spread $2$. It is $31$--$55$ for $\Nil$ at $N=100$ and spread $3$, where the enclosed areas are of order ten. Per dyad, the expected held-out log-loss advantage of the true model with known parameters, this is at most $0.003$ up to spread $2$ and $0.006$--$0.011$ in the last cell. $\Sol$'s divergences from $\Hb^3$ are below $6$ nats ($0.003$ per dyad), which is the quantitative reason why the Hyperbolic planes inside $\Sol$ let $\Hb^3$ match it.

\paragraph{Win maps.} Table~\ref{tab:winmap} fits the same networks by BBVI under the true geometry and under competitors and reports replicate log-loss (one run per cell), and Supplement Table~\ref{tab:winrep} repeats the $N=40$ and $N=60$ cells on four further independently generated networks each, and the $N=100$, spread-$3$ cell on two, reporting the \emph{paired} difference to each competitor. In the single-network table the true geometry is uniquely best in one of fifteen rows, $\Nil$ at $N=100$ and spread $3$, the cell with the largest divergence, by $0.005$ over $\Hb^2\times\R$ ($0.308$ against $0.313$--$0.317$), and elsewhere ties or loses, $\Sol$ never wins and $\SL$ never wins its own map. Over five networks per cell the paired differences are within $\pm0.0035$ with standard deviations of $0.001$--$0.008$ in every cell at $N\le60$, the true geometry wins at most two of five networks there, and the mean differences change sign between cells. In the $N=100$, spread-$3$ cell $\Nil$ is best or tied best on all three networks, by $0.004$--$0.006$ on average with standard deviation $0.003$, in agreement with the divergence per dyad of $0.006$--$0.011$. The gaps between the best fit and the oracle in Table~\ref{tab:winmap} ($0.010$--$0.060$) are estimation error. The divergences and the win maps say the same thing from two sides: up to spread $2$ and $N\le100$ the expected per-dyad advantage of the true geometry is below the split-to-split noise of a held-out comparison, and it becomes visible where the enclosed areas are of order ten and the network has a hundred nodes.

\paragraph{Undetectability bounds.} Supplement Table~\ref{tab:lecam} also evaluates Theorem~\ref{thm:lecam} at the minimisers of the exact divergence. For small spreads the bound is informative: on the $\Nil$ configuration with spread $0.5$ no level-$0.05$ test of $\R^3$ against $\Nil$ has power above $0.16$, and no test of $\Hb^2\times\R$ against $\SL$ at spread $0.5$ has power above $0.21$. At spread $1$ the bounds are $0.33$ ($\Nil$ against $\Hb^2\times\R$) and $0.52$ ($\SL$ against $\Hb^2\times\R$), and $\Nil$ against $\Hb^2\times\R$ is still bounded by $0.74$ at spread $3$, $N=40$. At larger divergences the bound is vacuous: detection is not excluded there, and the win maps show where it succeeds.

\paragraph{Small-scale regime.} Supplement Table~\ref{tab:smallscale} tests Theorem~\ref{thm:smallscale} on one configuration of $8$ and of $12$ points scaled to $\varepsilon\in\{0.05,0.1,0.2,0.4\}$ with exact $\Nil$ distances: against $\R^3$ and $\Hb^3$ the ratio of the exact minimum divergence to the leading term is $0.96$ and $0.77$--$1.00$ at $\varepsilon=0.05$ and moves away from one linearly in $\varepsilon$. Against $\Hb^2\times\R$ the exact optimiser, started from the unrotated configuration, stops at local minima above the prediction (ratios $1.06$--$1.9$ at $N=12$), consistent with the prediction being the infimum. The divergences themselves are of order $10^{-9}$ to $10^{-4}$: at these scales no test from one network detects the geometry, by Theorem~\ref{thm:lecam}.

\subsection{Directed networks}\label{sec:exp-directed}

This experiment tests the win condition of Proposition~\ref{prop:directed}: that the circulation model is preferred over every gradient-class competitor once the circulation coefficient is large enough, and not before. Networks are generated from \eqref{eq:directed} in $\Nil$ with $\gamma\in\{0,0.25,0.5,1\}$ and fitted by the circulation model, by the same geometry with $\gamma=0$ (mechanism control), and by the gradient models on $\R^3$, $\Hb^3$ and $\Hb^2\times\R$ (Table~\ref{tab:directed}). At $\gamma=0$ all five fits tie at $0.359$. At $\gamma=0.25$ the gradient models are marginally better ($0.370$ against $0.373$): weak circulation is absorbed by a ranking with fewer effective parameters, the regime in which the proposal loses. At $\gamma=0.5$ the circulation model leads by $0.022$ and at $\gamma=1$ by $0.12$ ($0.356$ against $0.477$--$0.483$, oracle $0.329$), with direction accuracy $0.94$ against $0.81$; the mechanism control, which cannot represent circulation, reaches $0.586$ and chance accuracy. The advantage therefore appears between $\gamma=0.25$ and $\gamma=0.5$ at these scales and grows with $\gamma$, as \eqref{eq:directed-deficit} predicts through $\gamma^2\sum\bar A_{ij}^2$; the estimated $|\hat\gamma|$ is $1.2$--$1.5$ at $\gamma=1$, its sign being unidentified when $\gamma$ is free, as stated after Proposition~\ref{prop:directed}.

\begin{table}[H]\centering\scriptsize
\caption{Directed networks generated from the $\Nil$ model with circulation coefficient $\gamma$ ($N=60$, scales $(2,1.5)$, density $0.15$, three independent networks per row): replicate log-loss on a second network from the same positions, mean (s.d.) over networks, and the accuracy with which the direction of an asymmetric pair (exactly one of the two ties present) is predicted. Fits by BBVI ($600$ iterations, $n_s=8$), unanchored, with $\gamma$ estimated by profile Newton steps where indicated. Roles: the circulation model is the proposal; the symmetric $\Nil$ model is the mechanism control (same geometry, $\gamma=0$); the gradient models on $\R^3$, $\Hb^3$ and $\Hb^2\times\R$ are the standard and the adversarial competitors (a symmetric distance plus a coboundary asymmetry, the ranking form of every existing latent distance model for directed ties); the oracle is the true law.}\label{tab:directed}
\resizebox{\textwidth}{!}{\begin{tabular}{lccccccc}\toprule
$\gamma$ & networks & oracle & $\Nil$, circulation & $\Nil$, symmetric & $\Hb^2\times\R$, gradient & $\R^3$, gradient & $\Hb^3$, gradient\\ \midrule
0.0 & 3 & 0.326 & 0.359 (0.009) / 0.50 & 0.359 (0.009) / 0.49 & 0.358 (0.009) / 0.51 & 0.359 (0.008) / 0.51 & 0.359 (0.008) / 0.49 \\
0.25 & 3 & 0.333 & 0.373 (0.029) / 0.59 & 0.374 (0.025) / 0.51 & 0.370 (0.026) / 0.57 & 0.370 (0.028) / 0.57 & 0.371 (0.027) / 0.57 \\
0.5 & 3 & 0.360 & 0.392 (0.004) / 0.77 & 0.435 (0.007) / 0.46 & 0.417 (0.005) / 0.70 & 0.414 (0.008) / 0.70 & 0.417 (0.006) / 0.69 \\
1.0 & 3 & 0.329 & 0.356 (0.011) / 0.94 & 0.586 (0.015) / 0.51 & 0.483 (0.012) / 0.81 & 0.477 (0.013) / 0.81 & 0.483 (0.012) / 0.81 \\
\bottomrule\end{tabular}}\end{table}

\subsection{Real directed networks}\label{sec:exp-connectomes}

This section tests the two claims of Section~\ref{sec:exp-directed} on real networks: that the circulation model is preferred where directed asymmetries circulate, and that it is not where they form a hierarchy. The screen is the rotational fraction of the asymmetry, the share of $Y-Y^{\tr}$ in the cycle space of the complete graph, compared with its value when the directions of the asymmetric pairs are assigned at random, and the fraction of decided triads that are cycles. Supplement~\ref{supp:realdir} reports seventeen networks whose asymmetries are hierarchies (screened; fourteen fitted): nine connectomes (the Allard and Serrano 2020 collection used by \citet{celinska2024thurston} and the Cook et al.\ 2019 connectomes of \emph{C. elegans}), five football leagues, the European club network and the world trade network, and the Dota 2 hero matchups. In every one of them the rotational fraction is below its random-direction value, the gradient (ranking) models on $\mathbb{H}^2\times\mathbb{R}$ or $\mathbb{R}^3$ have the lowest or tied lowest held-out log-loss, the circulation models lose or tie, and the $\Sol$ advantage reported by \citet{celinska2024thurston} under their neighbourhood criterion and annealed estimator is not reproduced by our estimator under either criterion (Table~\ref{tab:ckk} and Supplement~\ref{supp:realdir}). The undirected comparison on eight of those connectomes with all six geometries is in Table~\ref{tab:ckk}: $\mathbb{R}^3$ is best on six, $\mathbb{H}^2\times\mathbb{R}$ on one and $\SL$ on one, by margins inside the split noise.

\begin{table}[H]\centering\scriptsize
\caption{Undirected comparison on eight connectomes of the collection analysed by \citet{celinska2024thurston} (symmetrised edge lists of Allard and Serrano 2020; one held-out split hiding $10\%$ of the pairs; BBVI at unit slope, unanchored): held-out log-loss under six geometries, best in bold. The last two columns give the geometry ranked first by mean average precision in the results file of \citet{celinska2024thurston} (their annealed embedding with learned radius and temperature, best of their geometry codes within each class), and their mean average precision for $\Sol$ and for $\mathbb{H}^3$.}\label{tab:ckk}
\resizebox{\textwidth}{!}{\begin{tabular}{lcccccccccc}\toprule
network & $N$ & density & $\Nil$ & $\Sol$ & $\SL$ & $\mathbb{R}^3$ & $\mathbb{H}^3$ & $\mathbb{H}^2\!\times\!\mathbb{R}$ & CKK best (mAP) & mAP $\Sol$ / $\mathbb{H}^3$\\ \midrule
Macaque4 & 31 & 0.69 & 0.351 & 0.373 & 0.366 & \textbf{0.332} & 0.365 & 0.358 & $\Nil$ & 1.000 / 1.000 \\
Cat3 & 52 & 0.39 & 0.420 & 0.411 & 0.425 & \textbf{0.392} & 0.432 & 0.416 & $\Sol$ & 0.956 / 0.954 \\
Cat1 & 65 & 0.35 & 0.346 & 0.391 & 0.374 & \textbf{0.342} & 0.384 & 0.369 & $\mathbb{H}^3$ & 0.937 / 0.941 \\
Macaque2 & 71 & 0.18 & 0.291 & 0.290 & 0.300 & \textbf{0.268} & 0.295 & 0.293 & $\mathbb{H}^3$ & 0.889 / 0.896 \\
Macaque1 & 94 & 0.35 & 0.215 & 0.242 & 0.221 & 0.238 & 0.219 & \textbf{0.214} & $\Sol$ & 0.974 / 0.962 \\
Cat2 & 95 & 0.26 & 0.377 & 0.383 & \textbf{0.358} & 0.359 & 0.359 & 0.361 & $\Sol$ & 0.883 / 0.874 \\
Human7 & 110 & 0.16 & 0.265 & 0.289 & 0.282 & \textbf{0.237} & 0.283 & 0.277 & $\Sol$ & 0.935 / 0.926 \\
Human6 & 116 & 0.17 & 0.212 & 0.258 & 0.232 & \textbf{0.201} & 0.236 & 0.224 & $\mathbb{R}^3$ & 0.944 / 0.931 \\
\bottomrule\end{tabular}}\end{table}

\subsection{Counter networks}\label{sec:exp-counter}

\paragraph{Networks that are intransitive by design.} The screen indicates where an advantage of the circulation model can be expected: networks whose asymmetries are constructed to circulate. Competitive games are designed in this way, with counters arranged so that no ranking is complete. Supplement Table~\ref{tab:gamesall} fits the circulation models in $\Nil$ and $\SL$ and the ranking models on all five other geometries ($\mathbb{H}^2\times\mathbb{R}$, $\mathbb{R}^3$, $\mathbb{H}^3$, $\mathbb{S}^3$ and $\Sol$) to twenty-two Pok\'emon checks-and-counters networks with three held-out splits each: the Generation 9 OU tier in six consecutive months, at a higher rating cutoff and with mutual checks kept, the UU, Ubers, RU, NU, PU, LC, Monotype and National Dex tiers, and the OU tiers of Generations 3 to 8 (different species pools and type-chart eras); the Dota 2 matchups, whose pairs rest on about a hundred recorded games each, are in Supplement~\ref{supp:realdir}. The outcome follows one variable, the density of the counter relation (decided pairs as a share of all pairs). Where the density exceeds $0.55$ (the OU tier in five of six months, and with mutual checks kept) the $\Nil$ circulation model has the lowest held-out log-loss, with paired advantages over the best ranking of $0.009$ to $0.024$ and standard deviations over splits of $0.002$ to $0.012$; between $0.33$ and $0.50$ (May OU, Generations 3 and 4, National Dex, UU) circulation and rankings tie within $0.012$, the best model changing from network to network ($\SL$ on two, $\Sol$ on two, $\mathbb{H}^3$ on one); below $0.30$ (the remaining eleven networks) the $\mathbb{H}^3$ ranking wins by $0.013$ to $0.031$ on ten of them, with the $\SL$ circulation model second on most and first on Generation 7 by $0.001$. The Spherical ranking is worst on twenty of the twenty-two networks; on the network with mutual checks kept every ranking fails ($0.54$--$0.65$ against $0.52$ for $\Nil$) because mutual checks cannot be ordered. In the January OU network the circulation model predicts the direction of held-out asymmetric pairs with accuracy $0.82$ against $0.76$, and the symmetric $\Nil$ control (Supplement Table~\ref{tab:games}) is at chance level on every network, so the advantage comes from the circulation term. Among the winning networks the base geometry follows the heterogeneity of the counter relation, $\Nil$ where the degrees are homogeneous and $\SL$ where they form a hierarchy (Supplement~\ref{supp:generates}, which also places six undirected benchmarks under all geometries). This is the rule that the deficit bound of Proposition~\ref{prop:directed} predicts: the circulation is identified through the enclosed areas of decided triangles, so it needs many decided pairs per node, and where the counter relation is dense the twisted geometry beats every constant-curvature and product ranking, whereas where it is sparse a ranking on hyperbolic space does. A tighter prior on the circulation coefficient of the $\SL$ model (precision $10$ in place of $1$) leaves its held-out log-loss on the sparse networks unchanged to three decimals, so its gap there is not over-fitting of that coefficient.

\paragraph{Controls and the strongest baseline.} Four controls separate the contribution of the geometry from that of any antisymmetric term (Supplement Tables~\ref{tab:controls} and~\ref{tab:controls2}): a ranking on $\R^3$ or $\Hb^3$ with a free rank-two antisymmetric term $u_i^{\tr}Ju_j$ on coordinates of its own (the blade--chest and disc models of \citet{chen2016modeling} and \citet{balduzzi2018reevaluating}); a degree-corrected ranking with sender and receiver effects; a Euclidean distance on the same three coordinates that carry the $\Nil$ term $\gamma(\zeta_j-\zeta_i-\tfrac12(x_iy_j-y_ix_j))$, which isolates the metric from the coupling; and the additive-and-multiplicative-effects model $\alpha+a_i+b_j+u_i^{\tr}v_j$ of \citet{hoff2005bilinear,hoff2021additive}, with no distance term and $4N$ node parameters. On the six networks with density above $0.55$ the $\Nil$ model beats the free term and the degree-corrected rankings on every network, by $0.006$ to $0.014$ in the paired comparison with the best of them. The shared-coordinate Euclidean model matches it on every network (paired differences between $-0.005$ and $+0.008$, standard deviations $0.001$--$0.011$): the gain over the rankings comes from coupling similarity and circulation through the same coordinates, and the $\Nil$ metric adds nothing measurable beyond that coupling, which is what Theorem~\ref{thm:smallscale} predicts at these scales. The additive-and-multiplicative-effects model predicts better than all of them, by $0.040$ to $0.058$ over $\Nil$ on all eleven networks fitted, with standard deviations $0.002$--$0.016$ except LC ($0.047$). The applied conclusion is therefore bounded: the coupled term is the parsimonious structure that turns a ranking into a model of intransitive preferences, with one coefficient and no node parameters beyond the positions, and its interpretation is the enclosed area; a model with node-level bilinear effects predicts these networks better and is the reference against which any geometric model of directed ties must be reported.

\paragraph{The rule as a prediction.} Read from the first twelve directed networks analysed, the condition (density above $0.55$, cyclic triads above $0.1$) predicts the outcome on seventeen of the twenty networks fitted afterwards, the three exceptions being nominal $\SL$ wins of at most $0.002$ (Supplement~\ref{supp:realdir}). The two estimators return different $\hat\gamma$ at different vertical scales ($1.3$--$1.5$ at scale near one; $4$--$5$ at scale near $0.3$); since $v_{ij}$ contains the area term, which does not scale with the fibre coordinates, this is weak identification along a ridge and not an invariance, and the paper reports $\hat\gamma$ with the vertical scale of each fit.

\paragraph{Further checks.} Three directed benchmarks (chess results, Wikipedia votes, institutional e-mail), two citation balls and the political blog network fall in or below the tie band and are won by a ranking; chess, with cyclic triads at $0.20$, near the random value, but few decided pairs, and the dense but hierarchical fungal tournaments show that neither cyclic content nor density alone produces a win (Supplement Table~\ref{tab:bench}). Varying the counter-score threshold and the usage cutoff of the January OU network (nine constructions, Supplement Table~\ref{tab:sens}) moves the density from $0.15$ to $1.00$ and the paired advantage of $\Nil$ follows it: losses below density $0.25$, a tie at $0.53$, wins of $0.015$--$0.034$ at $0.66$--$0.73$ and of $0.07$--$0.22$ where nearly every pair is decided.

\paragraph{Posterior draws on the winning networks.} Algorithm~\ref{alg:mcmc} adapted to the directed likelihood and started at the variational solution raises the training log-likelihood on every one of the five winning months and stays there; its posterior predictive has held-out log-loss $0.473$, $0.434$, $0.508$, $0.476$ and $0.430$ against $0.528$, $0.481$, $0.528$, $0.530$ and $0.459$ for the best ranking on the same splits, so the win rests on posterior draws (Supplement Table~\ref{tab:mcmc}; the posteriors of $\gamma$ are bounded away from zero and agree with the variational fits on the identified product $\gamma\sigma_v$).

\subsection{Real networks}\label{sec:exp-real}

The networks are Zachary's karate club ($N=34$, \citealp{zachary1977information}), the \emph{Les Mis\'erables} co-appearance network ($N=77$, \citealp{knuth1993stanford}), a $600$-node ball of the Cora citation graph \citep{sen2008collective,yang2016revisiting} grown by breadth-first search from its highest-degree node ($1236$ edges, the graph is symmetrised before the ball is taken), and a symmetrised $10$-nearest-neighbour graph of $700$ randomly chosen cells of the mouse pancreas dataset of \citet{bastidas2019comprehensive} in the $30$ leading principal components of expression provided with \citet{bergen2020generalizing} ($4862$ edges). The last graph is built from Euclidean distances, so a good Euclidean or product fit is expected there. We keep it as a control. The latent space fits in this section are unanchored: positions are unconstrained and the centre is point-estimated, so they maximise the evidence bound of the invariant kernel instead of approximate the quotient posterior of Proposition~\ref{prop:target}. Predictive probabilities are invariant and unaffected, but the gauge theory of Section~\ref{sec:ident} is exercised only in Section~\ref{sec:exp-ident}. We compare the three geometries with the five other model geometries fitted by the same variational scheme, $\R^3$, $\Hb^3$, $\mathbb{S}^3$ and the products $\Hb^2\times\R$ and $\mathbb{S}^2\times\R$. The products are the twist-free members of the families of Section~\ref{sec:holonomy} ($\mathbb{S}^2\times\R$ is a separate comparator: there is the twisted bundle over $\mathbb{S}^2$ with $dA=\dvol$ is the round $\mathbb{S}^3$ of curvature $\tfrac14$, already in the list), and with five non-geometric baselines: the constant-density predictor, the $\beta$-model, a logistic random dot-product model on a rank-3 spectral embedding, a spectral eigenmodel \citep{hoff2008modeling}, and a degree-corrected block model \citep{karrer2011stochastic} with $K\le8$ by BIC (Supplement~\ref{app:details}). Table~\ref{tab:compare2} reports held-out log-loss and AUC over random splits hiding $10\%$ of the dyads. The intervals describe split variability of one network, not variability over networks, the two larger networks have three splits, and random dyad holdout is a transductive, non-edge-dominated criterion (node holdout and precision--recall summaries are left for future work).

\begin{table}[H]\centering\scriptsize
\caption{Real networks: held-out log-loss / AUC, averaged over random splits hiding $10\%$ of the dyads (number of splits in the second column; the intervals describe variability over splits of one network, not over networks). Latent space models are fitted by BBVI without anchors at unit slope, with the scales and the centre given variational factors (800 iterations, $n_s=10$; 600 iterations with $20\,000$ subsampled dyads per sample for the two larger networks); the bracket is the paired difference in log-loss to $\mathbb{H}^3$ with a $95\%$ interval over splits. Baselines (Appendix~\ref{app:details}): constant density; $\beta$-model; logistic random dot-product model on a rank-3 spectral embedding; spectral eigenmodel; degree-corrected block model with $K\le8$ by BIC. Best log-loss per row in bold (ties within $0.0005$). $\mathbb{S}^3$ and $\mathbb{S}^2\times\mathbb{R}$ were not run on the two larger networks.}\label{tab:compare2}
\resizebox{\textwidth}{!}{\begin{tabular}{lcccccccccccccc}\toprule
network & splits & $\mathbb{R}^3$ & $\mathbb{H}^3$ & $\mathbb{S}^3$ & $\mathbb{H}^2\!\times\!\mathbb{R}$ & $\mathbb{S}^2\!\times\!\mathbb{R}$ & $\Nil$ & $\Sol$ & $\SL$ & constant & $\beta$-model & logistic RDPG & eigenmodel & DC-SBM\\ \midrule
Karate club & 10 & 0.353 [+0.010$\pm$0.008] / 0.75 & \textbf{0.343 / 0.79} & 0.388 [+0.045$\pm$0.012] / 0.67 & 0.348 [+0.005$\pm$0.004] / 0.78 & 0.382 [+0.039$\pm$0.011] / 0.68 & 0.360 [+0.017$\pm$0.005] / 0.76 & 0.365 [+0.022$\pm$0.006] / 0.74 & 0.355 [+0.012$\pm$0.004] / 0.77 & 0.396 [+0.053$\pm$0.015] / 0.50 & 0.402 [+0.059$\pm$0.063] / 0.75 & 0.383 [+0.040$\pm$0.056] / 0.81 & 0.377 [+0.034$\pm$0.056] / 0.81 & 0.422 [+0.079$\pm$0.072] / 0.74 \\
Les Mis\'erables & 10 & 0.182 [+0.014$\pm$0.008] / 0.90 & 0.168 / 0.94 & 0.274 [+0.106$\pm$0.009] / 0.85 & \textbf{0.163 [-0.005$\pm$0.005] / 0.93} & 0.241 [+0.073$\pm$0.014] / 0.85 & 0.170 [+0.002$\pm$0.008] / 0.93 & 0.181 [+0.013$\pm$0.007] / 0.92 & \textbf{0.163 [-0.005$\pm$0.005] / 0.94} & 0.305 [+0.137$\pm$0.011] / 0.50 & 0.300 [+0.132$\pm$0.019] / 0.75 & 0.196 [+0.028$\pm$0.012] / 0.85 & 0.197 [+0.029$\pm$0.013] / 0.85 & 0.291 [+0.123$\pm$0.020] / 0.79 \\
Cora ball & 3 & 0.0378 [+0.0004$\pm$0.0002] / 0.77 & 0.0373 / 0.77 & -- & 0.0371 [-0.0002$\pm$0.0001] / 0.79 & -- & 0.0374 [+0.0001$\pm$0.0002] / 0.78 & 0.0379 [+0.0006$\pm$0.0002] / 0.76 & 0.0369 [-0.0004$\pm$0.0002] / 0.80 & 0.0411 [+0.0037$\pm$0.0007] / 0.50 & 0.0409 [+0.0035$\pm$0.0007] / 0.66 & 0.0436 [+0.0062$\pm$0.0013] / 0.67 & 0.0471 [+0.0098$\pm$0.0012] / 0.64 & \textbf{0.0334 [-0.0039$\pm$0.0016] / 0.82} \\
Pancreas kNN & 3 & \textbf{0.0501 [-0.0019$\pm$0.0003] / 0.98} & 0.0520 / 0.97 & -- & \textbf{0.0502 [-0.0018$\pm$0.0005] / 0.97} & -- & 0.0537 [+0.0016$\pm$0.0013] / 0.97 & 0.0541 / 0.97 & 0.0537 [+0.0016$\pm$0.0009] / 0.96 & 0.0969 [+0.0449$\pm$0.0028] / 0.50 & 0.0985 [+0.0465$\pm$0.0030] / 0.50 & 0.0894 [+0.0374$\pm$0.0022] / 0.81 & 0.0894 [+0.0373$\pm$0.0023] / 0.81 & 0.0660 [+0.0139$\pm$0.0012] / 0.93 \\
\bottomrule\end{tabular}}\end{table}

Four conclusions. First, every model geometry beats the constant-density floor (\karateConst, \lesmisConst, \coraConst, \pancreasConst). The best geometry beats the best non-geometric baseline on the karate club and on \emph{Les Mis\'erables}, loses to the block model on the Cora ball ($0.0334$ against $0.0369$ for $\SL$) and to the random dot-product model on the pancreas graph, so the distance models are not the best predictors on the two larger networks. The Spherical geometries lose to the eigenmodel and the logistic random dot-product model on the two small networks, the $\beta$-model is below the floor on two networks, and the spectral block model as implemented is below the floor on the karate club only and improves on it on the other three. Second, among geometries the margins are small and mostly favour negative curvature or its product: on the karate club $\Hb^3$ is best and both twisted bundles are worse than their twist-free products. On \emph{Les Mis\'erables} $\SL$ and $\Hb^2\times\R$ tie for the best log-loss and $\SL$ and $\Hb^3$ tie in AUC ($0.94$, $\Hb^2\times\R$ $0.93$). On the Cora ball $\SL$ is best by $0.0002$ over $\Hb^2\times\R$ ($0.0004$ over $\Hb^3$), differences that three splits cannot resolve. On the pancreas graph $\R^3$ and $\Hb^2\times\R$ win, as expected from its construction. $\Sol$ is never best. Third, the anisotropic geometries are never better than $\Hb^3$ or $\Hb^2\times\R$ by more than $0.005$ and are worse by up to $0.022$ ($\Sol$ on the karate club). Fourth, with the slope learned (Table~\ref{tab:beta}) the karate ranking changes, every other geometry, including $\Nil$ and $\SL$, moves ahead of $\Hb^3$, whose fitted slope is $0.66$, so unit-slope comparisons are slope-confounded, as Proposition~\ref{prop:twistscale} anticipates (in $\R^3$ the slope is confounded with the latent scale and identified only through the priors). On the sparser \emph{Les Mis\'erables} every free-slope variational fit drifts to $\hat\beta\approx2.2$ and becomes over-confident on held-out dyads (log-loss $0.45$--$0.62$ against $0.16$ at unit slope, AUC unchanged). A tighter prior does not restrain it: with $\log\beta\sim\mathcal{N}(0,0.25^2)$ the fits still reach $\hat\beta=\tightBetaMin$--$\tightBetaMax$ and log-loss \tightLLmin--\tightLLmax, so the drift is driven by the joint movement of $\beta$ and $\alpha$ towards a near-separating configuration that the mean-field bound does not penalise, not by a loose prior. On the Cora ball, where the free-temperature annealed fits below are well behaved, no such drift occurs. Unit slope with a comparison across geometries at their own scale is the safer protocol on small dense networks.

\begin{remark}[Separation]\label{rem:separation}
On nearly separable networks the likelihood keeps improving along a ray on which the configuration expands and $\alpha$ (or $\beta$) grows. With the scales estimated, Algorithm~\ref{alg:mcmc} on \emph{Les Mis\'erables} under $\Nil$ follows it to $\alpha\approx7.4$ and $\sigma_h\approx9$ (in-sample AUC $0.965$), whereas the fixed-scale run stays at $\alpha\approx0.8$ with AUC $0.983$. A mean-field fit may stay in a compact mode or drift, depending on the initialisation. Scale traces should be inspected, and a prior with a lighter tail than $\mathrm{InvGamma}(2,2)$ used when they drift.
\end{remark}

\section{Conclusion}\label{sec:discussion}

For latent space network models on $\Nil$, $\Sol$ and $\SL$ we have established what one network identifies and when the geometry can be detected. The isometry ambiguity is removed by two anchors, or one anchor and a chamber in $\Sol$, and the same argument covers all eight model geometries. Distances do not determine configurations of five or fewer points, identification of the gauge-fixed positions is local and generic from six nodes, seven in $\Sol$ with the intercept, and in $\Nil$ the threshold is certified by interval arithmetic. The quotient posterior carries a Jacobian that moves the anchor posterior at forty nodes. The smallest divergence to a product or constant-curvature competitor vanishes with the sixth power of the scale of the configuration and equals the stress component of the curvature difference, and any test from one network has power bounded by that divergence.

Three consequences for practice follow. Positions in these spaces can be reported with uncertainty from six nodes onwards, provided the scale of the latent configuration is estimated with a prior centred near its plausible value: at thirty nodes a prior whose mean is half the true variance reduces the coverage of ninety per cent intervals to two thirds, a matched prior restores most of it, and at sixty nodes the intervals are calibrated. The geometry can be selected from one network only when the enclosed areas of the configuration are of order ten and the network has about a hundred nodes, where the true geometry was preferred on three of three replicated networks, and not before, where the expected per-dyad advantage is below the noise of a held-out comparison. And the twist is a temperature in $\Nil$ and $\Sol$, so whether a network is twisted is a comparison between two fitted models at their own scales, not a parameter to be read off one fit.

Three limitations bound these results. The identification witnesses for $\Sol$ and $\SL$ are floating-point computations, because their distances are known only numerically, so the thresholds there are numerical evidence. Theorem~\ref{thm:smallscale} is asymptotic in the scale of the configuration, and Table~\ref{tab:smallscale} shows the leading term deviating from the exact divergence by a factor of two once the scale reaches $0.4$, so at the spreads of real networks the divergence must be computed by optimisation. On the four undirected networks analysed no anisotropic geometry improves held-out prediction over Hyperbolic space or its product with a line by more than $0.005$ in log-loss, the block model beats every distance model on the citation ball, and the comparisons use unanchored variational fits with a point-estimated centre, so the quotient posterior is exercised in the simulations and not in the applications.

\textbf{Which geometry fits which network properties?} Constant-curvature and product spaces are the right choice when directed ties encode similarity or rank, which is the case for every undirected network we analyze and for directed networks whose asymmetries are transitive. When asymmetries instead circulate, the appropriate geometry is $\Nil$ over a flat similarity space or $\SL$ over a hyperbolic one, the latter combining hierarchy with cycles. Both are marked by the same signature: a non-zero sum of log-odds asymmetries around a triangle, proportional to the area the triangle encloses. These geometries should be fitted to directed networks whose preferences are not transitive, including dominance networks with intransitive triads, trade and flow networks built on cycles, and neural circuits shaped by feedback. $\Sol$ has no such signature in our results and enters the paper for its geometry and its identification threshold. On counter networks, whose asymmetries are cycles by design, the signature is present and the model wins where the counter relation is both dense and intransitive: in five of six monthly Pok\'emon OU networks the $\Nil$ circulation model beats every ranking on every geometry, Euclidean, Hyperbolic, Spherical, product and $\Sol$, by $0.009$--$0.024$ in held-out log-loss (paired standard deviations over three splits $0.002$--$0.025$, three of the six inside one standard deviation), and also beats degree-corrected rankings and a free antisymmetric term, a circulation model is best on nine of twenty-two counter networks, and on the five winning months the posterior predictive of the sampler improves on both the variational fit and the best ranking; where the counter relation is sparse the Hyperbolic ranking wins, in the order the density of decided pairs predicts; the same ordering appears when the construction of one network is varied (Table~\ref{tab:sens}) and on three standard directed benchmarks, chess results, Wikipedia votes and institutional e-mail, all of which fall in the ranking regime (Supplement Table~\ref{tab:bench}). On nine connectomes, five football leagues, the European club network, the world trade network and the Dota 2 matchup network the signature is absent: their rotational fractions lie below the random-direction value, the gradient models win or tie, and by held-out log-loss no anisotropic geometry is preferred on any of them. Under the neighbourhood criterion of the embedding literature, $\Nil$ and $\SL$ reach the top on three of eight connectomes by margins of $0.001$--$0.002$, which are ties, and the $\Sol$ advantage reported by \citet{celinska2024thurston} is not reproduced by our estimator. The regime in which the coupled model should be fitted is therefore directed networks whose asymmetries are both dense and intransitive: Proposition~\ref{prop:directed} gives the divergence from every additive ranking competitor in terms of the enclosed areas, the simulations put the win condition at a circulation coefficient of about one half at sixty nodes, and on data the rule read from the first twelve networks (more than $0.55$ of pairs decided, more than $0.1$ cyclic triads) predicts seventeen of the twenty networks fitted afterwards, with Dota 2, world trade and one dense construction of the OU network as its misses. Two conclusions bound the applied claim. A Euclidean model that carries the same coupled term matches the $\Nil$ model on every counter network, so the gain over rankings is the coupling of similarity and circulation through shared coordinates and not the metric, as Theorem~\ref{thm:smallscale} predicts at these scales; and the additive-and-multiplicative-effects model of \citet{hoff2005bilinear,hoff2021additive}, with node-level bilinear effects, predicts every counter network better than the geometric models by $0.04$--$0.06$, so the coupled model is the parsimonious and interpretable structure, not the best predictor. Three extensions follow from the limitations. A directed network with known circulation, such as a dominance or trade network with non-transitive relations, is the application to fit next. A validated certificate for $\Sol$ and $\SL$ requires interval integration of the geodesic equations. And the comparison with the annealed estimator of \citet{celinska2024thurston} on their connectome collection, crossing the estimator with fixed and learned temperature, would settle whether $\Sol$ is competitive there for the reason identified here, that a collapsed temperature rewards its exponential anisotropy.

\bibliographystyle{apalike}
\bibliography{refs}
\clearpage
\begin{center}{\LARGE\bf Supplementary Material}\end{center}
\setcounter{section}{0}\renewcommand{\thesection}{S\arabic{section}}
\section{Twist, scale and holonomy}\label{supp:twist}\label{sec:holonomy}

The families $ds^2_{B_K}+(d\zeta-\tau A_K)^2$ ($\tau\ge0$) and $e^{2\tau z}dx^2+e^{-2\tau z}dy^2+dz^2$ interpolate between the products $B_K\times\R$, $\R^3$ ($\tau=0$) and the standard geometries ($\tau=1$). One would like to estimate $\tau$. The following proposition says that in $\Nil$ and $\Sol$ this is a question about scale, not shape.

\begin{proposition}[Twist is temperature in $\Nil$ and $\Sol$]\label{prop:twistscale}
Let $\M_\tau$ denote $\Nil_\tau$ (group law $(x,y,z)(x',y',z')=(x+x',y+y',z+z'+\tfrac\tau2(xy'-yx'))$) or $\Sol_\tau$ (law $(x+e^{-\tau z}x',y+e^{\tau z}y',z+z')$) with $\tau>0$, and $\phi_\tau(z)=\tau z$. Then $\phi_\tau:\M_\tau\to\M_1$ is a group isomorphism and $d_{\M_\tau}(p,q)=\tau^{-1}d_{\M_1}(\phi_\tau p,\phi_\tau q)$. The wrapped-Normal families correspond, $\phi_\tau\,\LWN_{\M_\tau}(\mu,\Sigma)=\LWN_{\M_1}(\phi_\tau\mu,\tau^2\Sigma)$. Consequently, in the model $\logit p_{ij}=\alpha-\beta\,d_{\M_\tau}(z_i,z_j)$ the parameters $(\beta,\tau,Z)$ and $(\beta/\tau,1,\phi_\tau Z)$ give the same law, and the hierarchical model is equivalent when the scale hyperprior is transformed accordingly: the twist and the slope (inverse temperature) enter only through $\beta/\tau$. With a free slope the twist is not identifiable. With the slope fixed at one, estimating $\tau$ is the same as estimating the temperature of the standard geometry. For $\SL_\tau$ the base curvature fixes the scale and $\tau$ is a shape parameter.
\end{proposition}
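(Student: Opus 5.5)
The plan is to exhibit $\phi_\tau$ as a homothety of left-invariant metrics, read off the distance identity from it, and push the wrapped Normal and the likelihood through it. Throughout I take $\phi_\tau$ to be the coordinatewise dilation $(x,y,z)\mapsto(\tau x,\tau y,\tau z)$, which is how I read the notation $\phi_\tau(z)=\tau z$ for a point $z$ in coordinates.

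\emph{Group isomorphism.} This is a direct substitution into the two group laws. In $\Nil$, the product of $(\tau x,\tau y,\tau z)$ and $(\tau x',\tau y',\tau z')$ in $\M_1$ has third coordinate $\tau z+\tau z'+\tfrac12\tau^2(xy'-yx')$. This equals $\tau$ times the third coordinate $z+z'+\tfrac\tau2(xy'-yx')$ of the product in $\Nil_\tau$. In $\Sol$, $\tau x+e^{-\tau z}\tau x'=\tau(x+e^{-\tau z}x')$, and similarly for $y$ and $z$. So $\phi_\tau(pq)=\phi_\tau(p)\phi_\tau(q)$, and $\phi_\tau$ is bijective.

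\emph{Homothety and the distance identity.} Next I compute the pullback of the metric of $\M_1$. For $\Nil$ it is
\[
\tau^2dx^2+\tau^2dy^2+\big(\tau\,dz-\tfrac12\tau^2(x\,dy-y\,dx)\big)^2=\tau^2\Big[dx^2+dy^2+\big(dz-\tfrac\tau2(x\,dy-y\,dx)\big)^2\Big].
\]
For $\Sol$ it is $\tau^2\big(e^{2\tau z}dx^2+e^{-2\tau z}dy^2+dz^2\big)$. In both cases $\phi_\tau^*g_1=\tau^2g_\tau$. The bracketed metrics are the left-invariant metrics of $\M_\tau$: they are the ones in the interpolating families with twist $\tau$. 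I would confirm this by checking that the left translations of the stated laws preserve them, which is the same computation as for $\tau=1$ in Section~\ref{sec:background}. Lengths of curves then scale by exactly $\tau$, and $\phi_\tau$ is a diffeomorphism, so minimising over curves gives $d_{\M_1}(\phi_\tau p,\phi_\tau q)=\tau\,d_{\M_\tau}(p,q)$.

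\emph{Wrapped Normals.} Take the charts $\chi_{\M_\tau}$ to be defined as for $\tau=1$: the identity in exponential coordinates for $\Nil$, and the group exponential for $\Sol$. The key fact is the intertwining relation $\phi_\tau\circ\chi_{\M_\tau}=\chi_{\M_1}\circ(\tau\,\cdot)$. For $\Nil$ this is trivial. For $\Sol$ it holds because a Lie group isomorphism intertwines the exponentials, $\phi_\tau\circ\exp_\tau=\exp_1\circ d\phi_\tau$, and $d\phi_\tau$ at the identity is $\tau\,\mathrm{Id}$ in these coordinates. It follows that
\[
\phi_\tau\big(\mu\star\chi_{\M_\tau}(\delta)\big)=\phi_\tau\mu\star\chi_{\M_1}(\tau\delta),
\]
and $\tau\delta\sim\mathcal N_3(0,\tau^2\Sigma)$. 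This gives $\phi_\tau\,\LWN_{\M_\tau}(\mu,\Sigma)=\LWN_{\M_1}(\phi_\tau\mu,\tau^2\Sigma)$.

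\emph{Model equivalence.} The equivalence is now immediate: $\beta\,d_{\M_\tau}(z_i,z_j)=(\beta/\tau)\,d_{\M_1}(\phi_\tau z_i,\phi_\tau z_j)$, so the tie probabilities agree. For the hierarchical model I push the prior forward by the relation just proved, so that the scales transform as $\sigma\mapsto\tau\sigma$ and the flat prior on $\mu$ maps to a flat prior. Since $\phi_\tau$ has constant Jacobian $\tau^3$ with respect to $dx\,dy\,dz$, the flat prior is preserved up to a constant factor. Hence the law depends on $(\beta,\tau)$ only through $\beta/\tau$. With $\beta$ free, $\tau$ is therefore not identifiable; with $\beta=1$, estimating $\tau$ amounts to estimating the temperature $1/\tau$ of $\M_1$.

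\emph{$\SL_\tau$.} For $\SL_\tau$ I would argue that no such dilation exists. Any homothety of ratio $\lambda$ rescales the base curvature from $-1$ to $-\lambda^{-2}$. The curvature forms \eqref{eq:curvforms}, generalised to twist $\tau$, give sectional curvatures whose ratios depend on $\tau$, for example the vertical-plane curvature $\tau^2/4$ against a horizontal one of the form $-1-\tfrac34\tau^2$. Distinct $\tau$ therefore give metrics that are not homothetic, so $\tau$ is a shape parameter.

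I expect the main obstacle to be bookkeeping rather than any single step. The real points to check are that the bracketed metric is exactly the left-invariant metric of the $\tau$-twisted law, and, for $\Sol$, the exponential intertwining. This is also where the chart Jacobians $J^\chi$ must be shown to transform consistently, with the density factor $\tau^3$ cancelling between the Gaussian and $\dvol$.
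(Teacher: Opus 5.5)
Your proposal is correct and is the paper's own proof. The paper disposes of the proposition with the single remark that it follows from the substitution $(x,y,z)=\tau^{-1}(x',y',z')$ in the metrics, which is exactly your pullback identity $\phi_\tau^*g_1=\tau^2g_\tau$. You supply the details the paper leaves implicit: the isomorphism check, the intertwining $\phi_\tau\circ\chi_{\M_\tau}=\chi_{\M_1}\circ(\tau\,\cdot)$, and the push-forward of the priors. Two of your announced checks are unnecessary: left-invariance of $g_\tau=\tau^{-2}\phi_\tau^*g_1$ is automatic because $\phi_\tau$ is an isomorphism, and the Jacobian bookkeeping is not needed because $\LWN$ is defined as the law of $\mu\star\chi_\M(\delta)$.

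Your curvature argument for $\SL_\tau$ goes further than the paper, which only asserts that $\tau$ is a shape parameter there. The curvature-operator eigenvalues are $-1-\tfrac34\tau^2,\ \tfrac{\tau^2}{4},\ \tfrac{\tau^2}{4}$, and their ratio is strictly monotone in $\tau$, so distinct twists are not homothetic.
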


The proof is the substitution $(x,y,z)=\tau^{-1}(x',y',z')$ in the metrics. Two consequences follow. First, whether a latent geometry is twisted is a comparison between $\tau=0$ and $\tau>0$, i.e.\ between $B_K\times\R$ and the twisted bundle, at the best temperature of each, a model-selection question addressed in Section~\ref{sec:experiments} by held-out prediction with the slope fixed and with the slope learned. Second, diagnostics computed inside a single $\Nil$ fit cannot detect the twist. The holonomy identity below shows why: it holds for every configuration.

\begin{lemma}[Holonomy cochain]\label{lem:holonomy}
Let $p_i=(w_i,\zeta_i)\in\M_K$. The vertical coordinate of $p_i^{-1}\star p_j$ is
\begin{equation}\label{eq:cochain}
\begin{gathered}
v_{ij}=\zeta_j-\zeta_i-A_K(w_i,w_j),\\
A_0(w_i,w_j)=\tfrac12\,\mathrm{Im}(\bar w_iw_j)=\tfrac12(x_iy_j-y_ix_j),\qquad
A_{-1}(w_i,w_j)=-2\arg(1-\bar w_iw_j),
\end{gathered}
\end{equation}
and for any three base points, with $\Area_K$ the signed area of the geodesic triangle (positive for counter-clockwise orientation),
\begin{equation}\label{eq:triangle}
A_K(w_i,w_j)+A_K(w_j,w_k)+A_K(w_k,w_i)=\Area_K(w_i,w_j,w_k).
\end{equation}
\end{lemma}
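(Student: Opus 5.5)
The plan is to prove both statements at once by reading $A_K(w_i,w_j)$ as a line integral of the connection form. Write $\gamma_{ij}$ for the oriented geodesic segment from $w_i$ to $w_j$ in $B_K$. I would show
\begin{equation*}
A_K(w_i,w_j)=\int_{\gamma_{ij}}A_K .
\end{equation*}
Granting this, \eqref{eq:triangle} is Stokes' theorem. The three segments bound the geodesic triangle, $dA_K=\dvol_{B_K}$ by \eqref{eq:A}, and so the sum of the three integrals is the signed area, positive for counter-clockwise orientation.

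\emph{The case $K=0$.} This is a direct computation. From the group law \eqref{eq:nil-law}, $p_i^{-1}\star p_j=(-x_i,-y_i,-\zeta_i)\cdot(x_j,y_j,\zeta_j)$, and its third coordinate is $\zeta_j-\zeta_i+\tfrac12(-x_iy_j+y_ix_j)$, which is \eqref{eq:cochain}. For the line integral, $A_0=\tfrac12\rho^2d\theta=\tfrac12(x\,dy-y\,dx)$. On the straight segment $w_i+t(w_j-w_i)$ the integrand is constant and equals $\tfrac12(x_iy_j-y_ix_j)$, so the integral is $\tfrac12\mathrm{Im}(\bar w_iw_j)$. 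The triangle identity is then the shoelace formula.

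\emph{The case $K=-1$.} I would not work from explicit fibre formulas. Instead I would use a gauge argument. The translation $T_{p_i}^{-1}$ covers the M\"obius map $m(w)=(w-w_i)/(1-\bar w_iw)$. Because it is an isometry of \eqref{eq:bundle-metric} preserving $V$ up to sign (and $+V$ here, being in the identity component), it has the form $(w,\zeta)\mapsto(m(w),\zeta+f(w))$ with $m^*A_{-1}-A_{-1}=-df$. The normalisation $T_{p_i}^{-1}(p_i)=\eb$ forces $f(w_i)=-\zeta_i$, so $v_{ij}=\zeta_j+f(w_j)$ and I have to identify $f(w_j)-f(w_i)=-\int_{\gamma_{ij}}(A_{-1}-m^*A_{-1})$.

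The key observation is that $A_{-1}=F_{-1}(\rho)\,d\theta$ vanishes on radial geodesics through $\bo$. The segment $\gamma_{ij}$ is mapped by $m$ onto the radial segment from $\bo$ to $m(w_j)$, so $\int_{\gamma_{ij}}m^*A_{-1}=\int_{m(\gamma_{ij})}A_{-1}=0$. Hence $v_{ij}=\zeta_j-\zeta_i-\int_{\gamma_{ij}}A_{-1}$.

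It then remains to evaluate the integral in closed form. Close $\gamma_{ij}$ with the radial legs $\bo\to w_i$ and $w_j\to\bo$, on which $A_{-1}$ vanishes. By Stokes, $\int_{\gamma_{ij}}A_{-1}$ equals the signed hyperbolic area of the triangle $(\bo,w_i,w_j)$. That area is the angle defect, and for a triangle with a vertex at the origin of the disc the defect equals $-2\arg(1-\bar w_iw_j)$. I would check this by the Gauss--Bonnet computation of the angles at $w_i$ and $w_j$, or by differentiating in $w_j$. A small-triangle test is consistent with it: for $w_i=a$ and $w_j=ib$ small, both sides are approximately $2ab$.

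The same argument, run with $A_0$, recovers the $K=0$ case and confirms the sign conventions there.

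\emph{Expected obstacle.} The main difficulty is matching the lift $T_p$ of \eqref{eq:sl-translate} to the form $(m(w),\zeta+f(w))$ with the normalisation $f(w_i)=-\zeta_i$. The distinguished translations are fixed choices and need not form a group, so I must verify two things: that the chosen lift acts on the fibre by a translation determined only by $m$ and $p$, and that it carries no extra additive constant (equivalently, an integer multiple of $2\pi$ from unwinding the angle). I would check this against the explicit supplement formula, with $\zeta$ taken as the unwrapped angle, before running the Stokes argument above.
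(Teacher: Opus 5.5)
Your argument is correct, and its core is the paper's. Both show $A_K(w_i,w_j)=\int_{[w_i,w_j]}A_K$ because $T_{p_i}^{-1}$ carries $[w_i,w_j]$ onto a radial geodesic, on which $A_K=F_K(\rho)\,d\theta$ vanishes. The paper phrases this as ``$T_{p_i}^{-1}$ maps horizontal lifts to horizontal lifts''; your fibre shift $f$ is the same statement written with forms. Both then obtain \eqref{eq:triangle} by Stokes.

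The difference is how you reach the closed form for $K=-1$.
\begin{itemize}
\item The paper reads $v_{ij}=\zeta_j-\zeta_i+2\arg(1-\bar w_iw_j)$ directly off \eqref{eq:sl-translate}. The first assertion is then immediate, and the isometry argument only identifies that expression with the line integral.
\item You derive $v_{ij}=\zeta_j-\zeta_i-\int_{\gamma_{ij}}A_{-1}$ intrinsically. You then need a second Stokes argument on $(\bo,w_i,w_j)$, together with the area of a hyperbolic triangle with a vertex at the origin.
\end{itemize}
That area formula is right. With $t=|w|=\tanh(\rho/2)$ and angle $C$ at $\bo$, one has $\tan(\Delta/2)=t_it_j\sin C/(1-t_it_j\cos C)=\mathrm{Im}(\bar w_iw_j)/(1-\mathrm{Re}(\bar w_iw_j))$. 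Since $\mathrm{Re}(1-\bar w_iw_j)>0$, you are on the principal branch. Still, it is the one step you leave unchecked. The comparison with \eqref{eq:sl-translate} that you already plan gives $f(w)=-\zeta_i+2\arg(1-\bar w_iw)$ outright, which makes the step unnecessary.

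What your route buys is independence from the chosen lift. Any isometry that preserves $V$ and sends $p_i$ to $\eb$ differs from $T_{p_i}^{-1}$ by a rotation $R_\varphi$, which does not touch $\zeta$. It also reads $A_{-1}(w_i,w_j)$ as the signed area of $(\bo,w_i,w_j)$. For the same reason your anticipated obstacle is empty: $\zeta\in\R$ and the normalisation $T_{p_i}^{-1}(p_i)=\eb$ leave no room for an added multiple of $2\pi$.

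One slip to fix. Pulling back $d\zeta-A_{-1}$ by $(w,\zeta)\mapsto(m(w),\zeta+f(w))$ gives $df=m^*A_{-1}-A_{-1}$, not $m^*A_{-1}-A_{-1}=-df$. The formula for $f(w_j)-f(w_i)$ that you actually use is the correct one. The relation as you wrote it would flip the sign of $v_{ij}$.
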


The quantity $v_{ij}$ is intrinsic, the fibre displacement of $p_j$ relative to the horizontal lift through $p_i$ of the base geodesic from $w_i$ to $w_j$, and \eqref{eq:triangle} is Stokes' theorem for $dA_K=\dvol_{B_K}$ on the geodesic triangle (Supplement~\ref{app:proofs}). Locally $d_{\M_K}(p_i,p_j)^2=d_{B_K}(w_i,w_j)^2+v_{ij}^2+O(\text{fourth order})$, so the model sees the base distance and an intrinsic vertical offset that is not a coboundary of the $\zeta$'s: summing $v$ around a triangle cancels the $\zeta$'s and leaves the enclosed area. This is what distinguishes the bundle from the product, in which $v_{ij}=\zeta_j-\zeta_i$. But because \eqref{eq:triangle} is an identity, $v_{ij}+v_{jk}+v_{ki}=-\Area_K(w_i,w_j,w_k)$ for \emph{every} $\Nil$ configuration, fitted to twisted or untwisted data alike: the cochain residual carries no information about the twist (other statistics of a fit must be calibrated against competing models before use), and the decomposition of $v_{ij}$ into $\zeta_j-\zeta_i$ and $A_K(w_i,w_j)$ is not invariant under translations of the base (both terms change by a coboundary), so shares of variance attributed to ``holonomy'' depend on the anchor. The vertical offsets are identified only up to the reflection $\refl$, which flips all $v_{ij}$ and all areas simultaneously, consistent with Theorem~\ref{thm:gauge-bundle}. With directed ties whose likelihood depends on the sign of $v_{ij}$ through a coefficient of fixed sign, $\refl$ is no longer a symmetry and the gauge group is the identity component (base and fibre orientations preserved). If the coefficient $\gamma$ is free, $(Z,\gamma)\mapsto(\refl Z,-\gamma)$ remains a symmetry (Section~\ref{sec:discussion}).

\section{Estimation: algorithms and settings}\label{supp:estimation}

We estimate the parameters of \eqref{eq:model} on the gauge slice, targeting the quotient posterior \eqref{eq:target}. Both schemes below target the same distribution, including the centre $\mu$. As in the earlier work on constant-curvature latent spaces we first describe a Metropolis-within-Gibbs sampler, which is asymptotically exact but requires $O(N^2)$ distance evaluations per sweep, and then a black-box variational scheme that trades exactness for speed. Distances are evaluated through Proposition~\ref{prop:nildist} or the tables of Supplement~\ref{app:geodesics}. A full sweep of the sampler costs $N(N-1)$ table look-ups and no alignment step.

\subsection{Priors and initialisation}\label{sec:priors}

We take $\alpha\sim\mathcal{N}(0,3^2)$, $z_i\overset{\text{iid}}{\sim}\LWN_\M(\mu,\diag(\sigma_h^2,\sigma_h^2,\sigma_v^2))$, a prior on $\mu$ that is flat with respect to $\dvol$ (in the chart coordinates $\mu=\chi_\M(m)$ its density is the chart Jacobian $J^\chi_\M(m)$, which is constant only in $\Nil$), and \emph{proper} scale priors $\sigma_h^2,\sigma_v^2\overset{\text{iid}}{\sim}\mathrm{InvGamma}(a_0,b_0)$ with $a_0=b_0=2$ (prior mean $2$, infinite variance). The separate horizontal and vertical spreads are natural in an anisotropic space and are equivariant under the stabilisers (Lemma~\ref{lem:equivariance}). The treatment of the scales matters more than anything else in this section: Section~\ref{sec:exp-cal} shows that credible intervals for distances are nearly calibrated when the scales are known and under-cover substantially when they are estimated, at $N=30$, and that point-estimating the scales inside the anchored variational scheme produces degenerate solutions (Supplement~\ref{app:details}). Both schemes can also be run with the scales held fixed (at plug-in or at known values), which Section~\ref{sec:exp-cal} uses to separate the effect of scale estimation from the rest. Initial positions are obtained by classical multidimensional scaling of the shortest-path distances into $\R^3$, rescaled to unit spread, mapped to $\M$ by $\chi_\M$, and gauge-fixed by $\Gamma_I$, $\alpha$ is initialised by a grid search on the likelihood (Supplement~\ref{app:details}). The anchors are the two highest-degree adjacent nodes, which makes the anchor distance well determined by the data. For the constant-curvature spaces four anchors are needed (Theorem~\ref{cor:all}). In the experiments those spaces are fitted without anchors.

\subsection{Metropolis-within-Gibbs}\label{sec:mcmc}

Algorithm~\ref{alg:mcmc} summarises the sampler. Latent positions are updated one node at a time by a random walk on the manifold, $z_i'=z_i\star\chi_\M(\delta)$ with $\delta\sim\mathcal{N}_3(0,\diag(s^2,s^2,s^2))$. The following lemma shows that this proposal is symmetric with respect to $\dvol$, so the acceptance ratio contains only the target.

\begin{lemma}[Symmetric manifold random walks]\label{lem:symmetric}
Let $q(z'\mid z)$ be the density of $z'=z\star\chi_\M(\delta)$, $\delta\sim\mathcal{N}_3(0,\Sigma_s)$, with respect to $\dvol(z')$. If $\Sigma_s=\diag(s_h^2,s_h^2,s_v^2)$, then $q(z'\mid z)=q(z\mid z')$ for all $z,z'$, in each of $\Nil$, $\Sol$ and $\SL$.
\end{lemma}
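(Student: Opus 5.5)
The plan is to write the proposal density explicitly and then reduce symmetry to two facts: unimodularity of the relevant transformation, and invariance of the Gaussian under $\delta\mapsto-\delta$.

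\textbf{Nil and Sol: group case.} In $\Nil$ and $\Sol$ the map $\star$ is left multiplication in a Lie group, and $\dvol$ is the bi-invariant Haar measure. Both groups are nilpotent or solvable and unimodular; the volume forms $dx\,dy\,dz$ are preserved by right translations as well. Write $z'=z\cdot\chi_\M(\delta)$. Then $q(z'\mid z)=g(z^{-1}z')$, where $g$ is the $\dvol$-density of $\chi_\M(\delta)$ under $\delta\sim\mathcal N_3(0,\Sigma_s)$. This uses left invariance of $\dvol$: translating by $z$ carries the law of $\chi_\M(\delta)$ to that of $z'$ with the same density.

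Symmetry $q(z'\mid z)=q(z\mid z')$ then reduces to $g(u)=g(u^{-1})$. The chart $\chi_\M$ is the group exponential in both cases (exponential coordinates for $\Nil$), so $\chi_\M(-\delta)=\chi_\M(\delta)^{-1}$. Because $\mathcal N_3(0,\Sigma_s)$ is symmetric under $\delta\mapsto-\delta$, $u$ and $u^{-1}$ have the same law, so $g$ computed with respect to the pushforward of $\dvol$ under inversion satisfies $g(u^{-1})=g(u)$. Inversion preserves Haar measure exactly because the groups are unimodular; for $\Sol$ I will check directly that the Jacobian of $(x,y,z)\mapsto(-e^zx,-e^{-z}y,-z)$ has absolute value $1$. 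The diagonal form of $\Sigma_s$ is not even needed here.

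\textbf{$\SL$: the main obstacle.} Here $T_p$ is a chosen family, not a group law, and $\chi_{\SL}$ is the base exponential times the fibre identity. I would use the fact that $z'=T_z(\chi(\delta))$, so $q(z'\mid z)=g(T_z^{-1}z')$, since $T_z$ is an isometry and therefore preserves $\dvol$.

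Next I need $g$ to be invariant under $\Stab(\eb)$. Rotations $R_\varphi$ act on $\delta$ by rotating its horizontal part, and $\Sigma_s=\diag(s_h^2,s_h^2,s_v^2)$ is rotation-invariant. The reflection $\refl$ acts by $(\delta_1,\delta_2,\delta_3)\mapsto(\delta_1,-\delta_2,-\delta_3)$, which also preserves the Gaussian. So $g$ is a function of the stabiliser orbit only; this is where the diagonal covariance is needed.

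The key step is to find $h\in\Stab(\eb)$ with $T_{z'}^{-1}z=h\,(T_z^{-1}z')^{-1}$ in a suitable sense. Concretely, I would show that the isometry $T_{z'}^{-1}\circ T_z$ maps $u=T_z^{-1}z'$ to $\eb$ and $\eb$ to $T_{z'}^{-1}z$. By Lemma \ref{lem:stab} it equals $T_{\,\cdot\,}\circ h$. Using the point-reflection-type isometry of the bundle that swaps $\eb$ and $u$ (the lift of the half-turn of the base about the midpoint, composed with the fibre flip, which lies in $\Isom$), I would show that $T_{z'}^{-1}z$ lies in the $\Stab(\eb)$-orbit of $\chi(-\chi^{-1}(u))$, up to a rotation.

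Since $\chi$ maps rays to geodesics horizontally and the fibre coordinate is odd under this swap, $g$ takes equal values at these points. Verifying that this swap isometry exists and acts as $\delta\mapsto-\delta$ in the chart is the step I expect to be delicate. I would first try it explicitly with Möbius lifts.
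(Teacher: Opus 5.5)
Your $\Nil$/$\Sol$ argument is correct and matches the paper's. The paper notes that the reverse displacement is $-\delta$ and that $J^\chi_\M$ is even. Given $\chi_\M(-\delta)=\chi_\M(\delta)^{-1}$, evenness of $J^\chi_\M$ is the same statement as your invariance of $\dvol$ under inversion.

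For $\SL$ there is a genuine gap at the step you flag as delicate: the isometry you propose to use does not exist.
\begin{itemize}
\item The half-turn of the base preserves base orientation, so its lifts preserve $V$.
\item The bare flip $(w,\zeta)\mapsto(w,-\zeta)$ is not an isometry, because $A_K\neq0$. It is exactly the $\epsilon=-1$ case excluded in the proof of Lemma~\ref{lem:stab}. In the twisted bundles an isometry reverses $V$ precisely when it reverses the base orientation, so the composite is not in $\Isom(\SL)$.
\item You cannot simply drop the flip either. A $V$-preserving isometry sending $\eb$ to $u=(w_u,\zeta_u)$ has the form $T_u\circ R_\psi$. It sends $u$ to $\eb$ iff $R_\psi(u)=T_u^{-1}(\eb)=(-w_u,-\zeta_u)$, which forces $\zeta_u=0$.
\end{itemize}
An exchanging isometry does exist: a lift of the reflection of the base in the perpendicular bisector of $[\bo,w_u]$, of the form $T_u\circ\refl\circ R_\psi$. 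With your $\Stab(\eb)$-invariance of $g$, that would close the argument. Your closing appeal to rays and the oddness of the fibre coordinate is not a proof as it stands.

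In fact no swap is needed, and you already have the right reduction; the missing ingredient is just the value of $T_u^{-1}(\eb)$. Set $u=T_z^{-1}z'$ and $v=T_{z'}^{-1}z$. The isometry $T_{z'}^{-1}\circ T_z\circ T_u$ fixes $\eb$, so it is some $h\in\Stab(\eb)$, and $v=h\big(T_u^{-1}(\eb)\big)$. By \eqref{eq:sl-translate}, $T_u^{-1}(\eb)=(-w_u,-\zeta_u)=\chi_{\SL}(-\delta)$ where $u=\chi_{\SL}(\delta)$, since $\Exp_{\bo}$ is odd. Hence
\[
q(z\mid z')=g(v)=g\big(\chi_{\SL}(-\delta)\big)=\frac{\varphi_3(-\delta;0,\Sigma_s)}{J^\chi_{\SL}(-\delta)}=g(u)=q(z'\mid z).
\]
This uses your $\Stab(\eb)$-invariance of $g$, which is where the isotropic horizontal block enters, together with evenness of $\varphi_3$ and $J^\chi_{\SL}$. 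Since $T_{z'}^{-1}$, $T_z$ and $T_u$ all preserve $V$, $h$ is a rotation, which recovers the paper's $\delta'=(R(-\delta_h),-\delta_v)$. The paper obtains this by direct M\"obius algebra, via $1-\bar w'w=(1-|w|^2)/(1+w\bar u)$. The repaired version of your route identifies the gyration $R$ as a stabiliser element and needs no computation beyond $T_u^{-1}(\eb)$.
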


In $\Nil$ and $\Sol$ the reverse displacement is $-\delta$ (the chart is the group exponential, so $\chi(\delta)^{-1}=\chi(-\delta)$) and the Jacobian \eqref{eq:jac} is even. In $\SL$ the reverse displacement is $(R(-\delta_h),-\delta_v)$ for a rotation $R$ produced by the gyration, and the density is unchanged because the horizontal block of $\Sigma_s$ is isotropic (Supplement~\ref{app:proofs}). The second anchor is updated by a symmetric random walk on its slice coordinates $(\rho,\zeta)$ (all three coordinates in $\Sol$), rejecting (not resampling) proposals that leave the slice, and its acceptance ratio includes the factor $\Delta_\M$ of \eqref{eq:slicejac}. The scales are updated by conjugate Gibbs steps: given the chart displacements $\delta_i=\chi^{-1}_\M(\mu^{-1}\star z_i)$, $\sigma_h^2\mid\cdot\sim\mathrm{InvGamma}\big(a_0+N,\ b_0+\tfrac12\sum_i(\delta_{i1}^2+\delta_{i2}^2)\big)$ and $\sigma_v^2\mid\cdot\sim\mathrm{InvGamma}\big(a_0+\tfrac N2,\ b_0+\tfrac12\sum_i\delta_{i3}^2\big)$, which follows from \eqref{eq:lwn-density} because the Jacobian does not involve $\Sigma$. The centre is updated by a Gaussian random walk in the chart coordinates with acceptance ratio $\prod_if(z_i\mid\mu',\Sigma)J^\chi_\M(m')/[\prod_if(z_i\mid\mu,\Sigma)J^\chi_\M(m)]$, the base rate $\alpha$ by a Gaussian random walk, and, when the slope $\beta$ of $\logit p_{ij}=\alpha-\beta d_\M$ is estimated, $\log\beta$ by a Gaussian random walk under a $\mathcal{N}(0,1)$ prior.

\begin{algorithm}[H]
\caption{Metropolis-within-Gibbs on the gauge slice}\label{alg:mcmc}
\textbf{Input:} $\mathcal{Y}$, anchors $I=(i_1,i_2)$, iterations $L$, step sizes $(s_z,s_\alpha,s_\mu)$.\\
\textbf{Initialise:} $Z^{(0)}=\Gamma_I(Z^{\rm MDS})$, $\alpha^{(0)}$ by grid search, $m^{(0)}=0$, $\sigma^{(0)}$ from the spread of $\chi^{-1}_\M(Z^{(0)})$.\\
\textbf{For} $\ell=1,\dots,L$:
\begin{enumerate}
\item Draw $\sigma_h^2,\sigma_v^2$ from their inverse-Gamma full conditionals (or keep them fixed).
\item Propose $m'=m+s_\mu\varepsilon$, $\mu'=\chi_\M(m')$. Accept with probability $\min\{1,\ J^\chi_\M(m')\prod_i f_{\LWN}(z_i\mid\mu',\Sigma)/[J^\chi_\M(m)\prod_if_{\LWN}(z_i\mid\mu,\Sigma)]\}$.
\item Propose $\alpha'=\alpha+s_\alpha\varepsilon$. Accept with probability $\min\{1,\ p(\mathcal{Y}\mid Z,\alpha')p(\alpha')/[p(\mathcal{Y}\mid Z,\alpha)p(\alpha)]\}$.
\item For each $i\ne i_1$: if $i\ne i_2$ propose $z_i'=z_i\star\chi_\M(s_z\varepsilon)$, otherwise propose a random walk on the slice coordinates of $z_{i_2}$ (reject if it leaves $\mathcal{S}_I$). Accept with probability
\[
\min\Big\{1,\ \frac{p(\mathcal{Y}\mid Z',\alpha)\,f_{\LWN}(z_i'\mid\mu,\Sigma)\,\Delta_\M(z'_{i_2})}{p(\mathcal{Y}\mid Z,\alpha)\,f_{\LWN}(z_i\mid\mu,\Sigma)\,\Delta_\M(z_{i_2})}\Big\},
\]
where only the $N-1$ dyads involving $i$ change, and $\Delta_\M$ enters only for $i=i_2$.
\end{enumerate}
\end{algorithm}

\paragraph{Multimodality.} By Proposition~\ref{prop:nonrigid}, configurations that are not congruent may produce the same distances, and the gauge-fixed posterior can be multimodal (e.g.\ a node ``above'' another may be swapped with a node ``beside'' it at the same distance). The random-walk sampler explores such modes only through the likelihood of the remaining nodes. In our experiments the MDS initialisation places the sampler in a mode consistent with the variational solution, and we monitor agreement between the two schemes as in Section~\ref{sec:experiments}. Tempering or multiple chains from different anchors are natural remedies if disagreement is observed.

\subsection{Black-box variational inference}\label{sec:vi}

The posterior involves $O(N^2)$ likelihood terms, so we also consider a variational approximation, maximising the evidence lower bound (ELBO)
\begin{equation}\label{eq:elbo}
\mathrm{ELBO}(q)=\mathbb{E}_q\big[\log\tilde\pi^\star(Z^\star,\mu,\sigma_h,\sigma_v,\alpha)\big]-\mathbb{E}_q\big[\log q(Z^\star,\mu,\sigma_h,\sigma_v,\alpha)\big]
\end{equation}
over a mean-field family, where $\tilde\pi^\star$ is the unnormalised kernel on the right of \eqref{eq:target} (including the prior density $J^\chi_\M(m)$ of the centre in chart coordinates and the slice factor). Maximising \eqref{eq:elbo} minimises the Kullback--Leibler divergence to the quotient posterior \citep{blei2017variational}. Values of the ELBO are comparable across fits of the same model, not across geometries, whose kernels carry different constants. The expected log-likelihood is not available in closed form (nor is the entropy of the wrapped Normal outside $\Nil$, where it is Gaussian), so as in the earlier work we use black-box variational inference \citep{ranganath2014black}, which only needs samples from $q$ and the score $\nabla_\lambda\log q$. The family is
\begin{equation}\label{eq:vi-family}
q=\mathcal{N}(\alpha\mid\tilde m,e^{2\tilde s})\ \mathcal{N}(m\mid\tilde m_\mu,\diag e^{2\tilde s_\mu})\ \prod_{k\in\{h,v\}}\mathcal{N}(\log\sigma_k\mid\tilde m_k,e^{2\tilde s_k})\ \prod_{i\ne i_1,i_2}\LWN_\M\big(z_i\mid\chi_\M(\tilde m_i),\ \diag(e^{2\tilde s_i})\big)\ q_{i_2}(z_{i_2}),
\end{equation}
with a Gaussian factor for the centre in its chart coordinates $m$, Gaussian factors for $\log\sigma_k$ (the evidence bound is evaluated on the $d\log\sigma_k$ reference measure, with the prior density transformed by $2\sigma_k^2$, on the $d\sigma_k$ measure these factors carry $1/\sigma_k$), and a Gaussian $q_{i_2}$ on the log-transformed slice coordinates $(\log\rho,\log\zeta)$ of the second anchor ($(\log x,\log y,\log z)$ in $\Sol$), whose density with respect to $d\rho\,d\zeta$ carries the factor $(\rho\zeta)^{-1}$. The anchor thus stays on the slice. All variational scales are parameterised by their logarithms. Each latent factor is centred at a point of $\M$ parameterised by chart coordinates $\tilde m_i\in\R^3$ and has log-scales $\tilde s_i\in\R^3$. Sampling is by $z_i=\chi_\M(\tilde m_i)\star\chi_\M(e^{\tilde s_i}\odot\varepsilon)$, $\varepsilon\sim\mathcal{N}_3(0,I)$. The required scores are
\begin{equation}\label{eq:scores}
\begin{gathered}
\frac{\partial\log q_i}{\partial\tilde s_{ik}}=-1+\delta_{ik}^2e^{-2\tilde s_{ik}},\qquad
\frac{\partial\log q_i}{\partial\tilde m_i}=\Big(-\delta_i\odot e^{-2\tilde s_i}-\nabla_\delta\log J^\chi_\M(\delta_i)\Big)^{\!\tr}\frac{\partial\delta_i}{\partial\tilde m_i},\\
\delta_i=\chi_\M^{-1}\big(\chi_\M(\tilde m_i)^{-1}\star z_i\big),
\end{gathered}
\end{equation}
where the chart Jacobian $\partial\delta_i/\partial\tilde m_i$ is $\big(\begin{smallmatrix}-I_2&0\\ \ast&-1\end{smallmatrix}\big)$ in $\Nil$ and is evaluated by central differences in the other geometries (Supplement~\ref{app:details}) and $\nabla_\delta\log J_\M$ follows from \eqref{eq:jac}. Algorithm~\ref{alg:bbvi} gives the scheme: $n_s$ samples per iteration, Rao--Blackwellised per-node gradients (only the terms of $\log\tilde\pi^\star$ involving $z_i$ multiply the score of $q_i$), the control variate of \citet{ranganath2014black} (its coefficient is estimated from the same draws, which introduces a finite-sample bias of order $1/n_s$), and RMSProp-scaled steps with the accumulator updated before it is used. The scale and centre factors are updated with the same score-function estimator, against the inverse-Gamma prior and the Jacobian $J^\chi_\M(m)$ respectively; with a learned slope, a Gaussian factor on $\log\beta$ is added. Dyad subsampling with importance weights---each iteration evaluates a uniform random subset of $M$ dyads, scaled by $\binom N2/M$, which is unbiased for the ELBO gradient (Theorem 2 of \citealp{papamichalis2021latent})---brings the per-iteration cost to $O(M+N)$ and is used for the networks with several hundred nodes in Section~\ref{sec:exp-real}; the score bound $\|\nabla^{\M}_{z_i}\ell_{ij}\|\le\beta$ of that paper holds in any Riemannian manifold, hence here, away from cut loci.

\begin{algorithm}[H]
\caption{Black-box variational inference on the gauge slice}\label{alg:bbvi}
\textbf{Input:} $\mathcal{Y}$, anchors, sample size $S$, learning rate $\eta$.\\
\textbf{Initialise:} $\tilde m_i=\chi_\M^{-1}(z_i^{(0)})$, $\tilde s_i=\log 0.3$, $(\tilde m,\tilde s)=(\alpha^{(0)},\log0.2)$, $\tilde m_\mu=0$, $\tilde s_\mu=\log0.2$, $\tilde m_k=\log\sigma_k^{(0)}$, $\tilde s_k=\log0.1$.\\
\textbf{Repeat} until the ELBO stabilises:
\begin{enumerate}
\item Draw $n_s$ samples $(\alpha^{(s)},m^{(s)},\sigma^{(s)},Z^{(s)})\sim q$ by reparameterisation, and evaluate all dyad log-likelihoods $\ell_{ij}^{(s)}$ (or a uniform subsample of $M$ dyads, rescaled).
\item For each factor $\lambda_i\in\{\tilde m_i,\tilde s_i\}$: $f_i^{(s)}=\nabla_{\lambda_i}\log q_i(z_i^{(s)})\,\big[\sum_{j\ne i}\ell^{(s)}_{ij}+\log f_{\LWN}(z_i^{(s)}\mid\mu^{(s)},\Sigma^{(s)})+\log\Delta_\M(z_i^{(s)})\mathbf 1_{i=i_2}-\log q_i(z_i^{(s)})\big]$, $h_i^{(s)}=\nabla_{\lambda_i}\log q_i(z_i^{(s)})$, $\hat a_i=\widehat{\mathrm{Cov}}(f_i,h_i)/\widehat{\mathrm{Var}}(h_i)$, $\hat g_i=S^{-1}\sum_s(f_i^{(s)}-\hat a_ih_i^{(s)})$.
\item Same for $(\tilde m,\tilde s)$ with the full log-likelihood and $\log p(\alpha)$.
\item Same for the scale factors with the terms $\sum_i\log f_{\LWN}(z_i\mid\mu,\Sigma)+\log p(\sigma)-\log q(\sigma)$, and for the centre factor with $\sum_i\log f_{\LWN}(z_i\mid\mu,\Sigma)+\log J^\chi_\M(m)-\log q(m)$. \item $G\leftarrow0.9G+0.1\hat g^2$, $\lambda\leftarrow\lambda+\eta\,\hat g/(\sqrt{G}+10^{-8})$ (elementwise).
\end{enumerate}
\end{algorithm}

\section{Additional experiments}\label{supp:exp}

\subsection{Identifiability experiments}
\paragraph{The threshold.} Proposition~\ref{prop:nonrigid} and Theorem~\ref{thm:generic} predict that the distance matrix determines the gauge-fixed positions locally only from $N=6$. Table~\ref{tab:inverse} tests this without noise: for $12$ random configurations per $N$ we compute the distance matrix and re-estimate the gauge-fixed positions by least squares, from a start perturbed by Gaussian noise of standard deviation $0.3$ and from a random start. For $N\le5$ the perturbed start converges to an exact solution that differs from the truth by $0.08$--$0.25$ (the solution set has positive dimension and the solver stops at its nearest point). For $N\ge6$ the perturbed start recovers the truth to the solver's tolerance (below $10^{-6}$) in $\Nil$ and $\SL$, and in $\Sol$ from $N=7$ (the tabulated distances of $\Sol$ and $\SL$ are piecewise linear, which stalls the solver in some replicates and, in $\Sol$, occasionally stops it at a near-solution away from the truth). At $N=6$ in $\Sol$ the system is square ($15$ equations, $15$ unknowns) and ill-conditioned, giving a median error of $0.04$, a conditioning effect, not the intercept threshold of Theorem~\ref{thm:generic}(ii), which plays no role here because the distances themselves are inverted. Random starts find non-congruent exact solutions for $N\le5$ in $\Nil$ and $\SL$, and still at $N=6$ (two of twelve) and $N=7$ (one of twelve) in $\SL$. None was found from $N=8$ on. Local recovery from $N_0$ and the persistence of distant alternatives past it are both what Theorem~\ref{thm:generic} says. The generation and the recovery use the same distance function, so the experiment tests the theorem's prediction, not the accuracy of the tables.

\begin{table}[H]\centering\small
\caption{Noise-free inverse problem: for $12$ random configurations per $N$ the distance matrix is computed and the gauge-fixed positions are re-estimated by least squares. Entries: median recovery error (root mean square over coordinates, after gauge fixing) from a start perturbed by $\mathcal{N}(0,0.3^2)$ noise / number of random starts (out of 12) that reached an exact solution ($\text{residual}<10^{-4}$, $2\times10^{-3}$ for the tabulated distances) that is not congruent to the truth (error $>0.05$). Exact distances for $\Nil$, tables for $\Sol$ and $\SL$.}\label{tab:inverse}
\resizebox{\textwidth}{!}{\begin{tabular}{lcccccccc}\toprule
$\M\ \backslash\ N$ & 3 & 4 & 5 & 6 & 7 & 8 & 10 & 14\\ \midrule
$\Nil$ & 0.14 / 2 & 0.15 / 3 & 0.13 / 2 & $<10^{-6}$ / 0 & $<10^{-6}$ / 0 & $<10^{-6}$ / 0 & $<10^{-6}$ / 0 & $<10^{-6}$ / 0 \\
$\Sol$ & 0.08 / 1 & 0.12 / 2 & 0.14 / 0 & 0.04 / 0 & $<10^{-6}$ / 0 & $<10^{-6}$ / 0 & $<10^{-6}$ / 0 & $<10^{-6}$ / 0 \\
$\SL$ & 0.25 / 2 & 0.13 / 3 & 0.14 / 2 & $<10^{-6}$ / 2 & $<10^{-6}$ / 1 & $<10^{-6}$ / 0 & $<10^{-6}$ / 0 & $<10^{-6}$ / 0 \\
\bottomrule\end{tabular}}\end{table}

\paragraph{Non-congruent alternatives.} Table~\ref{tab:multi} counts the positions of a further node compatible with its distances to $k$ fixed nodes in general position, found by $40$ restarts of a least-squares solver: with $k=3$, two solutions were found in every configuration in $\Nil$ and $\SL$ and $2.3$ on average in $\Sol$ (restarts establish that at least this many exist). With $k\ge4$, one. In $\R^3$ there are also two, but they are mirror images across the plane of the three fixed nodes and hence congruent. In an anisotropic space the isometry group fixing three generic points is trivial, so the two four-point configurations are different. Figure~\ref{fig:multimodal} shows a two-dimensional section of the population log-likelihood of one node's position given the others at their true positions: two maxima for $N=4$, one for $N=12$. A posterior can be bimodal for a node with few informative dyads. Whether it is depends on the prior and the data.

\begin{table}[H]\centering\small
\caption{Number of positions of a further node compatible with its distances to $k$ fixed nodes (mean over 30 random configurations; in parentheses the percentage of configurations with more than one solution), found by 40 random restarts of a least-squares solver.}\label{tab:multi}
\begin{tabular}{lccc}\toprule
$\M\ \backslash\ k$ & 3 & 4 & 5\\ \midrule
$\Nil$ & 2.0 (100\%) & 1.0 (0\%) & 1.0 (0\%) \\
$\Sol$ & 2.3 (100\%) & 1.0 (0\%) & 1.0 (0\%) \\
$\SL$ & 2.0 (100\%) & 1.0 (0\%) & 1.0 (0\%) \\
\bottomrule\end{tabular}\end{table}

\begin{figure}[H]\centering
\includegraphics[width=0.85\textwidth]{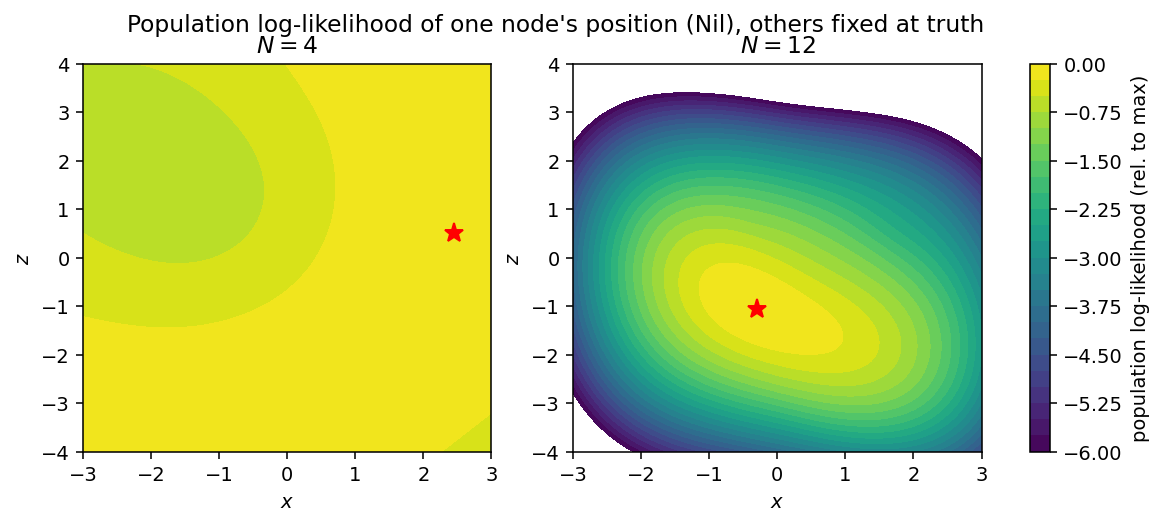}
\caption{Population log-likelihood of the position of one node in $\Nil$ ($y$ coordinate fixed at the truth, other nodes at their true positions, true position starred), for $N=4$ and $N=12$.}\label{fig:multimodal}
\end{figure}

\paragraph{The slice Jacobian.} Figure~\ref{fig:jacobian} compares the posterior of the base radius $\rho_{i_2}$ (written $r_{i_2}$ in the figure) of the second anchor from Algorithm~\ref{alg:mcmc} with and without the factor $\Delta_{\Nil}=r$ on simulated $\Nil$ networks with the scales held fixed: the posterior mean (s.d.) is \jacviiiwith\ with the factor against \jacviiiwithout\ without it for $N=8$, and \jacxlwith\ against \jacxlwithout\ for $N=40$. The base radius $\rho_{i_2}$ (written $r_{i_2}$ in the figure) is the base distance between the projections of the two anchors, an isometry-invariant quantity of the configuration. It is weakly informed by the data and the volume of the rotation orbit tilts its posterior away from the origin. Its effect on other invariant quantities (pairwise distances of non-anchor nodes) has not been measured.

\begin{figure}[H]\centering
\includegraphics[width=0.75\textwidth]{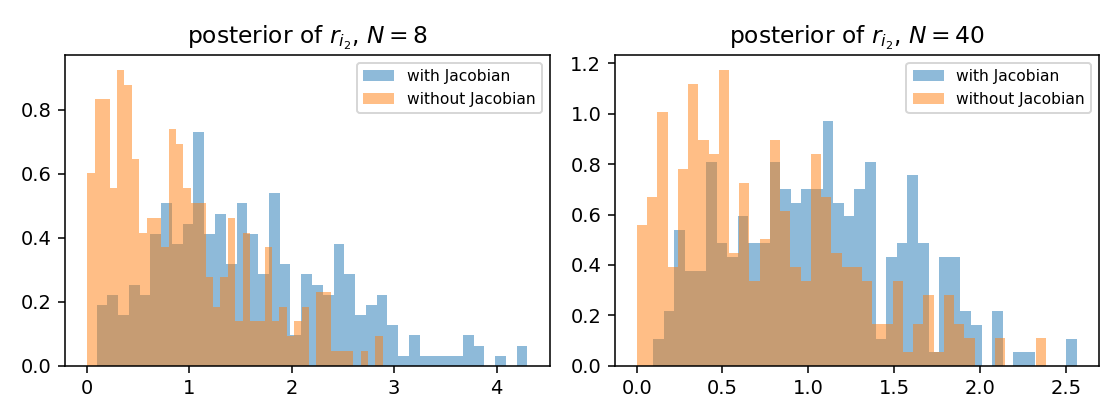}
\caption{Posterior of the base radius $\rho_{i_2}$ (written $r_{i_2}$ in the figure) of the second anchor with and without the slice Jacobian, $\Nil$, $N=8$ and $N=40$ (8\,000 iterations, scales fixed).}\label{fig:jacobian}
\end{figure}

\subsection{Single-realisation simulation}\label{supp:sims}
\begin{table}[H]\centering\scriptsize
\caption{Simulation study ($N=40$, one network per geometry; true positions $\LWN(\eb,\diag(\sigma_h^2,\sigma_h^2,\sigma_v^2))$ with $(\sigma_h,\sigma_v)=$ \simsig, $\alpha=2.5$; scales estimated under the inverse-Gamma prior in both schemes). Columns: edge density; AUC of the true probabilities; $\alpha$ (true; posterior mean (s.d.); variational mean (s.d.)); posterior mean / variational estimate of $(\sigma_h,\sigma_v)$; in-sample AUC; replicate log-loss (oracle: true probabilities); correlation of the posterior mean distance matrix with the truth (MCMC, BBVI) and between the two schemes; wall-clock seconds (MCMC \simiters\ iterations, BBVI 2000 iterations with $n_s=20$).}\label{tab:sims}
\resizebox{\textwidth}{!}{\begin{tabular}{lcccccccccccccccccc}\toprule
 & & & \multicolumn{3}{c}{$\alpha$} & \multicolumn{2}{c}{$(\hat\sigma_h,\hat\sigma_v)$} & \multicolumn{2}{c}{AUC} & \multicolumn{3}{c}{replicate log-loss} & \multicolumn{3}{c}{distance corr.} & \multicolumn{2}{c}{time (s)}\\
\cmidrule(lr){4-6}\cmidrule(lr){7-8}\cmidrule(lr){9-10}\cmidrule(lr){11-13}\cmidrule(lr){14-16}\cmidrule(lr){17-18}
$\M$ & dens. & AUC$_0$ & true & MCMC & BBVI & MCMC & BBVI & MCMC & BBVI & MCMC & BBVI & oracle & MCMC & BBVI & MCMC--BBVI & MCMC & BBVI\\ \midrule
$\Nil$ & 0.23 & 0.85 & 2.5 & 1.87 (0.35) & 1.40 (0.11) & (1.7,1.3) & (1.4,1.1) & 0.91 & 0.88 & 0.444 & 0.451 & 0.389 & 0.82 & 0.80 & 0.94 & 51 & 13 \\
$\Sol$ & 0.34 & 0.79 & 2.5 & 2.06 (0.41) & 1.89 (0.08) & (1.5,1.1) & (1.8,0.6) & 0.89 & 0.84 & 0.585 & 0.587 & 0.521 & 0.72 & 0.68 & 0.92 & 144 & 36 \\
$\SL$ & 0.30 & 0.80 & 2.5 & 2.42 (0.46) & 1.45 (0.09) & (1.6,1.1) & (1.2,0.8) & 0.89 & 0.86 & 0.567 & 0.566 & 0.488 & 0.69 & 0.67 & 0.96 & 80 & 17 \\
\bottomrule\end{tabular}}\end{table}

\begin{figure}[H]\centering
\includegraphics[width=0.9\textwidth]{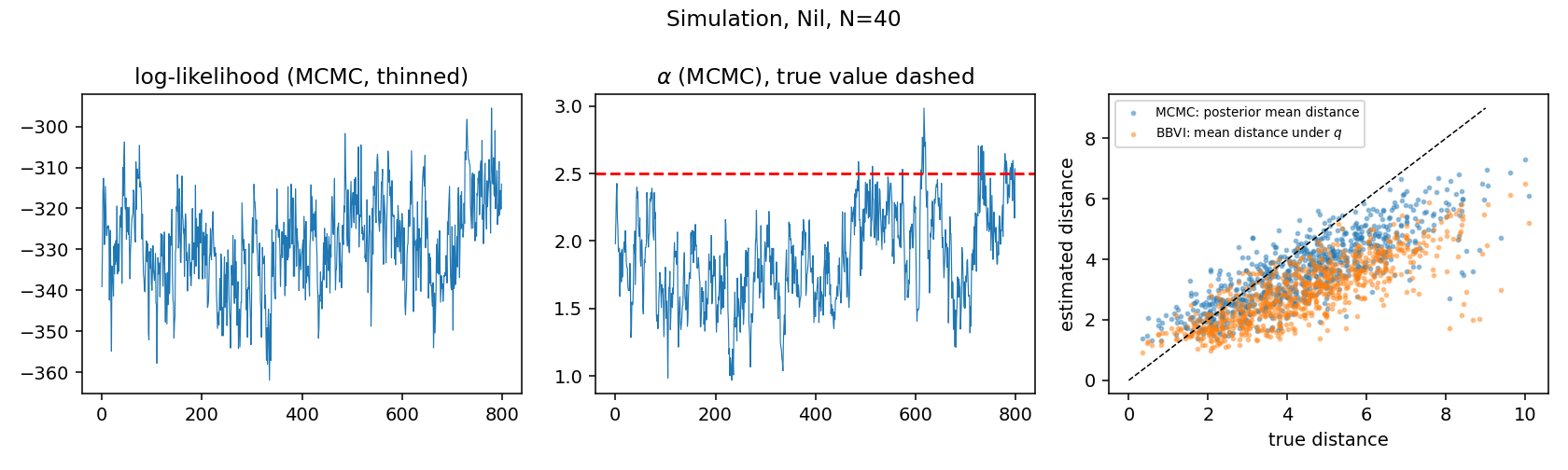}\\[2pt]
\includegraphics[width=0.9\textwidth]{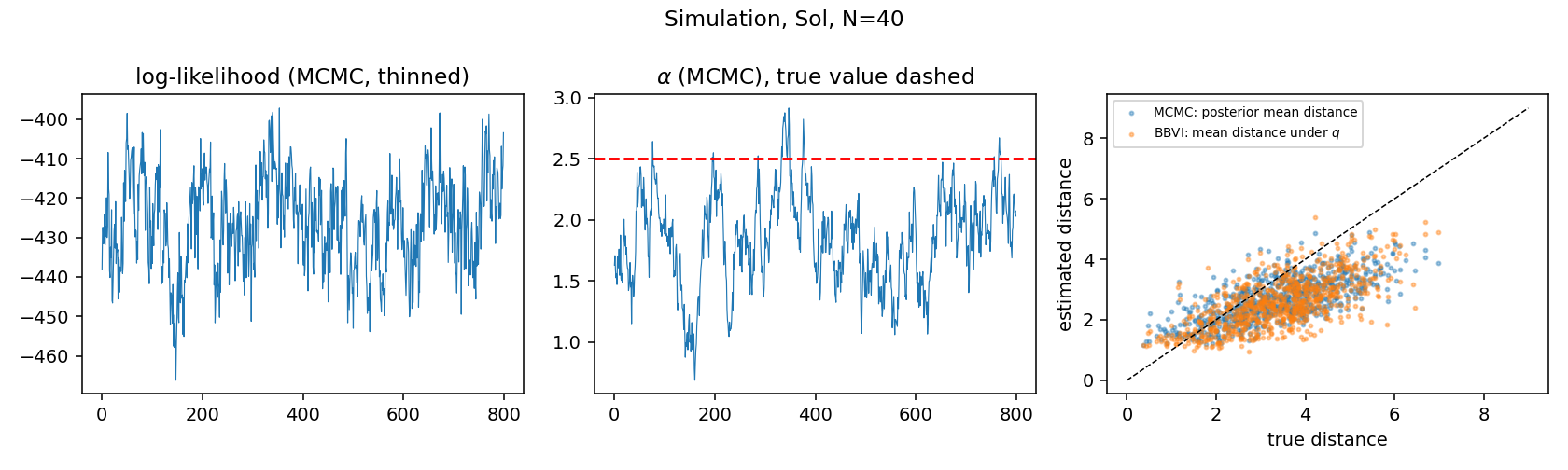}\\[2pt]
\includegraphics[width=0.9\textwidth]{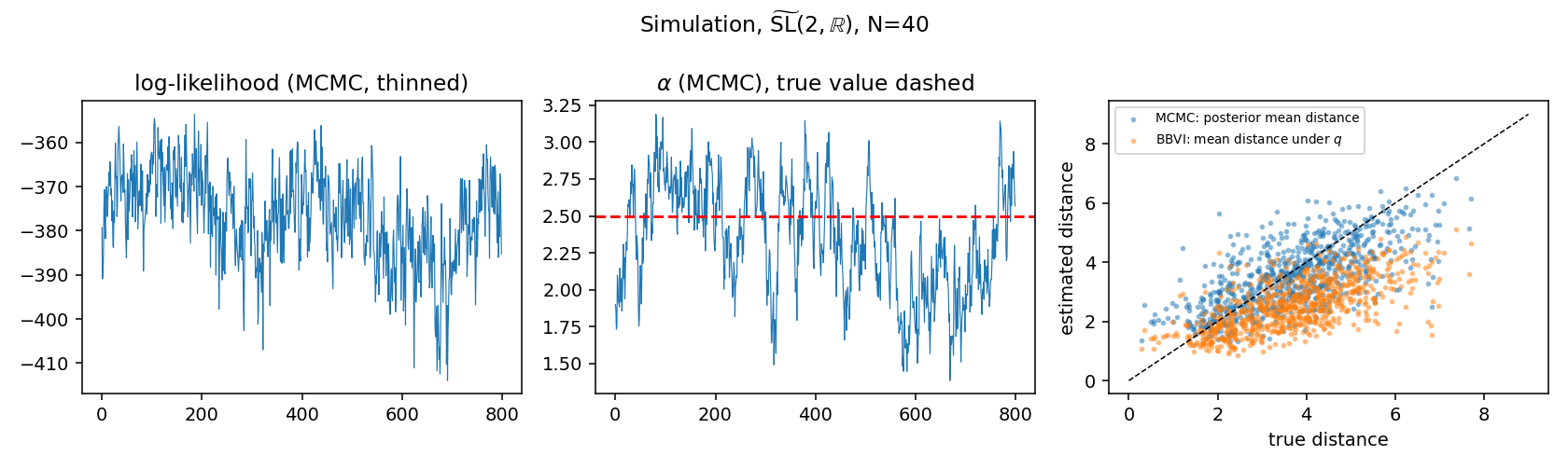}
\caption{Simulations ($N=40$, one realisation per geometry). Left and centre: MCMC traceplots (thinned) of the log-likelihood and of $\alpha$ (true value dashed). Right: true pairwise distances against the posterior mean distances (MCMC) and the mean distances under $q$ (BBVI).}\label{fig:sims}
\end{figure}

\subsection{Coverage over five geometries and as a function of size}
\begin{table}[H]\centering\small
\caption{Replicated coverage study: independent latent configurations and networks ($N=30$; $(\sigma_h,\sigma_v)=(2,2)$ for $\Nil$ and $\mathbb{R}^3$, $(1.8,1.4)$ for $\Sol$, $(1.5,2)$ for $\SL$, $(1.5,1.5)$ for $\mathbb{H}^3$; $\alpha=2.5$; $4000$ MCMC iterations each; the constant-curvature fits are unanchored). Per replicate we record the fraction of true pairwise distances inside the $90\%$ and $50\%$ posterior intervals, the fraction of true tie probabilities inside their $90\%$ intervals, whether $\alpha$ is inside its $90\%$ interval, and the bias and root-mean-square error of the posterior mean distances; entries are means over replicates with Monte Carlo standard errors in parentheses. ``True scales'': $(\sigma_h,\sigma_v)$ fixed at their true values; ``InvGamma'': scales estimated.}\label{tab:covrep}
\resizebox{\textwidth}{!}{\begin{tabular}{llcccccccc}\toprule
$\M$ & scales & networks & 90\% dist. & 50\% dist. & 90\% prob. & $\alpha$ covered & bias & RMSE & $\hat\alpha$\\ \midrule
$\Nil$ & InvGamma$(2,2)$ & 12 & 0.67 (0.05) & 0.30 (0.03) & 0.78 (0.02) & 33\% & -1.21 (0.16) & 1.80 (0.12) & 1.50 (0.12) \\
$\Nil$ & InvGamma$(3,8)$ & 12 & 0.82 (0.04) & 0.41 (0.03) & 0.84 (0.02) & 75\% & -0.58 (0.17) & 1.45 (0.09) & 2.05 (0.13) \\
 & true scales & 12 & 0.87 (0.01) & 0.47 (0.01) & 0.88 (0.01) & 92\% & -0.05 (0.10) & 1.28 (0.04) & 2.49 (0.08) \\
$\Sol$ & InvGamma$(2,2)$ & 6 & 0.75 (0.05) & 0.34 (0.04) & 0.82 (0.02) & 67\% & -0.80 (0.11) & 1.29 (0.07) & 1.67 (0.07) \\
 & true scales & 6 & 0.87 (0.02) & 0.47 (0.02) & 0.88 (0.01) & 83\% & -0.10 (0.13) & 1.00 (0.05) & 2.34 (0.15) \\
$\SL$ & InvGamma$(2,2)$ & 8 & 0.74 (0.06) & 0.35 (0.04) & 0.80 (0.02) & 50\% & -0.89 (0.18) & 1.42 (0.14) & 1.70 (0.16) \\
 & true scales & 6 & 0.87 (0.02) & 0.46 (0.02) & 0.88 (0.01) & 100\% & -0.07 (0.15) & 1.10 (0.06) & 2.44 (0.12) \\
$\mathbb{R}^3$ & InvGamma$(2,2)$ & 10 & 0.65 (0.07) & 0.30 (0.05) & 0.75 (0.03) & 30\% & -1.34 (0.25) & 1.97 (0.19) & 1.43 (0.20) \\
 & true scales & 10 & 0.87 (0.01) & 0.47 (0.01) & 0.87 (0.01) & 90\% & -0.04 (0.13) & 1.35 (0.06) & 2.48 (0.11) \\
$\mathbb{H}^3$ & InvGamma$(2,2)$ & 8 & 0.80 (0.05) & 0.40 (0.06) & 0.85 (0.01) & 62\% & -0.67 (0.21) & 1.22 (0.14) & 1.88 (0.19) \\
 & true scales & 8 & 0.87 (0.02) & 0.49 (0.03) & 0.88 (0.01) & 88\% & -0.04 (0.13) & 0.98 (0.05) & 2.45 (0.13) \\
\bottomrule\end{tabular}}\end{table}

\begin{table}[H]\centering\small
\caption{Coverage in $\Nil$ as a function of network size and scale prior (true scales $(2,2)$, $\alpha=2.5$; columns as in Table~\ref{tab:covrep}). $\mathrm{InvGamma}(2,2)$ has mean $2$, $\mathrm{InvGamma}(3,8)$ has mean $4$, the true horizontal variance.}\label{tab:covN}
\begin{tabular}{llccccccc}\toprule
$N$ & scales & networks & 90\% dist. & 50\% dist. & 90\% prob. & $\alpha$ covered & bias & $\hat\alpha$\\ \midrule
$30$ & InvGamma$(2,2)$ & 12 & 0.67 (0.05) & 0.30 (0.03) & 0.78 (0.02) & 33\% & -1.21 (0.16) & 1.50 (0.12) \\
$30$ & InvGamma$(3,8)$ & 12 & 0.82 (0.04) & 0.41 (0.03) & 0.84 (0.02) & 75\% & -0.58 (0.17) & 2.05 (0.13) \\
$30$ & true scales & 12 & 0.87 (0.01) & 0.47 (0.01) & 0.88 (0.01) & 92\% & -0.05 (0.10) & 2.49 (0.08) \\
$60$ & InvGamma$(3,8)$ & 6 & 0.88 (0.01) & 0.47 (0.02) & 0.86 (0.01) & 100\% & -0.19 (0.05) & 2.33 (0.05) \\
$60$ & true scales & 6 & 0.87 (0.02) & 0.46 (0.02) & 0.87 (0.01) & 100\% & -0.08 (0.04) & 2.43 (0.03) \\
\bottomrule\end{tabular}\end{table}

\subsection{Fisher deficit and single-network win maps}

\begin{table}[H]\centering\small
\caption{Representation deficit of the win-map configurations against each competitor geometry $T$. Each cell gives: the Fisher deficit $\Delta_T^2$ (nats); the minimum exact dyad divergence $\mathrm{KL}_T$ found by direct optimisation (an upper bound on the infimum; nats); $\mathrm{KL}_T$ per dyad, which is the expected held-out log-loss advantage of the true model over the best competitor to first order; and the bound $0.05+\sqrt{\mathrm{KL}_T/2}$ of Theorem~\ref{thm:lecam} on the power of any level-$0.05$ test from one network ($1$ is vacuous).}\label{tab:lecam}
\resizebox{\textwidth}{!}{\begin{tabular}{lllcccc}\toprule
truth & $N$ & knob & dyads & $T=\mathbb{R}^3$ & $T=\mathbb{H}^3$ & $T=\mathbb{H}^2\!\times\!\mathbb{R}$\\ \midrule
$\Nil$ & 40 & 0.5 & 780 & 0.0 / 0.0 / 0.0000 / 0.16 & 0.3 / 0.2 / 0.0003 / 0.40 & 0.0 / 0.0 / 0.0000 / 0.15 \\
$\Nil$ & 40 & 1.0 & 780 & 0.3 / 0.3 / 0.0004 / 0.45 & 1.2 / 1.2 / 0.0016 / 0.83 & 0.2 / 0.2 / 0.0002 / 0.33 \\
$\Nil$ & 40 & 2.0 & 780 & 2.1 / 2.3 / 0.0029 / 1.00 & 2.0 / 2.0 / 0.0026 / 1.00 & 1.2 / 1.2 / 0.0015 / 0.83 \\
$\Nil$ & 40 & 3.0 & 780 & 3.1 / 3.4 / 0.0044 / 1.00 & 2.7 / 3.2 / 0.0042 / 1.00 & 0.9 / 0.9 / 0.0012 / 0.74 \\
$\SL$ & 60 & 0.5 & 1770 & 0.1 / 0.1 / 0.0000 / 0.25 & 0.6 / 0.6 / 0.0003 / 0.60 & 0.1 / 0.1 / 0.0000 / 0.21 \\
$\SL$ & 60 & 1.0 & 1770 & 1.6 / 1.7 / 0.0010 / 0.98 & 4.2 / 4.1 / 0.0023 / 1.00 & 0.5 / 0.4 / 0.0002 / 0.52 \\
$\SL$ & 60 & 2.0 & 1770 & 13.1 / 15.4 / 0.0087 / 1.00 & 10.0 / 10.6 / 0.0060 / 1.00 & 3.3 / 2.9 / 0.0016 / 1.00 \\
$\Sol$ & 60 & 0.3 & 1770 & 1.6 / 1.6 / 0.0009 / 0.94 & 3.9 / 4.1 / 0.0023 / 1.00 & 1.9 / 1.8 / 0.0010 / 1.00 \\
$\Sol$ & 60 & 0.8 & 1770 & 3.0 / 3.0 / 0.0017 / 1.00 & 5.7 / 5.5 / 0.0031 / 1.00 & 2.6 / 2.6 / 0.0015 / 1.00 \\
$\Sol$ & 60 & 1.4 & 1770 & 5.2 / 5.3 / 0.0030 / 1.00 & 4.3 / 4.2 / 0.0024 / 1.00 & 3.3 / 3.4 / 0.0019 / 1.00 \\
$\Sol$ & 60 & 2.0 & 1770 & 9.6 / 10.7 / 0.0060 / 1.00 & 4.4 / 4.4 / 0.0025 / 1.00 & 4.9 / 4.9 / 0.0028 / 1.00 \\
$\Nil$ & 100 & 0.5 & 4950 & 0.2 / 0.2 / 0.0000 / 0.39 & 2.6 / 2.5 / 0.0005 / 1.00 & 0.2 / 0.2 / 0.0000 / 0.40 \\
$\Nil$ & 100 & 1.0 & 4950 & 2.1 / 2.2 / 0.0004 / 1.00 & 9.9 / 9.9 / 0.0020 / 1.00 & 1.6 / 1.6 / 0.0003 / 0.95 \\
$\Nil$ & 100 & 2.0 & 4950 & 11.5 / 12.7 / 0.0026 / 1.00 & 11.6 / 11.6 / 0.0023 / 1.00 & 6.1 / 6.0 / 0.0012 / 1.00 \\
$\Nil$ & 100 & 3.0 & 4950 & 38.2 / 42.9 / 0.0087 / 1.00 & 29.3 / 31.3 / 0.0063 / 1.00 & 75.3 / 55.4 / 0.0112 / 1.00 \\
\bottomrule\end{tabular}}\end{table}

\begin{table}[H]\centering\scriptsize
\caption{Replicated win maps: five independently generated networks per cell (three for the $N=100$ cell; knob as in Table~\ref{tab:winmap}). ``own'': mean replicate log-loss of the fit under the true geometry; the other columns give the \emph{paired} difference (true geometry minus competitor, negative favours the truth) with its standard deviation over networks; ``wins'': networks on which the true geometry had the lowest (or tied lowest) replicate log-loss.}\label{tab:winrep}
\resizebox{\textwidth}{!}{\begin{tabular}{llcccllllc}\toprule
truth & $N$ & knob & networks & oracle & \multicolumn{4}{c}{own log-loss; paired differences to competitors, mean (s.d.)} & wins\\ \midrule
$\Nil$ & 40 & 0.5 & 5 & 0.387 & own: 0.413 & $\mathbb{R}^3$: -0.0013 (0.0031) & $\mathbb{H}^3$: -0.0015 (0.0018) & $\mathbb{H}^2\!\times\!\mathbb{R}$: -0.0007 (0.0016) & 2/5 \\
$\Nil$ & 40 & 1.0 & 5 & 0.370 & own: 0.411 & $\mathbb{R}^3$: -0.0031 (0.0061) & $\mathbb{H}^3$: +0.0022 (0.0030) & $\mathbb{H}^2\!\times\!\mathbb{R}$: +0.0000 (0.0030) & 0/5 \\
$\Nil$ & 40 & 2.0 & 5 & 0.327 & own: 0.389 & $\mathbb{R}^3$: -0.0017 (0.0025) & $\mathbb{H}^3$: -0.0031 (0.0080) & $\mathbb{H}^2\!\times\!\mathbb{R}$: +0.0001 (0.0017) & 1/5 \\
$\Nil$ & 40 & 3.0 & 5 & 0.264 & own: 0.337 & $\mathbb{R}^3$: +0.0027 (0.0046) & $\mathbb{H}^3$: -0.0029 (0.0047) & $\mathbb{H}^2\!\times\!\mathbb{R}$: +0.0003 (0.0012) & 0/5 \\
$\Nil$ & 100 & 3.0 & 3 & 0.268 & own: 0.297 & $\mathbb{R}^3$: -0.0060 (0.0032) & $\mathbb{H}^3$: -0.0052 (0.0035) & $\mathbb{H}^2\!\times\!\mathbb{R}$: -0.0039 (0.0033) & 3/3 \\
$\Sol$ & 60 & 0.3 & 5 & 0.368 & own: 0.406 & $\mathbb{R}^3$: -0.0004 (0.0037) & $\mathbb{H}^3$: +0.0015 (0.0015) & $\mathbb{H}^2\!\times\!\mathbb{R}$: +0.0012 (0.0038) & 0/5 \\
$\Sol$ & 60 & 0.8 & 5 & 0.365 & own: 0.408 & $\mathbb{R}^3$: +0.0005 (0.0036) & $\mathbb{H}^3$: -0.0018 (0.0044) & $\mathbb{H}^2\!\times\!\mathbb{R}$: +0.0004 (0.0035) & 1/5 \\
$\Sol$ & 60 & 1.4 & 5 & 0.360 & own: 0.402 & $\mathbb{R}^3$: +0.0001 (0.0025) & $\mathbb{H}^3$: +0.0009 (0.0035) & $\mathbb{H}^2\!\times\!\mathbb{R}$: +0.0010 (0.0039) & 2/5 \\
$\Sol$ & 60 & 2.0 & 5 & 0.340 & own: 0.395 & $\mathbb{R}^3$: -0.0011 (0.0066) & $\mathbb{H}^3$: +0.0035 (0.0024) & $\mathbb{H}^2\!\times\!\mathbb{R}$: +0.0032 (0.0037) & 0/5 \\
$\SL$ & 60 & 0.5 & 5 & 0.375 & own: 0.410 & $\mathbb{H}^3$: -0.0006 (0.0031) & $\mathbb{H}^2\!\times\!\mathbb{R}$: -0.0014 (0.0023) & $\Nil$: -0.0013 (0.0017) & 0/5 \\
$\SL$ & 60 & 1.0 & 5 & 0.354 & own: 0.401 & $\mathbb{H}^3$: +0.0008 (0.0013) & $\mathbb{H}^2\!\times\!\mathbb{R}$: -0.0002 (0.0016) & $\Nil$: -0.0003 (0.0022) & 1/5 \\
$\SL$ & 60 & 2.0 & 5 & 0.316 & own: 0.368 & $\mathbb{H}^3$: +0.0014 (0.0039) & $\mathbb{H}^2\!\times\!\mathbb{R}$: +0.0023 (0.0008) & $\Nil$: -0.0019 (0.0011) & 0/5 \\
\bottomrule\end{tabular}}\end{table}

\begin{table}[H]\centering\small
\caption{Fisher deficit $\Delta_T^2(Z,\alpha)$ of Proposition~\ref{prop:deficit} for the win-map configurations of Table~\ref{tab:winmap}: the weighted additive-constant distortion of the true distance matrix into each competitor geometry $T$, in nats (a local minimum of a non-convex weighted least-squares problem, hence an upper bound). Knob as in Table~\ref{tab:winmap}; the exact divergences are in Table~\ref{tab:lecam}.}\label{tab:deficit}
\begin{tabular}{llcccccc}\toprule
truth & $N$ & knob & $\alpha$ & dyads & $T=\mathbb{R}^3$ & $T=\mathbb{H}^3$ & $T=\mathbb{H}^2\!\times\!\mathbb{R}$\\ \midrule
$\Nil$ & 40 & 0.5 & -0.2 & 780 & 0.0 & 0.3 & 0.0 \\
$\Nil$ & 40 & 1.0 & 0.4 & 780 & 0.3 & 1.2 & 0.2 \\
$\Nil$ & 40 & 2.0 & 1.7 & 780 & 2.1 & 2.0 & 1.2 \\
$\Nil$ & 40 & 3.0 & 2.5 & 780 & 3.1 & 2.7 & 0.9 \\
$\Nil$ & 100 & 0.5 & 0.0 & 4950 & 0.2 & 2.6 & 0.2 \\
$\Nil$ & 100 & 1.0 & 0.5 & 4950 & 2.1 & 9.9 & 1.6 \\
$\Nil$ & 100 & 2.0 & 1.3 & 4950 & 11.5 & 11.6 & 6.1 \\
$\Nil$ & 100 & 3.0 & 2.6 & 4950 & 38.2 & 29.3 & 75.3 \\
$\SL$ & 60 & 0.5 & 0.0 & 1770 & 0.1 & 0.6 & 0.1 \\
$\SL$ & 60 & 1.0 & 0.8 & 1770 & 1.6 & 4.2 & 0.5 \\
$\SL$ & 60 & 2.0 & 2.0 & 1770 & 13.1 & 10.0 & 3.3 \\
$\Sol$ & 60 & 0.3 & 0.3 & 1770 & 1.6 & 3.9 & 1.9 \\
$\Sol$ & 60 & 0.8 & 0.6 & 1770 & 3.0 & 5.7 & 2.6 \\
$\Sol$ & 60 & 1.4 & 0.9 & 1770 & 5.2 & 4.3 & 3.3 \\
$\Sol$ & 60 & 2.0 & 1.6 & 1770 & 9.6 & 4.4 & 4.9 \\
\bottomrule\end{tabular}\end{table}

\begin{table}[H]\centering\scriptsize
\caption{Win maps: replicate log-loss of BBVI fits (800 iterations, $n_s=10$) under the true geometry and under competing geometries, as the anisotropy of the truth grows. Knob: horizontal spread $\sigma_h$ for $\Nil$ (vertical spread $1.5$) and $\SL$ (vertical spread $2$), vertical spread $\sigma_v$ for $\Sol$ (horizontal spread $1.5$); $\alpha$ chosen for an expected density of $0.15$. The best fit in each row is in bold; ``oracle'' is the log-loss of the true probabilities.}\label{tab:winmap}
\begin{tabular}{llcccllll}\toprule
truth & $N$ & knob & dens. & oracle & \multicolumn{4}{c}{replicate log-loss by fitted geometry}\\ \midrule
$\Nil$ & 40 & 0.5 & 0.16 & 0.399 & $\Nil$: 0.411 & $\mathbb{R}^3$: \textbf{0.409} & $\mathbb{H}^3$: 0.411 & $\mathbb{H}^2\!\times\!\mathbb{R}$: 0.411 \\
$\Nil$ & 40 & 1.0 & 0.17 & 0.382 & $\Nil$: 0.410 & $\mathbb{R}^3$: 0.408 & $\mathbb{H}^3$: \textbf{0.407} & $\mathbb{H}^2\!\times\!\mathbb{R}$: 0.407 \\
$\Nil$ & 40 & 2.0 & 0.16 & 0.287 & $\Nil$: 0.344 & $\mathbb{R}^3$: 0.350 & $\mathbb{H}^3$: \textbf{0.344} & $\mathbb{H}^2\!\times\!\mathbb{R}$: 0.346 \\
$\Nil$ & 40 & 3.0 & 0.15 & 0.273 & $\Nil$: 0.336 & $\mathbb{R}^3$: \textbf{0.333} & $\mathbb{H}^3$: 0.347 & $\mathbb{H}^2\!\times\!\mathbb{R}$: 0.337 \\
$\Nil$ & 100 & 0.5 & 0.15 & 0.367 & $\Nil$: 0.392 & $\mathbb{R}^3$: 0.390 & $\mathbb{H}^3$: \textbf{0.389} & $\mathbb{H}^2\!\times\!\mathbb{R}$: 0.389 \\
$\Nil$ & 100 & 1.0 & 0.15 & 0.373 & $\Nil$: 0.403 & $\mathbb{R}^3$: 0.403 & $\mathbb{H}^3$: \textbf{0.400} & $\mathbb{H}^2\!\times\!\mathbb{R}$: 0.401 \\
$\Nil$ & 100 & 2.0 & 0.15 & 0.323 & $\Nil$: 0.352 & $\mathbb{R}^3$: 0.352 & $\mathbb{H}^3$: \textbf{0.350} & $\mathbb{H}^2\!\times\!\mathbb{R}$: 0.351 \\
$\Nil$ & 100 & 3.0 & 0.14 & 0.280 & $\Nil$: \textbf{0.308} & $\mathbb{R}^3$: 0.317 & $\mathbb{H}^3$: 0.314 & $\mathbb{H}^2\!\times\!\mathbb{R}$: 0.313 \\
$\Sol$ & 60 & 0.3 & 0.15 & 0.354 & $\Sol$: 0.400 & $\mathbb{R}^3$: 0.399 & $\mathbb{H}^3$: 0.399 & $\mathbb{H}^2\!\times\!\mathbb{R}$: \textbf{0.398} \\
$\Sol$ & 60 & 0.8 & 0.14 & 0.360 & $\Sol$: 0.404 & $\mathbb{R}^3$: 0.404 & $\mathbb{H}^3$: \textbf{0.403} & $\mathbb{H}^2\!\times\!\mathbb{R}$: 0.403 \\
$\Sol$ & 60 & 1.4 & 0.14 & 0.368 & $\Sol$: 0.405 & $\mathbb{R}^3$: 0.406 & $\mathbb{H}^3$: \textbf{0.402} & $\mathbb{H}^2\!\times\!\mathbb{R}$: 0.403 \\
$\Sol$ & 60 & 2.0 & 0.14 & 0.322 & $\Sol$: 0.386 & $\mathbb{R}^3$: 0.393 & $\mathbb{H}^3$: \textbf{0.381} & $\mathbb{H}^2\!\times\!\mathbb{R}$: 0.382 \\
$\SL$ & 60 & 0.5 & 0.14 & 0.379 & $\SL$: 0.411 & $\mathbb{H}^3$: \textbf{0.411} & $\mathbb{H}^2\!\times\!\mathbb{R}$: 0.412 & $\Nil$: 0.411 \\
$\SL$ & 60 & 1.0 & 0.16 & 0.360 & $\SL$: 0.409 & $\mathbb{H}^3$: 0.409 & $\mathbb{H}^2\!\times\!\mathbb{R}$: 0.409 & $\Nil$: \textbf{0.406} \\
$\SL$ & 60 & 2.0 & 0.14 & 0.332 & $\SL$: 0.385 & $\mathbb{H}^3$: 0.387 & $\mathbb{H}^2\!\times\!\mathbb{R}$: \textbf{0.382} & $\Nil$: 0.387 \\
\bottomrule\end{tabular}\end{table}

\paragraph{Profile over the twist in $\SL$.} A profile of held-out loss over $\tau\in\{0,0.5,1,2\}$ on two single dense networks of $60$ nodes (code accompanying the paper) did not separate $\tau\in\{0,0.5,1\}$. We do not report it as evidence, since it has no replication and no Monte Carlo error.

\subsection{Learned slope and the annealed estimator}
\begin{table}[H]\centering\small
\caption{Learned slope: variational fits of $\logit p_{ij}=\alpha-\beta\,d_\M(z_i,z_j)$ with variational factors for $\alpha$, $\log\beta$, the scales and the centre (unanchored, 800 iterations, $n_s=10$), on the first five held-out splits; held-out log-loss, AUC, posterior means of $\beta$ and $\alpha$, and the unit-slope log-loss on the same splits. $\mathrm{PSL}(2,\mathbb{R})$ is the compact-fibre quotient of $\SL$ (fibre period $2\pi$).}\label{tab:beta}
\begin{tabular}{llccccc}\toprule
network & $\M$ & log-loss & AUC & $\hat\beta$ & $\hat\alpha$ & log-loss at $\beta=1$\\ \midrule
Karate club & $\mathbb{R}^3$ & 0.329 & 0.781 & 0.92 & 0.86 & 0.330 \\
 & $\mathbb{H}^3$ & 0.343 & 0.839 & 0.66 & 0.55 & 0.321 \\
 & $\mathbb{H}^2\!\times\!\mathbb{R}$ & 0.324 & 0.839 & 0.85 & 0.80 & 0.325 \\
 & $\Nil$ & 0.337 & 0.784 & 1.01 & 0.79 & 0.337 \\
 & $\SL$ & 0.332 & 0.820 & 0.82 & 0.60 & 0.333 \\
 & $\mathrm{PSL}(2,\mathbb{R})$ & 0.331 & 0.811 & 0.90 & 0.62 & -- \\
 & $\Sol$ & 0.340 & 0.782 & 0.97 & 0.53 & 0.345 \\
Les Mis\'erables & $\mathbb{R}^3$ & 0.467 & 0.908 & 2.17 & 3.02 & 0.170 \\
 & $\mathbb{H}^3$ & 0.582 & 0.921 & 2.27 & 3.61 & 0.164 \\
 & $\mathbb{H}^2\!\times\!\mathbb{R}$ & 0.447 & 0.928 & 2.02 & 3.57 & 0.161 \\
 & $\Nil$ & 0.536 & 0.931 & 2.30 & 3.30 & 0.163 \\
 & $\SL$ & 0.512 & 0.937 & 2.14 & 3.58 & 0.157 \\
 & $\mathrm{PSL}(2,\mathbb{R})$ & 0.624 & 0.934 & 2.39 & 3.51 & -- \\
\bottomrule\end{tabular}\end{table}

\begin{table}[H]\centering\small
\caption{Annealed maximum likelihood with learned $(R,T)$, in the style of \citet{celinska2024thurston} (our implementation: 60 annealing sweeps over positions with $(\alpha,\beta)=(R/T,1/T)$ re-fitted by logistic regression after each sweep), on the first three held-out splits: held-out log-loss, AUC, fitted $\beta$ and $\alpha$, and the held-out log-loss of the Bayesian unit-slope fit on the same splits (not available for the Cora ball, whose Bayesian fits used other splits).}\label{tab:anneal}
\begin{tabular}{llccccc}\toprule
network & $\M$ & log-loss & AUC & $\hat\beta$ & $\hat\alpha$ & Bayesian, $\beta=1$\\ \midrule
Karate club & $\mathbb{R}^3$ & 1.5346 & 0.730 & 6.7 & 15.5 & 0.3124 \\
 & $\mathbb{H}^3$ & 0.7064 & 0.863 & 5.4 & 22.6 & 0.3011 \\
 & $\mathbb{H}^2\!\times\!\mathbb{R}$ & 0.9628 & 0.861 & 6.0 & 21.9 & 0.3061 \\
 & $\Nil$ & 0.9907 & 0.689 & 4.2 & 11.2 & 0.3186 \\
 & $\SL$ & 1.0805 & 0.800 & 5.6 & 20.8 & 0.3149 \\
 & $\Sol$ & 0.9968 & 0.761 & 6.0 & 16.8 & 0.3242 \\
Les Mis\'erables & $\mathbb{R}^3$ & 0.3523 & 0.896 & 5.2 & 9.5 & 0.1580 \\
 & $\mathbb{H}^3$ & 0.2791 & 0.914 & 5.3 & 17.4 & 0.1579 \\
 & $\mathbb{H}^2\!\times\!\mathbb{R}$ & 0.3488 & 0.920 & 6.0 & 14.8 & 0.1538 \\
 & $\Nil$ & 0.3386 & 0.907 & 5.2 & 10.8 & 0.1478 \\
 & $\SL$ & 0.3457 & 0.910 & 5.3 & 14.6 & 0.1470 \\
 & $\Sol$ & 0.2968 & 0.921 & 5.2 & 11.3 & 0.1579 \\
Cora ball & $\mathbb{R}^3$ & 0.0491 & 0.789 & 2.8 & 0.7 & 0.0378 \\
 & $\mathbb{H}^3$ & 0.0388 & 0.804 & 1.1 & 3.1 & 0.0373 \\
 & $\mathbb{H}^2\!\times\!\mathbb{R}$ & 0.0438 & 0.791 & 1.6 & 0.6 & 0.0371 \\
 & $\Nil$ & 0.0493 & 0.766 & 2.5 & 0.6 & 0.0374 \\
 & $\SL$ & 0.0449 & 0.778 & 1.7 & 0.7 & 0.0369 \\
 & $\Sol$ & 0.0480 & 0.754 & 2.3 & 0.7 & 0.0379 \\
\bottomrule\end{tabular}\end{table}

\subsection{Real directed networks with hierarchical asymmetries}\label{supp:realdir}

This section reports the real directed networks on which the circulation models lose or tie. Provenance: the edge lists of the $21$ connectomes of Allard and Serrano (2020), the collection used by \citet{celinska2024thurston}, of which nine are directed; the Cook et al.\ (2019) chemical connectomes of both sexes of \emph{C. elegans} from Netzschleuder; fifteen seasons of the five largest European football leagues and the Champions and Europa League from openfootball (CC0); world trade net flows from CEPII BACI 2023 (Etalab licence); Dota 2 hero matchups from OpenDota. All are public.

Table~\ref{tab:rotfrac} gives the rotational fraction of the asymmetry, the share of $Y-Y^{\tr}$ that lies in the cycle space of the complete graph and that no ranking reproduces, with its value when the directions of the asymmetric pairs are assigned at random. In every directed connectome the fraction is below the random-direction value ($0.75$--$0.98$ against $0.96$--$0.99$): the observed asymmetries are more transitive than random, which is the signature of a hierarchy, not of circulation. Table~\ref{tab:directedreal} shows the consequence. On the five cortical networks the gradient model on $\mathbb{H}^2\times\mathbb{R}$ has the lowest or tied lowest held-out log-loss on four (Cat1, Cat2, Macaque1, Macaque2) and $\mathbb{R}^3$ on Cat3, with direction accuracies $0.77$--$0.91$ where the asymmetry is predictable at all; the $\Nil$ circulation model never wins, its estimated $|\gamma|$ is $0.00$--$0.18$, and the $\SL$ circulation model, whose vertical offset contains a ranking term, ties the product on Macaque1 ($0.251$, accuracy $0.89$ against $0.90$) and is within $0.006$ elsewhere. On the \emph{C. elegans} network no model predicts the direction of asymmetric pairs above $0.58$, and the log-losses are within $0.004$ of each other. Cortical asymmetries are laminar hierarchies, the asymmetry of the worm connectome is not predictable from positions at this size, and in neither case is there circulation to model: the win condition of Section~\ref{sec:exp-directed} is not met on these data, and the recommended use of the circulation model is restricted to networks whose rotational fraction exceeds the random-direction value.

\begin{table}[H]\centering\small
\caption{Directed connectomes: number of nodes, density of directed ties, fraction of node pairs with exactly one direction present, and the rotational fraction of the asymmetry (share of the squared norm of $Y-Y^{\tr}$ in the cycle space of the complete graph, i.e.\ the part that no ranking reproduces), with its value when the direction of every asymmetric pair is assigned at random (mean of $20$ draws). A rotational fraction below the random-direction value means the asymmetry is more transitive than random. Sources: Allard and Serrano (2020) edge lists for CElegans (Varshney et al.\ 2011), Cat1--3 (Scannell et al.\ 1999) and Macaque1--3 (CoCoMac); Cook et al.\ (2019) hermaphrodite and male chemical connectomes from Netzschleuder (neurons only for the hermaphrodite; all cells for the male).}\label{tab:rotfrac}
\begin{tabular}{lccccc}\toprule
network & $N$ & density & asymmetric pairs & rotational fraction & random directions\\ \midrule
CElegans & 279 & 0.039 & 0.041 & 0.97 & 0.99 \\
Cat1 & 65 & 0.274 & 0.154 & 0.85 & 0.97 \\
Cat2 & 95 & 0.238 & 0.048 & 0.93 & 0.98 \\
Cat3 & 52 & 0.308 & 0.160 & 0.87 & 0.96 \\
Macaque1 & 94 & 0.273 & 0.146 & 0.75 & 0.98 \\
Macaque2 & 71 & 0.150 & 0.052 & 0.92 & 0.97 \\
Macaque3 & 242 & 0.070 & 0.069 & 0.93 & 0.99 \\
CElegans2019\_herm & 338 & 0.033 & 0.043 & 0.97 & 0.99 \\
CElegans2019\_male & 575 & 0.016 & 0.023 & 0.98 & 1.00 \\
\bottomrule\end{tabular}\end{table}

\begin{table}[H]\centering\scriptsize
\caption{Directed connectomes, held-out dyads ($10\%$ of node pairs hidden in both directions; number of splits in the third column): held-out directed log-loss / accuracy of the predicted direction on asymmetric held-out pairs / $|\hat\gamma|$, for the circulation model in $\Nil$ and in $\SL$, the gradient models in $\mathbb{H}^2\times\mathbb{R}$ and $\mathbb{R}^3$, and the symmetric $\Nil$ model (mechanism control). The circulation fits are restarted at $\gamma_0\in\{0,0.4\}$ and the restart with the lower training log-loss is reported.}\label{tab:directedreal}
\resizebox{\textwidth}{!}{\begin{tabular}{lccccccc}\toprule
network & $N$ & splits & $\Nil$ circulation & $\SL$ circulation & $\mathbb{H}^2\times\mathbb{R}$ gradient & $\mathbb{R}^3$ gradient & $\Nil$ symmetric\\ \midrule
Cat3 & 52 & 2 & 0.385 / 0.72 / 0.03 & 0.387 / 0.71 / 0.18 & 0.381 / 0.80 / 0.23 & 0.375 / 0.77 / 0.14 & 0.393 / 0.52 / 0.00 \\
Cat1 & 65 & 2 & 0.355 / 0.75 / 0.10 & 0.348 / 0.81 / 0.25 & 0.344 / 0.83 / 0.33 & 0.353 / 0.79 / 0.23 & 0.357 / 0.53 / 0.00 \\
Macaque2 & 71 & 2 & 0.275 / 0.43 / 0.01 & 0.279 / 0.50 / 0.02 & 0.275 / 0.54 / 0.03 & 0.275 / 0.79 / 0.03 & 0.275 / 0.54 / 0.00 \\
Macaque1 & 94 & 2 & 0.280 / 0.80 / 0.18 & 0.251 / 0.89 / 0.53 & 0.251 / 0.90 / 0.91 & 0.283 / 0.91 / 0.81 & 0.283 / 0.47 / 0.00 \\
Cat2 & 95 & 2 & 0.351 / 0.46 / 0.00 & 0.347 / 0.61 / 0.01 & 0.342 / 0.57 / 0.02 & 0.345 / 0.51 / 0.02 & 0.350 / 0.36 / 0.00 \\
CElegans & 279 & 1 & 0.127 / 0.54 / 0.05 & 0.127 / 0.57 / 0.09 & 0.124 / 0.57 / 0.13 & 0.127 / 0.53 / 0.06 & 0.128 / 0.44 / 0.00 \\
\bottomrule\end{tabular}}\end{table}

Table~\ref{tab:annealmap} repeats the comparison under the neighbourhood criterion of the embedding literature, mean average precision of the ranking of nodes by fitted distance, for our annealed maximum-likelihood embedding with learned temperature and for the Bayesian fit. With the annealed embedding the anisotropic geometries reach the top on three networks, by margins at the level of the third decimal: $\Nil$ on Cat1 ($0.917$ against $0.916$ for $\mathbb{R}^3$) and, by training likelihood, on Macaque4, and $\SL$ on Macaque2 ($0.862$ against $0.860$). Elsewhere $\mathbb{H}^3$, $\mathbb{H}^2\times\mathbb{R}$ or $\mathbb{R}^3$ is best. Our $\mathbb{H}^3$ values are within $0.01$--$0.03$ of those of \citet{celinska2024thurston}, whereas our $\Sol$ values are $0.02$--$0.09$ below theirs, so the $\Sol$ advantage they report is not reproduced by our estimator on their data; the most probable cause is the surrogate $\Sol$ distance and the annealer used here, and the paper therefore does not claim to have refuted their finding, only that it does not transfer to our estimator or to held-out prediction.

\begin{table}[H]\centering\scriptsize
\caption{Neighbourhood criterion on eight connectomes: mean average precision of the ranking of the other nodes by fitted distance (higher is better), for the annealed maximum-likelihood embedding with learned radius and temperature (first line of each network, three restarts, the one with the lowest training loss) and for the Bayesian unit-slope fit (second line, posterior mean distances). Best geometry per line in bold. Last column: the mean average precision reported by \citet{celinska2024thurston} for $\Sol$ and for $\mathbb{H}^3$ with their annealer.}\label{tab:annealmap}
\resizebox{\textwidth}{!}{\begin{tabular}{lcccccccc}\toprule
network & $N$ / fit & $\Nil$ & $\Sol$ & $\SL$ & $\mathbb{R}^3$ & $\mathbb{H}^3$ & $\mathbb{H}^2\!\times\!\mathbb{R}$ & CKK $\Sol$ / $\mathbb{H}^3$\\ \midrule
Macaque4 & 31 & 0.998 & 0.998 & 0.996 & \textbf{0.999} & 0.996 & 0.997 & 1.000 / 1.000 \\
 & Bayes & \textbf{0.985} & 0.979 & 0.983 & 0.981 & 0.983 & 0.983 &  \\
Cat3 & 52 & 0.944 & 0.939 & 0.944 & 0.940 & 0.936 & \textbf{0.945} & 0.956 / 0.954 \\
 & Bayes & 0.904 & 0.916 & 0.906 & \textbf{0.927} & 0.902 & 0.920 &  \\
Cat1 & 65 & \textbf{0.917} & 0.906 & 0.914 & 0.916 & 0.908 & 0.910 & 0.937 / 0.941 \\
 & Bayes & 0.888 & 0.883 & 0.883 & \textbf{0.907} & 0.886 & 0.882 &  \\
Macaque2 & 71 & 0.860 & 0.844 & \textbf{0.862} & 0.844 & 0.859 & 0.860 & 0.889 / 0.896 \\
 & Bayes & 0.807 & 0.786 & 0.801 & \textbf{0.810} & 0.796 & 0.801 &  \\
Macaque1 & 94 & 0.884 & 0.889 & 0.928 & 0.871 & \textbf{0.937} & 0.919 & 0.974 / 0.962 \\
 & Bayes & 0.865 & 0.878 & 0.910 & 0.846 & \textbf{0.930} & 0.904 &  \\
Cat2 & 95 & 0.796 & 0.790 & 0.825 & 0.786 & \textbf{0.849} & 0.830 & 0.883 / 0.874 \\
 & Bayes & 0.783 & 0.795 & 0.841 & 0.783 & \textbf{0.861} & 0.848 &  \\
Human7 & 110 & 0.934 & 0.904 & 0.922 & 0.937 & 0.931 & \textbf{0.938} & 0.935 / 0.926 \\
 & Bayes & 0.862 & 0.809 & 0.849 & \textbf{0.923} & 0.835 & 0.883 &  \\
Human6 & 116 & 0.946 & 0.909 & 0.936 & \textbf{0.947} & 0.926 & 0.943 & 0.944 / 0.931 \\
 & Bayes & 0.858 & 0.852 & 0.855 & \textbf{0.936} & 0.846 & 0.893 &  \\
\bottomrule\end{tabular}}\end{table}

\paragraph{Contest networks.} Sports results are the setting in which intransitive triads are expected. Table~\ref{tab:football} fits the circulation and gradient models to dominance networks built from fifteen seasons of the five largest European football leagues (openfootball, CC0): $39$--$54$ clubs per league, $48$--$67\%$ of club pairs with a decided dominance. The rotational fraction of the asymmetry is again below its random-direction value ($0.63$--$0.70$ against $0.95$--$0.96$), so results are more transitive than random, as a strength ranking predicts. Over three held-out splits the circulation model in $\Nil$ ties the gradient models on Spain, France and Germany (paired differences within $\pm0.004$, standard deviations $0.01$--$0.06$) and loses on Italy and England ($+0.028$ and $+0.022$); the fitted circulation coefficients are large ($|\hat\gamma|\approx1.2$--$1.7$) because the model uses the vertical coordinate for the ranking, and the symmetric control fails ($0.46$--$0.58$), but no league shows circulation beyond what a ranking explains. Two of the three ties looked like wins of $0.01$--$0.03$ with two splits, which is why the table reports three and the paired standard deviation.

\begin{table}[H]\centering\scriptsize
\caption{Football dominance networks (openfootball \texttt{football.json}, CC0; seasons 2010--11 to 2024--25 where available, $66$ season files): nodes are clubs, $Y_{ij}=1$ if $i$ won more of its matches against $j$ than it lost. Columns: number of clubs; fraction of club pairs with a decided dominance; rotational fraction of the asymmetry / its random-direction value; held-out splits; held-out directed log-loss / direction accuracy for the circulation models in $\Nil$ and $\SL$, the gradient models, and the symmetric $\Nil$ control; paired difference of the $\Nil$ circulation model to the better gradient model (negative favours circulation), mean (s.d.) over splits.}\label{tab:football}
\resizebox{\textwidth}{!}{\begin{tabular}{lcccccccccc}\toprule
league & $N$ & decided pairs & rot.\ fraction / random & splits & $\Nil$ circ. & $\SL$ circ. & $\mathbb{H}^2\times\mathbb{R}$ grad. & $\mathbb{R}^3$ grad. & $\Nil$ sym. & $\Nil$ $-$ best gradient\\ \midrule
Spain (La Liga) & 41 & 0.67 & 0.64 / 0.95 & 3 & 0.450 / 0.78 & 0.452 / 0.76 & 0.454 / 0.74 & 0.452 / 0.75 & 0.572 / 0.40 & -0.001 (0.054) \\
France (Ligue 1) & 39 & 0.63 & 0.70 / 0.95 & 3 & 0.480 / 0.76 & 0.490 / 0.74 & 0.479 / 0.76 & 0.482 / 0.76 & 0.584 / 0.48 & +0.001 (0.021) \\
Germany (Bundesliga) & 39 & 0.60 & 0.66 / 0.95 & 3 & 0.409 / 0.80 & 0.413 / 0.83 & 0.407 / 0.78 & 0.408 / 0.77 & 0.543 / 0.49 & +0.005 (0.011) \\
Italy (Serie A) & 46 & 0.56 & 0.63 / 0.96 & 3 & 0.413 / 0.81 & 0.410 / 0.81 & 0.386 / 0.81 & 0.388 / 0.81 & 0.529 / 0.48 & +0.028 (0.013) \\
England (Premier League) & 54 & 0.48 & 0.69 / 0.96 & 3 & 0.367 / 0.78 & 0.367 / 0.79 & 0.344 / 0.81 & 0.348 / 0.81 & 0.455 / 0.44 & +0.023 (0.028) \\
\bottomrule\end{tabular}}\end{table}

Table~\ref{tab:flows} adds the two directed networks with the largest cyclic content we could obtain: the European club network, in which the Champions League and Europa League link the five national leagues into one hierarchy, and the network of net trade flows between $216$ countries in 2023, in which $17\%$ of decided triads are cyclic, the highest fraction of any real network outside the counter networks (Generation 4 OU has $0.21$ and chess $0.20$). In both the gradient model on $\mathbb{H}^2\times\mathbb{R}$ has the lowest held-out log-loss ($0.115$ and $0.538$); the $\SL$ circulation model is second on trade by $0.006$ and on the clubs by $0.001$, the $\Nil$ circulation model has the highest direction accuracy on trade ($0.710$ against $0.704$) at a higher log-loss, and the symmetric control fails on both. Trade imbalances do circulate, but the circulating part is not predictable from latent positions beyond what a size ranking with a Hyperbolic base explains. Across the seventeen directed networks of this section, the rotational fraction of the asymmetry is below its random-direction value in every case, and the gradient models win or tie in every case.

\begin{table}[H]\centering\scriptsize
\caption{Two further directed dominance networks, one held-out split ($10\%$ of pairs hidden in both directions): European clubs (the five leagues of Table~\ref{tab:football} linked by Champions League and Europa League matches, openfootball, CC0; clubs with at least eight decided pairs), and world trade net flows (CEPII BACI 2023, Etalab licence; $Y_{ij}=1$ if the exports of $i$ to $j$ exceed those of $j$ to $i$; countries with at least $100$ million USD of trade). Columns: nodes; decided pairs; rotational fraction of the asymmetry / random-direction value; fraction of cyclic triads among decided triads (random directions give $0.25$); held-out directed log-loss / direction accuracy for the circulation models in $\Nil$ and $\SL$, the gradient models, and the symmetric $\Nil$ control.}\label{tab:flows}
\resizebox{\textwidth}{!}{\begin{tabular}{lccccccccc}\toprule
network & $N$ & decided & rot.\ fraction / random & cyclic triads & $\Nil$ circ. & $\SL$ circ. & $\mathbb{H}^2\times\mathbb{R}$ grad. & $\mathbb{R}^3$ grad. & $\Nil$ sym.\\ \midrule
European clubs, five leagues and cups (2010--25) & 269 & 3490 & 0.93 / 0.99 & -- & 0.120 / 0.49 & 0.116 / 0.59 & 0.115 / 0.69 & 0.121 / 0.51 & 0.120 / 0.48 \\
World trade net flows, BACI 2023 & 216 & 16268 & 0.81 / 0.99 & 0.17 & 0.554 / 0.71 & 0.544 / 0.71 & 0.538 / 0.70 & 0.545 / 0.71 & 0.600 / 0.50 \\
\bottomrule\end{tabular}}\end{table}

\begin{table}[H]\centering\scriptsize
\caption{Matchup networks that are intransitive by design. Pok\'emon: $Y_{ji}=1$ if $j$ is listed as a check or counter of $i$ with score $p-4\sigma>0.5$ in the Smogon usage statistics (January 2024, rating cutoff $1500$) of three tiers of Generation 9, species with usage above $1\%$ ($1.9$ million battles in OU). Dota 2: $Y_{ij}=1$ if hero $i$ won more than half of at least $30$ recorded games against hero $j$ (OpenDota, September 2026). Columns as in Table~\ref{tab:flows}, then held-out splits, held-out log-loss / direction accuracy per model, and the paired differences of the $\Nil$ and the $\SL$ circulation models to the better ranking, mean (s.d.) over splits; negative favours circulation.}\label{tab:games}
\resizebox{\textwidth}{!}{\begin{tabular}{lcccccccccccc}\toprule
network & $N$ & decided & rot.\ fr.\ / random & cyclic triads & splits & $\Nil$ circ. & $\SL$ circ. & $\mathbb{H}^2\times\mathbb{R}$ grad. & $\mathbb{R}^3$ grad. & $\Nil$ sym. & $\Nil-$ranking & $\SL-$ranking\\ \midrule
Pok\'emon OU (Smogon, 2024-01) & 93 & 2809 & 0.72 / 0.98 & 0.12 & 3 & 0.501 / 0.82 & 0.511 / 0.80 & 0.524 / 0.76 & 0.528 / 0.76 & 0.630 / 0.47 & -0.023 (0.007) & -0.013 (0.007) \\
Pok\'emon UU & 104 & 1776 & 0.75 / 0.98 & 0.11 & 3 & 0.352 / 0.82 & 0.340 / 0.82 & 0.343 / 0.81 & 0.348 / 0.81 & 0.409 / 0.54 & +0.009 (0.009) & -0.003 (0.005) \\
Pok\'emon Ubers & 100 & 1203 & 0.74 / 0.98 & 0.06 & 3 & 0.302 / 0.85 & 0.285 / 0.85 & 0.278 / 0.86 & 0.287 / 0.85 & 0.338 / 0.46 & +0.024 (0.004) & +0.007 (0.003) \\
Dota 2 hero matchups (OpenDota) & 120 & 4975 & 0.85 / 0.98 & 0.18 & 2 & 0.563 / 0.67 & 0.559 / 0.67 & 0.552 / 0.68 & 0.558 / 0.68 & 0.608 / 0.49 & +0.011 (0.004) & +0.008 (0.006) \\
\bottomrule\end{tabular}}\end{table}

\begin{table}[H]\centering\small
\caption{Small-scale deficit (Theorem~\ref{thm:smallscale}) against the exact minimum divergence for one $\Nil$ configuration of $N$ points scaled to diameter of order $\varepsilon$ ($\alpha=0.5$; exact $\Nil$ distances). Each cell gives the leading-order prediction $\tfrac12\|\Pi^w r\|_w^2$, the minimum exact divergence found by least squares on deviance residuals from the normal-coordinate configuration, and their ratio (exact over predicted). For $\mathbb{H}^2\times\mathbb{R}$ the prediction is minimised over the orientation of the product structure, and the exact optimiser, which starts from the unrotated configuration, reports local minima above the prediction. ``stresses'': dimension of the space of equilibrium stresses of the complete framework on $N$ points in $\mathbb{R}^3$, orthogonal to the constant vector.}\label{tab:smallscale}
\resizebox{\textwidth}{!}{\begin{tabular}{lllccc}\toprule
$N$ & $\varepsilon$ & stresses & $T=\mathbb{R}^3$: predicted / exact / ratio & $T=\mathbb{H}^3$ & $T=\mathbb{H}^2\!\times\!\mathbb{R}$\\ \midrule
8 & 0.05 & 9 & 1.57e-09 / 1.50e-09 / 0.96 & 2.09e-10 / 1.62e-10 / 0.77 & 2.02e-11 / 1.42e-10 / 7.03 \\
8 & 0.1 & 9 & 1.08e-07 / 9.78e-08 / 0.90 & 1.50e-08 / 1.22e-08 / 0.81 & 1.09e-09 / 2.03e-09 / 1.87 \\
8 & 0.2 & 9 & 7.75e-06 / 6.03e-06 / 0.78 & 8.26e-07 / 9.66e-07 / 1.17 & 8.62e-08 / 1.20e-07 / 1.39 \\
8 & 0.4 & 9 & 5.28e-04 / 2.72e-04 / 0.52 & 6.59e-05 / 6.18e-05 / 0.94 & 5.69e-06 / 9.00e-06 / 1.58 \\
12 & 0.05 & 35 & 2.55e-09 / 2.44e-09 / 0.96 & 7.15e-09 / 7.18e-09 / 1.00 & 3.23e-10 / 6.19e-10 / 1.92 \\
12 & 0.1 & 35 & 1.74e-07 / 1.58e-07 / 0.91 & 4.41e-07 / 4.60e-07 / 1.04 & 1.57e-08 / 2.27e-08 / 1.44 \\
12 & 0.2 & 35 & 1.21e-05 / 9.58e-06 / 0.79 & 2.87e-05 / 2.75e-05 / 0.96 & 1.38e-06 / 1.46e-06 / 1.06 \\
12 & 0.4 & 35 & 7.71e-04 / 4.27e-04 / 0.55 & 1.78e-03 / 1.22e-03 / 0.69 & 8.20e-05 / 7.80e-05 / 0.95 \\
\bottomrule\end{tabular}}\end{table}

\begin{table}[H]\centering\scriptsize
\caption{Standard directed benchmark networks, restricted to the most active nodes, three held-out splits: chess game results among the $150$ players with the most games ($i\to j$ if $i$ won more of their games than it lost, draws ignored); Wikipedia adminship votes and institutional e-mail among the $150$ nodes of highest degree ($i\to j$ if $i$ voted for, or e-mailed, $j$). The last two rows are the two largest pairwise competition tournaments of the Dryad archive of Soliveres et al.\ (2018), fifteen percent of pairs held out, three splits. The remaining rows are two further months of the National Dex counter network, two directed citation balls (breadth-first, $200$ papers, $i\to j$ if $i$ cites $j$) and the political blog network ($i\to j$ if $i$ links to $j$). Columns as in Table~\ref{tab:gamesall}.}\label{tab:bench}
\resizebox{\textwidth}{!}{\begin{tabular}{lcccccccccccccc}\toprule
network & $N$ & decided & density & cyclic & splits & $\Nil$ circ. & $\SL$ circ. & $\mathbb{H}^2\!\times\!\mathbb{R}$ & $\mathbb{R}^3$ & $\mathbb{H}^3$ & $\mathbb{S}^3$ & $\Sol$ & $\Nil-$ranking & $\SL-$ranking\\ \midrule
Chess, top 150 players by games (KONECT/Kaggle 2010) & 150 & 1560 & 0.14 & 0.20 & 3 & 0.242 & 0.240 & 0.237 & 0.238 & \textbf{0.236} & 0.241 & 0.238 & +0.006 (0.001) & +0.004 (0.002) \\
Wikipedia adminship votes, 150 most active (SNAP wiki-Vote) & 150 & 3863 & 0.35 & 0.03 & 3 & 0.344 & 0.335 & \textbf{0.324} & 0.330 & 0.327 & 0.373 & 0.344 & +0.020 (0.005) & +0.011 (0.005) \\
European institution e-mail, 150 most active (SNAP email-Eu-core) & 150 & 3669 & 0.33 & 0.10 & 3 & 0.460 & 0.460 & 0.456 & 0.462 & \textbf{0.448} & 0.514 & 0.453 & +0.012 (0.004) & +0.013 (0.002) \\
Wood-decay fungi, Europe (Soliveres et al.\ 2018), 31 species & 31 & 462 & 0.99 & 0.06 & 3 & 0.512 & 0.425 & 0.379 & 0.379 & 0.382 & 0.392 & \textbf{0.377} & +0.135 (0.061) & +0.047 (0.047) \\
Wood-decay fungi, US, 37 species & 37 & 206 & 0.31 & 0.02 & 3 & 0.282 & 0.273 & 0.269 & 0.276 & \textbf{0.259} & 0.273 & 0.273 & +0.022 (0.002) & +0.014 (0.004) \\
Pok\'emon National Dex, Apr 2024 & 122 & 3129 & 0.42 & 0.09 & 3 & 0.412 & \textbf{0.406} & 0.410 & 0.415 & 0.409 & 0.445 & 0.414 & +0.004 (0.008) & -0.002 (0.003) \\
Pok\'emon National Dex, Jul 2024 & 111 & 2825 & 0.46 & 0.10 & 3 & 0.426 & \textbf{0.422} & 0.425 & 0.431 & 0.423 & 0.455 & 0.426 & +0.004 (0.010) & -0.000 (0.003) \\
arXiv hep-th citations, 200-paper ball (SNAP) & 200 & 2243 & 0.11 & 0.00 & 2 & 0.174 & 0.155 & 0.143 & 0.150 & \textbf{0.142} & 0.160 & 0.148 & +0.032 (0.001) & +0.013 (0.000) \\
arXiv hep-ph citations, 200-paper ball (SNAP) & 200 & 980 & 0.05 & 0.00 & 1 & 0.100 & 0.089 & \textbf{0.080} & 0.084 & 0.082 & 0.097 & 0.087 & +0.019 (0.000) & +0.009 (0.000) \\
Political blogs, 150 most linked (Adamic and Glance 2005) & 150 & 3448 & 0.31 & 0.01 & 2 & 0.312 & 0.305 & \textbf{0.293} & 0.294 & 0.295 & 0.341 & 0.297 & +0.021 (0.013) & +0.014 (0.001) \\
European institution e-mail, 250 most active & 250 & 6566 & 0.21 & 0.06 & 2 & 0.353 & 0.344 & 0.341 & 0.356 & \textbf{0.338} & 0.411 & 0.346 & +0.015 (0.003) & +0.006 (0.001) \\
\bottomrule\end{tabular}}\end{table}

\begin{table}[H]\centering\scriptsize
\caption{Counter networks with every geometry, three held-out splits each: held-out directed log-loss (mean over splits) of the circulation models in $\Nil$ and $\SL$ and of the ranking (gradient) models on $\mathbb{H}^2\times\mathbb{R}$, $\mathbb{R}^3$, $\mathbb{H}^3$, $\mathbb{S}^3$ and $\Sol$; best per row in bold. Pok\'emon networks from the Smogon usage statistics of 2024 (rating cutoff $1500$ unless stated; species with usage above $1\%$; $j\to i$ if $j$ is a check or counter of $i$ with score $p-4\sigma>0.5$; in the row so labelled the threshold is $0.3$ and pairs decided in both directions are kept). ``decided'': node pairs with a counter relation; ``density'': their share of all pairs; ``cyclic'': fraction of decided triads that are cycles (random directions give $0.25$). Last columns: paired differences of the $\Nil$ and the $\SL$ circulation models to the best ranking, mean (s.d.) over splits; negative favours circulation.}\label{tab:gamesall}
\resizebox{\textwidth}{!}{\begin{tabular}{lcccccccccccccc}\toprule
network & $N$ & decided & density & cyclic & splits & $\Nil$ circ. & $\SL$ circ. & $\mathbb{H}^2\!\times\!\mathbb{R}$ & $\mathbb{R}^3$ & $\mathbb{H}^3$ & $\mathbb{S}^3$ & $\Sol$ & $\Nil-$ranking & $\SL-$ranking\\ \midrule
Gen 9 OU, Jan & 93 & 2809 & 0.66 & 0.12 & 3 & \textbf{0.501} & 0.511 & 0.524 & 0.528 & 0.526 & 0.529 & 0.527 & -0.022 (0.005) & -0.012 (0.008) \\
Gen 9 OU, Jan, mutual checks kept & 93 & 4229 & 0.99 & 0.13 & 3 & \textbf{0.520} & 0.564 & 0.598 & 0.548 & 0.654 & 0.542 & 0.575 & -0.022 (0.023) & +0.021 (0.015) \\
Gen 9 OU, Feb & 84 & 2262 & 0.65 & 0.12 & 3 & \textbf{0.500} & 0.514 & 0.517 & 0.518 & 0.515 & 0.537 & 0.517 & -0.014 (0.002) & -0.001 (0.003) \\
Gen 9 OU, Mar & 83 & 2160 & 0.63 & 0.13 & 3 & \textbf{0.497} & 0.503 & 0.514 & 0.516 & 0.514 & 0.588 & 0.513 & -0.014 (0.017) & -0.008 (0.011) \\
Gen 9 OU, Apr & 87 & 2275 & 0.61 & 0.12 & 3 & \textbf{0.499} & 0.512 & 0.525 & 0.526 & 0.523 & 0.537 & 0.530 & -0.024 (0.025) & -0.011 (0.015) \\
Gen 9 OU, May & 88 & 1903 & 0.50 & 0.11 & 3 & 0.456 & 0.452 & 0.455 & 0.454 & 0.455 & 0.467 & \textbf{0.452} & +0.006 (0.002) & +0.003 (0.005) \\
Gen 9 OU, Jun & 91 & 2318 & 0.57 & 0.11 & 3 & \textbf{0.468} & 0.478 & 0.479 & 0.482 & 0.476 & 0.494 & 0.478 & -0.009 (0.007) & +0.001 (0.004) \\
Gen 9 OU, Jan, cutoff 1695 & 70 & 504 & 0.21 & 0.07 & 3 & 0.312 & 0.303 & 0.296 & 0.302 & \textbf{0.290} & 0.328 & 0.303 & +0.022 (0.003) & +0.013 (0.003) \\
Gen 9 UU & 104 & 1776 & 0.33 & 0.11 & 3 & 0.352 & \textbf{0.340} & 0.343 & 0.348 & 0.340 & 0.381 & 0.346 & +0.012 (0.009) & +0.000 (0.006) \\
Gen 9 Ubers & 100 & 1203 & 0.24 & 0.06 & 3 & 0.302 & 0.285 & 0.278 & 0.287 & \textbf{0.270} & 0.299 & 0.285 & +0.031 (0.007) & +0.014 (0.001) \\
Gen 3 OU & 55 & 663 & 0.45 & 0.19 & 3 & 0.474 & \textbf{0.469} & 0.477 & 0.484 & 0.475 & 0.504 & 0.487 & +0.000 (0.012) & -0.005 (0.016) \\
Gen 4 OU & 57 & 521 & 0.33 & 0.21 & 3 & 0.424 & 0.428 & 0.429 & 0.433 & 0.422 & 0.451 & \textbf{0.422} & +0.003 (0.012) & +0.007 (0.005) \\
Gen 5 OU & 66 & 584 & 0.27 & 0.15 & 3 & 0.335 & 0.324 & 0.322 & 0.329 & \textbf{0.314} & 0.356 & 0.332 & +0.021 (0.003) & +0.010 (0.003) \\
Gen 6 OU & 99 & 1007 & 0.21 & 0.16 & 3 & 0.289 & 0.278 & 0.274 & 0.283 & \textbf{0.267} & 0.309 & 0.288 & +0.022 (0.003) & +0.011 (0.003) \\
Gen 7 OU & 104 & 1511 & 0.28 & 0.11 & 3 & 0.350 & \textbf{0.339} & 0.340 & 0.344 & 0.339 & 1.180 & 0.345 & +0.013 (0.005) & +0.002 (0.006) \\
Gen 8 OU & 87 & 729 & 0.19 & 0.14 & 3 & 0.275 & 0.265 & 0.262 & 0.268 & \textbf{0.255} & 0.285 & 0.266 & +0.020 (0.007) & +0.010 (0.004) \\
Gen 9 LC & 55 & 278 & 0.19 & 0.17 & 3 & 0.306 & 0.296 & 0.298 & 0.307 & \textbf{0.291} & 0.323 & 0.300 & +0.015 (0.000) & +0.005 (0.007) \\
Gen 9 PU & 107 & 877 & 0.15 & 0.12 & 3 & 0.238 & 0.230 & 0.226 & 0.233 & \textbf{0.219} & 0.258 & 0.232 & +0.019 (0.003) & +0.011 (0.002) \\
Gen 9 NU & 86 & 720 & 0.20 & 0.17 & 3 & 0.300 & 0.292 & 0.288 & 0.293 & \textbf{0.285} & 0.315 & 0.297 & +0.015 (0.006) & +0.007 (0.004) \\
Gen 9 RU & 89 & 858 & 0.22 & 0.14 & 3 & 0.300 & 0.284 & 0.275 & 0.285 & \textbf{0.271} & 0.321 & 0.283 & +0.029 (0.005) & +0.013 (0.005) \\
Gen 9 Monotype & 146 & 2301 & 0.22 & 0.13 & 3 & 0.293 & 0.285 & 0.280 & 0.284 & \textbf{0.277} & 0.317 & 0.282 & +0.016 (0.004) & +0.008 (0.004) \\
Gen 9 National Dex & 115 & 2954 & 0.45 & 0.09 & 3 & 0.413 & 0.408 & 0.410 & 0.412 & \textbf{0.407} & 0.457 & 0.411 & +0.006 (0.005) & +0.001 (0.002) \\
\bottomrule\end{tabular}}\end{table}

\begin{table}[H]\centering\scriptsize
\caption{Two controls on the counter networks (three held-out splits): a ranking on $\R^3$ or $\Hb^3$ with a free rank-two antisymmetric term $s_{ij}=u_i^{\tr}Ju_j$ (``skew'', $2N$ node parameters, the blade--chest and disc models of \citet{chen2016predicting} and \citet{balduzzi2018reevaluating}), and a degree-corrected ranking with sender and receiver effects $a_i+b_j$ (``dc''). Held-out directed log-loss, best in bold; last column: paired difference of the $\Nil$ circulation model to the best control, mean (s.d.) over splits.}\label{tab:controls}
\resizebox{\textwidth}{!}{\begin{tabular}{lccccccccccc}\toprule
network & density & cyclic & splits & $\Nil$ circ. & $\SL$ circ. & $\Hb^3$ rank. & $\R^3$ skew & $\Hb^3$ skew & $\R^3$ dc & $\Hb^3$ dc & $\Nil-$best control\\ \midrule
OU Jan & 0.66 & 0.12 & 3 & \textbf{0.501} & 0.511 & 0.526 & 0.523 & 0.518 & 0.528 & 0.530 & -0.014 (0.004) \\
OU Feb & 0.65 & 0.12 & 3 & \textbf{0.500} & 0.514 & 0.515 & 0.508 & 0.506 & 0.524 & 0.529 & -0.006 (0.003) \\
OU Mar & 0.63 & 0.13 & 3 & \textbf{0.497} & 0.503 & 0.514 & 0.524 & 0.519 & 0.521 & 0.521 & -0.012 (0.011) \\
OU Apr & 0.61 & 0.12 & 3 & \textbf{0.499} & 0.512 & 0.523 & 0.519 & 0.513 & 0.522 & 0.521 & -0.010 (0.015) \\
OU Jun & 0.57 & 0.11 & 3 & \textbf{0.468} & 0.478 & 0.476 & 0.488 & 0.479 & 0.475 & 0.478 & -0.006 (0.010) \\
OU Jan, mutual checks & 0.99 & 0.13 & 3 & \textbf{0.520} & 0.564 & 0.654 & 0.539 & 0.609 & 0.542 & 0.553 & -0.014 (0.018) \\
OU May & 0.50 & 0.11 & 3 & 0.456 & 0.452 & 0.455 & 0.457 & 0.454 & \textbf{0.450} & 0.454 & +0.006 (0.001) \\
National Dex & 0.45 & 0.09 & 3 & 0.413 & 0.408 & 0.407 & 0.405 & \textbf{0.398} & 0.406 & 0.447 & +0.014 (0.002) \\
Gen 3 OU & 0.45 & 0.19 & 3 & 0.474 & 0.469 & 0.475 & 0.452 & \textbf{0.442} & 0.490 & 0.493 & +0.032 (0.035) \\
Gen 7 OU & 0.28 & 0.11 & 3 & 0.350 & 0.339 & 0.339 & 0.340 & 0.330 & \textbf{0.329} & 0.331 & +0.022 (0.007) \\
UU & 0.33 & 0.11 & 3 & 0.352 & 0.340 & 0.340 & 0.345 & \textbf{0.336} & 0.484 & 0.340 & +0.017 (0.008) \\
Ubers & 0.24 & 0.06 & 3 & 0.302 & 0.285 & 0.270 & 0.282 & 0.273 & 0.265 & \textbf{0.263} & +0.039 (0.009) \\
LC & 0.19 & 0.17 & 3 & 0.306 & 0.296 & 0.291 & 0.329 & 0.312 & \textbf{0.275} & 0.277 & +0.032 (0.015) \\
RU & 0.22 & 0.14 & 3 & 0.300 & 0.284 & \textbf{0.271} & 0.281 & 0.274 & 0.308 & 0.340 & +0.032 (0.013) \\
\bottomrule\end{tabular}}\end{table}

\begin{table}[H]\centering\scriptsize
\caption{The decisive control and the strongest baseline on eleven counter networks (three held-out splits, $10\%$ of pairs hidden in both directions). ``$\R^3$ shared'': Euclidean symmetric distance on the same three coordinates that carry the $\Nil$ directed term $\gamma(\zeta_j-\zeta_i-\tfrac12(x_iy_j-y_ix_j))$. ``AME'': additive and multiplicative effects, $\alpha+a_i+b_j+u_i^{\tr}v_j$ with $u_i,v_i\in\R^2$ and no distance term \citep{hoff2005bilinear,hoff2021additive}. Also shown: the $\Hb^3$ ranking, the $\Hb^3$ ranking with a free skew term, and the degree-corrected $\R^3$ ranking. Held-out directed log-loss, best in bold; last columns: paired difference of $\Nil$ to the shared-coordinate control and to AME, mean (s.d.) over splits.}\label{tab:controls2}
\resizebox{\textwidth}{!}{\begin{tabular}{lccccccccccc}\toprule
network & density & cyclic & splits & $\Nil$ circ. & $\R^3$ shared & $\Hb^3$ rank. & $\Hb^3$ skew & $\R^3$ dc & AME & $\Nil-$shared & $\Nil-$AME\\ \midrule
OU Jan & 0.66 & 0.12 & 3 & 0.501 & 0.502 & 0.526 & 0.518 & 0.528 & \textbf{0.461} & -0.001 (0.003) & +0.040 (0.013) \\
OU Feb & 0.65 & 0.12 & 3 & 0.500 & 0.499 & 0.515 & 0.506 & 0.524 & \textbf{0.456} & +0.001 (0.011) & +0.045 (0.012) \\
OU Mar & 0.63 & 0.13 & 3 & 0.497 & 0.502 & 0.514 & 0.519 & 0.521 & \textbf{0.449} & -0.005 (0.009) & +0.048 (0.015) \\
OU Apr & 0.61 & 0.12 & 3 & 0.499 & 0.503 & 0.523 & 0.513 & 0.522 & \textbf{0.457} & -0.004 (0.003) & +0.041 (0.003) \\
OU Jun & 0.57 & 0.11 & 3 & 0.468 & 0.463 & 0.476 & 0.479 & 0.475 & \textbf{0.422} & +0.005 (0.008) & +0.046 (0.002) \\
OU Jan, mutual checks & 0.99 & 0.13 & 3 & 0.520 & 0.512 & 0.654 & 0.609 & 0.542 & \textbf{0.473} & +0.008 (0.003) & +0.047 (0.016) \\
OU May & 0.50 & 0.11 & 3 & 0.456 & 0.448 & 0.455 & 0.454 & 0.450 & \textbf{0.398} & +0.007 (0.005) & +0.058 (0.004) \\
National Dex & 0.45 & 0.09 & 3 & 0.413 & 0.409 & 0.407 & 0.398 & 0.406 & \textbf{0.362} & +0.004 (0.003) & +0.051 (0.005) \\
Gen 3 OU & 0.45 & 0.19 & 3 & 0.474 & 0.478 & 0.475 & 0.442 & 0.490 & \textbf{0.425} & -0.004 (0.003) & +0.048 (0.016) \\
LC & 0.19 & 0.17 & 3 & 0.306 & 0.311 & 0.291 & 0.312 & 0.275 & \textbf{0.261} & -0.005 (0.001) & +0.045 (0.047) \\
RU & 0.22 & 0.14 & 3 & 0.300 & 0.304 & 0.271 & 0.274 & 0.308 & \textbf{0.251} & -0.004 (0.001) & +0.050 (0.016) \\
\bottomrule\end{tabular}}\end{table}

\paragraph{The rule as a prediction.} The condition for a circulation win, density of decided pairs above $0.55$ and cyclic triads above $0.1$, was read from the first twelve directed networks analysed (the OU months, UU, Ubers, the $1695$ cutoff, Generations 3 and 4, Dota 2), where it agrees with the outcome on nine; on the twenty directed networks fitted afterwards (Generations 5 to 8, LC, PU, NU, RU, Monotype, National Dex in three months, mutual checks kept, chess, votes, e-mail at two sizes, blogs, two citation balls) it agrees on seventeen, the three exceptions being nominal $\SL$ wins of at most $0.002$, which are ties. The two estimators return $\hat\gamma$ at different vertical scales; since $v_{ij}$ contains the area term, this is weak identification along a ridge and not an invariance; the reported $\hat\gamma$ are at the scale of each fit, and fixing $\sigma_v=1$ makes $\gamma$ the log-odds circulation per unit of enclosed area in the fitted chart.

\paragraph{Standard benchmarks.} The density rule was derived on the counter networks; Supplement Table~\ref{tab:bench} applies it to three directed benchmark networks that a reader is likely to know, each restricted to its $150$ most active nodes: chess results among top players (KONECT), Wikipedia adminship votes and institutional e-mail (SNAP). Their densities of decided pairs are $0.14$, $0.35$ and $0.33$, all in or below the tie band, and on all three a ranking has the lowest held-out log-loss, by $0.006$, $0.020$ and $0.012$ over the $\Nil$ circulation model, with the $\SL$ model second on two. The chess network has the most cyclic triads of any network in this paper ($0.20$, near the $0.25$ of random directions), which shows that cyclic content alone does not produce a win: among top players the results of the few decided pairs are close to random, and a sparse near-random asymmetry is not identified as circulation from positions. Density of the counter relation is necessary and the cyclic fraction alone is not sufficient. The converse also holds: the two largest scientific competition tournaments available to us, the wood-decay fungi of Soliveres et al.\ (2018), are dense ($0.99$ of pairs decided among $31$ species) but hierarchical ($0.06$ cyclic triads), and the rankings win by $0.13$ (Supplement Table~\ref{tab:bench}). The condition for the circulation model is therefore a counter relation that is both dense, above about $0.55$ of pairs decided, and intransitive, above about $0.1$ cyclic triads; on the counter networks every one that meets both is won by a circulation model and none that fails either; outside them Dota 2 and world trade meet both and are won by rankings, and one dense construction of the OU network (density $0.91$ with rare species admitted) is also won by a ranking, so the condition is a screen with stated misses and not a law.

\paragraph{Posterior draws on the winning networks.} The comparisons of this section use Algorithm~\ref{alg:bbvi}. Algorithm~\ref{alg:mcmc} adapted to the directed likelihood (random-walk proposals for each position evaluated on the node's row and column, for $\alpha$ and for $\gamma$, conjugate steps for the scales), started at the variational solution, raises the training log-likelihood on every one of the five winning months (January: from $-3683$ to $-3486$ within $500$ sweeps) and stays there. Its posterior predictive has held-out log-loss below both the variational fit and the best ranking on the same split on all five: $0.473$, $0.434$, $0.508$, $0.476$ and $0.430$ against $0.528$, $0.481$, $0.528$, $0.530$ and $0.459$ for the best ranking, with direction accuracies $0.80$--$0.87$ (Supplement Table~\ref{tab:mcmc}). The win therefore rests on posterior draws. The posteriors of $\gamma$ have means $4.2$--$5.3$ with $95\%$ intervals bounded away from zero, at vertical scales near $0.3$, whereas the variational fits have $\hat\gamma\approx1.3$--$1.5$ at vertical scales near one: the two estimators lie on a ridge of the likelihood in $(\gamma,\sigma_v)$; the area term in $v_{ij}$ does not scale with the fibre coordinates, so this is weak identification and not an invariance, and the paper reports $\hat\gamma$ with its vertical scale. A sampler that evaluates the symmetric likelihood, as the sampler of Section~\ref{sec:estimation} does when applied without this adaptation, leaves the solution within a few hundred sweeps (Supplement~\ref{supp:realdir}).

\begin{table}[H]\centering\small
\caption{Posterior draws for the $\Nil$ circulation model on the five Generation 9 OU networks on which it wins (held-out split $0$): Metropolis-within-Gibbs with the directed likelihood, started at the variational solution, $1500$--$2500$ sweeps, draws from the last two thirds. Columns: posterior mean and $95\%$ interval of $\gamma$ (at the sampled vertical scale, about $0.3$; only $\gamma\sigma_v$ is identified), variational $\hat\gamma$ (at vertical scale near one), held-out log-loss / direction accuracy of the posterior predictive, of the variational fit, and of the best ranking on the same split.}\label{tab:mcmc}
\begin{tabular}{lcccccc c}\toprule
month & $N$ & sweeps & $\gamma$ posterior & $\hat\gamma$ (VI) & posterior predictive & VI & best ranking\\ \midrule
Jan & 93 & 2500 & 5.12 [4.65, 5.75] & 1.31 & 0.473 / 0.84 & 0.510 & 0.528 \\
Feb & 84 & 1500 & 5.19 [4.43, 5.56] & 1.48 & 0.434 / 0.86 & 0.470 & 0.481 \\
Mar & 83 & 1500 & 4.15 [3.37, 5.02] & 1.46 & 0.508 / 0.82 & 0.527 & 0.528 \\
Apr & 87 & 1500 & 5.25 [4.55, 5.89] & 1.40 & 0.476 / 0.80 & 0.499 & 0.530 \\
Jun & 91 & 1500 & 4.51 [4.13, 4.98] & 1.29 & 0.430 / 0.87 & 0.446 & 0.459 \\
\bottomrule\end{tabular}\end{table}

\paragraph{Sensitivity to the construction.} Supplement Table~\ref{tab:sens} varies the two choices that define a counter network, the threshold on the counter score and the usage cutoff, on the January OU statistics, with three held-out splits per construction. The result depends on the density of the counter relation that the two choices produce and not on the choices themselves. At the threshold used in the paper the paired advantage of the $\Nil$ circulation model over the better ranking goes from $+0.001$ at density $0.53$ to $-0.015$ at $0.66$ and $-0.034$ at $0.73$; at the loose threshold, where almost every pair is decided and many are decided in both directions, the rankings fail ($0.61$--$0.82$ against $0.52$--$0.54$, advantages of $-0.07$ and $-0.22$) except in the construction that admits $119$ species with rare ones; at the strict threshold, where fewer than a quarter of the pairs are decided, the rankings win by $0.003$--$0.007$. The win condition is therefore a property of the data, a dense counter relation, and it is the same condition that separated the twenty-two networks of Table~\ref{tab:gamesall}.

\begin{table}[H]\centering\small
\caption{Sensitivity of the January OU result to the construction of the counter network: threshold on the counter score $p-4\sigma$ and cutoff on species usage (the paper uses $0.5$ and $1\%$; at threshold $0.3$ pairs decided in both directions are kept). Columns: species; decided pairs and their share of all pairs (density); cyclic triads; held-out splits; held-out log-loss of the $\Nil$ circulation model and of the rankings on $\mathbb{H}^2\times\mathbb{R}$ and $\mathbb{H}^3$; paired difference of $\Nil$ to the better ranking, mean (s.d.) over splits.}\label{tab:sens}
\begin{tabular}{llccccccccc}\toprule
threshold & usage & $N$ & decided & density & cyclic & splits & $\Nil$ circ. & $\mathbb{H}^2\!\times\!\mathbb{R}$ & $\mathbb{H}^3$ & $\Nil-$best ranking\\ \midrule
0.3 & 0.5\% & 119 & 6374 & 0.91 & 0.12 & 3 & 0.560 & 0.547 & 0.551 & +0.015 (0.007) \\
0.3 & 1\% & 93 & 4229 & 0.99 & 0.13 & 3 & 0.520 & 0.598 & 0.654 & -0.078 (0.012) \\
0.3 & 2\% & 68 & 2277 & 1.00 & 0.13 & 3 & 0.511 & 0.722 & 0.791 & -0.212 (0.041) \\
0.5 & 0.5\% & 119 & 3755 & 0.53 & 0.11 & 3 & 0.458 & 0.455 & 0.452 & +0.006 (0.008) \\
0.5 & 1\% & 93 & 2809 & 0.66 & 0.12 & 3 & 0.506 & 0.525 & 0.526 & -0.018 (0.007) \\
0.5 & 2\% & 68 & 1669 & 0.73 & 0.14 & 3 & 0.526 & 0.564 & 0.563 & -0.037 (0.006) \\
0.7 & 0.5\% & 114 & 950 & 0.15 & 0.06 & 3 & 0.215 & 0.210 & 0.209 & +0.007 (0.001) \\
0.7 & 1\% & 92 & 765 & 0.18 & 0.05 & 3 & 0.237 & 0.231 & 0.231 & +0.007 (0.002) \\
0.7 & 2\% & 67 & 493 & 0.22 & 0.06 & 3 & 0.301 & 0.301 & 0.298 & +0.003 (0.011) \\
\bottomrule\end{tabular}\end{table}

\subsection{What each of the eight geometries generates}\label{supp:generates}

A latent space model enters the tie probabilities only through the distance, so the networks a geometry generates are determined by two features of its metric: the growth of volume with radius, which sets the heterogeneity of degrees and the tree-likeness of the network, and the integrability of the third coordinate, which sets whether directed asymmetries can circulate. Table~\ref{tab:generates} lists the eight Thurston geometries with these two features and the network structure each produces.

\begin{table}[H]\centering\small
\caption{The eight model geometries, their sectional curvatures in a plane containing the fibre direction and in the horizontal plane (Milnor frames, unit normalisation), and the network structure each generates as a latent space.}\label{tab:generates}
\begin{tabular}{lccp{7.2cm}}\toprule
geometry & vertical / horizontal curvature & third coordinate & networks generated\\ \midrule
$\R^3$ & $0$ / $0$ & integrable & homogeneous degrees, moderate clustering, no hierarchy; spatial and grid-like networks\\
$\mathbb{S}^3$ & $+1$ / $+1$ & integrable & bounded diameter, every pair within reach; dense homogeneous networks\\
$\Hb^3$ & $-1$ / $-1$ & integrable & exponential volume growth; heavy-tailed degrees, strong clustering, tree-like hierarchies\\
$\mathbb{S}^2\times\R$ & $0$ / $+1$ & integrable & a bounded similarity sphere with an independent linear coordinate; dense layers ordered along a line\\
$\Hb^2\times\R$ & $0$ / $-1$ & integrable & a hierarchy with an independent linear coordinate such as time or level\\
$\Nil$ & $+\tfrac14$ / $-\tfrac34$ & shifts by the Euclidean area of loops & a flat similarity plane with a third coordinate that circulates: undirected ties as in $\Hb^2\times\R$ with a quarter unit of positive curvature; directed asymmetries that circulate around triangles by enclosed area\\
$\SL$ & $+\tfrac14$ / $-\tfrac74$ & shifts by the hyperbolic area of loops & a hierarchy whose third coordinate circulates: heterogeneous degrees together with intransitive asymmetries\\
$\Sol$ & $-1$ / $+1$ & integrable & two hierarchies of opposite orientation, one direction expanding and the other contracting with height; no circulation\\
\bottomrule\end{tabular}\end{table}

The consequence for inference is that of Sections~\ref{sec:model} and~\ref{sec:directed}: for undirected ties the three remaining geometries (the two twisted bundles and $\Sol$) differ from the closest product or constant-curvature space by the $\varepsilon^6$ deficit of Theorem~\ref{thm:smallscale}, so at the scale of a real network they generate the same undirected structure as $\Hb^2\times\R$ ($\Nil$, $\SL$) or $\Hb^3$ ($\Sol$); for directed ties $\Nil$ and $\SL$ generate circulating asymmetries at first order (Proposition~\ref{prop:directed}) and $\Sol$ does not.

\paragraph{Undirected benchmarks.} Table~\ref{tab:benchund} fits the seven geometries to six undirected benchmark networks chosen for their structure: the thesaurus, with homogeneous degrees and near-flat Forman curvature, is the flat case; the citation, social and page networks, with heterogeneous degrees and strong clustering, are the hierarchical case; the Twitch network, with one node adjacent to almost all others, is the case in which one direction expands while the other contracts. The winners are the constant-curvature spaces that the structure predicts, $\R^3$ on the thesaurus and $\Hb^3$ on the four hierarchical networks, and the twisted geometries are never distinguishable beyond noise: $\Nil$ is within $0.001$ of $\R^3$ on the thesaurus, $\SL$ is second on the citation and social networks within $0.002$ of $\Hb^3$, and $\Sol$ is nominally first on Twitch by $0.0001$. This is the $\varepsilon^6$ result on data.

\begin{table}[H]\centering\scriptsize
\caption{Undirected benchmark networks, $400$-node breadth-first balls (sources: NetworkX example data; Planetoid; the karateclub collection): density, coefficient of variation of the degrees, mean clustering coefficient, mean augmented Forman--Ricci curvature of the edges, and held-out log-loss on $10\%$ of the pairs under seven geometries (unit slope, unanchored BBVI, one split); best in bold.}\label{tab:benchund}
\resizebox{\textwidth}{!}{\begin{tabular}{lccccccccccc}\toprule
network & density & degree CV & clustering & Forman & $\Nil$ & $\Sol$ & $\SL$ & $\R^3$ & $\Hb^3$ & $\mathbb{S}^3$ & $\Hb^2\!\times\!\R$\\ \midrule
Roget's Thesaurus cross-references (Knuth 1993) & 0.017 & 0.59 & 0.20 & -10 & 0.0770 & 0.0767 & 0.0769 & \textbf{0.0761} & 0.0763 & 0.0844 & 0.0763 \\
CiteSeer citations (Planetoid) & 0.013 & 1.32 & 0.21 & -20 & 0.0580 & 0.0583 & 0.0573 & 0.0587 & \textbf{0.0570} & 0.0652 & 0.0574 \\
PubMed citations (Planetoid) & 0.011 & 1.25 & 0.12 & -16 & 0.0486 & 0.0480 & 0.0479 & 0.0485 & \textbf{0.0475} & 0.0526 & 0.0477 \\
LastFM Asia social network & 0.023 & 1.29 & 0.37 & -29 & 0.0760 & 0.0768 & 0.0748 & 0.0756 & \textbf{0.0731} & 0.0920 & 0.0747 \\
Facebook page--page network & 0.155 & 0.77 & 0.49 & -93 & 0.3283 & 0.3210 & 0.3179 & 0.3288 & \textbf{0.3050} & 0.3764 & 0.3156 \\
Twitch user network & 0.016 & 3.09 & 0.49 & -125 & 0.0723 & \textbf{0.0717} & 0.0739 & 0.0727 & 0.0718 & 0.0780 & 0.0738 \\
\bottomrule\end{tabular}}\end{table}

\paragraph{Directed counter networks by geometry.} Among the counter networks of Table~\ref{tab:gamesall} the base geometry chosen by the data follows the heterogeneity of the counter relation, as in the undirected case. Where $\Nil$ wins (the OU tier, January to June) the degrees of the counter relation are homogeneous (coefficient of variation of the total degree $0.17$--$0.21$, maximum $1.3$--$1.4$ times the mean): a flat similarity base with circulation. Where $\SL$ is first with three splits (UU, Generations 3 and 7; National Dex in two of three months) the degrees are heterogeneous (coefficient of variation $0.32$--$0.54$, maximum $1.6$--$2.5$ times the mean): a hierarchical base with circulation. Two further months of the National Dex network (Table~\ref{tab:bench}) repeat this: $\SL$ is first in April and July and second by $0.001$ in January, always within $0.002$ of the $\Hb^3$ ranking; and on these hierarchical networks a free rank-two antisymmetric term on $\Hb^3$ (Table~\ref{tab:controls}) is better than the $\SL$ circulation model by $0.009$--$0.027$, so on a hierarchical base the intransitivity is real but is not described by enclosed hyperbolic area. No directed network is won by $\Sol$, which has no circulation term: on the two citation balls, whose time axis couples an expanding out-degree hierarchy to a contracting in-degree hierarchy (the structure of $\Sol$), the $\Hb^3$ and $\Hb^2\times\R$ rankings win and the $\Sol$ ranking is third by $0.006$--$0.007$; on the political blog network, sparse and transitive, the product ranking wins by $0.021$ over $\Nil$; on the institutional e-mail network the $\Hb^3$ ranking wins at both $150$ and $250$ nodes, with the $\SL$ circulation model second and its gap to $\Hb^3$ shrinking from $0.013$ to $0.006$ as the core grows (Table~\ref{tab:bench}), so an advantage of $\SL$ on the full $1005$-node network, which a fit with sender and receiver effects and a free slope reports at $0.0007$, is not excluded by our fits but is not reproduced at the sizes we fit; and on the counter networks the $\Sol$ ranking is nominally first on May OU and Generation 4 by at most $0.001$, which are ties. This is the assignment the generative properties predict: $\Nil$ for circulation over a homogeneous population, $\SL$ for circulation over a hierarchy, and $\Sol$ for neither.

\subsection{Undirected connectomes and the annealed estimator}

\paragraph{The annealed estimator of Celi\'nska-Kopczy\'nska and Kopczy\'nski.} \citet{celinska2024thurston} embed connectomes in the Thurston geometries by simulated annealing of the log-likelihood with a learned $(R,T)$, and find $\Sol$ competitive with Hyperbolic geometry. Table~\ref{tab:anneal} runs our implementation of that estimator on the karate club and \emph{Les Mis\'erables} with the same held-out splits. Two things happen. The point estimate over-fits: the fitted temperature collapses ($\hat\beta\approx4$--$7$, i.e.\ $T\approx0.15$--$0.24$) and the held-out log-loss is $0.7$--$1.5$ on the karate club and $0.28$--$0.35$ on \emph{Les Mis\'erables}, against $0.30$--$0.32$ and $0.15$--$0.16$ for the Bayesian unit-slope fits on the same splits (last column), the separation phenomenon of Remark~\ref{rem:separation} with nothing to restrain it. And within the annealed fits $\Sol$ ranks \annealSolRank\ of six on \emph{Les Mis\'erables} (\annealSolLL\ against \annealHLL\ for $\Hb^3$), ahead of $\Nil$, $\SL$, $\Hb^2\times\R$ and $\R^3$, which echoes their qualitative ranking on different data with a different optimisation budget. On the Cora ball, where the temperature does not collapse ($\hat\beta=\coraAnnealBetaMin$--$\coraAnnealBetaMax$), $\Hb^3$ is best (\coraAnnealH, close to its Bayesian value) and $\Sol$ ranks \coraAnnealSolRank\ of six (\coraAnnealSol). The two results are compatible: which geometry looks best depends on the estimator and on the temperature, and at a collapsed temperature the exponential anisotropy of $\Sol$ is rewarded in a way that the Bayesian unit-slope fit does not see. This does not isolate temperature as the cause (estimator, prior and scale treatment change together). A crossed comparison on their connectome collection remains to be done.

On the January OU counter network of the main text, the symmetric sampler of Section~\ref{sec:estimation} with a Metropolis step for $\gamma$ (which evaluates the symmetric likelihood for the positions) started at the variational solution ends at $\hat\gamma\approx0$ and held-out log-loss $0.83$--$0.90$; the sampler that evaluates the directed likelihood (Section~\ref{sec:exp-counter}) improves on the variational solution. The Dota 2 matchups (Table~\ref{tab:games}) are decided by about a hundred recorded games per pair and are won by the rankings by $0.006$--$0.011$.

\section{Geometric details}\label{app:geometry}

This appendix records the formulas behind Section~\ref{sec:background}: the translations of the twisted bundles, the geodesic equations, the exponential charts with their Jacobians, and the two Normal-like families.

\paragraph{Translations.} In $\Nil$ the map moving $\eb$ to $p$ is left multiplication, $T_p(q)=p\cdot q$. In $\SL$ we use the lift of the M\"obius translation of the disc. Writing $w_p\oplus w=(w_p+w)/(1+\bar w_p w)$ for M\"obius addition \citep{ungar2008analytic} and $p=(w_p,\zeta_p)$,
\begin{equation}\label{eq:sl-translate}
T_p(w,\zeta)=\Big(w_p\oplus w,\ \ \zeta+\zeta_p+2\arg(1+\bar w_p w)\Big),\qquad
T_p^{-1}(w,\zeta)=\Big((-w_p)\oplus w,\ \ \zeta-\zeta_p+2\arg(1-\bar w_p w)\Big),
\end{equation}
with the principal branch of $\arg$ (well defined since $\mathrm{Re}(1\pm\bar w_pw)>0$). Each $T_p$ is an isometry of $\SL$ with $T_p(\eb)=p$, $\eb=(0,0)$. The family $\{T_p\}$ is not a subgroup, $T_p\circ T_q=T_{T_p(q)}\circ R$ for a rotation $R$ of the base (a \emph{gyration}), but only the properties $T_p(\eb)=p$ and $T_p\in\Isom$ are used below. In both geometries we write $p\star q:=T_p(q)$ and $p^{-1}\star q:=T_p^{-1}(q)$ for the ``relative position'' of $q$ seen from $p$. The distance is then $d_\M(p,q)=d_\M(\eb,\,p^{-1}\star q)$.

\paragraph{Geodesics and distance in the twisted bundles.} Since the metric does not depend on $\zeta$, the vertical momentum $c=\langle\dot\gamma,V\rangle$ is conserved along a geodesic $\gamma$, and for a unit-speed geodesic the horizontal speed is $a=\sqrt{1-c^2}$ (the symbols $a$ and $c$ are used with this meaning only in the geodesic formulas of this appendix, Supplement~\ref{app:geodesics} and the proof of Proposition~\ref{prop:nildist}). The projected curve $\pi\circ\gamma$ solves the ``magnetic'' equation $\nabla_{\dot\beta}\dot\beta=c\,J\dot\beta$ ($J$ = rotation by $\pi/2$), i.e.\ it has constant geodesic curvature $|c|/\sqrt{1-c^2}$ in $B_K$, and $\zeta(t)=ct+\int_0^t A_K(\dot\beta)$. In $\Nil$ the projected curves are circles and the geodesics from $\eb$ with initial data $(a\cos\phi,a\sin\phi,c)$, $c\ne0$, are \citep{marenich1997geodesics}
\begin{equation}\label{eq:nil-geod}
x(t)=\tfrac{a}{c}\big[\sin(ct+\phi)-\sin\phi\big],\quad
y(t)=-\tfrac{a}{c}\big[\cos(ct+\phi)-\cos\phi\big],\quad
z(t)=ct+\tfrac{a^2}{2c^2}\,(ct-\sin ct),
\end{equation}
so that the horizontal displacement is $r(t)=\frac{2a}{|c|}\,|\sin(ct/2)|$. For $c=0$ they are horizontal straight lines. By the symmetries of Lemma~\ref{lem:stab}, $d_\M(\eb,(w,\zeta))=D_K(\rho,|\zeta|)$ depends only on $\rho=d_{B_K}(\bo,w)$ and $|\zeta|$. For $\Nil$ this reduction gives the semi-closed form of Proposition~\ref{prop:nildist}.

In $\SL$ the projected curves are Hyperbolic circles, horocycles or hypercycles according as $|c|/\sqrt{1-c^2}$ is $>1$, $=1$ or $<1$, and the same computation gives the exact vertical profile
\begin{equation}\label{eq:sl-axis}
D_{-1}(0,\zeta)=\zeta\ \ (\zeta\le2\pi),\qquad D_{-1}(0,\zeta)=2\pi\sqrt{\tfrac12\big(\tfrac{\zeta}{2\pi}+1\big)^2-1}\ \ (\zeta\ge2\pi),
\end{equation}
so that $D_{-1}(0,\zeta)\sim\zeta/\sqrt2$: over a Hyperbolic base the vertical direction is compressed only by a constant factor, because Hyperbolic circles enclose area comparable to their length. In general $D_K(\rho,0)=\rho$ (horizontal geodesics of the base lift to geodesics). We compute $D_{-1}$ by shooting the magnetic geodesics numerically in the hyperboloid model and tabulating $D_{-1}(\rho,|\zeta|)$ on a fine grid. The same procedure applied to $K=0$ reproduces \eqref{eq:nil-dist} (Supplement~\ref{app:geodesics}). The volume form is $\dvol_{\M_K}=\dvol_{B_K}\wedge d\zeta$.

\paragraph{Geodesics of $\Sol$.} Off these planes no closed form is available. The geodesic equations reduce to a one-dimensional Hamiltonian system for $z(t)$ with conserved horizontal momenta $(p_x,p_y)$,
\begin{equation}\label{eq:sol-geod}
\dot x=e^{-2z}p_x,\qquad \dot y=e^{2z}p_y,\qquad \ddot z=e^{-2z}p_x^2-e^{2z}p_y^2,
\end{equation}
whose solutions are elliptic functions \citep{troyanov1998horizon,bolcskei2007frenet}. We integrate \eqref{eq:sol-geod} numerically from a dense set of initial directions in the fundamental cone of $D_4$ and tabulate $d_{\Sol}(\eb,\cdot)$ on a three-dimensional grid over the fundamental domain $\{x,y,z\ge0\}$ of \eqref{eq:D4}, Supplement~\ref{app:geodesics} describes the construction and its validation (the table agrees with \eqref{eq:sol-planes} and with an independent boundary-value solver to about $10^{-3}$).

\paragraph{Normal-like distributions.}

\paragraph{Riemannian Normal.} The maximum-entropy Normal of \citet{pennec2006intrinsic},
\begin{equation}\label{eq:riem-normal}
\mathcal{N}_\M(z\mid\mu,\sigma^2)=\frac{1}{Z_\M(\sigma)}\exp\Big(-\frac{d_\M(\mu,z)^2}{2\sigma^2}\Big),\qquad
Z_\M(\sigma)=\int_\M e^{-d_\M(\eb,z)^2/2\sigma^2}\,\dvol(z),
\end{equation}
is unimodal with mode $\mu$ and, by homogeneity, its normalising constant does not depend on $\mu$. Using the reductions of Sections~\ref{sec:background}--\ref{sec:background},
\[
Z_{\M_K}(\sigma)=4\pi\int_0^\infty\!\!\int_0^\infty e^{-D_K(\rho,\zeta)^2/2\sigma^2}S_K(\rho)\,d\zeta\,d\rho,\qquad
Z_{\Sol}(\sigma)=8\int_{[0,\infty)^3}e^{-d_{\Sol}(\eb,(x,y,z))^2/2\sigma^2}\,dx\,dy\,dz,
\]
which are evaluated by quadrature on the distance tables. Because these targets have exponential (in $\Nil$) or logarithmic (in $\Sol$) tails in the coordinates, they cannot be sampled by rejection from a Gaussian-tailed wrapped proposal. A heavy-tailed proposal or a Markov chain is required. The Riemannian Normal is not used in our experiments.

\paragraph{Left-translated wrapped Normal.} Each geometry admits a global chart $\chi:\R^3\to\M$ with $\chi(0)=\eb$ whose Jacobian with respect to $\dvol$ is explicit:
\begin{equation}\label{eq:charts}
\chi_{\Nil}(\delta)=\delta,\qquad
\chi_{\Sol}(\delta)=\Big(\delta_1\tfrac{1-e^{-\delta_3}}{\delta_3},\ \delta_2\tfrac{e^{\delta_3}-1}{\delta_3},\ \delta_3\Big),\qquad
\chi_{\SL}(\delta)=\Big(\Exp^{\Hb^2}_{\bo}(\delta_1,\delta_2),\ \delta_3\Big),
\end{equation}
where $\Exp_{\bo}^{\Hb^2}(v)=\tanh(|v|/2)\,v/|v|$ is the exponential map of the disc. For $\Nil$, $\chi$ is the identity in exponential coordinates. For $\Sol$ it is the group exponential map (a diffeomorphism, since $\Sol$ is exponential). For $\SL$ it is the Riemannian exponential of the base times the identity on the fibre. The log-Jacobians are
\begin{equation}\label{eq:jac}
\log J^\chi_{\Nil}(\delta)=0,\qquad \log J^\chi_{\Sol}(\delta)=2\log\frac{\sinh(\delta_3/2)}{\delta_3/2},\qquad \log J^\chi_{\SL}(\delta)=\log\frac{\sinh|\delta_h|}{|\delta_h|},\quad \delta_h=(\delta_1,\delta_2).
\end{equation}

The left-translated wrapped Normal of Definition~\ref{def:lwn} is the wrapped construction of \citet{nagano2019wrapped} and \citet{falorsi2019reparameterizing} adapted to these charts. Its equivariance under the isometries with $\Sigma=\diag(\sigma_h^2,\sigma_h^2,\sigma_v^2)$ is Lemma~\ref{lem:equivariance} of Supplement~\ref{app:proofs}.

\section{Geodesics and distance tables}\label{app:geodesics}

\paragraph{Twisted bundles.} With $c$ the conserved vertical momentum and $a=\sqrt{1-c^2}$, the projection $\beta$ of a unit-speed geodesic satisfies $\nabla_{\dot\beta}\dot\beta=c\,J\dot\beta$, $|\dot\beta|=a$, and $\dot\zeta=c+A_K(\dot\beta)$. For $K=0$ this is the system $\ddot x=-c\dot y$, $\ddot y=c\dot x$, $\dot\zeta=c+\tfrac12(x\dot y-y\dot x)$, whose solution is \eqref{eq:nil-geod}. For $K=-1$ we integrate in the hyperboloid model $\{X\in\R^{2,1}:\langle X,X\rangle=-1\}$, where $J\dot X=\eta(X\times\dot X)$ with $\eta=\diag(1,1,-1)$ and $\times$ the Euclidean cross product, so that
\[
\ddot X=a^2X+c\,\eta(X\times\dot X),\qquad \dot\zeta=c+\frac{X_1\dot X_2-X_2\dot X_1}{X_3+1},\qquad X(0)=(0,0,1),\ \dot X(0)=(a,0,0),
\]
and $\rho=\cosh^{-1}X_3$. The expression for $\dot\zeta$ is $A_{-1}=(\cosh\rho-1)d\theta$ written in the hyperboloid coordinates. The system is integrated by a fourth-order Runge--Kutta scheme (step $0.005$, horizon $27$) for $6000$ values of $c\in[0,1]$ (by the reflection $\sigma$ it suffices to take $c\ge0$ and to record $|\zeta|$).

\paragraph{Vertical cut points.} A geodesic with $c\ne0$ whose projection is a closed circle returns to the fibre of $\eb$ after one turn. For $K=0$ the return time is $t=2\pi/|c|$ at height $\zeta=\pi(1+c^{-2})$, i.e.\ $t=2\sqrt{\pi(\zeta-\pi)}$, which is smaller than $\zeta$ exactly when $\zeta>2\pi$. This gives \eqref{eq:nil-axis}. For $K=-1$ the projected circle has Hyperbolic radius $R$ with $\coth R=|c|/a$, circumference $2\pi\sinh R$ and area $2\pi(\cosh R-1)$, so the return time is $t=2\pi\sqrt{\cosh 2R}$ at height $\zeta=2\pi(2\cosh R-1)$. Eliminating $R$ gives the expression \eqref{eq:sl-axis}, which the isoperimetric argument of Supplement~\ref{app:proofs} shows to be the distance: the vertical geodesic stops minimising at $2\pi$ in both geometries.

\paragraph{Sol.} The geodesic equations \eqref{eq:sol-geod} are integrated by Runge--Kutta (step $0.01$, horizon $11$) from $4\times10^5$ unit initial velocities in the fundamental cone $\{p_x,p_y,v_z\ge0\}$ of $D_4$. Half of the directions form a Fibonacci lattice on the sphere and half have log-uniformly distributed components on $[10^{-9},1]$, which is needed because geodesics leaving the coordinate planes and the vertical axis separate from them exponentially fast, so that uniformly distributed directions never resolve the regions near the planes and the axis at moderate heights.

\paragraph{Tables.} Each point $\gamma(t)$ of each computed geodesic is an upper bound $d(\eb,\gamma(t))\le t$. We map it into the fundamental domain of the stabiliser (the half-plane $(\rho,|\zeta|)$ for the bundles, the octant $\{x,y,z\ge0\}$ for $\Sol$), locate the nearest grid node $n$ and record $t+\ell(\gamma(t),n)$, where $\ell$ is the length of the straight coordinate segment measured in the metric at $n$. By the triangle inequality the minimum of these values over all samples is an upper bound for $d(\eb,n)$ that converges to it as the sampling becomes dense. Nodes not reached by any geodesic are filled by relaxation on the grid graph (a further upper bound), and the two bounds are minimised. The grids are $(\rho,\zeta)\in[0,10]\times[0,24]$ with spacing $0.02$ for the bundles, and $(\sinh^{-1}x,\sinh^{-1}y,z)\in[0,6]^2\times[0,6.5]$ with spacing $0.04$ for $\Sol$, where the $\sinh^{-1}$ scaling reflects the logarithmic growth of \eqref{eq:sol-planes}. Look-ups use bilinear (trilinear) interpolation. In $\Sol$ we additionally use the identity $d(\eb,g)=d(\eb,g^{-1})$ (a consequence of left invariance) and evaluate the table at both $g$ and $g^{-1}$, taking the minimum. The far tail of one representation is the near region of the other.

\paragraph{Validation.} The $K=0$ table agrees with Proposition~\ref{prop:nildist} to a maximum absolute error of $0.018$ (mean $0.001$) over $2\times10^4$ random points with $\rho\le8$, $|\zeta|\le15$. The largest errors are at the conical singularity of $D_0$ on the axis beyond $2\pi$. The $K=-1$ table reproduces $D_{-1}(\rho,0)=\rho$ and \eqref{eq:sl-axis} to within $0.004$, and agrees with a boundary-value solver (least-squares shooting over $(c,t)$ with multiple starts) to within $0.001$ on generic points. The $\Sol$ table reproduces \eqref{eq:sol-planes} on both planes and on the axis to within $0.012$ (typically $0.003$), and agrees with an independent boundary-value solver on $18$ random points with coordinates of spread $1.2$--$1.5$ with mean absolute error $0.0007$ and maximum $0.004$. A boundary-value solution is accepted as minimising when its length agrees with the table to $0.02$. This identifies the branch in all tested cases but is not a proof of minimality. All inferences in the paper are for the tabulated surrogate distance. A chain run on it is exact for the surrogate model, not for the continuous geometry. Minimality of the branch $s^\ast\in(0,2\pi)$ in Proposition~\ref{prop:nildist} was checked by shooting $2\times10^5$ random geodesics of $\Nil$ with $t\le20$: no geodesic reached its endpoint in less than the time given by \eqref{eq:nil-dist} (maximum violation $2\times10^{-9}$), and the shortest shot geodesic into a neighbourhood of $50$ random targets always matched \eqref{eq:nil-dist}.

\paragraph{Cost.} Building the three tables takes about four minutes in total on one core. Thereafter a distance evaluation costs one or two interpolations, and a full $N\times N$ distance matrix for $N=77$ takes about a millisecond.

\section{Proofs}\label{app:proofs}

\begin{proof}[Proof of Lemma~\ref{lem:stab}]
$R_\varphi$ preserves $ds^2_{B_K}$ and $A_K=F_K(\rho)\,d\theta$, hence \eqref{eq:bundle-metric}, $\refl$ preserves $ds^2_{B_K}$, satisfies $\refl^*A_K=-A_K$ (it maps $\theta\mapsto-\theta$) and $\zeta\mapsto-\zeta$, so $\refl^*(d\zeta-A_K)=-(d\zeta-A_K)$ and the metric is preserved. Conversely let $h\in\Stab(\eb)$. By \citet[\S4]{scott1983geometries} every isometry of $\M_K$ preserves the fibration $\pi$ and maps $V$ to $\pm V$. Hence $h$ descends to an isometry $\bar h$ of $B_K$ fixing $\bo$, so $\bar h$ is a rotation $r_\varphi$ or a rotation composed with the conjugation $w\mapsto\bar w$. Let $h_0\in\{R_\varphi,\refl\circ R_\varphi\}$ have the same descent, and put $h'=h_0^{-1}\circ h$. Then $h'\in\Stab(\eb)$ descends to the identity, so $h'(w,\zeta)=(w,\epsilon\zeta+f(w))$ with $\epsilon=\pm1$ (constant by continuity) and $f$ smooth, $f(\bo)=0$. Since $h'$ is an isometry mapping $V$ to $\epsilon V$ it preserves the horizontal distribution $V^\perp=\ker(d\zeta-A_K)$, so $h'^*(d\zeta-A_K)=\epsilon\,d\zeta+df-A_K$ is a multiple of $d\zeta-A_K$. Comparing the $d\zeta$ components the multiple is $\epsilon$, whence $df=(1-\epsilon)A_K$. If $\epsilon=1$ then $f$ is constant, $f\equiv0$ and $h'=\mathrm{id}$. If $\epsilon=-1$ then $df=2A_K$, which is impossible because $d(df)=0$ while $2\,dA_K=2\,\dvol_{B_K}\ne0$. Hence $h=h_0$. Finally, for $g\in\Isom(\M_K)$, $h=T_{g(\eb)}^{-1}\circ g$ fixes $\eb$, so $g=T_{g(\eb)}\circ h$, and $p=g(\eb)$ is determined by $g$.
\end{proof}

\begin{proof}[Proof of Theorem~\ref{thm:gauge-bundle}]
(i) By construction $\tilde z_{i_1}=\eb$ and $h_Z$ fixes $\eb$, so $(\Gamma_I Z)_{i_1}=\eb$. Write $\tilde z_{i_2}=(\tilde w,\tilde\zeta)$ with $\tilde w=|\tilde w|e^{i\varphi}$. Then $R_{-\varphi}\tilde z_{i_2}=(|\tilde w|,\tilde\zeta)$, and applying $\refl$ when $\tilde\zeta<0$ yields $(|\tilde w|,|\tilde\zeta|)$, whose base projection lies on the positive real axis at geodesic distance $\rho=|\tilde w|$ ($\Nil$) or $2\operatorname{artanh}|\tilde w|$ ($\SL$), so that it lies in the canonical set. (ii) Let $g\in\Isom(\M_K)$ with $g\cdot Z\in\mathcal{S}_I$. Since $g z_{i_1}=\eb$, Lemma~\ref{lem:stab} gives $g=h\circ T_{z_{i_1}}^{-1}$ with $h=g\circ T_{z_{i_1}}\in\Stab(\eb)$, and $h\tilde z_{i_2}$ must lie in the canonical set. If $h=R_\psi$ then $h\tilde z_{i_2}=(e^{i\psi}\tilde w,\tilde\zeta)$ has positive real base coordinate iff $\psi\equiv-\varphi$, and positive fibre coordinate iff $\tilde\zeta>0$. If $h=\refl R_\psi$ then $h\tilde z_{i_2}=(e^{-i\psi}\overline{\tilde w},-\tilde\zeta)$, which is canonical iff $\psi\equiv-\varphi$ and $\tilde\zeta<0$. In either case $h=h_Z$, so $g=g_Z$. Consequently $\Isom\cdot Z\cap\mathcal{S}_I=\{\Gamma_I(Z)\}$, and since $g\cdot Z$ and $Z$ have the same orbit, $\Gamma_I(g\cdot Z)=\Gamma_I(Z)$. (iii) $\Omega_I$ is the complement of the closed null set $\{\tilde w=0\}\cup\{\tilde\zeta=0\}$. On $\Omega_I$ the rotation $e^{-i\varphi}=\overline{\tilde w}/|\tilde w|$ and the sign $\epsilon$ depend continuously on $Z$, and $T_{z_{i_1}}^{-1}$ is jointly continuous, so $\Gamma_I$ is continuous. $\mathcal{S}_I$ is defined by three equalities on $z_{i_1}$ and one equality ($\mathrm{Im}\,w_{i_2}=0$) with two open inequalities on $z_{i_2}$, hence is a submanifold of dimension $3N-4$.
\end{proof}

\begin{proof}[Proof of Theorem~\ref{thm:gauge-sol}]
By \citet[\S4]{scott1983geometries}, $\Isom(\Sol)=\Sol\rtimes D_4$ with $D_4=\Stab(\eb)$ the group \eqref{eq:D4}. In particular every $g$ with $g z_{i_1}=\eb$ has the form $h\circ T_{z_{i_1}}^{-1}$ with $h\in D_4$. The $D_4$-orbit of a point $(x,y,z)$ with $xyz\ne0$ is $\{(\pm x,\pm y,z),(\pm y,\pm x,-z)\}$, which meets the open orthant $F$ in exactly one point: the sign flips $\sigma_x,\sigma_y$ are needed to make the first two coordinates positive, and $\varsigma$ is applied iff $z<0$. This proves existence and uniqueness of $h_Z$, and the arguments of Theorem~\ref{thm:gauge-bundle} give (i)--(iii), $h_Z$ is locally constant on $\Omega_I$, and $\mathcal{S}_I$ is defined by three equalities and three open inequalities, so $\dim\mathcal{S}_I=3N-3$.
\end{proof}

\begin{proof}[Proof of Proposition~\ref{prop:nildist}]
Let $\gamma$ be the geodesic \eqref{eq:nil-geod} with $c\ne0$ and put $s=ct$; by the symmetry $\refl$ we may take $c>0$, so that $s>0$ and $z\ge0$. Its endpoint has $r=\frac{2a}{|c|}|\sin(s/2)|$ and $z=s+\frac{a^2}{2c^2}(s-\sin s)$ with $a^2=1-c^2$. From the first relation, $c^2=4\sin^2(s/2)/(r^2+4\sin^2(s/2))$, hence $\frac{a^2}{2c^2}=\frac{r^2}{8\sin^2(s/2)}$ and $z=\Phi_r(s)$, while $t=s/c$ gives $t^2=s^2+r^2\big(\tfrac{s}{2\sin(s/2)}\big)^2$. For the monotonicity of $\Phi_r$ note that $\frac{d}{ds}\frac{s-\sin s}{\sin^2(s/2)}$ has the sign of $2\sin^3(s/2)-(s-\sin s)\cos(s/2)$, which is positive on $[\pi,2\pi)$ trivially and on $(0,\pi)$ is equivalent, with $u=s/2$, to $\sin^3u>(2u-2\sin u\cos u)\cos u/2$, i.e.\ to $\tan u>u$. Hence $\Phi_r$ increases strictly from $0$ to $\infty$ on $(0,2\pi)$ and $s^\ast$ is unique. Geodesics with $c=0$ are horizontal and reach only $\zeta=0$, where $d=r$ because the base distance is a lower bound attained by the horizontal lift.

\emph{Minimality for $r>0$.} Fix $c\ne0$ with $a=\sqrt{1-c^2}>0$ and consider the family of geodesics with initial data $(a\cos\phi,a\sin\phi,c)$, $\phi\in[0,2\pi)$. By \eqref{eq:nil-geod} they all pass through the same point $(0,0,z(2\pi/|c|))$ at time $t_c=2\pi/|c|$, so the Jacobi field $\partial_\phi\gamma_\phi$ vanishes at $t=0$ and at $t=t_c$ without vanishing identically ($a>0$), and $\gamma_\phi(t_c)$ is conjugate to $\eb$ along $\gamma_\phi$. No geodesic with vertical momentum $c$ minimises beyond $t_c$, i.e.\ beyond $s=2\pi$ \citep[Ch.~11]{docarmo1992riemannian}. A minimising geodesic from $\eb$ to $q$ exists by completeness, is of the form \eqref{eq:nil-geod} with $r(s)=\frac{2a}{|c|}|\sin(s/2)|>0$ when $r>0$, hence has $s\in(0,2\pi)$, and by uniqueness of the root it is the geodesic with $s=s^\ast$.

\emph{The axis.} Let $\gamma=(\beta,\zeta)$ be any curve from $\eb$ to $(0,0,\zeta_0)$, $\zeta_0>0$, with base projection $\beta$ (a closed curve) and vertical velocity $v=\dot\zeta-A_K(\dot\beta)$. Its length is $L=\int\sqrt{|\dot\beta|^2+v^2}\ge\sqrt{L_h^2+V^2}$ by Minkowski's inequality, where $L_h=\int|\dot\beta|$ and $V=\int|v|$. Writing $\mathcal A=\oint_\beta A_K$ for the algebraic enclosed area (with multiplicity, since $\beta$ need not be simple), $\zeta_0=\int v+\mathcal A$, so $V\ge|\zeta_0-\mathcal A|$. For $K=0$ the inequality $L_h^2\ge4\pi|\mathcal A|$ for closed curves with algebraic area is that of \citet{banchoff1971generalization}. For $K=-1$ we use its analogue in the Hyperbolic plane, $L_h^2\ge4\pi|\mathcal A|+\mathcal A^2$ for closed curves with algebraic area, proved by \citet{teufel1991generalization}. Hence
\[
L^2\ \ge\ \min_{\mathcal A\in\R}\Big\{4\pi|\mathcal A|-K\mathcal A^2+(\zeta_0-\mathcal A)^2\Big\}.
\]
For $K=0$ the minimum over $\mathcal A\ge0$ is at $\mathcal A=0$ if $\zeta_0\le2\pi$ (value $\zeta_0^2$) and at $\mathcal A=\zeta_0-2\pi$ if $\zeta_0\ge2\pi$ (value $4\pi(\zeta_0-\pi)$), and $\mathcal A<0$ is never better. Both values are attained, by the vertical segment and by the returning geodesics with $s=2\pi$ (whose length $2\pi/|c|$ at height $\pi(1+c^{-2})$ is $2\sqrt{\pi(\zeta_0-\pi)}$). This proves \eqref{eq:nil-axis}. For $K=-1$ the minimum is at $\mathcal A=(\zeta_0-2\pi)/2$ when $\zeta_0\ge2\pi$, with value $\zeta_0^2/2+2\pi\zeta_0-2\pi^2=4\pi^2\{\tfrac12(\zeta_0/2\pi+1)^2-1\}$, attained by the returning geodesic of Supplement~\ref{app:geodesics}, and at $\mathcal A=0$ otherwise: this proves \eqref{eq:sl-axis}.
\end{proof}

\begin{proof}[Proof of Theorem~\ref{cor:all}]
The non-degenerate sets are: anchors distinct and, on the spheres, no anchor antipodal to $\eb$. In (a) $\tilde z_{i_3}$ off the axis through $\tilde z_{i_2}$ and $\tilde z_{i_4}$ off the plane of the first three. In (b) $\tilde w_{i_2}\ne0$, $\tilde\zeta_{i_2}\ne0$ and $\mathrm{Im}(e^{-i\varphi}\tilde w_{i_3})\ne0$. In each case an isometry $g$ with $g\cdot Z\in\mathcal{S}_I$ satisfies $gz_{i_1}=\eb$, hence $g=h\circ T_{z_{i_1}}^{-1}$ with $h\in\Stab(\eb)$, and it remains to show that $\Stab(\eb)$ acts simply transitively on the admissible positions of the remaining anchors. (a) $\Stab(\eb)=O(3)$ acting linearly on the tangent space at $\eb$ and, through the exponential map, on the model. The elements taking a non-zero $\tilde z_{i_2}$ to the positive first axis form a coset of the $O(2)$ fixing that axis. Of these, exactly two take a point not on the axis into the half-plane $\{x_3=0,x_2>0\}$, namely a rotation about the axis and its composition with the reflection $x_3\mapsto-x_3$, which fixes the half-plane pointwise. The sign of the third coordinate of a fourth anchor off the plane selects one of them. (b) $\Stab(\eb)=O(2)\times\{\pm1\}$, the first factor acting on the base and the second flipping the fibre. The rotation is fixed by $\arg\tilde w_{i_2}$, the fibre flip by $\sgn\tilde\zeta_{i_2}$, and the base reflection by $\sgn\mathrm{Im}(e^{-i\varphi}\tilde w_{i_3})$, each uniquely when the respective quantity is non-zero. (c) and (d) are Theorems~\ref{thm:gauge-bundle} and~\ref{thm:gauge-sol}. Continuity holds chartwise (on $\mathbb{S}^3$ the family $T_p$ can be taken to be left multiplication by unit quaternions, on $\mathbb{S}^2\times\R$ no continuous global choice exists) and the dimension counts follow as in Theorem~\ref{thm:gauge-bundle}(iii): the slice is cut out by three equalities and $\dim\Stab(\eb)$ further equalities (three in (a), one in (b) and (c), none in (d)), the remaining conditions being open inequalities.
\end{proof}

\begin{proof}[Proof of Proposition~\ref{prop:nonrigid}]
\emph{Twisted bundles.} In both examples $d(\eb,q)=1$ because $q$ is the endpoint of the horizontal lift of a base geodesic of length one, and $d(\eb,q')=1$ by \eqref{eq:nil-axis}--\eqref{eq:sl-axis}. Every isometry maps $V$ to $\pm V$, hence maps horizontal curves (curves tangent to $V^\perp$) to horizontal curves. Suppose $g$ maps $\{\eb,q\}$ onto $\{\eb,q'\}$. The horizontal segment from $\eb$ to $q$ is mapped to a horizontal curve of length one joining $\eb$ and $q'=(\bo,1)$, i.e.\ joining two points of the same fibre. Its projection is a closed curve $\beta$ of length one in $B_K$ with algebraic area $\int_\beta A_K=\pm1$. By the isoperimetric inequality for closed curves with algebraic area (Banchoff--Pohl for $K=0$, its analogue for $K=-1$, as in the proof of Proposition~\ref{prop:nildist}), such a loop has length at least $2\sqrt\pi>1$, a contradiction. \emph{Sol.} $\cosh d(\eb,(1,0,0))=1+\tfrac12=\tfrac32$ by \eqref{eq:sol-planes}, and $d(\eb,(0,0,z))=|z|$ because $z$ is 1-Lipschitz and the vertical line is a geodesic, with equality only along it. Every isometry maps $\partial_z$ to $\pm\partial_z$ (left translations satisfy $L_p^*\partial_z=\partial_z$ and $D_4$ maps $\partial_z$ to $\pm\partial_z$), hence maps vertical geodesics to vertical geodesics. If $g$ mapped $\{\eb,q\}$ onto $\{\eb,q'\}$, the unique minimising geodesic between $\eb$ and $q'$ (the vertical segment) would be the image of a minimising geodesic between $\eb$ and $q=(1,0,0)$, which would then be vertical. But no vertical geodesic joins $\eb$ to a point off the $z$-axis. \emph{Dimension count.} The action of $\Isom(\M)$ on $\Omega_I$ is free by Theorems~\ref{thm:gauge-bundle}--\ref{thm:gauge-sol}(ii), so generic orbits have dimension $\dim\Isom(\M)$ and the orbit space has dimension $3N-\dim\Isom(\M)$. This exceeds $N(N-1)/2$ iff $N^2-7N+8<0$ (bundles) or $N^2-7N+6<0$ ($\Sol$), i.e.\ iff $N\le5$ in both cases, and then the fibres of $D$ have positive dimension modulo the orbits.
\end{proof}

\begin{proof}[Proof of Theorem~\ref{thm:generic}]
On $U$ each pair $(z_i,z_j)$ is joined by a unique minimising geodesic free of conjugate points, so $d_\M(z_i,z_j)=|\Exp_{z_i}^{-1}(z_j)|$ with $\Exp_{z_i}$ a local diffeomorphism near the preimage. The geodesic flow of a real-analytic metric is real-analytic, hence so is $D$ on $U$ \citep[Ch.~III]{sakai1996riemannian}. The set $\{Z\in U:\ \text{all }(3N-\dim\Isom)\text{-minors of }dD_Z\text{ vanish}\}$ is an analytic subset of $U$. If it contained an open subset of a connected component $U_0$ it would contain $U_0$, contradicting the hypothesis at $Z_0$. Hence it is closed and nowhere dense in $U_0$. For $Z\in\Omega_I$ the map $g\mapsto g\cdot Z$ is injective (Theorems~\ref{thm:gauge-bundle}--\ref{thm:gauge-sol}(ii)), so $dD_Z$ vanishes on the $\dim\Isom(\M)$-dimensional tangent space of the orbit, which bounds the rank. At $Z^\star=\Gamma_I(Z)\in\mathcal{S}_I$ the tangent space of the slice is complementary to that of the orbit: an orbit tangent vector is $(\xi(z_i^\star))_i$ for a Killing field $\xi$, and it is tangent to $\mathcal{S}_I$ only if $\xi(\eb)=0$, i.e.\ $\xi$ generates the identity component of $\Stab(\eb)$, which is trivial for $\Sol$ and the rotation about the fibre for the bundles. The latter moves $z^\star_{i_2}$ off the real axis unless $\xi=0$. Hence, when $dD$ has maximal rank, its restriction to $T\mathcal{S}_I$ is injective, which gives (i). The Fisher information of $Z^\star$ at fixed $\alpha$ is $(dD|_{T\mathcal{S}_I})^{\tr}\diag\{p_{ij}(1-p_{ij})\}(dD|_{T\mathcal{S}_I})$, positive definite when the restriction is injective and $0<p_{ij}<1$.

(ii) The log-odds map $(Z^\star,\alpha)\mapsto(\alpha-d_{ij})_{i<j}$ has differential $[\mathbf 1,-dD|_{T\mathcal{S}_I}]$, and the Fisher information for $(Z^\star,\alpha)$ is this matrix weighted by $\diag\{p_{ij}(1-p_{ij})\}$. It is non-singular iff the matrix has full column rank $\dim\mathcal{S}_I+1$, which needs at least that many rows, and full rank implies local injectivity of the log-odds map by the inverse function theorem. When $\dim\mathcal{S}_I=\binom N2$ (as for $\Sol$ at $N=6$) the log-odds map goes from a manifold of dimension $\binom N2+1$ to $\R^{\binom N2}$ and is injective on no open set, by invariance of domain, so $(Z^\star,\alpha)$ is not locally identifiable; when moreover $dD|_{T\mathcal{S}_I}$ is invertible, the curve $Z(c)=D^{-1}(D(Z)+c\mathbf 1)$, $\alpha(c)=\alpha+c$, exhibits the ambiguity explicitly.

(iii) Order the coordinates so that the new node's three coordinates come last. The Jacobian for $N+1$ nodes has the block form $\big(\begin{smallmatrix}J_N&0\\ B&C\end{smallmatrix}\big)$ where $C$ is the $N\times3$ matrix of gradients $\nabla_{z_{N+1}}d(z_{N+1},z_j)$ (in coordinates), so $\rank J_{N+1}\ge\rank J_N+\rank C$, and $\rank C=3$ as soon as three of these gradients span $T_{z_{N+1}}\M$, an open condition. Such positions exist: at a maximal-rank configuration the kernel of $dD$ has dimension $\dim\Isom(\M)<N$, so not every node can have incident gradients contained in a plane (each such node would contribute an independent kernel vector supported on that node alone); if node $j_0$ has three incident gradients spanning $T_{z_{j_0}}\M$, placing $z_{N+1}$ close to $z_{j_0}$ gives, by continuity of the geodesic directions, three spanning gradients at $z_{N+1}$. The orbit dimension does not change, so maximal rank is preserved on an open dense set of positions of the new node (the determinant of three gradients is real-analytic in $z_{N+1}$ and not identically zero); the same argument applies to the augmented matrix.
\end{proof}

\begin{lemma}[Equivariance of the wrapped Normal]\label{lem:equivariance}
Let $\Sigma=\diag(\sigma_h^2,\sigma_h^2,\sigma_v^2)$ and $g\in\Isom(\M)$. If $z\sim\LWN_\M(\mu,\Sigma)$ then $g(z)\sim\LWN_\M(g\mu,\Sigma)$. Equivalently $f_{\LWN}(gz\mid g\mu,\Sigma)=f_{\LWN}(z\mid\mu,\Sigma)$.
\end{lemma}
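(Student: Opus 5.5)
The equivalent density form follows from the distributional statement, because isometries preserve $\dvol$. So I will work at the level of random variables: write $z=\mu\star\chi_\M(\delta)$ with $\delta\sim\mathcal N_3(0,\Sigma)$, and show that $g(z)=(g\mu)\star\chi_\M(\delta')$ for some $\delta'$ that has the same law as $\delta$.

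By Lemma~\ref{lem:stab}, and by $\Isom(\Sol)=\Sol\rtimes D_4$ for $\Sol$, factor the composite $T_{g\mu}^{-1}\circ g\circ T_\mu$. It fixes $\eb$, so it is some $h\in\Stab(\eb)$. Hence
\[
g(z)=T_{g\mu}\bigl(h(\chi_\M(\delta))\bigr).
\]
The claim therefore reduces to a statement about the stabiliser and the chart: $h\circ\chi_\M=\chi_\M\circ L_h$ for a linear map $L_h$ on $\R^3$ that preserves $\mathcal N_3(0,\Sigma)$. Given this, $\delta'=L_h\delta$ and we are done.

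I will verify this intertwining case by case.

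In $\Nil$ the chart is the identity in exponential coordinates. $R_\varphi$ acts as a rotation of $(x,y)$, and $\refl$ acts as $(x,y,z)\mapsto(x,-y,-z)$. Both are linear and orthogonal, and they preserve the block $\diag(\sigma_h^2,\sigma_h^2)\oplus\sigma_v^2$ because the horizontal block is isotropic.

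In $\SL$ the chart is $\bigl(\Exp^{\Hb^2}_\bo(\delta_h),\delta_3\bigr)$. A rotation of the disc commutes with $\Exp_\bo$ by rotating $\delta_h$. Conjugation commutes with $\Exp_\bo$ by conjugating $\delta_h$, and $\refl$ flips $\delta_3$. The same isotropy argument applies.

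In $\Sol$ the chart is the group exponential, and each generator of $D_4$ is a group automorphism. For $\sigma_x$ and $\sigma_y$ this is clear from the group law. For $\varsigma$, check that $(x,y,z)\mapsto(y,x,-z)$ respects $(x+e^{-z}x',\,y+e^{z}y',\,z+z')$. An automorphism commutes with $\exp$ through its differential, which is a signed permutation of $\delta$. The swap $\delta_1\leftrightarrow\delta_2$, however, preserves $\mathcal N(0,\Sigma)$ only when the first two diagonal entries are equal. So the lemma needs $\Sigma=\diag(\sigma_h^2,\sigma_h^2,\sigma_v^2)$ exactly as stated, and this is where the hypothesis is used.

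The main obstacle is $\SL$, because there the $T_p$ do not form a group. I must check that $T_{g\mu}^{-1}\circ g\circ T_\mu$ really is an isometry fixing $\eb$, which is immediate. The subtle point is that $h$ is then only some element of $O(2)$ determined by a gyration, not something computed explicitly. This is harmless because every element of $O(2)$ was handled above.

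A second point is the identification of isometries with automorphisms in $\Nil$ and $\Sol$. In $\Sol$, left translations together with $D_4$ give all of $\Isom(\Sol)$, and each piece respects the chart structure. In $\Nil$ the needed fact is only that $R_\varphi$ and $\refl$ are linear in exponential coordinates.

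Finally, densities: $f_{\LWN}(gz\mid g\mu,\Sigma)=f_{\LWN}(z\mid\mu,\Sigma)$ follows because $g$ pushes $\dvol$ to $\dvol$. One can also see it directly: $\delta'=L_h\delta$ together with $J^\chi_\M(L_h\delta)=J^\chi_\M(\delta)$. The Jacobians in \eqref{eq:jac} depend only on $|\delta_h|$ or $|\delta_3|$, so they are invariant under $L_h$, and the Gaussian factor is invariant because $L_h$ preserves $\mathcal N_3(0,\Sigma)$.
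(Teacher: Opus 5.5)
Your proof is correct and follows the paper's argument: you write $h=T_{g\mu}^{-1}\circ g\circ T_\mu\in\Stab(\eb)$, and you intertwine $h$ with the chart through a linear map (group automorphisms commuting with the group exponential in $\Nil$ and $\Sol$, and $\Exp^{\Hb^2}_{\bo}$ commuting with rotations and reflections in $\SL$) that preserves $\mathcal N_3(0,\Sigma)$. You then take the density identity from the fact that isometries preserve volume. Your explicit checks, that $\varsigma$ is an automorphism and that the swap $\delta_1\leftrightarrow\delta_2$ needs equal horizontal variances, plus the direct Jacobian verification, only spell out what the paper asserts.
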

\begin{proof}
Write $g\circ T_\mu=T_{g\mu}\circ h$ with $h=T_{g\mu}^{-1}\circ g\circ T_\mu\in\Stab(\eb)$ (Lemma~\ref{lem:stab}, and $h\in D_4$ in $\Sol$). The stabiliser acts on the chart by a linear map: $h\circ\chi_\M=\chi_\M\circ h_*$ where $h_*$ is the differential of $h$ at $\eb$, an element of $O(2)\times\{\pm1\}$ (rotations of the horizontal components, or a horizontal reflection together with $\delta_3\mapsto-\delta_3$, and in $\Sol$ the maps $\sigma_x,\sigma_y$ and $(\delta_1,\delta_2,\delta_3)\mapsto(\delta_2,\delta_1,-\delta_3)$). This holds because $\chi_{\Nil}$ and $\chi_{\Sol}$ are group exponentials and $\Stab(\eb)$ consists of group automorphisms, and because $\Exp^{\Hb^2}_{\bo}$ commutes with rotations and reflections of the disc. Since $\mathcal{N}_3(0,\Sigma)$ is invariant under such $h_*$, $g(z)=g\,T_\mu\chi(\delta)=T_{g\mu}\chi(h_*\delta)$ with $h_*\delta\sim\mathcal{N}_3(0,\Sigma)$. The density identity follows because isometries preserve $\dvol$.
\end{proof}

\begin{proof}[Proof of Proposition~\ref{prop:target}]
By Theorem~\ref{thm:gauge-bundle}(ii) (resp.\ \ref{thm:gauge-sol}) the map $\Psi:(g,Z^\star)\mapsto g\cdot Z^\star$ is a bijection from $\Isom(\M)\times\mathcal{S}_I$ onto $\Omega_I$, and we may use $(g,Z^\star,\mu^\star)$ with $\mu=g\mu^\star$ as coordinates on $\Omega_I\times\M$. Write $g=T_p\circ h$ with $p\in\M$, $h\in\Stab(\eb)$. We compute the pull-back of $\dvol^{\otimes N}\otimes\dvol(\mu)$. The anchor $z_{i_1}=p$ contributes $\dvol(p)$. Given $p$, the remaining coordinates are $z_i=T_p(h z_i^\star)$, $T_p$ preserves volume, and for $i\ne i_1,i_2$ so does $h$, contributing $\dvol(z_i^\star)$. For the second anchor, $(h,z_{i_2}^\star)\mapsto hz_{i_2}^\star$ is, in the bundles, the polar-coordinate map $(\varphi,\rho,\zeta)\mapsto R_\varphi(\rho,0,\zeta)$ (composed with $\refl$ on the second component of $O(2)$), whose volume element is $S_K(\rho)\,d\rho\,d\varphi\,d\zeta$ because $\dvol_{\M_K}=S_K(\rho)\,d\rho\,d\theta\,d\zeta$. In $\Sol$, $h$ ranges over the finite group $D_4$ and $z^\star_{i_2}$ over $F$ with volume $dx\,dy\,dz$. Finally $\mu=g\mu^\star$ contributes $\dvol(\mu^\star)$. Hence
\[
\dvol^{\otimes N}\otimes\dvol(\mu)=\dvol(p)\,dh\ \otimes\ \Delta_\M(z^\star_{i_2})\,d\rho\,d\zeta\ \otimes\!\!\prod_{i\ne i_1,i_2}\!\!\dvol(z_i^\star)\ \otimes\dvol(\mu^\star),
\]
with $dh$ the Haar measure of $\Stab(\eb)$. By \eqref{eq:invariance} and Lemma~\ref{lem:equivariance} (with the volume-flat prior on $\mu$), the kernel $\kappa(Z,\mu,\ldots)=p(\mathcal Y\mid Z,\alpha)\prod_if(z_i\mid\mu,\Sigma)p(\sigma)p(\alpha)$ is constant along the orbits of $G=\Isom(\M)$, which are non-compact, so $\kappa\,\dvol^{\otimes N}\otimes\dvol(\mu)\otimes d\sigma\,d\alpha$ is an invariant infinite measure. The action of $G$ on $\Omega_I\times\M$ is free and proper with the slice $\mathcal{S}_I\times\M$ as a global section, and the change of variables above is the coarea (Weil) formula for this action: the invariant measure disintegrates as the Haar measure of $G$ on the orbits times the measure $\Delta_\M(z^\star_{i_2})\,d\rho\,d\zeta\prod_{i\ne i_1,i_2}\dvol(z^\star_i)\,\dvol(\mu^\star)$ on the section. The quotient posterior is, by definition, the normalisation of $\kappa$ restricted to the section with respect to this measure. It is a probability measure because $\int\kappa$ over the section is finite: the likelihood is bounded and the priors on $\sigma$ and $\alpha$ are proper, so it suffices to integrate the prior kernel over the slice. by Lemma~\ref{lem:equivariance}, $f(\eb\mid\mu,\Sigma)=f(T_\mu^{-1}\eb\mid\eb,\Sigma)$ and $\mu\mapsto T_\mu^{-1}\eb$ preserves volume, so $\int f(\eb\mid\mu,\Sigma)\,\dvol(\mu)=1$. Each $z_i^\star$, $i\ne i_1,i_2$, integrates to one given $\mu$. For the second anchor, write $z^\star_{i_2}=\mu\star\chi_\M(\delta)$: the density $f(z^\star_{i_2}\mid\mu,\Sigma)=\varphi_3(\delta;0,\Sigma)/J^\chi_\M(\delta)$ is bounded by a Gaussian in $\delta$ divided by $J^\chi_\M(\delta)$, the slice factor $\Delta_\M(z^\star_{i_2})=S_K(\rho)$ satisfies $\rho\le|\delta_h|+d_{B_K}(\bo,\pi(\mu))$ by the triangle inequality in the base, and $S_K(\rho)/J^\chi_\M(\delta)$ is bounded by a polynomial in $|\delta|$ times $e^{d_{B_K}(\bo,\pi(\mu))}$ in all three geometries ($J^\chi_{\Nil}=1$ with $S_0(\rho)=\rho$, and $J^\chi_{\SL}(\delta)=\sinh|\delta_h|/|\delta_h|$ with $S_{-1}(\rho)=\sinh\rho$), so the integral over $(\rho,\zeta)$, and then over $\mu$ against the factor $f(\eb\mid\mu,\Sigma)$, is finite. This is \eqref{eq:target}. The invariant expectation of a $G$-invariant statistic under the invariant measure is undefined as a ratio of infinite quantities, and the quotient posterior is the object that replaces it.
\end{proof}

\begin{proof}[Proof of Proposition~\ref{prop:deficit}]
For Bernoulli laws with log-odds $\eta,\eta'$, $\mathrm{KL}(\mathrm{Bern}(\sigma(\eta))\,\|\,\mathrm{Bern}(\sigma(\eta')))=\tfrac12\sigma'(\eta)(\eta-\eta')^2+O(|\eta-\eta'|^3)$, with $\sigma'(\eta)=\sigma(\eta)(1-\sigma(\eta))=w$, by Taylor expansion of $\eta'\mapsto\sigma(\eta)(\eta-\eta')-\log\{(1+e^{\eta})/(1+e^{\eta'})\}$ about $\eta'=\eta$. The third derivative of $t\mapsto\log(1+e^t)$ is $\sigma'(t)(1-2\sigma(t))$, bounded by $1/(6\sqrt3)$, which gives the remainder bound. Summing over dyads with $\eta_{ij}-\eta'_{ij}=-(d_\M(z_i,z_j)-d_T(z'_i,z'_j)-c)$, $c=\alpha-\alpha'$, gives the expansion for each competitor. The expected log-likelihood ratio equals the divergence. If the infimum is attained with $\Delta_T=0$, all weighted residuals vanish, i.e.\ the distance matrix is realised in $T$ up to an additive constant. For $T=\R^3$ the realisable matrices with an additive constant form a set of dimension at most $3N-6+1$ inside $\R^{N(N-1)/2}$, which can be the whole space only if $3N-5\ge\binom N2$, i.e.\ $N\le5$.
\end{proof}

\begin{proof}[Proof of Theorem~\ref{thm:lecam}]
For a test $\phi$ with $\mathbb{E}_Q\phi\le\nu$ for every competitor law $Q$, $\mathbb{E}_P\phi\le\mathbb{E}_Q\phi+\mathrm{TV}(P,Q)\le\nu+\mathrm{TV}(P,Q)$ for each $Q$. Given the positions the dyads are independent Bernoulli variables under both laws, so $\mathrm{KL}(P\|Q)=\sum_{i<j}\mathrm{KL}_{ij}$, and Pinsker's inequality gives $\mathrm{TV}\le\sqrt{\mathrm{KL}/2}$. The infimum over $(Z',\alpha')$ can be taken because the bound holds for each competitor, and its second-order expansion is Proposition~\ref{prop:deficit}.
\end{proof}

\begin{lemma}[Distance in normal coordinates]\label{lem:normal}
Let $\M$ be a real-analytic Riemannian manifold, $o\in\M$, and $u,v\in T_o\M$ with $|u|,|v|\le\varepsilon$ small. Then $d(\exp_ou,\exp_ov)^2=|u-v|^2-\tfrac13\mathrm{Rm}_o(u,v,v,u)+O(\varepsilon^5)$.
\end{lemma}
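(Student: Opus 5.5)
We prove the expansion by working entirely in normal coordinates $x$ centred at $o$. We Taylor-expand the metric to second order, bound the length of the chord $c(t)=\exp_o(u+t(v-u))$, and check that the chord agrees with the minimising geodesic closely enough for the bound to be sharp.

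\textbf{Step 1: metric and squared length of the chord.} In normal coordinates the standard expansion is
\[
g_{kl}(x)=\delta_{kl}-\tfrac13 R_{kalb}\,x^ax^b+O(|x|^3),
\]
and it holds for a real-analytic (indeed $C^3$) metric. The convention is that $R_{kalb}x^ax^b\,w^kw^l=\mathrm{Rm}(w,x,x,w)$, which is the paper's sign convention, positive for positive sectional curvature. Write $w=v-u$ and $x(t)=u+tw$. Then the squared length satisfies
\[
\int_0^1 g_{x(t)}(w,w)\,dt=|w|^2-\tfrac13\int_0^1\mathrm{Rm}(w,x(t),x(t),w)\,dt+O(\varepsilon^5).
\]
Because $\mathrm{Rm}(w,x,x,w)$ is unchanged when $x$ is replaced by $x+sw$, we have $\mathrm{Rm}(w,x(t),x(t),w)=\mathrm{Rm}(w,u,u,w)=\mathrm{Rm}(u,v,v,u)$ for every $t$. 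Hence
\[
L(c)^2\le\int_0^1|\dot c|^2\,dt=|u-v|^2-\tfrac13\mathrm{Rm}_o(u,v,v,u)+O(\varepsilon^5),
\]
which gives the upper bound $d^2\le$ (right-hand side). The inequality $L^2\le\int|\dot c|^2$ is Cauchy--Schwarz.

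\textbf{Step 2: lower bound.} This is the main obstacle: a lower bound of the same form, with error $O(\varepsilon^5)$ rather than the crude $O(\varepsilon^4)$.

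Let $\gamma$ be the minimising geodesic from $\exp_ou$ to $\exp_ov$, parametrised on $[0,1]$. The energy $E(\gamma)=\int_0^1|\dot\gamma|^2\,dt$ equals $d^2$. In normal coordinates the Christoffel symbols are $O(\varepsilon)$, and the geodesic equation gives $\ddot\gamma=O(\varepsilon\,|w|^2)$, so $\gamma=c+\eta$ with $\eta(0)=\eta(1)=0$ and $\eta=O(\varepsilon|w|^2)$.

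We then expand the energy functional $E_g(x)=\int g_x(\dot x,\dot x)$ around the straight line $c$. The first variation of $E_g$ at $c$ along $\eta$ is
\[
\int\Big(2g(\dot c,\dot\eta)+\partial g(\eta)(\dot c,\dot c)\Big)=O(\varepsilon\cdot|w|\cdot\|\eta\|_{C^1}),
\]
since the flat part $\int2\langle\dot c,\dot\eta\rangle=2\langle w,\eta(1)-\eta(0)\rangle=0$. The second-order term is nonnegative up to $O(\varepsilon)$ corrections: $E_g(c+\eta)\ge E_g(c)+\int|\dot\eta|^2(1-O(\varepsilon^2))-O(\varepsilon|w|\,\|\eta\|_{C^1})$. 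Completing the square bounds the deficit by $O(\varepsilon^2|w|^2\cdot\varepsilon^2|w|^2)=O(\varepsilon^6)$. Hence
\[
d^2=E(\gamma)\ge E_g(c)-O(\varepsilon^5),
\]
and combined with Step 1 this proves the lemma.

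An alternative route that avoids the variational estimate is the standard computation with the Taylor expansion of $d^2(\exp_o u,\exp_o v)$ as a real-analytic function of $(u,v)$, using the known expansion of $\exp_p^{-1}$. Both routes use Step 1's symmetry observation to identify the quadratic curvature term.
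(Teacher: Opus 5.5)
Your proof is correct and is essentially the paper's: the same chord $u+t(v-u)$ in normal coordinates, the same invariance of $\mathrm{Rm}_o(w,x,x,w)$ under $x\mapsto x+sw$ for the upper bound $d^2\le E(\text{chord})$, and the same $O(\varepsilon^3)$ comparison of the minimising geodesic with the chord for the lower bound. The only difference is where the energy is expanded. The paper expands about the geodesic, where the first variation vanishes, and gets $d^2=E(\text{chord})+O(\varepsilon^6)$ at once. You expand about the chord, and there your ``completing the square'' exponent count is garbled: it would give only $O(\varepsilon^4)$, and $\varepsilon^4|w|^4$ is not $O(\varepsilon^6)$. It is not needed, though: dropping the nonnegative $\int|\dot\eta|^2$ term and inserting $\|\eta\|_{C^1}=O(\varepsilon|w|^2)$ into your first-variation bound $O(\varepsilon|w|\,\|\eta\|_{C^1})$ already gives the required $O(\varepsilon^5)$.
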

\begin{proof}
In normal coordinates $g_{ij}(x)=\delta_{ij}-\tfrac13R_{ikjl}x^kx^l+O(|x|^3)$, so for $w\in\R^3$ and $|x|\le\varepsilon$, $g_x(w,w)=|w|^2-\tfrac13\mathrm{Rm}_o(w,x,x,w)+O(\varepsilon^3|w|^2)$. Let $\sigma(t)=u+t(v-u)$ and $w=v-u$. Since $\mathrm{Rm}_o(w,\sigma(t),\sigma(t),w)=\mathrm{Rm}_o(w,u,u,w)$ for all $t$ (the term in $w$ vanishes by antisymmetry), the energy of $\sigma$ is $E(\sigma)=\int_0^1g_{\sigma(t)}(w,w)\,dt=|w|^2-\tfrac13\mathrm{Rm}_o(w,u,u,w)+O(\varepsilon^5)=|u-v|^2-\tfrac13\mathrm{Rm}_o(u,v,v,u)+O(\varepsilon^5)$, using $\mathrm{Rm}_o(v-u,u,u,v-u)=\mathrm{Rm}_o(v,u,u,v)$. The minimising geodesic $\gamma$ from $\exp_ou$ to $\exp_ov$, parametrised on $[0,1]$, has $d^2=E(\gamma)\le E(\sigma)$. Conversely, the Christoffel symbols are $O(\varepsilon)$ in the ball, so $\ddot\gamma=O(\varepsilon^3)$ and $\gamma-\sigma=O(\varepsilon^3)$ with derivative $O(\varepsilon^3)$; since $\gamma$ is a critical point of $E$ among curves with the same endpoints, $E(\sigma)-E(\gamma)$ is of second order in this difference, $O(\varepsilon^6)$. Hence $d^2=E(\sigma)+O(\varepsilon^6)$, which is the claim.
\end{proof}

\begin{proof}[Proof of Theorem~\ref{thm:smallscale}]
Write $\varepsilon u_i$ for the normal coordinates and $X_\varepsilon=\varepsilon X$. \emph{Expansion.} By Lemma~\ref{lem:normal} (see also \citealp{gray2004tubes}), $d(\exp_o u,\exp_o v)^2=|u-v|^2-\tfrac13\mathrm{Rm}_o(u,v,v,u)+O((|u|+|v|)^5)$, so with $d^E_{ij}=\varepsilon|u_i-u_j|$, $D_\M(Z)_{ij}=d^E_{ij}+\varepsilon^3r^\M_{ij}+O(\varepsilon^4)$ where $r^\M_{ij}=-\mathrm{Rm}_\M(u_i,u_j,u_j,u_i)/(6|u_i-u_j|)$, and the same holds for any configuration in $T$ with $\mathrm{Rm}_T$. \emph{The competitor family.} Near $D_E(X_\varepsilon)$ the set $\mathcal E_T=\{D_T(Z')+c\mathbf 1\}$ is, by the expansion applied to $Z'$ written in normal coordinates $\varepsilon u'_i$ of $T$ at a base point (and, for $T=\Hb^2\times\R$, with its fibre direction $n$), the set $\{D_E(\varepsilon X')+\varepsilon^3r^T(X';n)+c\mathbf 1+O(\varepsilon^4)\}$. Since $X$ is infinitesimally rigid, the Euclidean-plus-constant vectors $\{D_E(\varepsilon X')+c\mathbf 1\}$ form near $D_E(X_\varepsilon)$ a submanifold $\mathcal E$ of dimension $3N-5$ with tangent space spanned by the columns of $J_E(X)$ and $\mathbf 1$, and normal space $\Omega(X)$. Because $\mathcal E$ is invariant under scaling, its second fundamental form at $D_E(X_\varepsilon)$ is $O(1/\varepsilon)$. \emph{Weighted distance.} The weights $w_{ij}=\sigma'(\alpha-d_{ij})$ are $\sigma'(\alpha)+O(\varepsilon)$, uniformly. The $w$-distance from $D_\M(Z)=D_E(X_\varepsilon)+\varepsilon^3r^\M+O(\varepsilon^4)$ to $\mathcal E_T$ is therefore the $w$-distance from $\varepsilon^3(r^\M-r^T(X;n))$ to the tangent space of $\mathcal E$, up to $O(\varepsilon^4)$ from the remainders, $O(\varepsilon^3\cdot\varepsilon^3/\varepsilon)=O(\varepsilon^5)$ from the curvature of $\mathcal E$, and $O(\varepsilon^3\cdot\varepsilon^2)=O(\varepsilon^5)$ from replacing $r^T(X')$ by $r^T(X)$ when $|X'-X|=O(\varepsilon^2)$. Since the weights are $w_0+O(\varepsilon)$, the $w$-orthogonal projection differs from the orthogonal projection $\Pi$ onto $\Omega(X)$ by $O(\varepsilon)$, and minimising over the free orientation $n$ of an anisotropic competitor gives a $w$-distance $\sqrt{w_0}\,\varepsilon^3\min_n\|\Pi r\|\,(1+O(\varepsilon))$, whose square is the second statement of \eqref{eq:smallscale} for the Fisher deficit $\Delta_T^2$. When $\Omega(X)=\{0\}$ the leading term vanishes and the remainders give $O(\varepsilon^7)$. \emph{Exact divergence.} Fix $\delta_0>0$. A competitor with $\max_{ij}|\eta_{ij}-\eta'_{ij}|\ge\delta_0$ has $\sum\mathrm{KL}_{ij}\ge2(\sigma(\eta_{ij})-\sigma(\eta'_{ij}))^2\ge\kappa(\delta_0)>0$ on that dyad, with $\kappa$ independent of $\varepsilon\le1$. A competitor with $\max|\eta_{ij}-\eta'_{ij}|\le\delta_0$ has, by the remainder bound of Proposition~\ref{prop:deficit}, $\sum\mathrm{KL}_{ij}=\tfrac12\sum w_{ij}(\eta_{ij}-\eta'_{ij})^2(1+O(\delta_0))$. Since $\Delta_T^2=O(\varepsilon^6)<\kappa(\delta_0)$ for $\varepsilon$ small, the infimum is attained by competitors of the second kind up to a factor $1+O(\delta_0)$, and the Fisher minimiser has $|\eta_{ij}-\eta'_{ij}|=O(\varepsilon^3)\le\delta_0$. Letting $\delta_0\to0$ after $\varepsilon\to0$ gives the first equality of \eqref{eq:smallscale}. \emph{Curvature forms.} In the orthonormal left-invariant frame in which the Lie brackets of $\Nil$, $\Sol$ and $\SL$ are diagonal in Milnor's sense, the curvature operator is diagonal on $e_1\wedge e_2,\ e_2\wedge e_3,\ e_3\wedge e_1$ with the sectional curvatures of the coordinate planes, $(-\tfrac34,\tfrac14,\tfrac14)$ for $\Nil$, $(-\tfrac74,\tfrac14,\tfrac14)$ for $\SL$, $(1,-1,-1)$ for $\Sol$ \citep{milnor1976curvatures}, so $\mathrm{Rm}(u,v,v,u)=\sum_kK_k(u\times v)_k^2$, which is \eqref{eq:curvforms}. For the products and the constant-curvature spaces the forms are $K_{\mathrm{base}}U_3^2$ and $K|U|^2$.
\end{proof}

\begin{proof}[Proof of Proposition~\ref{prop:directed}]
The first statement follows from $d_\M(z_i,z_j)=d_\M(z_j,z_i)$, $v_{ji}=-v_{ij}$ and \eqref{eq:triangle}; for a coboundary $u_{ij}=\phi_j-\phi_i$ the cyclic sum telescopes to zero. For the second, let $\eta,\eta'$ be the log-odds of the model and of a member of the gradient class, $\delta_{ij}=\eta_{ij}-\eta'_{ij}$, and split $\delta_{ij}=\delta^s_{ij}+a_{ij}$ into its symmetric and antisymmetric parts, so that $a_{ij}=\tfrac12(s_{ij}-s'_{ij})=\gamma v_{ij}-\gamma'u_{ij}$. Then $\delta_{ij}^2+\delta_{ji}^2=2(\delta^s_{ij})^2+2a_{ij}^2\ge2a_{ij}^2$ and $\tfrac12\sum_{i\ne j}w_{ij}\delta_{ij}^2\ge\sum_{i<j}\tilde w_{ij}a_{ij}^2$. On the complete graph with the edge inner product, the coboundaries $\{u=\delta\phi\}$ have orthogonal complement the cycle space, and the orthogonal projection of an antisymmetric $v$ on the cycle space has entries $(Pv)_{ij}=N^{-1}\sum_k(v_{ij}+v_{jk}+v_{ki})$: indeed the projection on the coboundaries is $\delta\phi$ with $\phi=L^+\delta^{\tr}v$, $L=NI-\mathbf 1\mathbf 1^{\tr}$ the Laplacian of the complete graph, $\phi_j=N^{-1}\sum_iv_{ij}$, and $v_{ij}-(\phi_j-\phi_i)=N^{-1}\sum_k(v_{ij}-v_{kj}+v_{ki})$. Hence, with $\tilde w_{\min}=\min\tilde w_{ij}$, $\sum_{i<j}\tilde w_{ij}(\gamma v_{ij}-\gamma'u_{ij})^2\ge\tilde w_{\min}\gamma^2\|Pv\|^2$ for every coboundary $\gamma'u$, and by \eqref{eq:triangle} $(Pv)_{ij}=-\bar A_{ij}$, which gives \eqref{eq:directed-deficit}.
\end{proof}

\begin{proof}[Proof of Lemma~\ref{lem:holonomy}]
The expressions \eqref{eq:cochain} for $v_{ij}$ are the third components of $p_i^{-1}\star p_j$ read off from \eqref{eq:nil-law} and \eqref{eq:sl-translate}. For the intrinsic description, apply the isometry $T_{p_i}^{-1}$, which maps $p_i$ to $\eb$, preserves $V$ (it belongs to the identity component) and therefore maps horizontal lifts to horizontal lifts. In the coordinates centred at $\eb$ the base geodesic from $\bo$ to $w'=\pi(T_{p_i}^{-1}p_j)$ is radial, $A_K$ vanishes on it, and its horizontal lift through $\eb$ is $\{(\cdot,0)\}$. The fibre coordinate of $T_{p_i}^{-1}p_j$ relative to this lift is thus $v_{ij}$. On the other hand, in the fixed trivialisation $(w,\zeta)$ a horizontal lift satisfies $d\zeta=A_K(\dot\beta)$, so the fibre displacement of $p_j$ relative to the horizontal lift through $p_i$ of the base geodesic $[w_i,w_j]$ equals $\zeta_j-\zeta_i-\int_{[w_i,w_j]}A_K$. Comparing, $A_K(w_i,w_j)=\int_{[w_i,w_j]}A_K$. Summing over the three sides of the geodesic triangle and applying Stokes' theorem with $dA_K=\dvol_{B_K}$ gives \eqref{eq:triangle}, the sign being that of the orientation of the boundary. (For $K=0$ the identity is also elementary: $\tfrac12(x_iy_j-y_ix_j)$ is the signed area of the triangle $(\bo,w_i,w_j)$.) The identity was also checked numerically to six decimals for random triangles in both geometries (code accompanying the paper).
\end{proof}

\begin{proof}[Proof of Lemma~\ref{lem:symmetric}]
The proposal density with respect to $\dvol(z')$ is $q(z'\mid z)=\varphi_3(\delta;0,\Sigma_s)/J^\chi_\M(\delta)$ with $\delta=\chi^{-1}_\M(z^{-1}\star z')$, and the reverse displacement is $\delta'=\chi^{-1}_\M(z'^{-1}\star z)$. In $\Nil$ and $\Sol$, $\chi_\M$ is the group exponential, so $\chi(\delta)^{-1}=\chi(-\delta)$ and $\delta'=-\delta$. The Jacobians \eqref{eq:jac} are even, so $q(z\mid z')=q(z'\mid z)$. In $\SL$, write $z=(w,\zeta)$, $u=\Exp_{\bo}(\delta_h)$ and $z'=(w\oplus u,\ \zeta+\delta_v+2\arg(1+\bar w u))$. Then $|\delta'_h|=d_{\Hb^2}(w',w)=d_{\Hb^2}(u,\bo)=|\delta_h|$, and using $1-\bar w'w=(1-|w|^2)/(1+w\bar u)$ one finds $\delta'_v=\zeta-\zeta'+2\arg(1-\bar w'w)=-\delta_v$. Thus $\delta'=(R(-\delta_h),-\delta_v)$ for a rotation $R$. When the horizontal block of $\Sigma_s$ is isotropic, $\varphi_3(\delta';0,\Sigma_s)=\varphi_3(\delta;0,\Sigma_s)$, and $J^\chi_{\SL}$ depends only on $|\delta_h|$.
\end{proof}

\section{Computational details}\label{app:details}

\paragraph{Initialisation.} Shortest-path distances are computed on the observed graph (disconnected pairs receive the diameter plus one), embedded in $\R^3$ by classical multidimensional scaling, rescaled to unit standard deviation and multiplied by $1.2$, perturbed by $\mathcal{N}(0,0.05^2)$ noise to avoid exact coincidences, and mapped to $\M$ by $\chi_\M$ (so the third MDS coordinate becomes the fibre coordinate in $\Nil$ and $\SL$, and the height in $\Sol$). The configuration is then gauge-fixed by $\Gamma_I$, and $\alpha$ is set to the maximiser of the likelihood over a grid of step $0.1$ on $[-6,8]$. The prior centre is initialised at $\eb$ and the scales at the root-mean-square horizontal and vertical chart displacements of the initial configuration. These scale values initialise both schemes (and are the fixed values of the plug-in variants).

\paragraph{Step sizes and settings.} MCMC: $s_z=0.45$ ($0.35$ for $N\ge60$), $s_\alpha=0.2$, $s_\mu=0.2$. Acceptance rates for the positions were between $0.3$ and $0.65$ in all runs and $0.4$--$0.8$ for $\alpha$. The Jacobian experiment of Section~\ref{sec:exp-ident} holds the scales fixed. BBVI: $n_s=10$, $\eta=0.02$, RMSProp decay $0.9$, initial log-scales $\log0.3$ for the positions and $\log0.2$ for $\alpha$. The chart Jacobian $\partial\delta_i/\partial\tilde m_i$ in \eqref{eq:scores} is evaluated in closed form for $\Nil$ and by central differences with step $10^{-4}$ for $\Sol$ and $\SL$ (six extra chart evaluations per node and sample, a negligible cost next to the distance evaluations).

\paragraph{Chart Jacobians.} In $\Nil$, $\delta=\chi^{-1}(\chi(\tilde m)^{-1}\cdot z)=\big(z_1-\tilde m_1,\ z_2-\tilde m_2,\ z_3-\tilde m_3-\tfrac12(\tilde m_1z_2-\tilde m_2z_1)\big)$, so
\[
\frac{\partial\delta}{\partial\tilde m}=\begin{pmatrix}-1&0&0\\0&-1&0\\-z_2/2&z_1/2&-1\end{pmatrix},\qquad \nabla_\delta\log J^\chi_{\Nil}=0 .
\]
In $\Sol$, with $\phi(t)=(1-e^{-t})/t$ and $\mu=\chi_{\Sol}(\tilde m)=(\tilde m_1\phi(\tilde m_3),\tilde m_2\phi(-\tilde m_3),\tilde m_3)$,
\[
\delta_1=\frac{e^{\tilde m_3}(z_1-\mu_1)}{\phi(z_3-\tilde m_3)},\qquad \delta_2=\frac{e^{-\tilde m_3}(z_2-\mu_2)}{\phi(\tilde m_3-z_3)},\qquad \delta_3=z_3-\tilde m_3,\qquad
\nabla_\delta\log J^\chi_{\Sol}=\Big(0,\,0,\,\coth\tfrac{\delta_3}{2}-\tfrac{2}{\delta_3}\Big),
\]
which is differentiated numerically. In $\SL$, $\delta_h=\Log_{\bo}\big((-a)\oplus w\big)$ and $\delta_v=z_3-\tilde m_3+2\arg(1-\bar aw)$ with $a=\Exp_{\bo}(\tilde m_1,\tilde m_2)$, and $\nabla_\delta\log J^\chi_{\SL}=(\coth r-1/r)\,\delta_h/r$, $r=|\delta_h|$.

\paragraph{Competitors and baselines.} The constant-curvature and product geometries are implemented with the same interface as the three Thurston geometries: $\Hb^3$ in the Poincar\'e ball with M\"obius addition as translation, chart $\Exp_{\bo}$ and Jacobian $(\sinh r/r)^2$, $\mathbb{S}^3$ and $\mathbb{S}^2$ in stereographic coordinates with the rotation taking the north pole to $\mu$ as translation, chart $\tan(|v|/2)v/|v|$ truncated at $|v|\le0.98\pi$ and Jacobian $(\sin r/r)^{d-1}$, $\Hb^2\times\R$ and $\mathbb{S}^2\times\R$ as products (translation and chart act componentwise). The adjacency spectral embedding baseline computes the rank-3 eigen-approximation of the adjacency matrix with hidden dyads imputed by the previous approximation (30 iterations) and predicts the clipped low-rank entries. The degree-corrected block model uses regularised spectral clustering into $K$ groups followed by the closed-form Poisson maximum-likelihood estimates of block and degree parameters on the observed dyads, with $K\in\{2,\dots,6\}$ chosen by BIC. The Chung--Lu baseline predicts $d_id_j/\sum_kd_k$ from observed degrees. All baselines use only the observed dyads of each split.

\paragraph{Dyad subsampling.} For $N>200$ each variational iteration draws, for each of the $n_s$ samples, a uniform subset of $M=20\,000$ of the $\binom N2$ dyads and replaces $\sum_{j}\ell_{ij}$ in step 2 of Algorithm~\ref{alg:bbvi} by $\binom N2M^{-1}\sum_{j:(i,j)\in\text{subset}}\ell_{ij}$, which is unbiased for the full sum. The per-iteration cost is then $O(S(M+N))$, and a fit of a $700$-node network takes $20$--$90$ seconds.

\paragraph{Scale estimation and degeneracy.} Two ways of treating the scales fail. (a) Holding them at multidimensional-scaling plug-in values under-covers (Section~\ref{sec:exp-cal}), because the plug-in values are typically too small. (b) Point-estimating them inside the anchored variational scheme by the M-step $\sigma^2\leftarrow$ mean squared chart displacement lets the objective increase without bound: with all variational means moved to $\mu$ and the variational scales shrunk in proportion to $\sigma$, the prior terms of the $N$ nodes contribute $-3N\log\sigma$, the entropy of the family only $(3N-4)\log\tilde s$ (the first anchor is fixed and the second has two free coordinates), the slice factor $\log\rho_{i_2}$ contributes $\log\sigma$, and the point-estimated centre contributes no entropy at all, so the objective grows like $3\log(1/\sigma)$. On a simulated $\Nil$ network of $40$ nodes the objective rose from $-420$ to $-405$ over $2000$ iterations while the sampled log-likelihood fell from $-352$ to $-396$ and the correlation between fitted and true distances fell from $0.56$ to $-0.06$. The family of Section~\ref{sec:vi}, with variational factors for the scales and the centre, has no such direction and is the one used in all reported fits. (c) Sampling the scales under the improper prior $p(\sigma)\propto1/\sigma$ produced, on the same networks, a vanishing vertical scale ($\sigma_v\approx0.1$). The proper $\mathrm{InvGamma}(2,2)$ prior removes that degeneracy but shrinks the scale at small $N$ (Section~\ref{sec:exp-cal}), and on nearly separable real networks it lets the sampler follow the expanding ray of Remark~\ref{rem:separation}.

\paragraph{Baselines.} Constant density: the observed density of the training dyads. $\beta$-model \citep{chatterjee2011random}: $\logit p_{ij}=b_i+b_j$, maximum likelihood on the observed dyads by Newton steps. Logistic random dot-product model \citep{young2007random,athreya2017statistical}: the rank-3 spectral embedding of the adjacency matrix with hidden dyads imputed by the previous low-rank approximation (30 iterations), followed by a logistic regression of the observed ties on the inner products (intercept and slope). Spectral eigenmodel \citep{hoff2008modeling}: the same embedding with a logistic regression on the three coordinate products, i.e.\ $\logit p_{ij}=a+\sum_k\lambda_kx_{ik}x_{jk}$ with free signs. Degree-corrected block model \citep{karrer2011stochastic}: regularised spectral clustering into $K$ groups, closed-form Poisson maximum-likelihood estimates of the block and degree parameters on the observed dyads with additive smoothing, conversion $p=1-e^{-\lambda}$, and $K\in\{1,\dots,8\}$ by BIC on the Bernoulli likelihood of the observed dyads. It is a spectral estimator, not a fully optimised block model. Chung--Lu \citep{chung2002average} is computed in the code but not tabulated: its hub probabilities exceed one before clipping and it is poorly calibrated. All probabilities are clipped to $[10^{-4},1-10^{-4}]$.

\paragraph{Rank witnesses.} Table~\ref{tab:rigidity} (Section~\ref{sec:exp-ident}) reports the singular values of the differential of the distance map restricted to the tangent space of the gauge slice, and of its augmentation by the intercept direction. For $\Nil$ the exact distance of Proposition~\ref{prop:nildist} is differentiated by central differences with $h=10^{-5}$ at $100$ configurations $\chi_\M(\varepsilon)$, $\varepsilon\sim\mathcal{N}_3(0,I)$, gauge-fixed. For $\Sol$ and $\SL$ we use the first-variation formula: for a minimising geodesic $\gamma:[0,L]\to\M$ from $p$ to $q$ parameterised by arc length and free of conjugate points, $\nabla_qd(p,q)=\dot\gamma(L)^\flat$. For each ordered pair we solve the boundary-value problem from $\eb$ to $p^{-1}\star q$ by multi-start least squares on the initial direction and the length, accept a solution when its length agrees with the table distance to within $0.02$, convert the terminal unit velocity to a covector with the metric matrix at the endpoint, and apply the chain rule through the translation $q\mapsto p^{-1}\star q$. These are floating-point computations at chosen configurations (scale $0.9$). For $\Nil$ we add a validated certificate (\texttt{certify\_nil.py}): the configuration is a dyadic rational point, every quantity is an interval computed with outward rounding in 80-bit precision, the root $s^\ast$ of $\Phi_r(s)=\zeta$ is enclosed by interval bisection with the sign change verified on the enclosure's endpoints, the derivatives follow the implicit-function formulas of Proposition~\ref{prop:nildist}, and full column rank of $J$ (and of $[\mathbf 1,-J]$) is certified by an interval Cholesky factorisation of $J^{\tr}J-\lambda I$ with all pivots bounded away from zero, which proves $s_{\min}\ge\sqrt\lambda$. The certified values are $s_{\min}(J)\ge\certSixLower$, $s_{\min}([\mathbf 1,-J])\ge\certSixAug$ at $N=6$ and $\certSevenLower$, $\certSevenAug$ at $N=7$, with enclosure widths \certWidth. The certificate relies on the correctness of the interval library and of the formulas, which were checked against brute-force shooting in Supplement~\ref{app:geodesics}.

\paragraph{Predictive quantities.} Posterior predictive probabilities are averages of $\{1+\exp(d_\M(z_i,z_j)-\alpha)\}^{-1}$ over $200$ retained draws (MCMC) or $200$ samples from $q$ (BBVI). The posterior mean distance matrices in Table~\ref{tab:sims} are the corresponding averages of $d_\M(z_i,z_j)$. AUC is the Mann--Whitney area under the ROC curve of $\hat p_{ij}$ against $y_{ij}$ over the relevant dyads, with tied scores given average ranks (a constant predictor scores $0.5$), and log-loss is $-\frac1{|\mathcal{D}|}\sum_{(i,j)\in\mathcal{D}}\{y_{ij}\log\hat p_{ij}+(1-y_{ij})\log(1-\hat p_{ij})\}$.

\end{document}